\providecommand{\U}[1]{\protect\rule{.1in}{.1in}}
\newtheorem{theorem}{Theorem}[section]
\newtheorem{assumption}{Assumption}[section]
\newtheorem{lemma}{Lemma}[section]
\newtheorem{remark}{Remark}[section]
\newenvironment{proof}[1][Proof]{\noindent\textbf{#1.} }{\ \rule{0.5em}{0.5em}}
\begin{document}
	\def\hnulldgptwopone{H0DGP2P1.txt}
	\def\hnulldgptwoqone{H0DGP2Q1.txt}
	\def\hnulldgptwopzero{H0DGP2P0.txt}
	\def\hnulldgptwoqzero{H0DGP2Q0.txt}
	
	\def\honedgponepone{H1DGP1P1.txt}
	\def\honedgponepzero{H1DGP1P0.txt}
	
	\def\honedgptwopone{H1DGP2P1.txt}
	\def\honedgptwopzero{H1DGP2P0.txt}
	
	\def\honedgpthreepone{H1DGP3P1.txt}
	\def\honedgpthreepzero{H1DGP3P0.txt}
	
	\def\honedgpfourpone{H1DGP4P1.txt}
	\def\honedgpfourpzero{H1DGP4P0.txt}
\onehalfspacing
\title{Instrument Validity for Heterogeneous Causal Effects}
\author{Zhenting Sun\thanks{Correspondence to: No.\ 5 Yiheyuan Road, Haidian District, Beijing 100871, China. E-mail address: zhentingsun@nsd.pku.edu.cn.}\\China Center for Economic Research\\
	National School of Development\\ Peking University\\ China
}

\maketitle
\begin{abstract}

This paper provides a general framework for testing instrument validity in heterogeneous causal effect models. The generalization includes the cases where the treatment can be multivalued ordered or unordered. Based on a series of testable implications, we propose a nonparametric test which is proved to be asymptotically size controlled and consistent. Compared to the tests in the literature, our test can be applied in more general settings and may achieve power improvement. Refutation of instrument validity by the test helps detect invalid instruments that may yield implausible results on causal effects. Evidence that the test performs well on finite samples is provided via simulations. We revisit the empirical study on return to schooling to demonstrate application of the proposed test in practice.  An extended continuous mapping theorem and an extended delta method, which may be of independent interest, are provided to establish the asymptotic distribution of the test statistic under null.

\end{abstract}

\textbf{Keywords:} {Instrument validity, heterogeneous causal effects, power improvement, extended continuous mapping theorem, extended delta method}

\textbf{JEL Classification:} {C10, C12, C14, C26}

\newpage

\section{Introduction}

The local average treatment effect (LATE) framework, introduced by the seminal works of \citet{imbens1994identification} and \citet{angrist1996identification}, is a commonly used approach in studies of instrumental variable (IV) models with treatment effect heterogeneity. The local quantile treatment effect (LQTE) is a concept similar to LATE. While LATE shows the treatment effect on the mean of the outcome, LQTE is more informative in regard to the effect on the outcome distribution.\footnote{See, for example, studies of LQTE in \citet{abadie2002bootstrap}, \citet{ananat2008effect}, \citet{cawley2012medical}, \citet{frolich2013unconditional}, and \citet{eren2014benefits}.} These causal effect models rely on several strong and sometimes controversial assumptions of IV validity: 1) The instrument should not affect the outcome directly; 2) it should be as good as random assignment; and 3) it affects the treatment in monotone fashion. Violations of these conditions can generally lead to unidentification and inconsistent estimation of treatment effects. Relevant surveys and discussion of this can be found in \citet{angrist2008mostly}, \citet{angrist2014mastering}, \citet{imbens2014instrumental}, \citet{imbens2015causal}, \citet{koenker2017handbook}, \citet{melly2017local}, and \citet{huber2018local}. Since the plausibility of the analyses of such models depends on IV validity, economics research has developed methods to examine these conditions based on testable implications.

Cases where the IV validity conditions may be violated can be found in empirical applications. For example, the college proximity was used as an instrument of education attainment in the study of \cite{NBERw4483}. If the education level is treated as a binary variable (four-year college degree), the validity of the college proximity is rejected by the test of \citet{kitagawa2015test} when no conditioning covariates are added in the model. \citet{mogstad2021causal} considered tuition and college proximity as multiple instruments for college attendance. They showed that, if a homogeneity condition does not hold for individuals, the validity of the multiple instruments will be violated. 
The quarter of birth instrument used in \citet{angrist1991does} is questionable because the exclusion restriction may not hold due to seasonal birth patterns \citep{bound1995problems,buckles2013season}. The monotonicity condition of IV validity fails in the selection with two-way flows example in \citet{lee2018identifying}.

\citet{kitagawa2015test} was the first paper to propose a test of
IV validity in heterogeneous causal effect models with a binary treatment based on the testable
implications in the literature. It was the first to show the sharpness of these testable implications. Their test, constructed using a bootstrap method, was shown to be asymptotically uniformly size controlled and consistent. \citet{mourifie2016testing} reformulated the testable implications used in \citet{kitagawa2015test} as conditional inequalities. They then
showed that these inequalities could be tested in the intersection bounds framework of
\citet{chernozhukov2013intersection} using the
Stata package provided by \citet{chernozhukov2014implementing}. The present paper provides a general framework for testing such IV validity assumptions. The proposed test can be applied in more general settings in which the treatment variable can be multivalued ordered or unordered\footnote{Studies of LATE with binary treatments  can be found in \citet{angrist1990lifetime}, \citet{angrist1991does}, and  \citet{vytlacil2002independence}. Those with multivalued treatments can be found in \citet{angrist1995two}, \citet{angrist1995split}, and \citet{vytlacil2006ordered}. Identification of causal effects in unordered choice (treatment) models can be found in \citet{heckman2006understanding}, \citet{heckman2007econometric}, \citet{heckman2008instrumental}, and \citet{heckman2018unordered}.} and the outcome variable can be unbounded\footnote{See \citet{reed2001pareto,reed2003pareto} and \citet{toda2012double} for the approximation of income distributions by members of the double Pareto parametric family.}. Also, the proposed test achieves power improvement by solving a technical issue and employing a novel bootstrap approach.  
\citet{huber2015testing} derived a testable
implication for a weaker LATE identifying condition, that is, that the potential
outcomes are mean independent of instruments, conditional on each selection type.\footnote{The condition of potential outcomes being mean independent of  instruments is not sufficient if we are concerned with distributional features of a complier's potential outcomes, such as the quantile treatment effects for compliers; see \citet{abadie2002instrumental} for details.}  The focus of the present paper is on full statistical independence of potential outcomes and instruments. 


A modified  variance-weighted Kolmogorov--Smirnov (KS) test statistic is employed in our test. As mentioned by \citet{kitagawa2015test}, variance-weighted KS statistics have been widely applied in the literature on conditional moment inequalities, such as in 
\citet{andrews2013inference}, \citet{armstrong2014weighted},
\citet{armstrong2016multiscale}, and \citet{chetverikov2017adaptive}. More
general KS statistics can be found in the stochastic dominance testing literature,
such as in \citet{abadie2002bootstrap}, 
\citet{barrett2003consistent}, \citet{horvath2006testing},
\citet{linton2010improved}, \citet{barrett2014consistent}, and \citet{donald2016improving}. To investigate the asymptotic properties of the proposed test, we introduce $L^r$ ($r\in\mathbb{N}$) spaces with which a series of fundamental results are established, such as the compactness of particular function spaces, the Glivenko--Cantelli and the Donsker results, and so on. Based on these results, we obtain the asymptotic behavior of the test statistic.\footnote{See further discussion before Theorem \ref{thm.weak convergence SK}.} The asymptotic properties of the proposed test are established accordingly.   
 
There are two major complications in deriving and approximating  the asymptotic distribution of the test statistic under null. First, the test statistic involves a nonsmooth (nondifferentiable) map of unknown parameters (underlying probability distributions), and the delta method fails to work. We provide an extended continuous mapping theorem and an extended delta method, which might be of independent interest, to overcome this difficulty. By showing that the conditions of the extended delta method are satisfied under several weak assumptions, we establish the null asymptotic distribution of the test statistic. Second, since the null asymptotic distribution involves a nonlinear function, the standard bootstrap method may fail to approximate this distribution consistently. {Discussion of this issue can be found in \citet{dumbgen1993nondifferentiable}, \citet{andrews2000inconsistency}, \citet{hirano2012impossibility}, \citet{hansen2017regression}, \citet{hong2014numerical}, and \citet{fang2014inference}.}  To achieve a consistent approximation, we
extend the bootstrap approach proposed by \citet{fang2014inference}\footnote{Other applications of this  bootstrap
method can be found in \citet{beare2015nonparametric},
\citet{Beare2016global}, \citet{Seo2016tests}, \citet{Beare2015improved},  and \citet{Beare2017improved}. A similar bootstrap approach can be found in \citet{hong2014numerical}.} and provide a valid bootstrap critical value. The test is found to be asymptotically size controlled and consistent. Evidence that the test performs well on finite samples is provided via simulations.

We now introduce the following notation, which will be used throughout the paper. We let $\leadsto$ denote Hoffmann--J\o rgensen weak convergence in a metric space. For
a set $\mathbb{D}$, denote the space of bounded functions on $\mathbb{D}$ by $\ell^{\infty}(\mathbb{D})$:
$
\ell^{\infty}\left(  \mathbb{D}\right)  =\left\{  f:\mathbb{D}\rightarrow
\mathbb{R}
:\left\Vert f\right\Vert _{\infty}<\infty\right\}$, { where } $\left\Vert
f\right\Vert _{\infty}=\sup_{x\in\mathbb{D}}\left\vert f\left(  x\right)
\right\vert
$.
If $\mathbb{D}$ is a topological space, let $C\left(  \mathbb{D}\right)  $ denote the set of
continuous functions on $\mathbb{D}$:
$
C\left(  \mathbb{D}\right)  =\left\{  f:\mathbb{D}\rightarrow
\mathbb{R}
:f\text{ is continuous}\right\}
$.
Let $(\Omega,\mathcal{A},\mathbb{P})$ be a probability space on which all random elements are well defined. Let $\mathcal{B}_{\mathbb{R}^m}$ denote the Borel $\sigma$-algebra on
$\mathbb{R}^m$ for all $m\in\mathbb{N}$. We use $\hat{Q}$ and $\hat{Q}^B$ to denote the empirical probability measure and the bootstrap empirical probability measure of each probability measure $Q$, respectively.


\section{Setup and Testable Implications}\label{sec.setup}
\subsection{Binary Treatment}\label{subsec.setup binary}
 
To formally introduce the topic of interest, we first consider the heterogeneous causal
effect model of \citet{imbens1994identification}. Let $Y\in\mathbb{R}$ be the observable outcome variable, and let $D\in\left\{
0,1\right\}  $ be the observable treatment variable, where $D=1$ indicates that an
individual receives treatment. Let
$Z\in\left\{  0,1\right\}  $ be a binary instrumental variable. Let $Y_{dz}%
\in\mathbb{R}$ be the potential outcome variable\footnote{See \citet{rubin1974} and \citet{splawa1990application} for further discussion of the potential outcomes.} for $D=d$ and
$Z=z$, where $d,z\in\{0,1\}$. Similarly, let $D_{z}$ be
the potential treatment variable for $Z=z$.
The instrument validity assumption for binary treatment and binary IV is formalized as follows.

\begin{assumption}
\label{ass.IV validity for binary Z}IV validity for binary $D$ and binary $Z$:
\begin{enumerate}[label=(\roman*)]
\item Instrument Exclusion: For each $d\in\{0,1\}$, $Y_{d0}=Y_{d1}$ almost surely.

\item Random Assignment: The variable $Z$ is jointly independent of $\left(
Y_{00},Y_{01},Y_{10},Y_{11},D_{0},D_{1}\right)  $.

\item Instrument Monotonicity: The potential treatment response
indicators satisfy $D_{1}\geq D_{0}$ almost surely.

\end{enumerate}
\end{assumption}
Assumption \ref{ass.IV validity for binary Z} is from \citet{imbens1997estimating}, but it does not require strict instrument monotonicity. In this paper, we are not concerned with the strict monotonicity assumption, which is also known as the instrument relevance assumption.\footnote{As mentioned by \citet{kitagawa2015test}, the instrument relevance assumption can be assessed by inferring the coefficient in the first-stage regression of $D$ onto $Z$.}

For all Borel sets $B$ and $C$, we follow \citet{kitagawa2015test} and define probability
measures as follows:\footnote{For simplicity of notation, we implicitly assume that $(Y,D,Z)$ is $(\mathcal{A},\mathcal{B}_{\mathbb{R}^3})$-measurable.}
\begin{align*}
P_1\left(  B,C\right)     =\mathbb{P}\left(  Y\in B,D\in C|Z=1\right)  \text{ and } P_0\left(  B,C\right)    =\mathbb{P}\left(  Y\in B,D\in C|Z=0\right).
\end{align*}
Under Assumption
\ref{ass.IV validity for binary Z}(i), we can define a potential outcome variable $Y_{d}$ such that $Y_{d}=Y_{d0}=Y_{d1}$ almost surely. \citet{imbens1997estimating} showed that
for every Borel set $B$,
\begin{align}
&P_1\left(  B,\{1\}\right)  -P_0\left(  B,\{1\}\right)    =\mathbb{P}\left(  Y_{1}\in
B,D_{1}>D_{0}\right) \notag \\ &\text{ and } P_0\left(  B,\{0\}\right)  -P_1\left(  B,\{0\}\right)  =\mathbb{P}\left(  Y_{0}\in
B,D_{1}>D_{0}\right)  . \label{eq.probability difference}%
\end{align}
To see why \eqref{eq.probability difference} is true, we can write 
\begin{align*}
P_1\left(  B,\{1\}\right) & -P_0\left(  B,\{1\}\right)    =\mathbb{P}\left(  Y\in
B,D=1|Z=1\right)-\mathbb{P}\left(  Y\in
B,D=1|Z=0\right)\\
= & \,\mathbb{P}\left(  Y_{1}\in
B,D_{1}=1\right)-\mathbb{P}\left(  Y_{1}\in
B,D_{0}=1\right)=\mathbb{P}\left(  Y_{1}\in
B,D_{1}=1, D_{0}=0\right), 
\end{align*}
where the second equality follows from Assumptions
\ref{ass.IV validity for binary Z}(i) and \ref{ass.IV validity for binary Z}(ii) and the third equality follows from Assumption
\ref{ass.IV validity for binary Z}(iii). Similar reasoning yields the second equation in \eqref{eq.probability difference}.
Since the probabilities in \eqref{eq.probability difference} are nonnegative, we obtain the testable implication of Assumption \ref{ass.IV validity for binary Z} in 
\citet{balke1997bounds},  \citet{imbens1997estimating}, and \citet{heckman2005structural}: For all
$B\in\mathcal{B}_{%
\mathbb{R}
}$,%
\begin{align}
P_1\left(  B,\{1\}\right)  -P_0\left(  B,\{1\}\right)   \geq0 \text{ and } P_0\left(  B,\{0\}\right)  -P_1\left(  B,\{0\}\right)  \geq0.
\label{eq.testable implication}%
\end{align}
To understand \eqref{eq.testable implication} graphically, suppose that $Y$ is a
continuous variable and that $p_z\left(  y,d\right)  $ is the derivative of the function $P_z\left(  (-\infty,y],\{d\}\right)$ with respect to $y$ for all $d,z\in\{0,1\}$. The following graphs show a case where \eqref{eq.testable implication} holds.
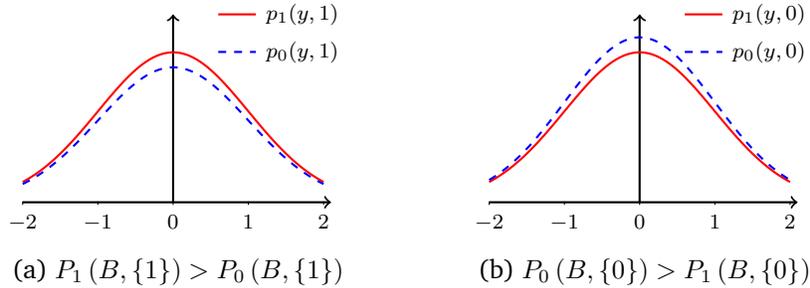
\begin{figure} [h]
\caption{A special case satisfying testable implication \eqref{eq.testable implication}}
\label{fig:nest}
\centering
\begin{subfigure}[b]{0.4\textwidth}
	\centering
\begin{tikzpicture}[scale=1]
		\tikzstyle{vertex}=[font=\small,circle,draw,fill=yellow!20]
		\tikzstyle{edge} = [font=\scriptsize,draw,thick,-]
		\draw[black, thick, ->] (0,0) -- (0,2.5);
		\draw[black, thick, ->] (-2,0) -- (2.1,0);
		
	\draw[red, thick, -] (0.6,2.5) -- (1.1,2.5);
	\node [right,font=\scriptsize] at (1.1,2.5) {$p_1(y,1)$};
	\draw[blue, thick,dashed, -] (0.6,2) -- (1.1,2);
	\node [right,font=\scriptsize] at (1.1,2) {$p_0(y,1)$};
		
		\draw (0,0.3pt) -- (0,-1pt)
		node[anchor=north,font=\scriptsize] {$0$};
		\draw (-2,0.3pt) -- (-2,-1pt)
		node[anchor=north,font=\scriptsize] {$-2$};
		\draw (-1,0.3pt) -- (-1,-1pt)
		node[anchor=north,font=\scriptsize] {$-1$};
		\draw (1,0.3pt) -- (1,-1pt)
		node[anchor=north,font=\scriptsize] {$1$};
		\draw (2,0.3pt) -- (2,-1pt)
		node[anchor=north,font=\scriptsize] {$2$};
		\draw[red,thick] plot[smooth] file {\hnulldgptwopone};	
		\draw[blue,dashed,thick] plot[smooth] file {\hnulldgptwoqone};		
\end{tikzpicture}
\subcaption{$P_1\left(B,\{1\}\right)>P_0\left(B,\{1\}\right)$}
\label{fig:nest a}
\end{subfigure}
\begin{subfigure}[b]{0.4\textwidth}
	\centering
\begin{tikzpicture}[scale=1]
		\tikzstyle{vertex}=[font=\small,circle,draw,fill=yellow!20]
		\tikzstyle{edge} = [font=\scriptsize,draw,thick,-]
		\draw[black, thick, ->] (0,0) -- (0,2.5);
		\draw[black, thick, ->] (-2,0) -- (2.1,0);
		
		\draw[red, thick, -] (0.6,2.5) -- (1.1,2.5);
	    \node [right,font=\scriptsize] at (1.1,2.5) {$p_1(y,0)$};
	    \draw[blue, thick,dashed, -] (0.6,2) -- (1.1,2);
	    \node [right,font=\scriptsize] at (1.1,2) {$p_0(y,0)$};
		
		\draw (0,0.3pt) -- (0,-1pt)
		node[anchor=north,font=\scriptsize] {$0$};
		\draw (-2,0.3pt) -- (-2,-1pt)
		node[anchor=north,font=\scriptsize] {$-2$};
		\draw (-1,0.3pt) -- (-1,-1pt)
		node[anchor=north,font=\scriptsize] {$-1$};
		\draw (1,0.3pt) -- (1,-1pt)
		node[anchor=north,font=\scriptsize] {$1$};
		\draw (2,0.3pt) -- (2,-1pt)
		node[anchor=north,font=\scriptsize] {$2$};
		\draw[red,thick] plot[smooth] file {\hnulldgptwopzero};	
		\draw[blue,dashed,thick] plot[smooth] file {\hnulldgptwoqzero};		
\end{tikzpicture}
\subcaption{$P_0\left(B,\{0\}\right)>P_1\left(B,\{0\}\right)$}
\label{fig:nest b}
\end{subfigure}
\end{figure}

The first inequality in
\eqref{eq.testable implication} is shown in Figure \ref{fig:nest a}, where the derivative $p_1\left(  y,1\right)  $ is greater than $p_0\left(  y,1\right)  $ everywhere.
The second inequality in \eqref{eq.testable implication} is shown in Figure
\ref{fig:nest b}, where the derivative $p_0\left(  y,0\right)  $ is greater than
$p_1\left(  y,0\right)  $ everywhere. Additional graphical examples can be found in \citet{kitagawa2015test}.

\subsection{Multivalued Ordered Treatment} \label{subsec.multi}
Section \ref{subsec.setup binary}  discussed the case where the treatment and the instrument are both binary. 
In many applications, $D$ and $Z$ can be multivalued. See, for example, \citet{angrist1995two}, where the treatment variable is the number of years of schooling completed by a student and can take more than two values. 
Now suppose that $D\in\mathcal{D}=\left\{
d_{1},\ldots,d_J\right\}  $\footnote{The support $\mathcal{D}$ can be generalized to the case where $\mathcal{D}=\left\{d_{1},d_{2},\ldots\right\} $. See details in the supplementary appendix.} and $Z\in\mathcal{Z}=\left\{  z_{1},\ldots,z_{K}\right\}  $.  We let $d_{\max}$ be the
maximum value of $D$, and $d_{\min}$ the minimum value of $D$. Suppose there exist potential variables $Y_{dz}$ for $d\in\mathcal{D}$ and $z\in\mathcal{Z}$, and  $D_{z}$ for $z\in
\mathcal{Z}$. The IV validity assumption for multivalued treatment $D$ and multivalued
instrument $Z$ is then formalized as follows.

\begin{assumption}
\label{ass.IV validity for multivalued Z}IV validity for multivalued $D$ and multivalued $Z$:
\begin{enumerate}[label=(\roman*)]
\item Instrument Exclusion: For all $d\in\mathcal{D}$, $Y_{dz_{1}}=Y_{dz_{2}}
=\cdots=Y_{dz_{K}}$ almost surely.

\item Random Assignment: The variable $Z$ is jointly independent of $(
\tilde{Y},\tilde{D})  $, where
\begin{align*}
\tilde{Y}   =\left(  Y_{d_{1}z_{1}},\ldots,Y_{d_{1}z_{K}},\ldots,Y_{d_{J}z_{1}
},\ldots,Y_{d_{J}z_{K}}\right) \text{ and }\tilde{D} =\left(  D_{z_{1}},\ldots,D_{z_{K}}\right)  .
\end{align*}

\item Instrument Monotonicity: The potential treatment response variables satisfy $D_{z_{k+1}}\geq D_{z_{k}}$ almost surely for all
$k\in \{1,2,\ldots, K-1\}$.

\end{enumerate}
\end{assumption}
Assumption \ref{ass.IV validity for multivalued Z} is similar to Assumptions 1 and 2 of \citet{angrist1995two}. Theorems 1 and 2 of \citet{angrist1995two} showed that a weighted average of $K$ average causal responses can be identified under Assumption \ref{ass.IV validity for multivalued Z}. Since we allow multivalued $Z$, the monotonicity assumption needs to hold for each pair $(D_{z_{k}},D_{z_{k+1}})$.
The next lemma
establishes a testable implication of Assumption \ref{ass.IV validity for multivalued Z}.

\begin{lemma}
\label{ass.testable implication multivalue}A testable implication of
Assumption \ref{ass.IV validity for multivalued Z} is that for all $k$ with $1\le k \le K-1$,  all Borel sets $B$, and all $C=(-\infty,c]$ with $c\in\mathbb{R}$, the following hold:
\begin{align}\label{eq.testable implication multivalue}
&\mathbb{P}\left(  Y\in B,D=d_{\max}|Z=z_k\right)     \leq \mathbb{P}\left(  Y\in B,D=d_{\max}|Z=z_{k+1}\right)\nonumber\\
&\text{and }\mathbb{P}\left(  Y\in B,D=d_{\min}|Z=z_k\right)     \geq \mathbb{P}\left(  Y\in B,D=d_{\min}|Z=z_{k+1}\right); \\
&\mathbb{P}\left(  D\in C|Z=z_k\right)     \geq \mathbb{P}\left(  D\in C|Z=z_{k+1}\right). \label{eq.fosd multi}
\end{align}
\end{lemma}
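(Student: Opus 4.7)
The plan is to mimic the three-step manipulation the authors used in equation \eqref{eq.probability difference} for the binary case: first apply instrument exclusion to collapse the $Y_{dz}$'s to a single $Y_d$, then apply random assignment to rewrite the conditional probabilities on $Z$ as unconditional probabilities of potential-outcome events, and finally use instrument monotonicity to reduce each difference of probabilities to the probability of an event, which is automatically nonnegative.

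For the first inequality in \eqref{eq.testable implication multivalue}, I would fix $k$ and a Borel $B$, and write
\begin{align*}
\mathbb{P}(Y\in B, D=d_{\max}\mid Z=z_{k+1}) - \mathbb{P}(Y\in B, D=d_{\max}\mid Z=z_{k})
= \mathbb{P}(Y_{d_{\max}}\in B, D_{z_{k+1}}=d_{\max}) - \mathbb{P}(Y_{d_{\max}}\in B, D_{z_{k}}=d_{\max}),
\end{align*}
using Assumption \ref{ass.IV validity for multivalued Z}(i)--(ii). The key observation is that Assumption \ref{ass.IV validity for multivalued Z}(iii) implies $\{D_{z_k}=d_{\max}\}\subseteq\{D_{z_{k+1}}=d_{\max}\}$ almost surely (since $D_{z_{k+1}}\ge D_{z_k}$ and $d_{\max}$ is the largest attainable value), so the displayed difference equals
\[
\mathbb{P}(Y_{d_{\max}}\in B,\, D_{z_{k+1}}=d_{\max},\, D_{z_k}<d_{\max})\ge 0.
\]
The analogous argument for $d_{\min}$ proceeds symmetrically, using $\{D_{z_{k+1}}=d_{\min}\}\subseteq\{D_{z_k}=d_{\min}\}$ a.s.

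For \eqref{eq.fosd multi}, no exclusion argument is needed: random assignment alone gives
\[
\mathbb{P}(D\in C\mid Z=z_k) - \mathbb{P}(D\in C\mid Z=z_{k+1}) = \mathbb{P}(D_{z_k}\le c) - \mathbb{P}(D_{z_{k+1}}\le c),
\]
and instrument monotonicity yields $\{D_{z_{k+1}}\le c\}\subseteq\{D_{z_k}\le c\}$ a.s., so the right-hand side equals $\mathbb{P}(D_{z_k}\le c,\, D_{z_{k+1}}>c)\ge 0$.

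There is no real obstacle; the only nontrivial ingredient is the observation that although Assumption \ref{ass.IV validity for multivalued Z}(iii) is just a pairwise monotone ordering of $D_{z_k}$ and $D_{z_{k+1}}$, combining it with the extremality of $d_{\max}$ and $d_{\min}$ converts equalities like $\{D_{z_k}=d_{\max}\}$ into nested events. This is what forces the direction of the inequalities for $d_{\max}$ versus $d_{\min}$ and is also what makes \eqref{eq.fosd multi} a first-order stochastic dominance statement rather than merely a pointwise CDF comparison at isolated values.
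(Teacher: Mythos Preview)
Your proposal is correct and follows essentially the same route as the paper: apply exclusion to define $Y_d$, apply random assignment to remove the conditioning on $Z$, and then exploit monotonicity together with the extremality of $d_{\max}$ (resp.\ $d_{\min}$) to obtain a nonnegative probability. The paper phrases the last step as a partition over the values of $D_{z_{k+1}}$ (resp.\ $D_{z_k}$) in which monotonicity kills all but one summand, whereas you phrase it as an almost-sure inclusion of events; these are the same observation, and your version is arguably cleaner. You also spell out the first-order stochastic dominance inequality \eqref{eq.fosd multi}, which the paper's proof leaves implicit.
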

Lemma \ref{ass.testable implication multivalue} generalized testable
implication \eqref{eq.testable implication} to the case where the treatment and the instrument can both be multivalued. The testable implication (first-order stochastic dominance) discussed by \citet{angrist1995two} for Assumption \ref{ass.IV validity for multivalued Z} is equivalent to  \eqref{eq.fosd multi}. Clearly, if $D$ and $Z$ are
 both binary as assumed in Section \ref{subsec.setup binary}, with $d_{\max}=1$ and $d_{\min}=0$, then
 \eqref{eq.testable implication multivalue} is equivalent to \eqref{eq.testable implication} and \eqref{eq.fosd multi} is implied by \eqref{eq.testable implication multivalue}.  \citet{liu2020two} proposed testable implication \eqref{eq.testable implication multivalue} for the case where $\mathcal{D}=\{0,1,2\}$ and $\mathcal{Z}=\{0,1,2\}$, and \eqref{eq.fosd multi} is also implied by \eqref{eq.testable implication multivalue} in this case. 
 Thus, \eqref{eq.testable implication multivalue} and \eqref{eq.fosd multi} together can be viewed as a generalized form of their condition.

\subsection{Unordered Treatment}\label{subsec.unordered}
Studies of identification of causal effects in unordered choice (treatment) models can be found in \citet{heckman2006understanding}, \citet{heckman2007econometric}, and \citet{heckman2008instrumental}. \citet{heckman2018unordered} showed that the assumptions\footnote{See \citet[pp.~2--3]{heckman2018unordered} for a discussion of these assumptions.} in the preceding literature could be relaxed, and they defined a new monotonicity condition for the identification of causal effects in such models. We follow \citet{heckman2018unordered} and suppose that the support  $\mathcal{D}$ of $D$ is an unordered set with $\mathcal{D=}\left\{d_{1},\ldots,d_{J}\right\}  $ and that the support $\mathcal{ Z}$ of $Z$ with $\mathcal{Z}=\{z_1,\ldots,z_K\}$ can be {unordered} as well. The unordered monotonicity condition proposed by \citet{heckman2018unordered} is as follows (Assumption A-3 of \citet{heckman2018unordered}).
\begin{assumption}\label{ass.unordered monotonicity}
The potential treatment response indicators satisfy the condition that for all	$d\in\mathcal{D}$ and all $z,z^{\prime}\in\mathcal{Z}$, $1\left\{  D_{z^{\prime}}=d\right\}  \geq1\left\{  D_{z}=d\right\}$ almost surely or $  1\left\{  D_{z^{\prime}}=d\right\}  \leq1\left\{  D_{z}=d\right\} $ almost surely.
\end{assumption}
It is worth noting that in Assumption \ref{ass.unordered monotonicity}, $D$ is allowed to be a vector random element. 
In the case where $D,Z\in\{0,1\}$, Assumption \ref{ass.unordered monotonicity} is equivalent to the assumption that $1\left\{  D_{1}=1\right\}  \geq1\left\{  D_{0}=1\right\}$ almost surely or $  1\left\{  D_{1}=1\right\}  \leq1\left\{  D_{0}=1\right\} $ almost surely. 
In practice, we often assume a specific direction in the assumption, such as $1\left\{  D_{1}=1\right\}  \geq1\left\{  D_{0}=1\right\}$ almost surely, which is equivalent to $D_{1}\geq D_{0}$ almost surely in Assumption \ref{ass.IV validity for binary Z}(iii). With the specific direction, we can prespecify a
set $\mathcal{C}\subset\mathcal{D}\times\mathcal{Z}\times\mathcal{Z}$ and assume that $1\left\{  D_{z^{\prime}}=d\right\}  \le 1\left\{  D_{z}=d\right\}$ almost surely for all $(d,z,z^{\prime})\in\mathcal{C}$. {For example, in the above case where $D,Z\in\{0,1\}$ and $1\left\{  D_{1}=1\right\}  \geq1\left\{  D_{0}=1\right\}$ almost surely, we let $\mathcal{C}=\{(0,0,1), (1,1,0)\}$.}
With this monotonicity condition of specified direction, we introduce the IV validity assumption for unordered treatment.\footnote{The test proposed in this paper can be extended for Assumption \ref{ass.unordered monotonicity} in which the direction is not specified. See details in Appendix \ref{sec.unordered treatment monotonicity with no deriction}.} 
 
\begin{assumption}\label{ass.IV validity for unordered D}
IV validity for unordered $D$ and unordered $Z$:
\begin{enumerate}[label=(\roman*)]
		\item Instrument Exclusion: For all $d\in\mathcal{D}$ and all $z,z'\in\mathcal{Z}$, $Y_{dz}=Y_{dz^{\prime}}$ almost surely.
		
		\item Random Assignment: The random element $Z$ is jointly independent of $(
		\tilde{Y},\tilde{D})  $, where
		\begin{align*}
		\tilde{Y}  =\left(  Y_{d_{1}z_{1}},\ldots,Y_{d_{1}z_{K}},\ldots,Y_{d_Jz_1},\ldots,Y_{d_Jz_K}\right) \text{ and }\tilde{D}   =\left(  D_{z_{1}},\ldots,D_{z_{K}}\right)  .
		\end{align*}

		\item Instrument Monotonicity: The potential treatment 
		elements satisfy the condition that $1\left\{  D_{z^{\prime}}=d\right\}  \le 1\left\{  D_{z}=d\right\}$ almost surely for all $(d,z,z^{\prime})\in\mathcal{C}$.
		
	\end{enumerate}
\end{assumption}
Under this assumption, we can define $Y_{d}$ such that $Y_d=Y_{dz}$ almost surely for all $z$, and hence
\begin{align*}
\mathbb{P}\left(  Y\in B,D=d|Z=z^{\prime}\right) = &\, E[1\{Y_d\in B\}\cdot 1\{D_{z^{\prime}}=d\}]\\
 \leq&\, E[1\{Y_d\in B\}\cdot 1\{D_{z}=d\}]=\mathbb{P}\left(Y\in B,D=d|Z=z\right)
\end{align*}
for all Borel sets $B$ and all $\left(  d,z,z^{\prime}\right)  \in\mathcal{C}$. 

\begin{lemma}\label{lemma.testable implication unorder}
	A testable implication of Assumption \ref{ass.IV validity for unordered D} is given by 
	\begin{align}\label{eq.testable implication unordered treatment}
	\mathbb{P}\left(  Y\in B,D=d|Z=z^{\prime}\right)  \le \mathbb{P}\left(Y\in B,D=d|Z=z\right)
	\end{align}
	for all Borel sets $B$ and all $\left(  d,z,z^{\prime}\right)  \in\mathcal{C}$, where $\mathcal{C}$ is a prespecified subset of $\mathcal{D}\times\mathcal{Z}\times\mathcal{Z}$. 
\end{lemma}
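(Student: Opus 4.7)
The plan is to follow the three-line chain of (in)equalities that the authors foreshadow in the paragraph immediately preceding the lemma, and to justify each step by one clause of Assumption \ref{ass.IV validity for unordered D}. Fix an arbitrary Borel set $B$ and an arbitrary triple $(d,z,z')\in\mathcal{C}$.

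First, I would invoke Instrument Exclusion (part (i)): since $Y_{dz}=Y_{dz'}$ almost surely for every pair of instrument values, I can introduce a single potential outcome $Y_d$ with $Y_d=Y_{dz}$ a.s.\ for each $z\in\mathcal{Z}$, so that $Y=Y_{D_Z}=Y_D$ almost surely on the event $\{Z=z\}$. Consequently, the observed event decomposes as
\begin{equation*}
\{Y\in B,\; D=d,\; Z=z\}=\{Y_d\in B,\; D_z=d,\; Z=z\}\quad\text{a.s.}
\end{equation*}
Second, I would use Random Assignment (part (ii)): since $Z$ is jointly independent of $(\tilde Y,\tilde D)$, and $(Y_d,D_z)$ is a measurable function of this vector, the conditional probability collapses to an unconditional expectation,
\begin{equation*}
\mathbb{P}(Y\in B,\;D=d\mid Z=z)=\mathbb{P}(Y_d\in B,\;D_z=d)=E\bigl[\mathbf{1}\{Y_d\in B\}\,\mathbf{1}\{D_z=d\}\bigr],
\end{equation*}
and the analogous identity holds with $z$ replaced by $z'$.

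Third, I would apply Instrument Monotonicity (part (iii)): for $(d,z,z')\in\mathcal{C}$ we have $\mathbf{1}\{D_{z'}=d\}\leq\mathbf{1}\{D_z=d\}$ a.s. Multiplying by the nonnegative factor $\mathbf{1}\{Y_d\in B\}$ preserves the inequality pointwise, and taking expectations gives
\begin{equation*}
E\bigl[\mathbf{1}\{Y_d\in B\}\,\mathbf{1}\{D_{z'}=d\}\bigr]\leq E\bigl[\mathbf{1}\{Y_d\in B\}\,\mathbf{1}\{D_z=d\}\bigr].
\end{equation*}
Combining with the identities from the second step yields \eqref{eq.testable implication unordered treatment}.

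The argument is essentially bookkeeping, so there is no genuine obstacle; the only step requiring a careful word is the second, where one must note that joint independence of $Z$ from the full vector $(\tilde Y,\tilde D)$, rather than from individual marginals, is exactly what allows the conditioning on $\{Z=z\}$ to be dropped after the event has been rewritten in terms of the potential random variables $Y_d$ and $D_z$.
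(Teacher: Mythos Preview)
Your proposal is correct and follows exactly the same approach as the paper: the authors display precisely this chain of equalities and inequality in the paragraph just before the lemma, and in the proofs section they simply remark that the proof of Lemma~\ref{lemma.testable implication unorder} is trivial. Your write-up merely spells out which clause of Assumption~\ref{ass.IV validity for unordered D} justifies each step, which is entirely in line with what the paper does.
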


As shown in \citet{kitagawa2015test} and \citet{mourifie2016testing}, the testable implication in \eqref{eq.testable implication} is sharp. 
When the treatment or the instrument is multivalued (ordered or unordered), the cases could be complicated. \citet{kedagni2020generalized} considered testing the joint assumptions of instrument exclusion and statistical independence,  which are parts of (and different from) Assumption \ref{ass.IV validity for multivalued Z} and Assumption \ref{ass.IV validity for unordered D}. The exclusion condition of \citet{kedagni2020generalized} is the same as that in the present paper (Assumption \ref{ass.IV validity for multivalued Z}(i) and Assumption \ref{ass.IV validity for unordered D}(i)). The statistical independence condition of \citet{kedagni2020generalized} (the instrument $Z$ is jointly independent of $(Y_{d_1},\ldots, Y_{d_J})$) is weaker than (and implied by) the random assignment condition in the present paper (Assumption \ref{ass.IV validity for multivalued Z}(ii) and Assumption \ref{ass.IV validity for unordered D}(ii)). Thus, the underlying assumptions tested by \citet{kedagni2020generalized} are weaker than those tested by the present paper. 

\citet{kedagni2020generalized} provided sharp testable implications (the generalized instrumental inequalities) for the joint assumptions of instrument exclusion and statistical independence.
Consider a simple case where the outcome $Y\in \{0,1\}$, the treatment $D\in\mathcal{D}$ is multivalued, and the instrument $Z\in\mathcal{Z}$ is also multivalued. Suppose that the exclusion condition and the statistical independence condition hold. We can then define $Y_d=Y_{dz_1}=\cdots=Y_{dz_K}$ for every $d\in\mathcal{D}$. For each $y\in\{0,1\}$, every $d\in\mathcal{D}$, and every $z\in\mathcal{Z}$, we have that
\begin{align}\label{eq.conditional inequality KM}
    \mathbb{P}(Y=y,D=d|Z=z)\le\mathbb{P}(Y_d=y),
\end{align}
which implies that 
\begin{align}\label{eq.KM inequality 1}
    \max_{d\in\mathcal{D}}\sum_{y\in\{0,1\}}\max_{z\in\mathcal{Z}}\mathbb{P}(Y=y,D=d|Z=z)\le\sum_{y\in\{0,1\}}\mathbb{P}(Y_d=y)=1.
\end{align}
For all $y_1,\ldots,y_J\in\{0,1\}$,
\begin{align*}
&\mathbb{P}(Y_{d_1}=y_1,\ldots,Y_{d_J}=y_J)=\min_{z\in\mathcal{Z}}\mathbb{P}(Y_{d_1}=y_1,\ldots,Y_{d_J}=y_J|Z=z)\\
=&\,\min_{z\in\mathcal{Z}}\sum_{j=1}^J\mathbb{P}(Y_{d_1}=y_1,\ldots,Y_{d_J}=y_J,D=d_j|Z=z)\le \min_{z\in\mathcal{Z}}\sum_{j=1}^J\mathbb{P}(Y=y_j,D=d_j|Z=z).
\end{align*}
It then follows that
\begin{align}\label{eq.KM inequality 2}
&\sum_{y_1\in\{0,1\}}\cdots\sum_{y_J\in\{0,1\}}\min_{z\in\mathcal{Z}}\sum_{j=1}^J\mathbb{P}(Y=y_j,D=d_j|Z=z)\notag\\
\ge&\, \sum_{y_1\in\{0,1\}}\cdots\sum_{y_J\in\{0,1\}}\mathbb{P}(Y_{d_1}=y_1,\ldots,Y_{d_J}=y_J)=1.
\end{align}
Next, for every $j$ and every $y_j\in\{0,1\}$, 
\begin{align*}
    \mathbb{P}(Y_{d_j}=y_j)=&\,\sum_{y_1\in\{0,1\}}\cdots\sum_{y_{j-1}\in\{0,1\}}\sum_{y_{j+1}\in\{0,1\}}\cdots\sum_{y_J\in\{0,1\}}\mathbb{P}(Y_{d_1}=y_1,\ldots,Y_{d_J}=y_J)\\
    \le&\,\sum_{y_1\in\{0,1\}}\cdots\sum_{y_{j-1}\in\{0,1\}}\sum_{y_{j+1}\in\{0,1\}}\cdots\sum_{y_J\in\{0,1\}}\min_{z\in\mathcal{Z}}\sum_{\xi=1}^J\mathbb{P}(Y=y_{\xi},D=d_{\xi}|Z=z).
\end{align*}
With \eqref{eq.conditional inequality KM}, we have that
\begin{align}\label{eq.KM inequality 3}
    \max_{j\in\{1,\ldots,J\}}\max_{y_j\in\{0,1\}}\left\{\max_{z\in\mathcal{Z}}\mathbb{P}(Y=y_j,D=d_j|Z=z)-\varphi_j(y_j) \right\}\le 0,
\end{align}
where 
\begin{align*}
    \varphi_j(y_j)=\sum_{y_1\in\{0,1\}}\cdots\sum_{y_{j-1}\in\{0,1\}}\sum_{y_{j+1}\in\{0,1\}}\cdots\sum_{y_J\in\{0,1\}}\min_{z\in\mathcal{Z}}\sum_{\xi=1}^J\mathbb{P}(Y=y_{\xi},D=d_{\xi}|Z=z).
\end{align*}
The inequalities in \eqref{eq.KM inequality 1}--\eqref{eq.KM inequality 3} are the testable restrictions derived by \citet{kedagni2020generalized}, which are different from the proposed testable implications in \eqref{eq.testable implication multivalue}--\eqref{eq.testable implication unordered treatment}. \citet{kedagni2020generalized} suggest using the approach of \citet{chernozhukov2013intersection} to test the restrictions in \eqref{eq.KM inequality 1}--\eqref{eq.KM inequality 3}.\footnote{See Section 5 of \citet{kedagni2020generalized}.}

Though the underlying assumptions tested by \citet{kedagni2020generalized} are weaker than those tested by the present paper, no evidence has been found that, in general, the testable implications of \citet{kedagni2020generalized} are weaker than (or implied by) those proposed by the present paper. 
Thus, to the best of our knowledge, the testable implications in \citet{kedagni2020generalized} and those in the present paper could be complementary to each other. That is, the proposed testable restrictions may not be sharp for the IV validity Assumptions \ref{ass.IV validity for multivalued Z} and \ref{ass.IV validity for unordered D}, and the proposed test may not be testing all possible restrictions. 
In practice, we suggest that users first apply the method of \citet{chernozhukov2013intersection} to test the restrictions in \citet{kedagni2020generalized}, and then apply the proposed method to test the joint IV validity assumptions in the present paper. In this way, the test results could be more informative about which part of the IV validity assumptions may fail.

Another interesting question is that if we combine the inequalities of \citet{kedagni2020generalized} and those of the present paper together, are they sharp for the IV validity assumptions? To show this, we may draw on the sharpness results of \citet{kitagawa2015test}, \citet{mourifie2016testing}, and \citet{kedagni2020generalized}. However, this would not be straightforward because we now allow both the treatment and the instrument to be multivalued (ordered or unordered), and the IV validity assumptions involve more conditions (random assignment and monotonicity). Since this technical complication may be beyond the main context of the present paper, we leave it for future study as an independent topic.

\section{Test Formulation} \label{sec.multi D and Z}

To highlight the idea, we first introduce the test for the case where the treatment is multivalued  ordered, with support $\mathcal{D=}\left\{d_{1},\ldots,d_J\right\}  $.  The unordered treatment case will be discussed as an extension in Section \ref{sec.unordered}. Appendix \ref{sec.conditioning covariates setup} in the appendix extends the proposed test for the cases where conditioning covariates may be present. Also, we let $Z$ be multivalued with support $\mathcal{Z}=\{z_1,\ldots,z_K\}$.
The test is constructed based on the testable implication given in \eqref{eq.testable implication multivalue} and \eqref{eq.fosd multi}. Without loss of generality, we assume that $d_{\min}=0$ and $d_{\max}=1$. In practice, we can always normalize $d_{\min}$ and $d_{\max}$ to $0$ and $1$, respectively. Then \eqref{eq.testable implication multivalue} and \eqref{eq.fosd multi} are equivalent to 
\begin{align}\label{eq.testable implication inequalities multi}
&(-1)^{d}\cdot\{\mathbb{P}\left(  Y\in B,D=d|Z=z_{k+1}\right)-\mathbb{P}\left(  Y\in B,D=d|Z=z_k\right)\}  \leq 0  \notag\\
&\text{ and } \mathbb{P}\left( D\in C|Z=z_{k+1}\right)-\mathbb{P}\left( D\in C|Z=z_k\right)\le 0
\end{align}
for all $k$ with $1\le k\le K-1$, all closed intervals $B$ in $\mathbb{R}$, each $d\in\{0,1\}$, and all $C=(-\infty,c]$ with $c\in\mathbb{R}$. Here, \eqref{eq.testable implication multivalue} and \eqref{eq.fosd multi} originally require \eqref{eq.testable implication inequalities multi} to hold for all Borel sets $B$. Similar to {Lemma B.7} of \citet{kitagawa2015test}, {we can} show (by applying Lemma C1 of \citet{andrews2013inference}) that \eqref{eq.testable implication inequalities multi} holding for all closed intervals $B$ is equivalent to  \eqref{eq.testable implication inequalities multi} holding for all Borel sets $B$. 

By definition, for all $B,C\in\mathcal{B}_{\mathbb{R}}$ and all $k$ with $1\le k\le K$,
$
\mathbb{P}\left(  Y\in B,D\in C|Z=z_{k}\right)={\mathbb{P}\left(  Y\in B,D\in C,Z=z_{k}\right)
}/{\mathbb{P}\left(  Z=z_{k}\right)  }
$.
We now define function spaces
\begin{align}{\label{def.function spaces}}
&\mathcal{G}_{K}=\left\{  1_{\mathbb{R}\times\mathbb{R}  \times\left\{  z_{k}\right\}  }:k=1,2,\ldots
,K\right\},\notag\\
&\mathcal{G}=\left\{  \left(  1_{\mathbb{R}\times\mathbb{R}  \times\left\{  z_{k}\right\}  },1_{\mathbb{R}\times\mathbb{R}  \times\left\{  z_{k+1}\right\}  }\right)
:k=1,2,\ldots,K-1\right\},  \notag\\
&\mathcal{H}_{1}=\left\{  \left(  -1\right)  ^{d}\cdot1_{B\times\left\{
d\right\}  \times\mathbb{R}}:B\text{ is a closed interval in }\mathbb{R},
d\in\{0,1\}\right\},\notag\\
&\bar{\mathcal{H}}_1=\left\{  \left(  -1\right)  ^{d}\cdot1_{B\times\left\{
	d\right\}\times\mathbb{R}  }:B\text{ is a closed}, \text{open}, \text{or half-closed interval in }
\mathbb{R}
,d\in\left\{  0,1\right\}  \right\},\notag\\
&\mathcal{H}_{2}=\left\{  1_{\mathbb{R}\times C \times\mathbb{R}}:C=(-\infty,c],c\in\mathbb{R}\right\},\notag\\
& \bar{\mathcal{H}}_2=\left\{  1_{\mathbb{R}\times C \times\mathbb{R}  }: C=(-\infty,c] \text{ or } C=(-\infty,c),c\in\mathbb{R}  \right\},\notag\\
&\mathcal{H}=\mathcal{H}_{1}\cup\mathcal{H}_{2}, \text{ and } \bar{\mathcal{H}}=\bar{\mathcal{H}}_1\cup\bar{\mathcal{H}}_2.
\end{align}
Let $\mathcal{P}$ denote the set of probability measures on $(\mathbb{R}^{3},\mathcal{B}_{\mathbb{R}^{3}})$. We use an i.i.d.\ sample \linebreak $\{\left(  Y_i,D_i,Z_i \right)\}_{i=1}^{n}  $ which is distributed according to some probability distribution $Q$ in $\mathcal{P}$, that is, that the measure $Q(G)=\mathbb{P}((Y_i,D_i,Z_i)\in G)$ for all $G\in\mathcal{B}_{\mathbb{R}^3}$, to construct a test for the testable implication given in \eqref{eq.testable implication multivalue} and \eqref{eq.fosd multi} (or in \eqref{eq.testable implication inequalities multi}). 
For every $Q\in\mathcal{P}$ and every measurable function $v$, by an abuse of notation we define
\begin{align}\label{eq.Q map}
Q\left(  v\right)  =\int v\,\mathrm{d}Q.
\end{align}
Define, by convention (see, for example, \citet[p.~45]{folland2013real}), that 
\begin{align}\label{eq.0timesinfinity}
0\cdot \infty=0.
\end{align}
For each $Q\in\mathcal{P}$, the closure of $\mathcal{H}$ in $L^2(Q)$ is equal to $\bar{\mathcal{H}}$ (Lemma \ref{lemma.H complete}). For every $Q\in\mathcal{P}$ and every $\left(  h,g\right)  \in {\bar{\mathcal{H}}\times\mathcal{G}}$ with $g=(g_{1},g_{2})$, define
\begin{align}\label{eq.phi_Q}
\phi_Q\left(  h,g\right)    =\frac{Q\left(  h\cdot g_{2}\right)
}{Q\left(  g_{2}\right)  }-\frac{Q\left(  h\cdot g_{1}\right)  }{Q\left(
g_{1}\right)  }.
\end{align}
With \eqref{eq.0timesinfinity}, $\phi_Q$ is always well defined.  Then the null hypothesis equivalent to \eqref{eq.testable implication inequalities multi} is 
\begin{align}\label{eq.null 1}
H_0: \sup_{\left(  h,g\right)  \in
	{{\mathcal{H}}\times\mathcal{ G}}}\phi_Q\left( h,g\right)\le 0
\end{align}
if the underlying distribution of the data is $Q$. Since $Q(v)$ is continuous on  $L^2(Q)$, \eqref{eq.null 1} is equivalent to $\sup_{\left(  h,g\right)  \in
	{\bar{\mathcal{H}}\times\mathcal{ G}}}\phi_Q\left( h,g\right)\le 0$. 
The alternative hypothesis is naturally set to 
\begin{align*}
H_1:\sup_{\left(  h,g\right)  \in
	{{\mathcal{H}}\times\mathcal{ G}}}\phi_Q\left( h,g\right)> 0.
\end{align*} 
Define the sample analogue of $\phi_Q$ by
\begin{align*}
\hat{\phi}_Q\left(  h,g\right)  =\frac{\hat{Q}(h\cdot g_{2})  }{\hat{Q}(g_{2})}-\frac{\hat{Q}(h\cdot g_{1}) }{\hat{Q}(g_{1}) },
\end{align*}
where $\hat{Q}$ denotes the empirical probability measure of $Q$ such that for every
measurable function $v$,
\begin{align}\label{eq.defPn}
\hat{Q}\left(  v\right)  =\frac{1}{n}\sum_{i=1}^{n}v\left(  Y_{i},D_{i},Z_{i}\right),
\end{align}
and $\{(  Y_{i},D_{i},Z_{i})\}_{i=1}^n$ is the i.i.d.\ sample distributed according to $Q$.

The goal of this section is to construct a test for the $H_0$ in \eqref{eq.null 1}. To evaluate the ability of the test to provide size control, we consider a ``local'' sequence of probability distributions $\{P_n\}_{n=1}^{\infty}\subset\mathcal{P}$ under which the testable implication is true and $P_n$ converges to some probability measure $P\in\mathcal{P}$. We introduce the next two assumptions to formalize the above settings. 
\begin{assumption}
	\label{ass.independent data}$\left\{  \left(  Y_{i},D_{i},Z_{i}\right)
	\right\}  _{i=1}^{n} $ is an i.i.d.\ data set distributed according to probability distribution $P_n$ for each $n$, where $D_i$ and $Z_i$ are discrete variables with support $\mathcal{D}$ and $\mathcal{Z}$, respectively. 
\end{assumption}

\begin{assumption}\label{ass.probability path}
	There is a probability measure $P\in\mathcal{P}$ such that 
	\begin{equation}\label{eq.probaility path}
	\lim_{n\rightarrow\infty}\int\left(  \sqrt{n}\left\{  \mathrm{d}P_{n}^{1/2}-\mathrm{d}P^{1/2}\right\}  -\frac{1}{2}v_0\mathrm{d}P^{1/2}\right)  ^{2}=0
	\end{equation}
	for some measurable function $v_0$, where $\mathrm{d}P_{n}^{1/2}$ and $\mathrm{d}P^{1/2}$ denote the square roots of the densities of $P_{n}$ and $P$, respectively.

\end{assumption}
Assumptions \ref{ass.independent data} and \ref{ass.probability path} assume an i.i.d.\ sample whose distribution $P_n$ is allowed to {change} as $n$ increases, and to converge to some probability distribution $P$ as defined in (3.10.10) of \citet{van1996weak}. The local analysis of \citet{fang2014inference} considered the case where the value of the underlying parameter may be close to a point at which the map involved in the test statistic is only directionally differentiable (not fully differentiable).
A similar convergent distribution sequence was introduced to show the local size control of their test.\footnote{See Examples 2.1 and 2.2 of \citet{fang2014inference}.} As will be shown later, our test statistic involves a nondifferentiable (neither fully nor directionally differentiable) map. We follow \citet{fang2014inference} and assume such a convergent distribution sequence to show the local size control of our test.

Clearly, ${\mathcal{H}}\times\mathcal{ G}\subset L^2(P)\times(L^2(P)\times L^2(P))$. Under Assumption \ref{ass.probability path}, define a metric $\rho_{P}$ on $L^2(P)\times(L^2(P)\times L^2(P))$ by
\begin{equation}\label{eq.rho}
\rho_{P}\left(  \left(  h,g\right)  ,\left(  h^{\prime},g^{\prime}\right)
\right)  =\left\Vert h-h^{\prime}\right\Vert _{L^2\left(  P\right)
}+\left\Vert g_{1}-g_{1}^{\prime}\right\Vert _{L^2\left(  P\right)
}+\left\Vert g_{2}-g_{2}^{\prime}\right\Vert _{L^2\left(  P\right)  }
\end{equation}
for all
$\left(  h,g\right)  ,\left(  h^{\prime},g^{\prime}\right)  \in L^2(P)\times(L^2(P)\times L^2(P))$ with $g=(g_1,g_2)$ and $g^{\prime}=(g_1^{\prime},g_2^{\prime})$. By Lemma \ref{lemma.complete HG}, the closure of $\mathcal{H}\times\mathcal{G}$ in $L^2(P)\times(L^2(P)\times L^2(P))$ under $\rho_{P}$ is equal to ${\bar{\mathcal{H}}\times\mathcal{G}}$, where $\bar{\mathcal{H}}$ is defined in \eqref{def.function spaces}. Define
\begin{align*}
\Lambda(Q)=\prod_{k=1}^{K}Q\left(1_{\mathbb{R}\times\mathbb{R}\times\{z_k\}}\right) \text{  for all $Q\in\mathcal{P}$}, \text{ and } T_n=n\cdot\prod_{k=1}^{K}\hat{P}_n\left(1_{\mathbb{R}\times\mathbb{R}\times\{z_k\}}\right),
\end{align*}
where $\hat{P}_n$ is the empirical probability measure of $P_n$ defined as in \eqref{eq.defPn}. Under Assumption \ref{ass.probability path}, we mainly consider the nontrivial case where $\Lambda(P)>0$. 
Also, for every $Q\in\mathcal{P}$, define
\begin{align}\label{eq.stat variance multi}
\sigma_{Q}^2  (h,g)=\Lambda(Q) \cdot \left\{\frac{ Q\left(  h^2\cdot g_{2}\right)   }{Q^{2}\left(
	g_{2}\right)  }  -\frac{ Q^2\left(  h\cdot g_{2}\right)   }{Q^{3}\left(
	g_{2}\right)  }  
+\frac{ Q\left(	h^2\cdot g_{1}\right)   }{Q^{2}\left(  g_{1}\right)  }
-\frac{ Q^2\left(	h\cdot g_{1}\right)   }{Q^{3}\left(  g_{1}\right)  }\right\}
\end{align}
for all $(h,g)\in \bar{\mathcal{H}}\times\mathcal{G}$ with $g=(g_1,g_2)$, where $Q^m(v)=[Q(v)]^m$ for all $m\in\mathbb{N}$ and all measurable $v$. 



\begin{lemma}
\label{lemma.weak convergence phi_K} Under Assumptions
\ref{ass.independent data} and \ref{ass.probability path}, $\sqrt{T_n}(  \hat{\phi}_{P_n}-\phi_{P})  \leadsto\mathbb{G}$ for some tight\footnote{In a metric space, tightness implies separability.} random element
$\mathbb{G}$ which almost surely has a uniformly $\rho_{P}$-continuous path, and for all $\left(  h,g\right)  \in\bar{\mathcal{H}}\times\mathcal{G}$ with $g=(g_1,g_2)$, the variance $Var\left(  \mathbb{G}\left(  h,g\right)  \right)$ is equal to the $\sigma_{P}^2  (h,g)$ given in \eqref{eq.stat variance multi}, where
\begin{align}\label{eq.sigma_square bound}
\sigma_P^{2}\left(  h,g\right)  \leq 1/4\cdot\max_{(g_1^{\prime},g_2^{\prime})\in\mathcal{G}}\left\{
\Lambda\left(  P\right)  /P\left(  g_{2}^{\prime}\right)  +\Lambda\left(  P\right)
/P\left(  g_{1}^{\prime}\right)  \right\}\le 1/2\cdot (K-1)^{-(K-1)},
\end{align}
and $K$ is the number of elements of $\mathcal{Z}$. In particular, $\sigma_P^{2}(h,g)\le1/4$ for all $(h,g)\in\bar{\mathcal{H}}\times\mathcal{ G}$ when $K=2$.
\end{lemma}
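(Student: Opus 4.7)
The plan is to apply the functional delta method to the ratio map $Q \mapsto \phi_Q$ after establishing empirical-process weak convergence for the triangular array $\{P_n\}$, and then to read off the limiting variance by exploiting the disjoint support of $g_1$ and $g_2$. I take the function class $\mathcal{F} = \{h g_i : (h,g) \in \bar{\mathcal{H}}\times\mathcal{G},\, i=1,2\} \cup \{g_i : (h,g)\in\bar{\mathcal{H}}\times\mathcal{G},\, i=1,2\}$. Since each $h\in\bar{\mathcal{H}}$ is $\pm 1$ times an indicator of an interval in one coordinate and $\{d\}$ in another, and each $g_i$ is the indicator of a $\{Z=z_k\}$ event, the elements of $\mathcal{F}$ are uniformly bounded $\pm$indicators of rectangles, so $\mathcal{F}$ is VC-subgraph and hence $P$-Donsker. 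Under Assumption \ref{ass.probability path} the path $\{P_n\}$ is Hellinger-differentiable at $P$ with score $v_0$, and the triangular-array Donsker theorem (e.g.\ Ch.~3.10 of \citet{van1996weak}) gives, under $P_n^n$,
\begin{equation*}
\sqrt{n}(\hat{P}_n - P_n) \leadsto \mathbb{G}_P \quad\text{in } \ell^{\infty}(\mathcal{F}),
\end{equation*}
where $\mathbb{G}_P$ is the $P$-Brownian bridge with $L^2(P)$-uniformly continuous paths, while simultaneously $\sqrt{n}(P_n - P)(f)\to P(fv_0)$ uniformly over $\mathcal{F}$.

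Since $\Lambda(P)>0$ forces $P(g_i)>0$, the map $\Phi(Q)(h,g)=Q(hg_2)/Q(g_2)-Q(hg_1)/Q(g_1)$ is Hadamard differentiable at $P$ with continuous linear derivative
\begin{equation*}
D\Phi_P[\xi](h,g) = \frac{\xi(h g_2) - (P(h g_2)/P(g_2))\,\xi(g_2)}{P(g_2)} - \frac{\xi(h g_1) - (P(h g_1)/P(g_1))\,\xi(g_1)}{P(g_1)}.
\end{equation*}
Splitting $\sqrt{n}(\hat{\phi}_{P_n}-\phi_P)=\sqrt{n}(\hat{\phi}_{P_n}-\phi_{P_n})+\sqrt{n}(\phi_{P_n}-\phi_P)$ and applying the delta method to each summand yields $\sqrt{n}(\hat{\phi}_{P_n}-\phi_P)\leadsto D\Phi_P[\mathbb{G}_P]+D\Phi_P[v_0\cdot P]$ in $\ell^{\infty}(\bar{\mathcal{H}}\times\mathcal{G})$; the second term is a deterministic drift and does not affect the variance. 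Scaling by $\sqrt{T_n/n}\to \sqrt{\Lambda(P)}$ (the law of large numbers under $P_n^n$ applied to $\hat{P}_n(1_{\{Z=z_k\}})$) and Slutsky give $\sqrt{T_n}(\hat{\phi}_{P_n}-\phi_P)\leadsto \mathbb{G}$, with $\mathbb{G}$ tight and almost surely $\rho_P$-uniformly continuous because $D\Phi_P$ is a bounded linear map and $\mathbb{G}_P$ is uniformly $L^2(P)$-continuous; this together with Lemma \ref{lemma.complete HG} handles the extension of the index set from $\mathcal{H}\times\mathcal{G}$ to $\bar{\mathcal{H}}\times\mathcal{G}$.

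For the variance identity, set $f_i = h g_i - (P(h g_i)/P(g_i))\,g_i$, so $P(f_i)=0$; because $g_1 g_2 \equiv 0$ (disjoint $Z$-events), also $f_1 f_2 \equiv 0$, so $\mathbb{G}_P(f_1)$ and $\mathbb{G}_P(f_2)$ are uncorrelated. Using $g_i^2 = g_i$, a direct computation gives $\mathrm{Var}(\mathbb{G}_P(f_i)) = P(h^2 g_i) - P(h g_i)^2/P(g_i)$; summing $P(g_i)^{-2}\mathrm{Var}(\mathbb{G}_P(f_i))$ over $i=1,2$ and multiplying by $\Lambda(P)$ reproduces \eqref{eq.stat variance multi} exactly. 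For the bound, every $h\in\bar{\mathcal{H}}$ takes values in $\{0,1\}$ or in $\{-1,0\}$, so $|h|$ is an indicator and the Bernoulli variance bound gives $P(h^2 g_i)/P(g_i) - (P(h g_i)/P(g_i))^2 \le 1/4$; this yields $\sigma_P^2(h,g)\le (1/4)\{\Lambda(P)/P(g_1)+\Lambda(P)/P(g_2)\}$, which is at most the claimed maximum over $\mathcal{G}$. Finally, $\Lambda(P)/P(g_i)$ is a product of $K-1$ of the probabilities $P(Z=z_j)$ whose total is at most $1$, so AM--GM gives $\Lambda(P)/P(g_i)\le (K-1)^{-(K-1)}$, producing the bound $(1/2)(K-1)^{-(K-1)}$; when $K=2$ the two terms sum exactly to $1$, giving the sharper $1/4$.

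The main obstacle is the first step: making the triangular-array empirical-process weak convergence rigorous under $P_n^n$ and ensuring that both the centered sample fluctuation and the deterministic Hellinger drift appear cleanly in $\ell^{\infty}(\mathcal{F})$, so that the Hadamard derivative can be evaluated at the fixed limit $P$ rather than at the shifting $P_n$. Once this is in hand the remainder reduces to bookkeeping with the linear derivative and elementary variance/AM--GM estimates.
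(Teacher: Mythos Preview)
Your proposal is correct and follows essentially the same strategy as the paper: establish triangular-array Donsker convergence for the underlying empirical process, apply the Hadamard delta method to the ratio map, scale by $\sqrt{T_n/n}\to\sqrt{\Lambda(P)}$ via Slutsky, and compute the variance using the disjoint-support identity $g_1g_2\equiv 0$ together with the Bernoulli bound $p(1-p)\le 1/4$ and AM--GM for the product of $K-1$ probabilities.

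The only organizational difference is your split $\sqrt{n}(\hat\phi_{P_n}-\phi_P)=\sqrt{n}(\hat\phi_{P_n}-\phi_{P_n})+\sqrt{n}(\phi_{P_n}-\phi_P)$, which requires you to apply the delta method at the moving base point $P_n$ for the first term and then argue $D\Phi_{P_n}\to D\Phi_P$. The paper instead centers at the fixed $P$ from the outset: its Lemma~\ref{lemma.weak convergence Pn_hat and convergence Pn} (a direct application of Theorem~3.10.12 of \citet{van1996weak}) gives $\sqrt{n}(\hat P_n-P)\leadsto\mathbb{G}_P+Q_0$ in one stroke under $P_n$, after which a single application of the ordinary delta method at the fixed point $P$ suffices. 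This sidesteps exactly the ``main obstacle'' you flag at the end, so if you adopt that centering the remaining steps are, as you say, bookkeeping.
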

Lemma \ref{lemma.weak convergence phi_K} provides the pointwise ($P$ is fixed) asymptotic distribution of $\sqrt{T_n}(  \hat{\phi}_{P_n}-\phi_P)$ as $P_n$ converges to $P$ under Assumption \ref{ass.probability path}. We note that the pointwise asymptotic distribution of $\sqrt{T_n}(\hat{\phi}_{P_n}-\phi_P)$ is different from the asymptotic distribution of $\sqrt{T_n}(\hat{\phi}_{P_n}-\phi_{P_n})$ which can be obtained by Theorem 3.10.12 of \citet{van1996weak}. The weak convergence of $\sqrt{T_n}(\hat{\phi}_{Q}-\phi_{Q})$ uniform in $Q$ may be obtained under different assumptions following the notion of \citet[p.~168]{van1996weak}. We derive the pointwise asymptotic distribution in order to obtain the null asymptotic distribution of the test statistic using the proposed extended delta method. See the discussion after Theorem \ref{thm.weak convergence SK}.
Lemma \ref{lemma.weak convergence phi_K} also provides the asymptotic variance of $\sqrt{T_n}(  \hat{\phi}_{P_n}-\phi_P)$, which is uniformly bounded by $1$ for all $K>1$. We used the quantity $\sqrt{T_n}$ instead of $\sqrt{n}$ to establish the asymptotic distribution in order to achieve a parameter-free bound for the asymptotic variance as shown in \eqref{eq.sigma_square bound}.
The quantity $T_{n}$ is asymptotically equivalent to $n$ in the sense that $T_{n}/n\rightarrow\prod_{k=1}^{K}\mathbb{P}\left(  Z=z_{k}\right)  $ in probability. If we use $\sqrt{n}$, the bound of the asymptotic variance may involve the underlying parameter $P$. In the binary instrument case where $Z\in\{0,1\}$, we let $m_0=\sum_{i=1}^{n}1\left\{  Z_{i}=0\right\}  $ and $m_1=\sum_{i=1}^{n}1\left\{  Z_{i}=1\right\}  $. It then follows that $T_{n}=m_0 m_1/n$ which is used in the test of \citet{kitagawa2015test}. Suppose instead $Z\in\{0,1,2\}$, and we let $m_z=\sum_{i=1}^{n}1\left\{  Z_{i}=z\right\}  $ for $z\in\{0,1,2\}$. Then $T_{n}=m_0 m_1 m_2/n^2$.

The bound in \eqref{eq.sigma_square bound} will be useful when we construct the test statistic. 
By \eqref{eq.stat variance multi}, for every $\left(  h,g\right)  \in\bar{\mathcal{H}}\mathcal{\times G}$ with
$g=\left(  g_{1},g_{2}\right) $, define the sample analogue of
$\sigma_P^{2}\left(  h,g\right)  $ by
\begin{align}\label{eq.estimated stat variance multi}
\hat{\sigma}_{P_n}^{2}\left(  h,g\right)  =\frac{T_n}{n}\cdot\left\{\frac{ \hat{P}_n\left(  h^2\cdot g_{2}\right)   }{\hat{P}_n^{2}\left(
	g_{2}\right)  }  -\frac{ \hat{P}_n^2\left(  h\cdot g_{2}\right)   }{\hat{P}_n^{3}\left(
	g_{2}\right)  }  
+\frac{ \hat{P}_n\left(	h^2\cdot g_{1}\right)   }{\hat{P}_n^{2}\left(  g_{1}\right)  }
-\frac{ \hat{P}_n^2\left(	h\cdot g_{1}\right)   }{\hat{P}_n^{3}\left(  g_{1}\right)  }\right\}.
\end{align}
Note that for each $h\in\bar{\mathcal{H}}$ and each $g_{l}\in\mathcal{G}_{K}$, if $\hat{P}_{n}(g_{l})=0$ then $\hat{P}_{n}(h\cdot g_{l})=0$. By \eqref{eq.0timesinfinity}, $\hat{\sigma}_{P_n}^2$ is well defined. Similar to \eqref{eq.sigma_square bound}, we can find a bound for $\hat{\sigma}_{P_n}$ for every finite sample. It can be shown that for all $(h,g)$,
\begin{align}\label{eq.sigma_square_hat bound}
\hat{\sigma}_{P_n}^{2}\left(  h,g\right)  \leq 1/4\cdot\max_{(g_1^{\prime},g_2^{\prime})\in\mathcal{G}}\left\{
(T_n/n)  /\hat{P}_n\left(  g_{2}^{\prime}\right)  +(T_n/n)
/\hat{P}_n\left(  g_{1}^{\prime}\right)  \right\}\le 1/2\cdot (K-1)^{-(K-1)}.
\end{align}
Clearly, the bounds for $\sigma_P$ and $\hat{\sigma}_{P_n}$ will decrease as $K$ increases.

We may extend the idea of \citet{kitagawa2015test} and construct the test statistic to be 
\begin{align}\label{eq.test stat K}
\sup_{(h,g)\in{\mathcal{H}}\times\mathcal{G}} \frac{\sqrt{T_n}\hat{\phi}_{P_n}(h,g)}{\max\{\xi,\hat{\sigma}_{P_n}(h,g)\}}
\end{align}
for some positive number (trimming parameter) $\xi$. Here, $\xi$ plays two roles: (1) Since $\hat{\sigma}_{P_n}$ can be zero, $\xi$ bounds the denominator away from zero;\footnote{In practice, when the sample size is small, it is possible that we only have a small number of observations for $Z=z_k$ for some $k$. In this case, we can use $\sqrt{n}$ instead of $\sqrt{T_n}$ to construct the test statistic. We then use \eqref{eq.estimated stat variance multi} to find an empirical bound for $\hat{\sigma}_{P_n}$, and use this bound to determine the values of $\xi$. We could also redefine the instrument $Z$, in some cases, such that we have more observations for each possible value of the redefined instrument. For example, we may define the new instrument $\tilde{Z}=1\{Z\ge z_0\}$ for some $z_0$. However, this will change the definitions of all types of individuals (always takers, compliers, defiers, and never takers). In this case, we need to guarantee that the instrument used in the empirical analysis and the instrument used in the test are the same. The test result for $\tilde{Z}$ may be false for $Z$.} (2) as shown in the Monte Carlo studies of \citet{kitagawa2015test} and the present paper, different values of $\xi$, from small (close to 0) to large (close to 1), may lead to different powers of the test for the same data generating process (DGP), which could be close to 0. \citet{kitagawa2015test} suggests that if there is no prior knowledge
available about a likely alternative, the default choice of $\xi$ could be set to $0.07$  according to the simulation studies for the binary treatment and binary instrument case. They also suggest that users report test results using different values of $\xi$. This paper constructs the test statistic in a way that, loosely speaking, computes the weighted average of the test statistics in \eqref{eq.test stat K} over $\xi$.\footnote{In this way, we can avoid repeating the test using the same data set but different values of $\xi$ and making a decision based on all these results. The potential issue of multiple comparisons can be prevented accordingly.} If we put all the weight on one particular value of $\xi$, the test statistic degenerates to the test statistic in  \eqref{eq.test stat K}.

Let $\Xi$ be a predetermined closed  subset of $[0,1]$ such that $0\not\in\Xi$. The set $\Xi$ contains all the values of $\xi$ used for constructing the test statistic. Only one of the values greater than (or equal to) the bound in Lemma \ref{lemma.weak convergence phi_K}, say $1$, needs to be included in $\Xi$. The test statistic in \eqref{eq.test stat K} reduces to the unweighted KS statistic when $\xi=1$. 
Let $\nu$ be a positive measure on $\Xi$.
\begin{assumption}\label{ass.nu}
 The measure $\nu$ satisfies that $0<\nu(\Xi)<\infty$ and ${S}_n\in L^{1}(\nu)$ for all $\omega\in\Omega$ and all $n$ with
 \begin{align*}
 {S}_n(\xi)= 	\sup_{(h,g)\in{\mathcal{H}}\times\mathcal{G}} \frac{\hat{\phi}_{P_n}(h,g)}{\max\{\xi,\hat{\sigma}_{P_n}(h,g)\}}.
 \end{align*}
\end{assumption}
Now we set the test statistic to
\begin{equation}\label{eq.test stat expansion}
TS_n=\int_{\Xi}\sup_{(h,g)\in{\mathcal{H}}\times\mathcal{G}} \frac{\sqrt{T_n}\hat{\phi}_{P_n}(h,g)}{\max\{\xi,\hat{\sigma}_{P_n}(h,g)\}}\,\mathrm{d}\nu(\xi).
\end{equation}
The measure $\nu$ could be a Dirac measure centered at some fixed $\xi\in\Xi$. This is equivalent to using a particular value for the trimming parameter to construct the test statistic as in \eqref{eq.test stat K}. Or $\nu$ could be a discrete or continuous probability measure that assigns probabilities to the elements of $\Xi$. This is equivalent to using a weighted average of the test statistics in \eqref{eq.test stat K} over $\xi$. By using \eqref{eq.test stat expansion}, we take into account the fact that the values of $\xi$ may influence the power of the test, and we can also avoid the multiple testing issue. See the discussion in Section \ref{sec.simulation} about the computational simplification of the test statistic in \eqref{eq.test stat expansion}. Define 
\begin{align}\label{eq.Psi_HG}
\Psi_{{\mathcal{H}}\times\mathcal{G}}=\left\{  \left(  h,g\right)
\in{\mathcal{H}}\times\mathcal{G}:\phi_P\left(  h,g\right)  =0\right\} \text{  and } \Psi_{\bar{\mathcal{H}}\times\mathcal{G}}=\left\{  \left(  h,g\right)
\in\bar{\mathcal{H}}\times\mathcal{G}:\phi_P\left(  h,g\right)  =0\right\}.
\end{align}
Since $1_{\left\{  a\right\}  \times\left\{  0\right\}  \times\mathbb{R}},-1_{\left\{  a\right\}  \times\left\{  1\right\}  \times\mathbb{R}}
\in{\mathcal{H}}$ for all $a\in\mathbb{R}$,  $\Psi_{{\mathcal{H}}\times\mathcal{G}}$ and $\Psi_{\bar{\mathcal{H}}\times\mathcal{G}}$ are not empty. 

In the following theorem, we establish the asymptotic distribution of the test statistic under null. We note that the $L^r$ ($r\in\mathbb{N}$) spaces play an important role in deriving this asymptotic distribution. For example, we show that $\bar{\mathcal{H}}$ is compact in $L^2(Q)$ for every $Q\in\mathcal{P}$ and $\bar{\mathcal{H}}\times\mathcal{G}$ is compact in $L^2(P)\times(L^2(P)\times L^2(P))$ under $\rho_P$ (constructed based on the $L^2$ norm). We obtain the Glivenko--Cantelli and the Donsker results using the $L^1$ and the $L^2$ norms. We also show that the weak limit $\mathbb{G}$ of $\sqrt{T_n}(\hat{\phi}_{P_n}-\phi_P)$ in Lemma \ref{lemma.weak convergence phi_K} has a continuous path under $\rho_P$.\footnote{See Appendix \ref{appen.auxiliary} for more detailed results.} The weak convergence in Theorem \ref{thm.weak convergence SK} is established by using these fundamental results.

\begin{theorem}
\label{thm.weak convergence SK}Suppose Assumptions \ref{ass.independent data}, \ref{ass.probability path}, and \ref{ass.nu} hold. If the $H_0$ in \eqref{eq.null 1} is true with $Q=P_n$ for all $n$, then
\begin{align}\label{eq.TS weak convergence}
TS_n \leadsto\int_{\Xi}\sup_{(h,g)\in{\Psi_{{\mathcal{H}}\times\mathcal{G}}}} \frac{\mathbb{G}(h,g)}{\max\{\xi,{\sigma}_{P}(h,g)\}}\,\mathrm{d}\nu(\xi),
\end{align}
where $\mathbb{G}$ is as in Lemma \ref{lemma.weak convergence phi_K}.

\end{theorem}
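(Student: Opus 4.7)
The plan is to present $TS_n$ as a nonsmooth functional of the joint empirical process $(\sqrt{T_n}(\hat{\phi}_{P_n}-\phi_P),\,\hat{\sigma}_{P_n})$ and pass to the limit via the extended continuous mapping theorem promised in the introduction. First I would upgrade Lemma \ref{lemma.weak convergence phi_K} to the joint statement
\[
\bigl(\sqrt{T_n}(\hat{\phi}_{P_n}-\phi_P),\,\hat{\sigma}_{P_n}\bigr)\leadsto(\mathbb{G},\sigma_P)\quad\text{in}\quad\ell^{\infty}(\bar{\mathcal{H}}\times\mathcal{G})^{2},
\]
the first marginal being Lemma \ref{lemma.weak convergence phi_K} itself and the second a uniform law of large numbers for $\hat{\sigma}_{P_n}^{2}$ viewed as a continuous transformation of empirical averages on the Glivenko--Cantelli classes $\mathcal{H}$ and $\mathcal{G}_{K}$ (with denominators $\hat{P}_{n}(g_{\ell})$ bounded away from $0$ for large $n$ because $\Lambda(P)>0$); joint convergence follows because the second limit is deterministic. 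I would then transfer the null from $P_{n}$ to $P$: continuity of $Q\mapsto\phi_{Q}(h,g)$ under the $L^{2}$-contiguity in Assumption \ref{ass.probability path} turns $\phi_{P_{n}}(h,g)\le 0$ into $\phi_{P}(h,g)\le 0$ on $\mathcal{H}\times\mathcal{G}$, so that $\Psi_{\mathcal{H}\times\mathcal{G}}$ is exactly the set on which the decomposition
\[
\sqrt{T_n}\,\hat{\phi}_{P_n}(h,g)=\sqrt{T_n}\bigl(\hat{\phi}_{P_n}-\phi_P\bigr)(h,g)+\sqrt{T_n}\,\phi_P(h,g)
\]
has no $-\infty$ drift.

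The core step is to show, for each fixed $\xi\in\Xi$, that
\[
\sup_{(h,g)\in\mathcal{H}\times\mathcal{G}}\frac{\sqrt{T_n}\,\hat{\phi}_{P_n}(h,g)}{\max\{\xi,\hat{\sigma}_{P_n}(h,g)\}}\leadsto\sup_{(h,g)\in\Psi_{\mathcal{H}\times\mathcal{G}}}\frac{\mathbb{G}(h,g)}{\max\{\xi,\sigma_{P}(h,g)\}}.
\]
I would argue this by a shrinking-slab localization: picking $\eta_{n}\downarrow 0$ with $\eta_{n}\sqrt{T_n}\to\infty$, on $\{\phi_{P}\le-\eta_{n}\}$ the supremand is bounded above by $(M_{n}-\eta_{n}\sqrt{T_n})/\inf\Xi\to-\infty$ in probability, where $M_{n}=\|\sqrt{T_n}(\hat{\phi}_{P_{n}}-\phi_{P})\|_{\infty}=O_{p}(1)$; on $\{-\eta_{n}<\phi_{P}\le 0\}$ each point is $\rho_{P}$-close to $\Psi_{\mathcal{H}\times\mathcal{G}}$ because $\phi_{P}$ is $\rho_{P}$-Lipschitz on $\mathcal{H}\times\mathcal{G}$ (visible from the quotient form defining $\phi_{Q}$), and the uniform $\rho_{P}$-continuity of the sample paths of $\mathbb{G}$ together with the $\xi$-floor on the denominator transfers this approximation to the supremand. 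The extended continuous mapping theorem then matches sup-over-$(\mathcal{H}\times\mathcal{G})$ of the pre-limit object with sup-over-$\Psi_{\mathcal{H}\times\mathcal{G}}$ of the limit, yielding the pointwise-in-$\xi$ weak convergence.

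Finally, I would promote the pointwise-in-$\xi$ convergence to convergence of the $\nu$-integral by noting that $|\sqrt{T_n}\,S_{n}(\xi)|\le M_{n}/\inf\Xi$ is an $O_{p}(1)$ envelope independent of $\xi$ (integrable against $\nu(\Xi)<\infty$ from Assumption \ref{ass.nu}), and invoking dominated convergence through a Skorokhod representation (or equivalently joint weak convergence in $L^{1}(\Xi,\nu)$) to commute the $\nu$-integral with the weak limit. The main obstacle is this last simultaneous-in-$\xi$ step: the supremum functional is neither fully nor directionally differentiable (as flagged in the introduction), so ordinary delta-method reasoning fails, and both the ``$\phi_{P}\le-\eta_{n}$ drops out'' reduction and the slab approximation onto $\Psi_{\mathcal{H}\times\mathcal{G}}$ must be arranged uniformly in $\xi\in\Xi$. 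That uniformity-in-$\xi$ version of the nonsmooth limit step is precisely what the paper's extended continuous mapping theorem is engineered to deliver.
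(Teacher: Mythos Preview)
Your overall strategy---joint convergence of $(\sqrt{T_n}(\hat\phi_{P_n}-\phi_P),\hat\sigma_{P_n})$, decomposition of the numerator, and a slab localization onto the contact set---is sound and is in fact the analytical core of the paper's argument. Two points, however, deserve correction.

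First, your justification for the slab step is reversed. Lipschitz continuity of $\phi_P$ gives ``$(h,g)$ close to $\Psi_{\mathcal{H}\times\mathcal{G}}$ $\Rightarrow$ $\phi_P(h,g)$ close to $0$,'' not the converse you need. The correct argument is compactness of $\bar{\mathcal{H}}\times\mathcal{G}$ under $\rho_P$ together with continuity of $\phi_P$: if a sequence $(h_n,g_n)$ with $\phi_P(h_n,g_n)\to 0$ stayed bounded away from $\Psi_{\bar{\mathcal{H}}\times\mathcal{G}}$, a convergent subsequence would contradict continuity. The paper packages this as upper hemicontinuity of the argmax correspondence $(\xi,\psi)\mapsto\arg\max_{(h,g)}\psi(\xi,h,g)$ at $(\xi,\varphi_P)$. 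Note also that this delivers closeness to $\Psi_{\bar{\mathcal{H}}\times\mathcal{G}}$, not to $\Psi_{\mathcal{H}\times\mathcal{G}}$; the paper closes the gap at the very end by proving $\Psi_{\bar{\mathcal{H}}\times\mathcal{G}}=\overline{\Psi_{\mathcal{H}\times\mathcal{G}}}$ (which uses finiteness of $\mathcal{D}$) and invoking $\rho_P$-continuity of the limit paths.

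Second, you misidentify the purpose of the paper's extended continuous mapping theorem. It is \emph{not} about uniformity in $\xi$; it is about allowing the maps $g_n$ to be \emph{random}. The paper centers at the random element $\hat\varphi_P=\phi_P/\mathcal{M}(\hat\sigma_{P_n})$ (random through $\hat\sigma_{P_n}$), observes that $\mathcal{I}\circ\mathcal{S}(\hat\varphi_P)=0$ under $H_0$, and then computes a directional derivative of $\mathcal{I}\circ\mathcal{S}$ at this random base point; this is exactly what Theorems~A.1--A.2 are built for. Your route is structurally different and arguably simpler: by treating $\hat\sigma_{P_n}$ as an \emph{input} to a deterministic sequence of maps $g_n(\psi,s)=\int_\Xi\sup_{(h,g)}\{[\psi(h,g)+r_n\phi_P(h,g)]/\max(\xi,s(h,g))\}\,\mathrm{d}\nu(\xi)$, you can in principle invoke the standard extended continuous mapping theorem (van der Vaart--Wellner, Theorem~1.11.1) with deterministic $g_n$, bypassing the random-$g_n$ extension altogether. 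This is a legitimate shortcut for the present theorem, though it forfeits the reusable tool the paper develops. The uniformity in $\xi$ is essentially free in either route, since the slab decomposition is independent of $\xi$ and the denominator is uniformly bounded below by $\inf\Xi>0$.
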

Theorem \ref{thm.weak convergence SK} provides the pointwise ($P$ is fixed) asymptotic distribution of the test statistic if the $H_0$ in \eqref{eq.null 1} is true with $Q=P_n$ for all $n$.\footnote{More precisely, the weak convergence in \eqref{eq.TS weak convergence} is under $P_n$.} To find this asymptotic distribution, we employed the pointwise weak convergence in Lemma \ref{lemma.weak convergence phi_K} and the extended delta method provided in Appendix \ref{appendix.main results}. Because of a nondifferentiability issue, the existing delta methods fail to work in establishing the weak convergence in \eqref{eq.TS weak convergence}. In Appendix \ref{appendix.main results}, we provide an extended continuous mapping theorem and an extended delta method elaborated by Theorems \ref{lemma.random continuous mapping} and \ref{lemma.random delta method}, respectively, to deal with this technical issue. See further discussion in Remark \ref{remark.application of extended delta method}. Theorem \ref{lemma.random continuous mapping} can be viewed as an extension of Theorem 1.11.1 of \citet{van1996weak}, and Theorem \ref{lemma.random delta method} can be viewed as an extension of Theorem 3.9.5 of \citet{van1996weak} and of Theorem 2.1 of \citet{fang2014inference}. For simplicity of notation, we let 
\begin{align*}
	\mathbb{T}=\int_{\Xi}\sup_{(h,g)\in{\Psi_{{\mathcal{H}}\times\mathcal{G}}}} \frac{\mathbb{G}(h,g)}{\max\{\xi,{\sigma}_{P}(h,g)\}}\,\mathrm{d}\nu(\xi) \text{  and } \mathbb{T}_0=\int_{\Xi}\sup_{(h,g)\in{\Psi_{\bar{\mathcal{H}}\times\mathcal{G}}}} \frac{\mathbb{G}_0(h,g)}{\max\{\xi,{\sigma}_{P}(h,g)\}}\,\mathrm{d}\nu(\xi),
\end{align*}
where $\mathbb{G}_0$ is some random element such that
$ \mathbb{G}=\mathbb{G}_0 + \Lambda(P)^{1/2}\mathcal{L}'_P(Q_0)$,\footnote{See more details in the proof of Theorem \ref{thm.test multi}.} where
$Q_0(v)=P(vv_0)$ for all suitable $v$, and for all $(h,g)\in\bar{\mathcal{H}}\times\mathcal{G}$,
\begin{align*}
	\mathcal{L}_{P}^{\prime}\left(  Q_0\right) & \left(  h,g\right)  =\\
	&\frac{Q_0\left(
		h\cdot g_{2}\right)  P\left(  g_{2}\right)  -P\left(  h\cdot g_{2}\right)
		Q_0\left(  g_{2}\right)  }{P^{2}\left(  g_{2}\right)  }-\frac{Q_0\left(  h\cdot
		g_{1}\right)  P\left(  g_{1}\right)  -P\left(  h\cdot g_{1}\right)  Q_0\left(
		g_{1}\right)  }{P^{2}\left(  g_{1}\right)  }.
	\end{align*}
It can be shown that $\mathcal{L}'_P(Q_0)\le 0$ on $\Psi_{\bar{\mathcal{H}}\times\mathcal{G}}$ under $H_0$, and thus $\mathbb{G}\le\mathbb{G}_0$.
Following the proof of Theorem \ref{thm.weak convergence SK}, we can show that 
\begin{align*}
\mathbb{T}=\int_{\Xi}\sup_{(h,g)\in{\Psi_{\bar{\mathcal{H}}\times\mathcal{G}}}} \frac{\mathbb{G}(h,g)}{\max\{\xi,{\sigma}_{P}(h,g)\}}\,\mathrm{d}\nu(\xi).
\end{align*}
It then follows that $\mathbb{T}\le\mathbb{T}_0$.
When $P_n$ is fixed at some $P$ for all $n$, then $v_0=0$ and $\mathbb{G}_0=\mathbb{G}$.

\subsection{Bootstrap-based Inference}\label{subsec.bootstrap}

It was shown that the asymptotic distribution in \eqref{eq.TS weak convergence} involves the set $\Psi_{{\mathcal{H}}\times\mathcal{G}}$ that depends on the underlying probability measure $P$. Therefore, we need to find a ``valid'' estimator
$\widehat{\Psi_{{\mathcal{H}}\times\mathcal{G}}}$ for $\Psi_{{\mathcal{H}}\times\mathcal{G}}$ in order to consistently approximate the asymptotic distribution. 
By the definition of
$\Psi_{{\mathcal{H}}\times\mathcal{G}}$ in \eqref{eq.Psi_HG}, we construct $\widehat{\Psi_{{\mathcal{H}}\times\mathcal{G}}}$  by
\begin{align}\label{eq.Psi_hat}
\widehat{\Psi_{{\mathcal{H}}\times\mathcal{G}}}=\left\{  \left(  h,g\right)
\in{\mathcal{H}}\times\mathcal{G}:\sqrt{T_n}\left\vert \frac{\hat{\phi}_{P_n}(h,g)}{\max\{\xi_0,\hat{\sigma}_{P_n}(h,g)\}}\right\vert  \leq\tau
_{n}\right\}
\end{align}
with $\tau_{n}\rightarrow\infty$ and $\tau_{n}/\sqrt{n}\rightarrow0$ as
$n\rightarrow\infty$, where $\xi_0$ is a small positive number. We suggest using $\xi_0=0.001$ in practice.\footnote{It can be shown that ${\widehat{\Psi_{{\mathcal{H}}\times\mathcal{G}}}}$ can also be used to approximate the asymptotic distribution when $\mathcal{D}=\{d_1,d_2,\ldots\}$. See \eqref{eq.equi bootstrap test stat}.} 
This is a method similar to that which is used in \citet{Beare2015improved} and  \citet{Beare2017improved} to estimate contact sets in independent contexts. {See \citet{linton2010improved} and \citet{lee2018testing} for further discussion of estimation of contact sets.}  
Each $(h,g)$ 
is included in $\widehat{\Psi_{{\mathcal{H}}\times\mathcal{G}}}$ if $\sqrt{T_n}|\hat{\phi}_{P_n}(h,g)|$ is no more
than $\tau_n$ estimated standard deviations from zero. 
As mentioned by \citet{Beare2017improved}, we can
effectively use pointwise confidence intervals to select points in this way.
\subsubsection{Test Procedure}\label{subsbusec.test procedure}

We implement the test in the following sequence of steps:
\begin{enumerate}[label=(\arabic*)]
\item Obtain the bootstrap sample $\{  (  \hat{Y}_{i},\hat{D}_{i},\hat{Z}_{i})  \}  _{i=1}^{n}$ drawn independently with replacement from the
sample $\left\{  \left(  Y_{i},D_{i},Z_{i}\right)  \right\}  _{i=1}^{n}$.

\item Calculate the bootstrap version of $\hat{\phi}_{P_n}$ by
\begin{align}\label{eq.phi hat star}
\hat{\phi}_{P_n}^{B}\left(  h,g\right)  =\frac{\hat{P}_{n}^{B}\left(  h\cdot
g_{2}\right)  }{\hat{P}_{n}^{B}\left(  g_{2}\right)  }-\frac{\hat{P}_{n}^{B}\left(
h\cdot g_{1}\right)  }{\hat{P}_{n}^{B}\left(  g_{1}\right)  },
\end{align}
let $T_n^B=n\cdot\prod_{k=1}^{K}\hat{P}_n^B\left(1_{\mathbb{R}\times\mathbb{R}\times\{z_k\}}\right)$, and calculate the bootstrap version of $\hat{\sigma}_{P_n}$ by
\begin{align}\label{eq.sigma hat star}
\hat{\sigma}_{P_n}^{B}\left(  h,g\right)  =\sqrt{\frac{T_n^B}{n}}\cdot\sqrt{\frac{ \hat{P}_{n}^{B}\left(  h^2\cdot g_{2}\right)   }{\hat{P}_n^{B}\left(g_{2}\right)^2  }  -\frac{ \hat{P}_n^{B }\left(  h\cdot g_{2}\right)^2   }{\hat{P}_n^{B}\left(
		g_{2}\right)^3  }  
	+\frac{ \hat{P}_n^{B}\left(	h^2\cdot g_{1}\right)   }{\hat{P}_n^{B}\left(  g_{1}\right)^2  }
	-\frac{ \hat{P}_n^{B}\left(	h\cdot g_{1}\right)^2   }{\hat{P}_n^{B}\left(  g_{1}\right)^3  }  }
\end{align}
for all
$(h,g)\in\bar{\mathcal{H}}\times\mathcal{G}$ with $g=(g_1,g_2)$, where $\hat{P}_{n}^{B}\left(  v\right)  =n^{-1}\sum_{i=1}^{n}v(  \hat{Y}_{i},\hat{D}_{i},\hat{Z}_{i})  $ for all measurable $v$.  We note that \eqref{eq.sigma_square_hat bound} also provides a bound for $\hat{\sigma}^{B}_{P_n}$.

\item Calculate the bootstrap version of the test statistic by  
\begin{align}\label{eq.bootstrap test stat}
TS_n^B=\int_{\Xi}\sup_{(h,g)\in{\widehat{\Psi_{{\mathcal{H}}\times\mathcal{G}}}}} \frac{\sqrt{T_n^B}(  \hat{\phi}_{P_n}^{B}(h,g)-\hat{\phi}_{P_n}(h,g))}{\max\{\xi,\hat{\sigma}_{P_n}^B(h,g)\}}\,\mathrm{d}\nu(\xi).
\end{align}
Since the asymptotic distribution in \eqref{eq.TS weak convergence} involves a nonlinear map, the bootstrap test statistic  in \eqref{eq.bootstrap test stat} was constructed following the idea of \citet{fang2014inference}. {The nonlinearity of the map may cause inconsistencies in the ``standard'' bootstrap approximation. See \citet{dumbgen1993nondifferentiable},  \citet{andrews2000inconsistency}, and \citet{fang2014inference} for details.} Because of the denominator ${\max\{\xi,\hat{\sigma}_{P_n}^B(h,g)\}}$, our approach is an extension of that of \citet{fang2014inference}. The calculation of \eqref{eq.bootstrap test stat} can be simplified in practice as discussed in Section \ref{sec.simulation} for \eqref{eq.test stat expansion}. 
\item Repeat steps (1), (2), and (3) $n_B$ times independently, for (say) $n_B=1000$. Given
the nominal significance level $\alpha$, calculate the bootstrap critical value
$\hat{c}_{1-\alpha}$ by
\begin{align}\label{eq.c hat}
\hat{c}_{1-\alpha}=\inf\left\{  c:\mathbb{P}\left(  TS_n^B  \leq c\bigg|\{(Y_{i},D_{i},Z_{i})\}_{i=1}^{n}\right)  \geq
1-\alpha\right\}  .
\end{align}
In practice, we approximate $\hat{c}_{1-\alpha}$ by computing the $1-\alpha$ quantile of the $n_B$ independently generated bootstrap statistics, with $n_B$ chosen as large as is computationally convenient.

\item The decision rule for the test is: Reject $H_{0}$ if $TS_n>\hat{c}_{1-\alpha}$.

\end{enumerate}

The following theorem presents the asymptotic properties of the proposed test. Under Assumption \ref{ass.probability path}, Theorem \ref{thm.test multi}(i) provides the local size control of the test. As discussed in \citet{fang2014inference}, the asymptotic distribution of the test statistic may discontinuously depend on the parameter of interest, if the map involved in the test statistic is not fully differentiable. However, the finite sample distribution of the test statistic often continuously depends on the parameter of interest. \citet{imbens2004confidence} emphasize that this discrepancy may cause poor finite sample properties of the test. As suggested by \citet{fang2014inference}, a local analysis can help better approximate the finite sample properties of the test when the parameter of interest is close to a point at which the map is not fully differentiable. Our test statistic involves a  nondifferentiable map, and Theorem \ref{thm.test multi}(i) provides evidence for the good finite sample size property of the test. 
\begin{theorem}
\label{thm.test multi}Suppose Assumptions \ref{ass.independent data}, \ref{ass.probability path}, and \ref{ass.nu} hold. 
\begin{enumerate}[label=(\roman{*})]
\item If the $H_0$ in \eqref{eq.null 1} is true with $Q=P_n$ for all $n$
and the CDF of $\mathbb{T}_0 $ is increasing and continuous at its $1-\alpha$ quantile $c_{1-\alpha}$, then
	$\lim_{n\rightarrow\infty}\mathbb{P}(TS_n  >\hat{c}_{1-\alpha}) \le\alpha$.
	If, in addition, $P_n=P$ for all large $n$, then $\lim_{n\rightarrow\infty}\mathbb{P}(
	TS_n  >\hat{c}_{1-\alpha}) =\alpha$.

\item If the $H_0$ in \eqref{eq.null 1} is false with $Q=P$ and $P_n=P$  for all large $n$, then \\
$\lim_{n\rightarrow\infty}\mathbb{P}(
TS_n  >\hat{c}_{1-\alpha})  =1$.

\end{enumerate}
\end{theorem}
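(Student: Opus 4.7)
The plan is to prove part (i) by combining bootstrap consistency with a stochastic domination argument, and part (ii) by showing that $TS_n$ diverges while $\hat{c}_{1-\alpha}$ remains tight under the fixed alternative. I begin with part (i). First, I would show that conditional on the data, the bootstrap process $\sqrt{T_n^B}(\hat{\phi}_{P_n}^B - \hat{\phi}_{P_n})$ converges weakly to a Gaussian process $\mathbb{G}_0$ equal in distribution to $\mathbb{G}$ from Lemma \ref{lemma.weak convergence phi_K}; this follows from standard exchangeable bootstrap results for empirical processes indexed by the Donsker class $\mathcal{H}\times\mathcal{G}$, together with a continuous-mapping argument handling the ratio structure in \eqref{eq.phi_Q}. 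Next, I would verify that the estimated contact set $\widehat{\Psi_{\mathcal{H}\times\mathcal{G}}}$ in \eqref{eq.Psi_hat} is sandwiched, with probability tending to one, between $\Psi_{\mathcal{H}\times\mathcal{G}}$ (since $\sqrt{T_n}|\hat{\phi}_{P_n}(h,g)|$ is $O_p(1)$ on $\Psi_{\mathcal{H}\times\mathcal{G}}$ while $\tau_n\to\infty$) and a shrinking $\rho_P$-neighborhood of $\Psi_{\bar{\mathcal{H}}\times\mathcal{G}}$ (since $\tau_n/\sqrt{n}\to 0$ forces $\phi_P(h,g)\to 0$ for every $(h,g)$ retained).

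With these two ingredients, I would invoke the extended continuous mapping theorem (Theorem \ref{lemma.random continuous mapping}) to conclude that, conditional on the data, $TS_n^B \leadsto \mathbb{T}_0$ in probability. The key point is that the random contact set approximates $\Psi_{\bar{\mathcal{H}}\times\mathcal{G}}$ in a sense strong enough for the sup-integral functional to be continuous at the limit, while the trimming $\xi_0>0$ in \eqref{eq.Psi_hat} keeps the denominator bounded away from zero. Under the assumption that the CDF of $\mathbb{T}_0$ is strictly increasing and continuous at $c_{1-\alpha}$, this bootstrap consistency implies $\hat{c}_{1-\alpha}\to c_{1-\alpha}$ in probability. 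Combining $TS_n\leadsto\mathbb{T}$ from Theorem \ref{thm.weak convergence SK} with Slutsky's lemma and the Portmanteau theorem yields
\[
\limsup_{n\to\infty}\mathbb{P}(TS_n > \hat{c}_{1-\alpha}) \leq \mathbb{P}(\mathbb{T} > c_{1-\alpha}) \leq \mathbb{P}(\mathbb{T}_0 > c_{1-\alpha}) = \alpha,
\]
where the last inequality uses $\Psi_{\mathcal{H}\times\mathcal{G}}\subseteq \Psi_{\bar{\mathcal{H}}\times\mathcal{G}}$ and $\mathbb{G}\stackrel{d}{=}\mathbb{G}_0$. For the exactness claim when $P_n=P$ eventually, I would argue $\mathbb{T}\stackrel{d}{=}\mathbb{T}_0$ by showing that the two suprema agree almost surely: $\mathcal{H}$ is $L^2(P)$-dense in $\bar{\mathcal{H}}$ (atoms of $P$ can be avoided by perturbing the endpoints that appear in the indicators), the paths of $\mathbb{G}$ are $\rho_P$-uniformly continuous, and both $\phi_P$ and $\sigma_P$ are $\rho_P$-continuous, so the sup of $\mathbb{G}(h,g)/\max\{\xi,\sigma_P(h,g)\}$ over the zero set of $\phi_P$ in $\bar{\mathcal{H}}\times\mathcal{G}$ is inherited by the sup over its $\mathcal{H}\times\mathcal{G}$-approximable subset.

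For part (ii), under the fixed alternative there exists $(h^\ast,g^\ast)\in\mathcal{H}\times\mathcal{G}$ with $\phi_P(h^\ast,g^\ast)=\delta>0$. By the law of large numbers, $\hat{\phi}_{P_n}(h^\ast,g^\ast)\to\delta$ and $\hat{\sigma}_{P_n}(h^\ast,g^\ast)$ stays bounded, so for every $\xi\in\Xi$ the integrand of $TS_n$ evaluated at $(h^\ast,g^\ast)$ is at least $\sqrt{T_n}\delta/\max\{1,\hat{\sigma}_{P_n}(h^\ast,g^\ast)\}$, which diverges; hence $TS_n\to\infty$ in probability. On the other hand, $(h^\ast,g^\ast)$ is excluded from $\widehat{\Psi_{\mathcal{H}\times\mathcal{G}}}$ with probability tending to one, since $\sqrt{T_n}|\hat{\phi}_{P_n}(h^\ast,g^\ast)|/\max\{\xi_0,\hat{\sigma}_{P_n}(h^\ast,g^\ast)\}$ grows at rate $\sqrt{n}$, dominating $\tau_n$. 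A repeat of the bootstrap-limit analysis restricted to a shrinking neighborhood of $\Psi_{\bar{\mathcal{H}}\times\mathcal{G}}$ then yields $TS_n^B=O_p(1)$ conditionally, so $\hat{c}_{1-\alpha}=O_p(1)$, and therefore $\mathbb{P}(TS_n>\hat{c}_{1-\alpha})\to 1$.

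The main obstacle is the bootstrap-limit step in part (i): the map taking the empirical process into $TS_n$ is neither fully nor directionally Hadamard differentiable, so existing delta-method tools fail, and one must handle the interplay of the random contact set and the nonsmooth sup-integral functional simultaneously. This is precisely the motivation for the extended continuous mapping and extended delta method developed in the appendix. A secondary difficulty is establishing $\mathbb{T}\stackrel{d}{=}\mathbb{T}_0$ needed for exactness, which rests on a careful density argument for the zero set of the linear functional $\phi_P(\cdot,g)$ intersected with $\mathcal{H}$.
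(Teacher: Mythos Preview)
Your overall architecture matches the paper's, but there is a genuine gap in the size-control step of part~(i). You assert that the bootstrap limit $\mathbb{G}_0$ is \emph{equal in distribution} to the $\mathbb{G}$ of Lemma~\ref{lemma.weak convergence phi_K}, and you use this to pass from $\mathbb{P}(\mathbb{T}>c_{1-\alpha})$ to $\mathbb{P}(\mathbb{T}_0>c_{1-\alpha})$. In the local setting of Assumption~\ref{ass.probability path} this is false: the paper shows $\mathbb{G}=\mathbb{G}_0+\Lambda(P)^{1/2}\mathcal{L}_P'(Q_0)$, where $Q_0(v)=P(vv_0)$ is the deterministic drift induced by the local sequence $P_n$. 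The bootstrap, being centered at $\hat\phi_{P_n}$, strips out this drift and recovers only the centered Gaussian part $\mathbb{G}_0$. Thus $\mathbb{G}\stackrel{d}{=}\mathbb{G}_0$ holds only when $P_n=P$ (i.e.\ $v_0=0$), which is exactly the special case in the second sentence of part~(i), not the general one.

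The paper's fix is not distributional but \emph{pathwise}: from $\phi_{P_n}\le 0$ for all $n$ and $\phi_P=0$ on $\Psi_{\bar{\mathcal{H}}\times\mathcal{G}}$, the uniform Hadamard differentiability of $\mathcal{L}$ gives $\mathcal{L}_P'(Q_0)\le 0$ on $\Psi_{\bar{\mathcal{H}}\times\mathcal{G}}$, hence $\mathbb{G}\le\mathbb{G}_0$ there almost surely, and so $\mathbb{T}\le\mathbb{T}_0$ almost surely. This is the missing idea; without it, the inequality $\mathbb{P}(\mathbb{T}>c_{1-\alpha})\le\mathbb{P}(\mathbb{T}_0>c_{1-\alpha})$ has no justification. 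A related loose end: to write $\lim$ rather than $\limsup$ you need continuity of the CDF of $\mathbb{T}$ (not just $\mathbb{T}_0$) at $c_{1-\alpha}$; the paper obtains this from Davydov--Lifshits--Smorodina using the convexity of $\mathcal{I}\circ\mathcal{S}_{\Psi_{\bar{\mathcal{H}}\times\mathcal{G}}}$ and the ordering $r_0\le c_{1-\alpha}$ that the pathwise inequality supplies. Your part~(ii) is fine in spirit; the paper takes the slightly shorter route of bounding $\hat c_{1-\alpha}$ above by the critical value computed over the full index set $\mathcal{H}\times\mathcal{G}$, which converges to a finite constant, rather than rerunning the contact-set analysis under the alternative.
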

It is implied by Theorem 11.1 of  \citet{davydov1998local} that in (i) of Theorem \ref{thm.test multi}, the CDF of $\mathbb{T}_0$ is differentiable and has a positive derivative everywhere except at countably many points in its support, provided that  $\mathbb{T}_0\neq0$. If $\mathbb{T}_0=0$ at null configurations, our test statistic converges to zero in probability and so does the critical value. Theorem \ref{thm.test multi} does not show clearly how the rejection rate of the test
will behave asymptotically in this case. As discussed in \citet{Beare2017improved}, this is a common
theoretical limitation for irregular testing problems. Tests based on the machinery of \citet{fang2014inference}, and also those based on generalized
moment selection \citep{andrews2010inference,andrews2013inference}, may encounter this issue. One practical resolution is to replace the
bootstrap critical value $\hat{c}_{1-\alpha}$ with $\max\{\hat{c}_{1-\alpha},\eta\}$ or $\hat{c}_{1-\alpha}+\eta$,
where $\eta$ is some small positive constant. See, for instance, \citet[p.~13]{donald2016improving}. Simulation results in Table \ref{tab:Rej H0 multi (DGP2)} showed that the empirical rejection rates of our
test  with $\eta=0$ ($\tau_n=2$) are well controlled by the nominal significance level when $\mathbb{T}_0=0$ under null configurations.\footnote{In Section \ref{sec.simulation}, the value of $\tau_n$ is chosen to be $2$.} 

Theorem \ref{thm.test multi}(i) shows that the test is locally size controlled for every convergent distribution sequence satisfying the null. The convergent probability distributions $\{P_n\}$ depend on $n$, that is, the underlying distribution $P_n$ of the data can be different for every $n$. As $n\to\infty$, $P_n$ (satisfying the null) converges to $P$ under Assumption \ref{ass.probability path}. Theorem \ref{thm.weak convergence SK} provides the pointwise ($P$ is fixed) asymptotic distribution of the test statistic $TS_n$ for this convergent sequence of probability distributions $P_n\to P$. With this pointwise asymptotic distribution, we then obtain the local size control along such a probability distribution sequence:  $\lim_{n\to\infty}\mathbb{P}(TS_n  >\hat{c}_{1-\alpha})\le\alpha$.
When both $D$ and $Z$ are binary, \citet{kitagawa2015test} and the present paper consider testing the same null and alternative hypotheses. \citet{kitagawa2015test} obtains the uniform size control for their test under different conditions. That is, 
$\limsup_{n\to\infty}\sup_{Q\in\mathcal{P}_0}\mathbb{P}(TS_n  >\hat{c}^K_{1-\alpha})\le\alpha$, where $\mathcal{P}_0$ denotes the set of probability distributions in $\mathcal{P}$ that satisfy $H_0$, and the superscript ``$K$'' denotes the critical value of \citet{kitagawa2015test} (the test statistic of \citet{kitagawa2015test} is equivalent to that in the present paper when both $D$ and $Z$ are binary). Since Theorem \ref{thm.test multi}(i) assumes $P_n\in\mathcal{P}_0$, clearly we have that for every $P_n$,
\begin{align*}
    \mathbb{P}(TS_n  >\hat{c}^K_{1-\alpha})\le \sup_{Q\in\mathcal{P}_0}\mathbb{P}(TS_n  >\hat{c}^K_{1-\alpha}),
\end{align*}
which indicates that the uniform size control of \citet{kitagawa2015test} implies local size control. 
In general,  without additional assumptions, the local size control of the proposed test does not directly imply the uniform size control over the class of data generating processes in the null.

\subsection{Binary Treatment and Binary Instrument: Power Improvement}\label{sec.binary}

In this section, we consider the special case where the treatment $D$ and the instrument $Z$ are both binary.  
We show how to achieve power improvement over the test of \citet{kitagawa2015test} based on the results of \citet{kitagawa2015test} and those from Section \ref{subsec.bootstrap}. As shown in Section \ref{sec.setup}, the null hypothesis for the testable implications consists of a set of inequalities. \citet{kitagawa2015test} used an upper bound on the asymptotic distribution of the test statistic under null to construct the bootstrap critical value. The upper bound is identical to the asymptotic distribution when all the inequalities in the null are binding. Therefore, their test could be conservative. The present paper establishes the asymptotic distribution of the test statistic under null. We then construct the critical value based on this asymptotic distribution, rather than on an upper bound, and therefore improve the power of the test.  

Let $z_1=0$, $z_2=1$, $d_{1}=0$, and $d_{2}=1$. 
The test statistic in \eqref{eq.test stat expansion} is now numerically equal to the one constructed by \citet{kitagawa2015test} if we let $\nu$ be a Dirac measure. Recall that the instrument is allowed to be multivalued under the constructions in Section \ref{sec.multi D and Z}.\footnote{For the case where the treatment is binary and the instrument is multivalued, \citet{kitagawa2015test} constructed the test statistic by first computing the normalized differences of two empirical probability measures between neighboring pairs of values of instruments (ordered according to the propensity score), and then taking the maximum value of all these differences. Since these differences can be mutually correlated, it would not be straightforward to obtain the asymptotic distribution of their test statistic and approximate its null distribution by bootstrap.}

We consider a simple case where $P_n=P$ for all $n$ and the $H_0$ in \eqref{eq.null 1} is true with $Q=P$.
As introduced in Section \ref{sec.setup}, we follow \citet{kitagawa2015test} and define probability measures
\begin{align*}
P_1\left(  B,C\right)     =\mathbb{P}\left(  Y\in B,D\in C|Z=1\right)  \text{ and } P_0\left(  B,C\right)    =\mathbb{P}\left(  Y\in B,D\in C|Z=0\right)
\end{align*}
for all $B,C\in\mathcal{B}_{\mathbb{R}}$. Now we define 
\begin{align*}
\mathcal{F}_b=\left\{(-1)^{d}\cdot 1_{B\times\{d\}}:B \text{ is a closed interval}, d\in\{0,1\}\right\},
\end{align*}
and write $P_d(f)=\int f \,\mathrm{d}P_d$ for all measurable $f$ and each $d\in\{0,1\}$. \citet{kitagawa2015test} showed that their critical value converged to the $1-\alpha$ quantile of the distribution\linebreak $\sup_{f\in\mathcal{F}_b}\mathbb{G}_{H}(f)/(\xi\vee\sigma_{H}(f))$, where $H=\lambda P_1+(1-\lambda)P_0$, $\lambda=\mathbb{P}(Z=1)$, $\mathbb{G}_H$ is an $H$-Brownian bridge, and $\sigma_H(f)$ is the standard deviation of $\mathbb{G}_H(f)$, that is, $\sigma_H^2(f)=H(f^2)-H^2(f)$. Let $\mathcal{F}_b^{\ast}=\left\{f\in\mathcal{ F}_b:P_0(f)=P_1(f)\right\}$. Then it is easy to show that $H(f)=P_0(f)=P_1(f)$ for all $f\in\mathcal{ F}_b^{\ast}$. Let $\nu$ be a Dirac measure centered at some $\xi$. It can be shown that 
\begin{align}\label{eq.upper bound of limiting distribution}
\sup_{f\in\mathcal{ F}_b}\frac{\mathbb{G}_H(f)}{\xi\vee\sigma_H(f)}\ge\sup_{f\in\mathcal{ F}_b^{\ast}}\frac{\mathbb{G}_H(f)}{\xi\vee\sigma_H(f)}\overset{L}{=}\mathbb{T},
\end{align}
where $\mathbb{T}$ is the asymptotic distribution of the test statistic in \eqref{eq.TS weak convergence} and ``$\overset{L}{=}$'' means equivalence in distribution. 

\citet{kitagawa2015test} constructed a pooled-data bootstrap approximation for the Gaussian process ${\mathbb{G}_H}/({\xi\vee\sigma_H})$, denoted by ${\mathbb{G}_H^{B}}/({\xi\vee\sigma_H^B})$, and then computed the bootstrap test statistic by $\sup_{f\in\mathcal{ F}_b}{\mathbb{G}_H^B(f)}/({\xi\vee\sigma_H^B(f)})$.
This bootstrap statistic approximates the distribution of $\sup_{f\in\mathcal{ F}_b} {\mathbb{G}_H(f)}/({\xi\vee\sigma_H(f)})$. For the case where $D$ and $Z$ are both binary, we suggest modifying the test in Section \ref{subsec.bootstrap} to achieve power improvement over the test of \citet{kitagawa2015test}.\footnote{The modification may also be applied to the case where $D$ is multivalued and $Z$ is binary.} Specifically, we first estimate $\mathcal{F}_b^{\ast}$ by a subset of $\mathcal{F}_b$, denoted by $\widehat{\mathcal{F}_b^{\ast}}$, in a way similar to \eqref{eq.Psi_hat}. Then we follow the bootstrap approach of \citet{kitagawa2015test} to construct ${\mathbb{G}_H^B}$ and ${\sigma_H^B}$, and construct the bootstrap test statistic by $\sup_{f\in\widehat{\mathcal{ F}_b^{\ast}}}{\mathbb{G}_H^B(f)}/({\xi\vee\sigma_H^B(f)})$. 
Clearly, the proposed critical value is always smaller than that of \citet{kitagawa2015test} because $\widehat{\mathcal{F}_b^{\ast}}\subset\mathcal{F}_b$. It can also be shown that our critical value converges to the $1-\alpha$ quantile of  $\sup_{f\in\mathcal{ F}_b^{\ast}}{\mathbb{G}_H(f)}/({\xi\vee\sigma_H(f)})$ (equivalently, $\mathbb{T}$) under $H_0$.
Since the test statistic in \eqref{eq.test stat expansion} is numerically equivalent to that of \citet{kitagawa2015test}, this shows that the power of the test can be improved by the use of our approach. This improvement is against all alternatives according to the construction of the critical value. See the simulation evidence in Appendix \ref{sec.comparison binary}.

The test of \citet{mourifie2016testing} for the inequalities in \eqref{eq.testable implication} employed the intersection bounds framework of
\citet{chernozhukov2013intersection}. As shown in Proposition 1 of \citet{mourifie2016testing},\footnote{See also Theorem 6 of  \citet{chernozhukov2013intersection}.} the limiting rejection rate under null is equal to the nominal significance level $\alpha$ only when all the inequalities in the null are binding, and below the nominal
level elsewhere in the null. This result is similar to that of \citet{kitagawa2015test}, because only when all the inequalities are binding, the (contact) set $\mathcal{F}_b^{\ast}$ is equal to $\mathcal{F}_b$. If we are at a point in the null where the inequalities are not all binding, then the tests of \citet{kitagawa2015test} and \citet{mourifie2016testing} would have limiting rejection
rates below the nominal level, and thus lack power against nearby points
in the alternative. Theorem \ref{thm.test multi} in the present paper shows that the proposed test can achieve the nominal level over a larger region in the null, where the inequalities in the testable implication could not all be binding, thereby improving power.


\subsection{Unordered Treatment}\label{sec.unordered}
With testable implication \eqref{eq.testable implication unordered treatment}, we
define the function space
\begin{align}\label{eq.function spaces unordered}
&\mathcal{H}\times \mathcal{G}=\left\{\left(1_{B\times\{d\}\times\mathbb{R}},\left( 1_{\mathbb{R}\times \mathbb{R}\times\left\{  z\right\}
},1_{\mathbb{R}\times \mathbb{R} \times\left\{  z^{\prime}\right\}  }\right)
\right):B\text{ is a closed interval}, \left(  d,z,z^{\prime}\right)  \in
\mathcal{C}\right\}.
\end{align}
For every probability measure $Q$ with \eqref{eq.Q map}, we define $\phi_Q$ by
$
\phi_{Q}\left(  h,g\right)  ={Q\left(  h\cdot g_2 \right) }/{Q\left(  g_2 \right)  }-{Q\left(  h\cdot g_1\right)  }/{Q\left(  g_1 \right)  }
$
for every $(h,g)\in\mathcal{H}\times\mathcal{G}$ with $g=(g_1,g_2)$. Testable implication \eqref{eq.testable implication unordered treatment} is equivalent to the $H_0$ in
\[
H_0: \sup_{(h,g)\in\mathcal{H}\times\mathcal{G}}\phi_{Q}\left( h,g\right) \le 0 \text{ and } H_1: \sup_{(h,g)\in\mathcal{H}\times\mathcal{G}}\phi_{Q}\left(  h,g\right) > 0
\]
if $Q$ is the underlying probability distribution of the data. Then we can follow the test procedure in Section \ref{subsbusec.test procedure} to conduct the test with the function space $\mathcal{H}\times\mathcal{G}$ defined in \eqref{eq.function spaces unordered}.  


\section{Simulation Evidence}\label{sec.simulation}

We first designed Monte Carlo simulations for the case where $D$ and $Z$ are both
multivalued random variables such that $D\in\{0,1,2\}$ and $Z\in\{0,1,2\}$. Additional Monte Carlo studies can be found in Appendix \ref{sec.appendix Monte Carlo Comparison}. Each
simulation consisted of $1000$ Monte Carlo iterations and $1000$ bootstrap iterations. To expedite the simulation, we employed the warp-speed
method of \citet{giacomini2013warp}. The nominal significance level $\alpha$ was set to $0.05$. As shown in \eqref{eq.sigma_square bound} and \eqref{eq.sigma_square_hat bound}, $\sigma_P^2$ and $\hat{\sigma}_{P_n}^2$ are bounded by $(1/2)\cdot (K-1)^{-(K-1)}$, where $K=3$ in our setting. 
The simulations constructed in this section are similar to those in \citet{kitagawa2015test}. 
In each simulation, the measure $\nu$ was set to be a Dirac measure $\delta_{\xi}$ centered at one of the following values of $\xi$: $0.07$, $0.1$, $0.13$, $0.16$, $0.19$, $0.22$, $0.25$, $0.28$, $0.3$, and $1$, or to be a probability measure $\bar{\nu}_{\xi}$ that assigns equal probabilities (weights) to the values of $\xi$ listed above. Four values of $\xi$ were used in the simulations of \citet{kitagawa2015test}: $0.07$, $0.22$, $0.3$, and $1$, where $0.07\approx\sqrt{0.005(1-0.005)}$, $0.22\approx\sqrt{0.05(1-0.05)}$, and $0.3=\sqrt{0.1(1-0.1)}$. As shown in \eqref{eq.estimated stat variance multi}, for every $(h,g)\in\bar{\mathcal{H}}\times\mathcal{G}$ with $g=(g_1,g_2)$,
\begin{align*}
\hat{\sigma}_{P_n}^{2}\left(  h,g\right)  =&\,\frac{T_n}{n}\cdot\left\{\frac{ \hat{P}_n\left(  h^2\cdot g_{2}\right)   }{\hat{P}_n^{2}\left(
	g_{2}\right)  }  -\frac{ \hat{P}_n^2\left(  h\cdot g_{2}\right)   }{\hat{P}_n^{3}\left(
	g_{2}\right)  }  
+\frac{ \hat{P}_n\left(	h^2\cdot g_{1}\right)   }{\hat{P}_n^{2}\left(  g_{1}\right)  }
-\frac{ \hat{P}_n^2\left(	h\cdot g_{1}\right)   }{\hat{P}_n^{3}\left(  g_{1}\right)  }\right\}\\
=&\,\frac{\prod_{k=1}^{K}\hat{P}_n\left(1_{\mathbb{R}\times\mathbb{R}\times\{z_k\}}\right)}{\hat{P}_n\left(
	g_{2}\right)}\frac{ \hat{P}_n\left(  h^2\cdot g_{2}\right)   }{\hat{P}_n\left(
	g_{2}\right)  }\left\{1-\frac{ \hat{P}_n\left(  h^2\cdot g_{2}\right)   }{\hat{P}_n\left(
	g_{2}\right)  }\right\}\\
	&+\frac{\prod_{k=1}^{K}\hat{P}_n\left(1_{\mathbb{R}\times\mathbb{R}\times\{z_k\}}\right)}{\hat{P}_n\left(
	g_{1}\right)}\frac{ \hat{P}_n\left(  h^2\cdot g_{1}\right)   }{\hat{P}_n\left(
	g_{1}\right)  }\left\{1-\frac{ \hat{P}_n\left(  h^2\cdot g_{1}\right)   }{\hat{P}_n\left(
	g_{1}\right)  }\right\}.
\end{align*}
The values of $\xi\in\{0.07,0.22,0.3\}$ take the form of $\sqrt{\pi(1-\pi)}$ where $\pi\in\{0.005,0.05,0.1\}$. As discussed in \citet{kitagawa2015test}, $\pi$ can be interpreted as that if both
${ \hat{P}_n\left(  h^2\cdot g_{1}\right)   }/{\hat{P}_n\left(
	g_{1}\right)  }$ and ${ \hat{P}_n\left(  h^2\cdot g_{2}\right)   }/{\hat{P}_n\left(
	g_{2}\right)  }$ are less than $\pi$, then the weight becomes the inverse of $\xi$ instead of the inverse of the estimated standard deviation. As $\pi$ gets larger, less weight is put on $\hat{\phi}_{P_n}$ for smaller probability events, and vice versa. In the following simulations, we chose the values of $\xi$ following the choice of \citet{kitagawa2015test}.
In empirical practice, application-based simulations can be applied to choose $\Xi$ and $\nu$, which is illustrated in Section \ref{sec.choice of xi}.

When calculating the
supremum in the test statistic $TS_n$ in \eqref{eq.test stat expansion}, we followed
the numerical computation approach used by \citet{kitagawa2015test}. Specifically,
we calculated the supremum using only the closed intervals $B$ with the values of $\{Y_i\}_{i=1}^{n}$
observed in the data as the endpoints, that is, $B=[a,b]$ with $a,b\in\{Y_1,Y_2,\ldots,Y_n\}$ and $a\le b$. It is not hard to show that the test statistic calculated in this way is equal to that in \eqref{eq.test stat expansion}. We also used such closed intervals to calculate the bootstrap test statistic $TS^B_n$ in \eqref{eq.bootstrap test stat}. From all such intervals, we found those that satisfy the inequality in \eqref{eq.Psi_hat} and used them to calculate the supremum of $ {\sqrt{T_n^B}(  \hat{\phi}_{P_n}^{B}-\hat{\phi}_{P_n})  }/\max\{\xi,{\hat{\sigma}_{P_n}^{B}}\}$ for each $\xi$ listed above. 

\subsection{Size Control and Tuning Parameter Selection}\label{sec.size}
The first set of simulations was designed to investigate the size of the test and the selection of the tuning parameter. As shown in \eqref{eq.Psi_hat}, the estimate $\widehat{\Psi_{{\mathcal{H}}\times\mathcal{G}}}$ involves a tuning parameter $\tau_n$ with $\tau_n\to\infty$ and $\tau_n/\sqrt{n}\to0$ as $n\to\infty$. In practice, we need to use a particular value of $\tau_n$ for each sample size $n$. For this set of simulations, we set $n$ to $3000$ and $\tau_n$ to $0.1$, $0.5$, $1$, $2$, $3$, $4$, and $\infty$. For $\tau_n=\infty$, $\widehat{\Psi 		_{{\mathcal{H}}\times\mathcal{G}}}={{\mathcal{H}}	\times\mathcal{G}}$ and the test is conservative. We compared the rejection rates obtained using each of these values of $\tau_n$ and decided which value would be a good option for sample sizes close to $3000$. We let $U\sim\mathrm{Unif}(0,1)$, $V\sim\mathrm{Unif}(0,1)$, $N_0\sim \mathrm{N}(0,1)$, $N_1\sim \mathrm{N}(1,1)$, $N_2\sim \mathrm{N}(2,1)$, $Z=2\times 1\{U \le 0.5\}+1\{0.5<U \le 0.7\}$  ($\mathbb{P}(Z=2)=0.5$), $D_z=2\times 1\{V\le 0.33\}+ 1\{0.33  <V\le 0.66\}$ for $z=0,1,2$,  $D=\sum_{z=0}^2 1\{Z=z\}\times D_z$, and $Y=\sum_{d=0}^{2}1\{D=d\}\times N_d$. All the variables $U$, $V$, $N_0$, $N_1$, and $N_2$ are mutually independent. Clearly, Assumption \ref{ass.IV validity for multivalued Z} holds in this case with $z_1=0$, $z_2=1$, and $z_3=2$.

Table \ref{tab:Rej H0 multi} shows the results of the simulations. The rejection rates were influenced by the values of $\tau_n$ and $\xi$. For each measure $\nu$, a smaller $\tau_n$ yields greater rejection rates, because a smaller $\tau_n$ leads to a smaller critical value according to \eqref{eq.Psi_hat}. 
For $\tau_n=2$, all the rejection rates were close to those for $\tau_n=\infty$ (the conservative case).  Similar to the pattern of the results shown in \citet{kitagawa2015test}, some rejection rates for $\tau_n=2$ with $\delta_{\xi}$ centered at particular values of $\xi$ were slightly upwardly biased compared to the nominal size. Overall, however, the results showed good performance of the test in terms of size control. When sample sizes are less than or close to $3000$, we suggest using $\tau_n=2$ in practice to achieve good size control without a significant power loss.  When the sample size increases, $\tau_n$ should be increased accordingly. It is also worth noting that when we used the measure $\bar{\nu}_{\xi}$, the rejection rates could be well controlled by the nominal significance level. Thus if we have no additional information about the choice of $\xi$,  $\bar{\nu}_{\xi}$ can be a default choice for us. 

\begin{table}[h]
	
	\centering
	\caption{Rejection Rates under $H_{0}$ for Multivalued $D$ and Multivalued $Z$}
	\label{tab:Rej H0 multi}
	\scalebox{0.9}{
		\begin{tabular}{   c  c  c  c  c  c  c  c  c  c  c  c  }
			\hline
			\hline
			\multirow{2}{*}{$\tau_n$} & \multicolumn{10}{c}{$\xi$ for $\delta_{\xi}$} &\multirow{2}{*}{$\bar{\nu}_{\xi}$} \\
			\cline{2-11}
		  & 0.07 & 0.1 & 0.13 & 0.16 & 0.19 & 0.22 & 0.25 & 0.28 & 0.3 & 1 & \\
			\hline

		   $0.1$    &0.122&0.108&0.096&0.096&0.108&0.092&0.092&0.092&0.092&0.092&0.108\\
		   $0.5$    &0.092&0.070&0.068&0.074&0.064&0.069&0.069&0.069&0.069&0.069&0.075\\
		   $1$      &0.079&0.060&0.047&0.068&0.056&0.058&0.061&0.061&0.061&0.061&0.054\\
		   $2$      &0.073&0.050&0.037&0.050&0.050&0.055&0.048&0.048&0.048&0.048&0.047\\
		   $3$      &0.073&0.048&0.037&0.050&0.050&0.049&0.048&0.048&0.048&0.048&0.047\\
		   $4$      &0.073&0.048&0.037&0.050&0.050&0.049&0.048&0.048&0.048&0.048&0.047\\
		   $\infty$ &0.073&0.048&0.037&0.050&0.050&0.049&0.048&0.048&0.048&0.048&0.047\\

			\hline
		\end{tabular}
	}

\end{table} 

\subsection{Rejection Rates against Fixed Alternatives}
The second set of simulations was designed to investigate the power of the test. Six data generating processes (DGPs) in total were considered, and Assumption \ref{ass.IV validity for multivalued Z} did not hold with $z_1=0$, $z_2=1$, and $z_3=2$. Sample sizes were set to $n=200$, $600$, $1000$, $1100$, and $2000$. The probability $\mathbb{P}(Z=2)=r_n$, with $r_n=1/2$, $1/6$, $1/2$, $1/11$, and $1/2$ for the corresponding sample sizes. We set $\tau_n$ to $2$, as suggested in the preceding set of simulations. DGPs (1)--(4) are the cases where \eqref{eq.testable implication multivalue} was violated and \eqref{eq.fosd multi} was not violated, and DGPs (5) and (6) are the cases where both \eqref{eq.testable implication multivalue} and \eqref{eq.fosd multi} were violated. We let $U\sim\mathrm{Unif}(0,1)$, $V\sim\mathrm{Unif}(0,1)$, $W\sim\mathrm{Unif}(0,1)$, and $Z=2\times 1\{U \le r_n\}+1\{r_n<U \le r_n+0.2\}$.

For DGPs (1)--(4), we let $D_z=2\times 1\{V\le 0.45\}+ 1\{0.45 <V\le 0.55\}$ for $z=0,1,2$,   $D=\sum_{z=0}^2 1\{Z=z\}\times D_z$, $N_{00}\sim \mathrm{N}(0,1)$, $N_{10}\sim \mathrm{N}(0,1)$, and $N_{dz}\sim \mathrm{N}(0,1)$ for $d=0,1,2$ and $z=1,2$.
\begin{enumerate}[label=(\arabic*):]
	
	\item  $N_{20}\sim \mathrm{N}(-0.7,1)$ and $Y=\sum_{z=0}^{2}1\{Z=z\}\times(\sum_{d=0}^{2} 1\{D=d\}\times N_{dz})$.
	
	\item $N_{20}\sim \mathrm{N}(0,1.675^2)$ and $Y=\sum_{z=0}^{2}1\{Z=z\}\times(\sum_{d=0}^{2} 1\{D=d\}\times N_{dz})$.
	
	\item $N_{20}\sim \mathrm{N}(0,0.515^2)$ and $Y=\sum_{z=0}^{2}1\{Z=z\}\times(\sum_{d=0}^{2} 1\{D=d\}\times N_{dz})$.
	
	\item $N_{20a}\sim \mathrm{N}(-1,0.125^2)$, $N_{20b}\sim \mathrm{N}(-0.5,0.125^2)$, $N_{20c}\sim \mathrm{N}(0,0.125^2)$,\\ $N_{20d}\sim \mathrm{N}(0.5,0.125^2)$, $N_{20e}\sim \mathrm{N}(1,0.125^2)$, $N_{20}=1\{W\le 0.15\}\times N_{20a}+1\{0.15<W\le 0.35\}\times N_{20b}+1\{0.35<W\le 0.65\}\times N_{20c}+1\{0.65<W\le 0.85\}\times N_{20d}+1\{W>0.85\}\times N_{20e}$, and $Y=\sum_{z=0}^{2}1\{Z=z\}\times(\sum_{d=0}^{2} 1\{D=d\}\times N_{dz})$.
	
\end{enumerate} 
For DGPs (5) and (6), we let $N_0\sim \mathrm{N}(0,1)$, $N_1\sim \mathrm{N}(1,1)$, and $N_2\sim \mathrm{N}(2,1)$.
\begin{enumerate}[resume,label=(\arabic*):]
	
	\item $D_0=2\times 1\{V\le 0.6\}+ 1\{0.6 <V\le 0.8\}$, $D_1=2\times 1\{V\le 0.33\}+ 1\{0.33 <V\le 0.66\}$, $D_2=D_1$, $D=\sum_{z=0}^2 1\{Z=z\}\times D_z$, and $Y=\sum_{d=0}^{2} 1\{D=d\}\times N_{d}$.
    
    \item $D_0=2\times 1\{V\le 0.33\}+ 1\{0.33 <V\le 0.66\}$, $D_1=2\times 1\{V\le 0.6\}+ 1\{0.6 <V\le 0.8\}$, $D_2=D_0$, $D=\sum_{z=0}^2 1\{Z=z\}\times D_z$, and $Y=\sum_{d=0}^{2} 1\{D=d\}\times N_{d}$.
    
\end{enumerate}

All the variables $U$, $V$, $N_{00}$, $N_{10}$, $N_{20}$, $N_{01}$, $N_{11}$, $N_{21}$, $N_{02}$, $N_{12}$, $N_{22}$, $N_0$, $N_1$, and $N_2$ were set to be mutually independent. We briefly explain how DGPs (1)--(4) violate \eqref{eq.testable implication multivalue}, which is shown graphically in Figure \ref{fig:H1 true multi}. We let $p_z(y,d)$ be the derivative of $\mathbb{P}(Y\in(-\infty,y],D=d|Z=z)$ with respect to $y$ for all $d,z\in\{0,1,2\}$. Similar to Figure  \ref{fig:nest}, if \eqref{eq.testable implication multivalue} were true, then we would have $p_0(y,2) \le p_1(y,2) \le p_2(y,2)$ everywhere. For DGPs (1)--(4),  $p_1(y,2)=p_2(y,2)$ held for all $y$, but $p_0(y,2) \le p_1(y,2)$ did not hold on some range of $\mathbb{R}$. DGPs (5) and (6) are the cases where the monotonicity assumption did not hold and both \eqref{eq.testable implication multivalue} and \eqref{eq.fosd multi} were violated.

\begin{figure}[h]
	\centering
	\caption{Curves of {{$p_0\left(y,2\right)$}} (dashed) and {{$p_1\left(y,2\right)$}} (solid) for DGPs (1)--(4) }
	\begin{subfigure}[b]{0.235\textwidth}
		\centering
		\begin{tikzpicture}[scale=0.75]
		\tikzstyle{vertex}=[font=\small,circle,draw,fill=yellow!20]
		\tikzstyle{edge} = [font=\scriptsize,draw,thick,-]
		\draw[black, thick, ->] (0,0) -- (0,2.5);
		\draw[black, thick, ->] (-2,0) -- (2.1,0);
		
		\draw[red, thick, -] (0.5,2.5) -- (1,2.5);
        \node [right,font=\scriptsize] at (1,2.5) {$p_1(y,2)$};
        \draw[blue, thick,dashed, -] (0.5,2) -- (1,2);
        \node [right,font=\scriptsize] at (1,2) {$p_0(y,2)$};
		
		\draw (0,0.3pt) -- (0,-1pt)
		node[anchor=north,font=\scriptsize] {$0$};
		\draw (-2,0.3pt) -- (-2,-1pt)
		node[anchor=north,font=\scriptsize] {$-2$};
		\draw (-1,0.3pt) -- (-1,-1pt)
		node[anchor=north,font=\scriptsize] {$-1$};
		\draw (1,0.3pt) -- (1,-1pt)
		node[anchor=north,font=\scriptsize] {$1$};
		\draw (2,0.3pt) -- (2,-1pt)
		node[anchor=north,font=\scriptsize] {$2$};
		\draw[red,thick] plot[smooth] file {\honedgponepone};	
		\draw[blue,dashed,thick] plot[smooth] file {\honedgponepzero};		
\end{tikzpicture}
		\subcaption{DGP (1)}
	\end{subfigure}
	\begin{subfigure}[b]{0.235\textwidth}
		\centering
		\begin{tikzpicture}[scale=0.75]
		\tikzstyle{vertex}=[font=\small,circle,draw,fill=yellow!20]
		\tikzstyle{edge} = [font=\scriptsize,draw,thick,-]
		\draw[black, thick, ->] (0,0) -- (0,2.5);
		\draw[black, thick, ->] (-2,0) -- (2.1,0);
		
		\draw[red, thick, -] (0.5,2.5) -- (1,2.5);
        \node [right,font=\scriptsize] at (1,2.5) {$p_1(y,2)$};
        \draw[blue, thick,dashed, -] (0.5,2) -- (1,2);
        \node [right,font=\scriptsize] at (1,2) {$p_0(y,2)$};
		
		\draw (0,0.3pt) -- (0,-1pt)
		node[anchor=north,font=\scriptsize] {$0$};
		\draw (-2,0.3pt) -- (-2,-1pt)
		node[anchor=north,font=\scriptsize] {$-2$};
		\draw (-1,0.3pt) -- (-1,-1pt)
		node[anchor=north,font=\scriptsize] {$-1$};
		\draw (1,0.3pt) -- (1,-1pt)
		node[anchor=north,font=\scriptsize] {$1$};
		\draw (2,0.3pt) -- (2,-1pt)
		node[anchor=north,font=\scriptsize] {$2$};
		\draw[red,thick] plot[smooth] file {\honedgptwopone};	
		\draw[blue,dashed,thick] plot[smooth] file {\honedgptwopzero};		
\end{tikzpicture}
		\subcaption{DGP (2)}
	\end{subfigure}
	\begin{subfigure}[b]{0.235\textwidth}
		\centering
		\begin{tikzpicture}[scale=0.75]
		\tikzstyle{vertex}=[font=\small,circle,draw,fill=yellow!20]
		\tikzstyle{edge} = [font=\scriptsize,draw,thick,-]
		\draw[black, thick, ->] (0,0) -- (0,2.5);
		\draw[black, thick, ->] (-2,0) -- (2.1,0);
		
		\draw[red, thick, -] (0.5,2.5) -- (1,2.5);
        \node [right,font=\scriptsize] at (1,2.5) {$p_1(y,2)$};
        \draw[blue, thick,dashed, -] (0.5,2) -- (1,2);
        \node [right,font=\scriptsize] at (1,2) {$p_0(y,2)$};
		
		\draw (0,0.3pt) -- (0,-1pt)
		node[anchor=north,font=\scriptsize] {$0$};
		\draw (-2,0.3pt) -- (-2,-1pt)
		node[anchor=north,font=\scriptsize] {$-2$};
		\draw (-1,0.3pt) -- (-1,-1pt)
		node[anchor=north,font=\scriptsize] {$-1$};
		\draw (1,0.3pt) -- (1,-1pt)
		node[anchor=north,font=\scriptsize] {$1$};
		\draw (2,0.3pt) -- (2,-1pt)
		node[anchor=north,font=\scriptsize] {$2$};
		\draw[red,thick] plot[smooth] file {\honedgpthreepone};	
		\draw[blue,dashed,thick] plot[smooth] file {\honedgpthreepzero};		
\end{tikzpicture}
		\subcaption{DGP (3)}
	\end{subfigure}
	\begin{subfigure}[b]{0.235\textwidth}
		\centering
		\begin{tikzpicture}[scale=0.75]
		\tikzstyle{vertex}=[font=\small,circle,draw,fill=yellow!20]
		\tikzstyle{edge} = [font=\scriptsize,draw,thick,-]
		\draw[black, thick, ->] (0,0) -- (0,2.5);
		\draw[black, thick, ->] (-2,0) -- (2.1,0);
		
		\draw[red, thick, -] (0.5,2.5) -- (1,2.5);
		\node [right,font=\scriptsize] at (1,2.5) {$p_1(y,2)$};
		\draw[blue, thick,dashed, -] (0.5,2) -- (1,2);
		\node [right,font=\scriptsize] at (1,2) {$p_0(y,2)$};
		
		\draw (0,0.3pt) -- (0,-1pt)
		node[anchor=north,font=\scriptsize] {$0$};
		\draw (-2,0.3pt) -- (-2,-1pt)
		node[anchor=north,font=\scriptsize] {$-2$};
		\draw (-1,0.3pt) -- (-1,-1pt)
		node[anchor=north,font=\scriptsize] {$-1$};
		\draw (1,0.3pt) -- (1,-1pt)
		node[anchor=north,font=\scriptsize] {$1$};
		\draw (2,0.3pt) -- (2,-1pt)
		node[anchor=north,font=\scriptsize] {$2$};
		\draw[red,thick] plot[smooth] file {\honedgpfourpone};	
		\draw[blue,dashed,thick] plot[smooth] file {\honedgpfourpzero};		
\end{tikzpicture}
		\subcaption{DGP (4)}
	\end{subfigure}
	\label{fig:H1 true multi}
\end{figure}
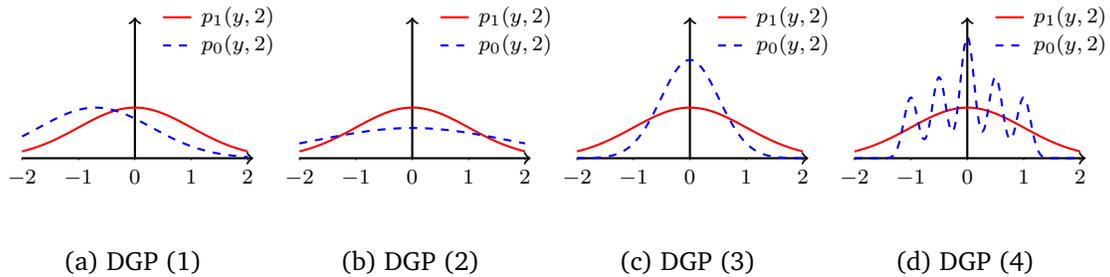

Table \ref{tab:Rej H1 multi} shows the rejection rates under DGPs (1)--(6), that is, the power of the test. For each DGP and each measure $\nu$, the rejection rate increased as the sample size $n$ was increased. The results for $\nu=\bar{\nu}_{\xi}$ showed that if we have no information about the choice of $\xi$, using the weighted average of the statistics over $\xi$ is a desirable option. When $n>200$, the rejection rates for using $\nu=\bar{\nu}_{\xi}$ were at a relatively high level compared to the results for using a Dirac measure.  

\begin{table}[h]
	
	\centering
	\caption{Rejection Rates under $H_{1}$ for Multivalued $D$ and Multivalued $Z$}
	\scalebox{0.85}{
		\begin{tabular}{  c  c  c  c  c  c  c  c  c  c  c  c  c  }
			\hline
			\hline
			\multirow{2}{*}{DGP} & \multirow{2}{*}{$n$}& \multicolumn{10}{c}{$\xi$ for $\delta_{\xi}$} &\multirow{2}{*}{$\bar{\nu}_{\xi}$} \\
			\cline{3-12}
			&  & 0.07 & 0.1 & 0.13 & 0.16 & 0.19 & 0.22 & 0.25 & 0.28 & 0.3 & 1 & \\
			\hline

			\multirow{5}{*}{(1)}	
			&200&0.060&0.140&0.175&0.200&0.185&0.155&0.153&0.153&0.153&0.153&0.159\\
			&600&0.672&0.683&0.616&0.482&0.323&0.230&0.214&0.214&0.214&0.214&0.516\\
			&1000&0.606&0.729&0.790&0.792&0.775&0.738&0.715&0.715&0.715&0.715&0.777\\
			&1100&0.889&0.859&0.720&0.504&0.314&0.216&0.217&0.217&0.217&0.217&0.658\\
			&2000&0.969&0.988&0.993&0.987&0.989&0.979&0.975&0.975&0.975&0.975&0.991\\		
			\hline

			\multirow{5}{*}{(2)}	
			
			&200&0.030&0.060&0.074&0.076&0.076&0.069&0.072&0.072&0.072&0.072&0.064\\
			&600&0.347&0.168&0.069&0.054&0.059&0.059&0.056&0.056&0.056&0.056&0.083\\
			&1000&0.404&0.379&0.294&0.146&0.088&0.059&0.062&0.062&0.062&0.062&0.153\\
			&1100&0.434&0.123&0.054&0.059&0.059&0.059&0.060&0.060&0.060&0.060&0.084\\
			&2000&0.896&0.897&0.775&0.521&0.269&0.177&0.154&0.154&0.154&0.154&0.635\\
			
			\hline
			
			\multirow{5}{*}{(3)}	
			
			&200&0.087&0.177&0.240&0.307&0.325&0.297&0.290&0.290&0.290&0.290&0.262\\
			&600&0.695&0.719&0.728&0.693&0.577&0.466&0.434&0.434&0.434&0.434&0.673\\
			&1000&0.660&0.743&0.826&0.856&0.880&0.887&0.875&0.875&0.875&0.875&0.878\\
			&1100&0.884&0.924&0.899&0.773&0.622&0.516&0.517&0.517&0.517&0.517&0.840\\
			&2000&0.968&0.985&0.991&0.995&0.995&0.998&0.999&0.999&0.999&0.999&0.999\\
			
			\hline
			
			\multirow{5}{*}{(4)}	
			&200&0.038&0.099&0.147&0.155&0.148&0.138&0.135&0.135&0.135&0.135&0.146\\
			&600&0.402&0.376&0.366&0.290&0.207&0.209&0.189&0.189&0.189&0.189&0.304\\
			&1000&0.331&0.433&0.407&0.406&0.444&0.475&0.477&0.477&0.477&0.477&0.483\\
			&1100&0.498&0.526&0.492&0.355&0.203&0.137&0.137&0.137&0.137&0.137&0.403\\
			&2000&0.597&0.704&0.710&0.725&0.741&0.769&0.791&0.791&0.791&0.791&0.796\\
			
			\hline
		
			\multirow{5}{*}{(5)}	
			&200&0.365&0.487&0.589&0.626&0.685&0.752&0.780&0.780&0.780&0.780&0.699\\
			&600&0.980&0.990&0.995&0.997&0.998&0.998&0.998&0.998&0.998&0.998&0.998\\
			&1000&0.994&0.998&0.999&0.999&1.000&1.000&1.000&1.000&1.000&1.000&1.000\\
			&1100&1.000&1.000&1.000&1.000&1.000&1.000&1.000&1.000&1.000&1.000&1.000\\
			&2000&1.000&1.000&1.000&1.000&1.000&1.000&1.000&1.000&1.000&1.000&1.000\\
			
			\hline
			
			\multirow{5}{*}{(6)}
			&200&0.372&0.482&0.545&0.616&0.659&0.701&0.711&0.711&0.711&0.711&0.664\\
			&600&0.704&0.823&0.904&0.929&0.962&0.981&0.988&0.988&0.988&0.988&0.965\\
			&1000&0.992&0.999&1.000&1.000&1.000&1.000&1.000&1.000&1.000&1.000&1.000\\
			&1100&0.912&0.957&0.979&0.984&0.990&0.995&0.995&0.995&0.995&0.995&0.990\\
			&2000&1.000&1.000&1.000&1.000&1.000&1.000&1.000&1.000&1.000&1.000&1.000\\

			\hline
		\end{tabular}
	}
	
	\label{tab:Rej H1 multi}
\end{table}

\section{Empirical Application}\label{sec.empirical}
We revisit one empirical example discussed by \citet{kitagawa2015test} to show the performance of the proposed test in practice. The example is from \citet{NBERw4483}, who used college proximity as an instrument of years of schooling to study the causal link between education and earnings. The data are from the Young Men Cohort of the National Longitudinal Survey. In the original study of \citet{NBERw4483}, the educational level $D$ is a multivalued treatment variable, while \citet{kitagawa2015test} treated it as a binary treatment variable $T$ with $T=1\{D\ge 16\}$. The results of the test of \citet{kitagawa2015test} showed that the instrument was not valid when no covariates were controlled.  

We use the originally defined treatment variable $D$ to reconduct the test. 
Specifically, the treatment $D$ is education attainment observed in 1976 (the variable ``ed76''), the instrument $Z$ is whether an individual grew up near a 4-year college (the variable ``nearc4''), and the outcome is log wage observed in 1976 (the variable ``lwage76'') in the data set. The available sample size is 3010. We follow the setup in Section \ref{sec.multi D and Z} with $\mathcal{D}=\{1,2,\ldots,18\}$ and $\mathcal{Z}=\{0,1\}$. The instrument $Z=1$ implies that an individual grew up near a 4-year college. 
{Table} \ref{tab:ApplicationCard} shows the $p$-values obtained from our test using each measure $\nu$. From these results, we conclude that we do not reject the validity of instrument $Z$. In Section \ref{sec.choice of xi}, we show more results by using application-based simulations to choose $\Xi$ and $\nu$. The results are similar to those in Table  \ref{tab:ApplicationCard}.

\begin{table}[h]
	
	\centering
	\caption{$p$-values Obtained from the Proposed Test for Each Measure $\nu$}
	\scalebox{0.9}{
		\begin{tabular}{   c  c  c  c  c  c  c  c  c  c  c  }
			\hline
			\hline
			 \multicolumn{10}{c}{$\xi$ for $\delta_{\xi}$} &\multirow{2}{*}{$\bar{\nu}_{\xi}$} \\
			\cline{1-10}
		  0.07 & 0.1 & 0.13 & 0.16 & 0.19 & 0.22 & 0.25 & 0.28 & 0.3 & 1 & \\
			\hline
			0.958&0.975&0.975&0.975&0.975&0.975&0.975&0.975&0.975&0.975&0.973\\

			\hline
		\end{tabular}
	}
	
	\label{tab:ApplicationCard}
\end{table} 

The testable implication used by \citet{kitagawa2015test} for binary $T$ is that
\begin{align}\label{eq.testable implication application}
&\mathbb{P}\left(  Y\in B,T=0|Z=1\right)-\mathbb{P}\left(  Y\in B,T=0|Z=0\right)  \leq 0  \notag\\
&\text{ and } \mathbb{P}\left(  Y\in B,T=1|Z=1\right)-\mathbb{P}\left(  Y\in B,T=1|Z=0\right)  \ge 0
\end{align}
for all closed intervals $B$.
The inequalities in \eqref{eq.testable implication application} are equivalent to the following for all closed intervals $B$:
\begin{align}\label{eq.testable implication application 2}
&\mathbb{P}\left(  Y\in B,D<16|Z=1\right)-\mathbb{P}\left(  Y\in B,D<16|Z=0\right)  \leq 0  \notag\\
&\text{ and } \mathbb{P}\left(  Y\in B,D\ge 16|Z=1\right)-\mathbb{P}\left(  Y\in B,D\ge 16|Z=0\right)  \ge 0,
\end{align}
which are different from those in the testable implication given in \eqref{eq.testable implication multivalue} and \eqref{eq.fosd multi} and are not implied by Assumption \ref{ass.IV validity for multivalued Z}. Thus a valid instrument $Z$ for multivalued $D$ which satisfies the testable implication given in \eqref{eq.testable implication multivalue} and \eqref{eq.fosd multi} may not satisfy the inequalities in \eqref{eq.testable implication application}, that is, $Z$ may not remain valid for binary (or coarsened) $T$. This provides a possible explanation for why we accept $Z$ but \citet{kitagawa2015test} rejected it.

\section{Conclusion}
In this paper, we provided a general framework for testing instrument validity in heterogeneous causal effect models. 
We generalized the testable implications of the instrument validity assumptions in the literature, and based on them we  proposed a nonparametric bootstrap test. An extended continuous mapping theorem and an extended delta method were provided to establish the asymptotic distribution of the test statistic, which may be of independent interest.  The proposed test can be applied in more general settings and may achieve power improvement. 

\setcounter{equation}{0}
\renewcommand{\theequation}{\thesection.\arabic{equation}}
\section*{Appendix}

\appendix


\section{Extended Continuous Mapping Theorem and Extended Delta Method}\label{appendix.main results}
We follow \citet{van1996weak} to introduce some notation we use multiple times in the appendix. Let $\left(  \Omega,\mathcal{A},\mathbb{P}\right)  $ be an arbitrary
probability space. For an arbitrary map $T:\Omega\rightarrow\mathbb{\bar{R}}$,
we define the outer integral or outer expectation of $T$ with respect to
$\mathbb{P}$ by
\[
E^{\ast}\left[  T\right]  =\inf\left\{  E\left[  U\right]  :U\geq
T,U:\Omega\rightarrow\mathbb{\bar{R}}\text{ measurable and }E\left[  U\right]
\text{ exists}\right\}  .
\]
The outer probability of an arbitrary subset $B$ of $\Omega$ is
\[
\mathbb{P}^{\ast}\left(  B\right)  =\inf\left\{  \mathbb{P}\left(  A\right)  :A\supset
B,A\in\mathcal{A}\right\}  .
\]
The inner integral (or inner expectation) and the inner probability can
be defined as
\[
E_{\ast}\left[  T\right]  =-E^{\ast}\left[  -T\right] \text{  and }\mathbb{P}_{\ast}\left(
B\right)  =1-\mathbb{P}^{\ast}\left(  \Omega\setminus B\right)  ,
\]
respectively. We denote a minimal measurable majorant of $T$ (resp.\ a maximal
measurable minorant) by $T^{\ast}$ (resp.\ $T_{\ast}$), which always
exists by Lemma 1.2.1 of \citet{van1996weak}. Suppose $T$ is a real-valued map defined on an arbitrary product probability space $\left(  \Omega_1\times\Omega_2,\mathcal{A}_1\times\mathcal{A}_2 ,\mathbb{P}_1\times\mathbb{P}_2\right)  $. We write $E^{\ast}[T]$ for the outer expectation as before, and for every $\omega_1$, we define 
\begin{align}\label{eq.conditional expectation}
	E^{\ast}_2[T](\omega_1)=\inf \int U(\omega_2)\,\mathrm{d}\mathbb{P}_2(\omega_2),
\end{align}
where the infimum is taken over all measurable functions $U:\Omega_2\to \bar{\mathbb{R}}$ with $U(\omega_2)\ge T(\omega_1,\omega_2)$ for all $\omega_2$ such that $\int U \,\mathrm{d}\mathbb{P}_2$ exists. Then $E^{\ast}_1[E^{\ast}_2[T]]$ is the outer integral of the function $E^{\ast}_2[T]:\Omega_1\to \bar{\mathbb{R}}$, and we call $E^{\ast}_1[E^{\ast}_2[T]]$ the repeated outer expectation. We define the repeated inner expectation $E_{1\ast}[E_{2\ast}[T]]$  analogously.\footnote{Additional technical details about the repeated expectations can be found in \citet[pp.~10--12]{van1996weak}.} 
\begin{theorem}[Extended continuous mapping]
	\label{lemma.random continuous mapping}Let $\mathbb{D}$ and $\mathbb{E}$ be
	metric spaces with metrics $d$ and $e$, respectively. Let $\mathbb{D}_{0}\subset\mathbb{D}$.  Let $X$ be Borel measurable and take values
	in $\mathbb{D}_{0}$. Suppose, in addition, that either of the following conditions holds:
	\begin{enumerate}[label=(\alph*)]
		\item Let $\mathbb{D}_{n}\subset \mathbb{ D}$.  Let $X_{n}:\Omega\to\mathbb{D}$ with $X_n(\omega)\in\mathbb{D}_{n}$ for all $\omega\in\Omega$ and all $n$. Let $g_n$ be a random map with $g_{n}(\omega):\mathbb{D}_{n}\to \mathbb{E}$ (for every $\omega\in\Omega$, $g_n(\omega)$ is a map on $\mathbb{D}_n$). The random map $g_n$ satisfies the condition that for every $\varepsilon>0$ there is a measurable set $A\subset\Omega$ with $\mathbb{P}(A)\ge1-\varepsilon$ such that if
		$x_{n}\rightarrow x$ with $x_{n}\in\mathbb{D}_{n}$ and $x\in\mathbb{D}_{0}$,
		then $g_{n}\left(  x_{n}\right)  $ converges to
		$g\left(  x\right)  $ uniformly on $A$ ($\sup_{\omega\in A}e(g_n(\omega)(x_n),g(x))\to0$),\footnote{This is a condition similar to almost uniform convergence. See Definition 1.9.1(ii) of \citet{van1996weak}. By Lemma 1.9.2(iii) of \citet{van1996weak}, almost uniform convergence is equivalent to outer almost sure convergence if the limit is Borel measurable.}  where
		$g:\mathbb{D}_{0}\rightarrow\mathbb{E}$ is a fixed (deterministic) map. Also, $X$ is separable.
		
		\item Let $\mathbb{D}_{n}(\omega)\subset\mathbb{D}$ for all $\omega\in\Omega$ and all $n$. Let $X_{n}:\Omega\to\mathbb{D}$ with $X_n(\omega)\in\mathbb{D}_{n}(\omega)$ for all $\omega\in\Omega$ and all $n$. Let $g_n$ be a random map with $g_{n}(\omega):\mathbb{D}_{n}(\omega)\to \mathbb{E}$ (for every $\omega\in\Omega$, $g_n(\omega)$ is a map on $\mathbb{D}_n(\omega)$).  The random map $g_n$ satisfies the condition that for every $\varepsilon>0$ there is a measurable set $A\subset\Omega$ with $\mathbb{P}(A)\ge1-\varepsilon$ such that for every subsequence $\{x_{n_m}\}$, if
		$x_{n_m}\rightarrow x$ with $x_{n_m}\in\mathbb{D}_{n_m}(\omega_{n_m})$, $\omega_{n_m}\in A$, and $x\in\mathbb{D}_{0}$,
		then $g_{n_m}(\omega_{n_m})\left(  x_{n_m}\right)  $ converges to
		$g\left(  x\right)  $,
		where $g:\mathbb{D}_{0}\rightarrow\mathbb{E}$ is a fixed continuous map.

	\end{enumerate}
	Then we have that
	\begin{enumerate}[label=(\roman{*})]
		\item $X_{n}\leadsto X$ implies that $g_{n}\left(  X_{n}\right)  \leadsto g\left(  X\right)$;
		
		\item If $X_{n}$ converges to $X$ in outer probability,\footnote{See Definition 1.9.1(i) of convergence in outer probability in \citet{van1996weak}.} then $g_{n}\left(  X_{n}\right)$  converges to $g\left(  X\right)$ in outer probability;
		
		\item If $X_{n}$ converges to $ X$ outer almost surely,\footnote{See Definition 1.9.1(iii) of outer almost sure convergence in \citet{van1996weak}.} then $g_{n}\left(  X_{n}\right)$  converges to $g\left(  X\right)$ outer almost surely.
	\end{enumerate}
	
\end{theorem}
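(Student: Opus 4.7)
The plan is to adapt the proof of Theorem 1.11.1 of van der Vaart and Wellner (1996) to accommodate the randomness in $g_n$, with the sequence of measurable ``good events'' $A = A(\varepsilon)$ provided by (a) or (b) playing the role that the deterministic hypothesis plays in the classical result. I would prove the three conclusions in the order (iii), (ii), (i), since the harder object (weak convergence) is best reduced to the easier ones via Portmanteau, and the outer almost-sure case is essentially a pointwise verification on a full-measure subset of $\Omega$.

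For (iii), assume $X_n\to X$ outer almost surely, so that $d(X_n,X)^{\ast}\to0$ on a measurable $\Omega_0$ with $\mathbb{P}(\Omega_0)=1$. For each $k\ge1$ pick $A_k$ from the hypothesis with $\mathbb{P}(A_k)\ge 1-1/k$, and set $A_{\infty}=\bigcup_k A_k$, so $\mathbb{P}(A_{\infty})=1$. For $\omega\in\Omega_0\cap A_{\infty}$, we have $X_n(\omega)\to X(\omega)\in\mathbb{D}_0$, and $\omega\in A_k$ for some $k$; condition (a) (uniform convergence over $A_k$) or condition (b) (subsequential convergence with $\omega_{n_m}\equiv\omega\in A_k$) then yields $g_n(\omega)(X_n(\omega))\to g(X(\omega))$. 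Since this holds on a set of full probability, $g_n(X_n)\to g(X)$ outer almost surely. For (ii), I would apply the standard subsequence principle: every subsequence of $X_n$ has a further subsequence converging outer almost surely to $X$, so by (iii) the same subsubsequence of $g_n(X_n)$ converges outer almost surely to $g(X)$, which delivers convergence in outer probability.

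For (i), which I expect to be the main obstacle because the random $g_n$ prevents direct invocation of a Skorokhod representation, I would argue via Portmanteau. Fix $\varepsilon>0$ and choose the corresponding $A$ with $\mathbb{P}(A)\ge 1-\varepsilon$. Given a closed set $F\subset\mathbb{E}$ and $\delta>0$, let $F^{\delta}=\{y:e(y,F)\le\delta\}$ and consider the deterministic set $C^{\delta}=g^{-1}(F^{\delta})\subset\mathbb{D}_0$, which is relatively closed in $\mathbb{D}_0$ by continuity of $g$. The key claim is that under condition (a),
\begin{equation*}
\{\omega\in A:g_n(\omega)(X_n(\omega))\in F\}\subset\{X_n\in C^{\delta}_n\}\cup N_n,
\end{equation*}
where $C^{\delta}_n\subset\mathbb{D}_n$ is an inner approximation of $C^{\delta}$ and $N_n$ is a measurable exceptional set with $\mathbb{P}(N_n)\to 0$; this is proved by contradiction using a sequence $x_n\to x\notin C^{\delta}$ together with the uniform-in-$\omega$ convergence $\sup_{\omega\in A}e(g_n(\omega)(x_n),g(x))\to 0$. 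Combining this inclusion with the Portmanteau direction for $X_n\leadsto X$ applied to $C^{\delta}_n$ (here invoking Theorem 1.11.1 of van der Vaart and Wellner to handle $X_n\in\mathbb{D}_n$ versus $X\in\mathbb{D}_0$), then sending $\delta\downarrow0$ and $\varepsilon\downarrow0$, yields $\limsup\mathbb{P}^{\ast}(g_n(X_n)\in F)\le\mathbb{P}(g(X)\in F)$. Under condition (b), the same inclusion is obtained by a subsequence extraction: any failure produces $\omega_{n_m}\in A$ and $x_{n_m}\to x\in\mathbb{D}_0\setminus C^{\delta}$ with $g_{n_m}(\omega_{n_m})(x_{n_m})\in F$, contradicting the hypothesis. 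The delicate step is the measurability of the exceptional sets $N_n$ (handled using minimal measurable majorants) and the passage between closed sets in $\mathbb{D}$ and closed sets in $\mathbb{D}_0$, both of which parallel arguments already present in the classical proof.
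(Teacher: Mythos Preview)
Your outline has two genuine gaps that would need to be filled before this becomes a proof.

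First, under condition (a) you treat $g$ as continuous: you need this in (i) for $C^\delta = g^{-1}(F^\delta)$ to be relatively closed, and implicitly in (iii) for $g(X)$ to be Borel measurable. Under (b) continuity is assumed, but under (a) it is not given and must be \emph{proved}. The paper does this as a preliminary step: it first shows $\mathbb{P}_*(X\in\mathbb{D}_\infty)=1$ where $\mathbb{D}_\infty$ is the set of limits of sequences $x_n\in\mathbb{D}_n$, then uses a diagonal filling-out argument to deduce that (a) applies along subsequences for limit points in $\mathbb{D}_0\cap\mathbb{D}_\infty$, and from this derives continuity of $g$ on $\mathbb{D}_0\cap\mathbb{D}_\infty$. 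That same filling-out is also what would justify your (ii), since invoking (iii) along a subsubsequence tacitly requires condition (a) to hold for the subsubsequence $g_{n_{m_\ell}}$, which is not automatic.

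Second, your argument for (iii) establishes only pointwise convergence $e(g_n(X_n)(\omega), g(X)(\omega)) \to 0$ on a full-measure set. Outer almost sure convergence requires $[e(g_n(X_n), g(X))]^{\ast} \to 0$ almost surely, and when $g_n(X_n)$ is not measurable the minimal measurable majorant can strictly exceed the pointwise value, so this step does not go through. The paper circumvents this by invoking Lemmas 1.9.2--1.9.3 of van der Vaart and Wellner to reduce (iii) to showing $\sup_{m\ge n} e(g_m(X_m), g(X)) \to 0$ in outer probability, and then proves both (ii) and (iii) directly via a deterministic ``bad set'' $B_n = \cup_{\omega \in A} B_n(\omega) \subset \mathbb{D}$ whose intersection with $\mathbb{D}_0$ is shown to be relatively Borel. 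For (i) the paper likewise works with deterministic closed sets in $\mathbb{D}$, proving the inclusion
\[
\bigcap_{k\ge1} \overline{\bigcup_{m\ge k}\bigcup_{\omega\in A} g_m(\omega)^{-1}(F)} \subset g^{-1}(F) \cup (\mathbb{D}\setminus\mathbb{D}_0)
\]
and applying Portmanteau for $X_n\leadsto X$ to these sets; this sidesteps entirely the measurability of your exceptional events $N_n$, which you flag as delicate but do not actually resolve.
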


\begin{remark}
	Theorem \ref{lemma.random continuous mapping} is an extension of Theorem 1.11.1 (extended continuous mapping) of \citet{van1996weak}. Theorem 1.11.1 of \citet{van1996weak} assumes that every $g_n$ is a fixed map. Theorem \ref{lemma.random continuous mapping} allows every $g_n$ to be random. Theorem \ref{lemma.random continuous mapping}(i) will be used to establish Theorem \ref{lemma.random delta method} (extended delta method). 
\end{remark}
\begin{proof}[Proof of Theorem \ref{lemma.random continuous mapping}]
	Suppose Condition (a) holds. Assume the weakest of the three assumptions: the one in (i) that $X_n\leadsto X$.
	\textbf{First}, let $\mathbb{D}_{\infty}$ be the set of all $x$ for which there exists a sequence $\{x_n\}$ with $x_n\in\mathbb{ D}_n$ and $x_n\to x$. By the representation theorem (see, for example, Theorem 9.4 of \citet{pollard1990empirical} or Theorem 1.10.4 of \citet{van1996weak}), along the lines of the second paragraph in the proof of Theorem 1.11.1 of \citet{van1996weak}, we can show that $\mathbb{P}_{\ast}({X}\in\mathbb{ D}_{\infty})=1$. 
	\textbf{Second}, fix $\varepsilon$ and a measurable set $A$ with $\mathbb{P}(A)\ge 1-\varepsilon$ that satisfies the assumptions, and suppose there is some subsequence such that $x_{n^{\prime}}\to x$ with $x_{n^{\prime}}\in\mathbb{ D}_{n^{\prime}}$ for all ${n^{\prime}}$ and $x\in\mathbb{ D}_0\cap\mathbb{ D}_{\infty}$. Since $x\in\mathbb{ D}_{\infty}$, there is a sequence $y_n\to x$ with $y_n\in\mathbb{ D}_n$ for all $n$. Fill out the subsequence $x_{n^{\prime}}$ to an entire sequence by putting $x_n=y_n$ for all $n\notin \left\{n^{\prime}\right\}$. Then by assumption, $g_n(x_n)\to g(x)$ uniformly on $A$ on this entire sequence, hence also on the subsequence, that is, $g_{n^{\prime}}(x_{n^{\prime}})\to g(x)$ uniformly on $A$.
	\textbf{Third}, let $x_m\to x$ in $\mathbb{D}_0\cap\mathbb{D}_{\infty}$. For every $m$, there is a sequence $y_{m,n}\in\mathbb{D}_n$ with $y_{m,n}\to x_m$ as $n\to \infty$. Fix a small $\varepsilon>0$ and a measurable set $A$ with $\mathbb{P}(A)\ge 1-\varepsilon$ that satisfies the assumptions. Now we have that $g_{n}(y_{m,n})\to g(x_m)$ uniformly on $A$.  For every $m$, take $n_m$ such that $|y_{m,n_m}- x_m|<1/m$ and $|g_{n_m}(y_{m,n_m})- g(x_m)|<1/m$ uniformly on $A$ and such that $n_m$ is increasing in $m$. Then $y_{m,n_m}\to x$, and hence $g_{n_m}(y_{m,n_m})\to g(x)$ uniformly on $A$. Since $|g(x_m)-g(x)|\le |g_{n_m}(y_{m,n_m})- g(x_m)|+|g_{n_m}(y_{m,n_m})- g(x)|$ uniformly on $A$, we have $|g(x_m)-g(x)|\to 0$. Thus $g$ is continuous on $\mathbb{D}_0\cap\mathbb{D}_{\infty}$.
	
	For simplicity of notation, we will write $\mathbb{D}_0$ for $\mathbb{D}_0\cap\mathbb{D}_{\infty}$. Without loss of generality, we assume that $X$ takes its values in $\mathbb{D}_0$. Since $g$ is continuous on $\mathbb{D}_0$ now, $g(X)$ is Borel measurable.  
	
	(i). Let $F$ be an arbitrary
	closed set in $\mathbb{E}$. By the assumptions, for every $\varepsilon>0$
	there is a measurable set $A\subset\Omega$ with $\mathbb{P}\left(  A\right)
	\geq1-\varepsilon$ such that if $x_{n}\rightarrow x$ with $x_{n}\in\mathbb{D}_{n}$
	and $x\in\mathbb{D}_{0}$, then $g_{n}\left(  x_{n}\right)  $ converges  to
	$g\left(  x\right)  $ uniformly on $A$, that is, $\sup_{\omega\in A}\left\vert g_{n}\left(
	\omega\right)  \left(  x_{n}\right)  -g\left(  x\right)  \right\vert
	\rightarrow0$. Fix $\varepsilon$ and $A$. Then
	\begin{equation}\label{eq.subset}
		\cap_{k=1}^{\infty}\overline{\cup_{m=k}^{\infty}\cup_{\omega\in A}\left(
			g_{m}\left(  \omega\right)  \right)  ^{-1}\left(  F\right)  }\subset
		g^{-1}\left(  F\right)  \cup{\left(  \mathbb{D}-\mathbb{D}_{0}\right)  }.
	\end{equation}
	Suppose $x$ is an element of the set on the left-hand side of \eqref{eq.subset}. For every $n$, there exist  $n^{\prime}> n$, $\omega_{n^{\prime}}\in A$, and $x_{n^{\prime}}\in g_{n^{\prime}}(\omega_{n^{\prime}})^{-1}(F)\subset\mathbb{D}_{n^{\prime}}$ such that $d(x_{n^{\prime}},x)\leq1/n$. Therefore, there is a subsequence $x_{n_m}\in
	g_{n_m}(\omega_{n_m})^{-1}(F)\subset\mathbb{D}_{n_m}$ with $\omega_{n_m}\in A$ such that $n_m\uparrow\infty$ and $x_{n_m}\rightarrow x$ as $m\rightarrow
	\infty$. By the definition of $A$, either $g_{n_m}(\omega
	_{n_m})(x_{n_m})\rightarrow g(x)$ or $x\notin\mathbb{D}_{0}$. Since $F$ is closed, this implies that $g(x)\in F$ or $x\notin\mathbb{D}_{0}$.
	Then for every $k$,
	\begin{align}\label{eq.gn 1}
		\limsup_{n\rightarrow\infty}\mathbb{P}^{\ast}\left(  g_{n}\left(
		X_{n}\right)  \in F\right)\leq&\limsup_{n\rightarrow\infty}\mathbb{P}^{\ast}\left(  \left\{ \left\{ X_{n}
		\in\overline{\cup_{m=k}^{\infty}g_{m}^{-1}\left(  F\right)  }\right\}\cap A\right\}
		\cup A^c  \right) \notag \\
		=&\limsup_{n\rightarrow\infty}E\left[  \left(  1\left\{ \left\{  X_{n}\in
		\overline{\cup_{m=k}^{\infty}g_{m}^{-1}\left(  F\right)  }\right\}\cap A\right\}
		\vee1\left\{ A^c\right\}  \right)  ^{\ast}\right]  ,
	\end{align}
	where the equality is from Lemmas 1.2.3(i) and 1.2.1 of \citet{van1996weak}. Then by Lemmas
	1.2.2(viii), 1.2.1, and 1.2.3(i) of \citet{van1996weak},
	\begin{align}\label{eq.gn 2}
		& E\left[  \left(  1\left\{ \left\{ X_{n}\in\overline{\cup_{m=k}^{\infty}g_{m}%
			^{-1}\left(  F\right)  }\right\}\cap A\right\}  \vee1\left\{ A^c\right\}  \right)
		^{\ast}\right]  \notag\\
		=&\,E\left[  \left(  1\left\{ \left\{ X_{n}\in\overline{\cup_{m=k}^{\infty}g_{m}%
			^{-1}\left(  F\right)  }\right\}\cap A\right\}  \right)  ^{\ast}\vee\left(  1\left\{A^c\right\}  \right)  \right]\notag\\
		\leq&\,\mathbb{P}^{\ast}\left( \left\{ X_{n}\in\overline{\cup_{m=k}^{\infty}g_{m}%
			^{-1}\left(  F\right)  }\right\}\cap A\right)  +\mathbb{P}\left( A^{c}\right).
	\end{align}
	By \eqref{eq.gn 1} and \eqref{eq.gn 2}, together with Theorem 1.3.4(iii) (portmanteau) of \citet{van1996weak}, we have
	\begin{align*}
		\limsup_{n\rightarrow\infty}\mathbb{P}^{\ast}\left(  g_{n}\left(
		X_{n}\right)  \in F\right)  \leq& \limsup_{n\rightarrow\infty}
		\mathbb{P}^{\ast}\left( \left\{  X_{n}\in\overline{\cup_{m=k}^{\infty}g_{m}
			^{-1}\left(  F\right)  }\right\}\cap A\right)  +\mathbb{P}\left(
		A^{c}\right) \\
		\leq&\limsup_{n\rightarrow\infty}\mathbb{P}^{\ast}\left(  X_{n}\in
		\overline{\cup_{m=k}^{\infty}\cup_{\omega\in A}\left(  g_{m}\left(
			\omega\right)  \right)  ^{-1}\left(  F\right)  }\right)  +\varepsilon\notag\\ 
		\leq&\, \mathbb{P}\left(  X\in\overline{\cup_{m=k}^{\infty}\cup_{\omega\in
				A}\left(  g_{m}\left(  \omega\right)  \right)  ^{-1}\left(  F\right)
		}\right)  +\varepsilon.
	\end{align*}
	Letting $k\rightarrow\infty$ together with \eqref{eq.subset} gives
	\begin{align*}
		\limsup_{n\rightarrow\infty}\mathbb{P}^{\ast}\left(  g_{n}\left(
		X_{n}\right)  \in F\right) & \leq\mathbb{P}\left(  X\in\cap_{k=1}^{\infty
		}\overline{\cup_{m=k}^{\infty}\cup_{\omega\in A}\left(  g_{m}\left(
			\omega\right)  \right)  ^{-1}\left(  F\right)  }\right) +\varepsilon\notag\\ &\leq\mathbb{P}\left(  g\left(  X\right)  \in F\right)  +\varepsilon.
	\end{align*}
	Since $\varepsilon$ can be arbitrarily small, we can conclude that
	$\limsup_{n\rightarrow\infty}\mathbb{P}^{\ast}\left(  g_{n}\left(
	X_{n}\right)  \in F\right)  \leq\mathbb{P}\left(  g\left(  X\right)  \in
	F\right)  $. By Theorem 1.3.4(iii) of \citet{van1996weak} again, $g_{n}\left(  X_{n}\right)
	\leadsto g\left(  X\right)  $. 
	
	(ii). Choose $\delta_{n}\downarrow0$ with $\mathbb{P}%
	^{\ast}\left(  d\left(  X_{n},X\right)  \geq\delta_{n}\right)  \rightarrow0$.
	Fix $\varepsilon>0$. Let $A\subset\Omega$ be a measurable set with $\mathbb{P}\left(  A\right) 
	\ge1-\varepsilon$ that satisfies the assumptions. Let $B_{n}(\omega)$ be the set of all $x$ such that there is a 
	$y\in\mathbb{D}_{n}$ with $d\left(  y,x\right)  <\delta_{n}$ and
	$e\left(  g_{n}\left(  \omega\right)  \left(  y\right),g\left(  x\right)  \right)  >\varepsilon$. Let $B_{n}=\cup_{\omega\in A}B_{n}(\omega)$. Suppose $x\in B_{n}$ for infinitely
	many $n$.  Then there are sequences $\omega_{n_m}\in A$ and $x_{n_{m}}\in\mathbb{D}_{n_{m}}$ with $x_{n_{m}}\rightarrow x$ such that
	$  e\left(  g_{n_{m}}(\omega_{n_m})\left(  x_{n_{m}}\right)  ,g\left(  x\right)  \right)  >\varepsilon$ for each $m$. This implies that $x_{n_m}\to x$  with $x_{n_m}\in \mathbb{ D}_{n_m}$ but that $g_{n_m}(x_{n_m})$ does not converge to $g(x)$ uniformly on $A$. Thus by assumption, $x\notin\mathbb{D}_{0}$. Note that
	$x\in\limsup B_{n}$ is equivalent to $x\in B_{n}$ for infinitely many $n$.
	Thus we can conclude that $\limsup B_{n}\cap\mathbb{D}_{0}=\varnothing$.
	Since $g$ is continuous on $\mathbb{D}_{0}$, $B_{n}\cap\mathbb{D}_{0}$ is
	relatively open in $\mathbb{D}_{0}$ and hence relatively Borel. This is
	because if $z\in\mathbb{D}_{0}$ is close enough to $x\in B_{n}\cap\mathbb{D}_{0}$, then $d(y,z)\le d(y,x)+d(x,z)<\delta_n$ and $e\left(  g_{n}\left(  \omega\right)  \left(  y \right)  ,g\left(
	z\right)  \right)  \geq e\left(  g_{n}\left(
	\omega\right)  \left(  y\right)  ,g\left(  x\right)  \right)  -e\left(
	g\left(  z\right)  ,g\left(  x\right)  \right)  >\varepsilon$. 
	Since $X$ takes values in $\mathbb{ D}_0$ by assumption, by Lemma 1.2.3(i) of \citet{van1996weak},
	\begin{align*}
		\mathbb{P}^{\ast}\left(  X\in B_{n}\right)  =E^{\ast}\left[
		1\left\{  X\in B_{n}\right\}      \right]  =E\left[  1\left\{  X\in B_{n}\cap \mathbb{ D}_0\right\} \right].
	\end{align*}
	Also, by the dominated convergence theorem, 
	\begin{align*}
		&E\left[  1\left\{  X\in B_{n}\cap \mathbb{ D}_0\right\} \right]     \leq E\left[  1\left\{  X\in\cup
		_{m=n}^{\infty}(B_{m}\cap \mathbb{ D}_0)\right\}  \right]  \\
		\rightarrow &\, E\left[  1\left\{  X\in\cap_{n=1}^{\infty}\cup_{m=n}^{\infty
		}(B_{m}\cap \mathbb{ D}_0)\right\}  \right] =\mathbb{P}\left(  X\in\limsup  B_{n}\cap\mathbb{D}_{0}
		\right) =0.
	\end{align*}
	This implies that  $\mathbb{P}^{\ast}\left(  X\in B_{n}\right) \to 0$ as $n\to \infty$. Now we have that 
	\begin{align*}
		\mathbb{P}^{\ast}\left( e(g_n(X_n),g(X) )>\varepsilon \right)\le&\,\mathbb{P}^{\ast}\left( \left\{ e(g_n(X_n),g(X) )>\varepsilon\right\}\cap A \right) +\mathbb{P}\left(  A^c \right)\\
		\le&\, \mathbb{P}^{\ast}\left( X\in B_n\text{ or } d(X_n,X)\ge \delta_n \right)+\varepsilon \to \varepsilon.
	\end{align*}
	Since $\varepsilon$ is arbitrary, the claim holds.
	
	(iii). By Lemmas 1.9.3(i) and 1.9.2(iii) of \citet{van1996weak}, it suffices to prove that $\sup_{m\geq n}e\left(  g_{m}\left(
	X_{m}\right)  ,g\left(  X\right)  \right)  $ converges to $0$ in outer
	probability. Choose $\delta_{n}\downarrow0$ with $\mathbb{P}^{\ast
	}\left(  \sup_{m\geq n}d\left(  X_{m},X\right)  \geq\delta_{n}\right)
	\rightarrow0$. Fix $\varepsilon>0$. Let $A\subset\Omega$ be a measurable set with $\mathbb{P}\left(  A\right) 
	\ge1-\varepsilon$ such that if
	$x_{n}\rightarrow x$ with $x_{n}\in\mathbb{D}_{n}$ and $x\in\mathbb{D}_{0}$,
	then $g_{n}\left(  x_{n}\right)  $ converges to
	$g\left(  x\right)  $ uniformly on $A$. Let $B_{n}(\omega)$ be the set of all $x$ such that there are $m\ge n$ and
	$y\in\mathbb{D}_{m}$ with $d\left(  y,x\right)  <\delta_{n}$ and
	$e\left(  g_{m}\left(  \omega\right)  \left(  y\right),g\left(  x\right)  \right)  >\varepsilon$. Let $B_{n}=\cup_{\omega\in A}B_{n}(\omega)$. Then we can finish the proof along the lines of the proof of (ii).
	
	Suppose Condition (b) holds. 
	Repeat the proofs of (i), (ii), and (iii) under Condition (a) with the properties of $g_n$ and $g$ under Condition (b).
	For (ii), let $B_{n}(\omega)$ be the set of all $x$ such that there is a 
	$y\in\mathbb{D}_{n}(\omega)$ with $d\left(  y,x\right)  <\delta_{n}$ and
	$e\left(  g_{n}\left(  \omega\right)  \left(  y\right),g\left(  x\right)  \right)  >\varepsilon$. 
	For (iii), let $B_{n}(\omega)$ be the set of all $x$ such that there are $m\ge n$ and
	$y\in\mathbb{D}_{m}(\omega)$ with $d\left(  y,x\right)  <\delta_{n}$ and
	$e\left(  g_{m}\left(  \omega\right)  \left(  y\right),g\left(  x\right)  \right)  >\varepsilon$.
	The key difference is that Condition (a) requires that $X_n(\omega)\in\mathbb{D}_n$ for all $\omega$ holds for some fixed $\mathbb{ D}_n$. Condition (b) only requires that $X_n(\omega)\in\mathbb{D}_n(\omega)$ for all $\omega$ holds for some random $\mathbb{D}_n$ which can take different values $\mathbb{D}_n(\omega)$ for different $\omega$. On the other hand, Condition (b) strengthens the properties of $g_n$ and $g$ so that the claims hold as well. 
\end{proof}

\begin{theorem}[Extended delta method]\label{lemma.random delta method} 
	Let $\mathbb{D}$ and $\mathbb{E}$ be metrizable topological vector spaces, and let $r_{n}$ be constants with $r_{n}
	\rightarrow\infty$. Let $\hat{\phi}_{n}:\Omega\rightarrow\mathbb{D}%
	_{\mathcal{F}}\subset\mathbb{D}$ be a random element for every $n$. Let $\mathbb{D}_{0}\subset\mathbb{D}$.
	\begin{enumerate}[label=(\roman{*})]
		\item Let $\mathcal{F}:\mathbb{D}_{\mathcal{F}}\rightarrow\mathbb{E}$ satisfy the condition that for every $\varepsilon>0$, there is a measurable set $A\subset\Omega$ with $\mathbb{P}(A)\ge 1-\varepsilon$ such that for some map $\mathcal{F}_{\phi}^{\prime}$ on $\mathbb{ D}_0$,
		\begin{align*}
			r_{n}(  \mathcal{F}(  \hat{\phi}_{n}+r_{n}^{-1}h_{n})
			-\mathcal{F}(  \hat{\phi}_{n})  )  \rightarrow\mathcal{F}
			_{\phi}^{\prime}\left(  h\right)
		\end{align*}
		uniformly on $A$ for every convergent sequence $\{h_{n}\}\subset \mathbb{D}$ with $\hat{\phi}_{n}(\omega)+r_{n}^{-1}h_{n} 
		\in\mathbb{D}_{\mathcal{F}}$ for all $n$ and all $\omega$ and $h_{n}\rightarrow
		h\in\mathbb{D}_{0}$. If $X_{n}:\Omega\rightarrow
		\mathbb{D}_{\mathcal{F}}$ are maps \textbf{with} $ X_{n}(\omega)-\hat{\phi}_{n}(\omega)+\hat{\phi}_{n}(\omega^{\prime})\in\mathbb{D}_{\mathcal{ F}}$ for all $\omega,\omega^{\prime}\in\Omega$ and $r_{n}(  X_{n}-\hat{\phi}_{n})  \leadsto X$, where $X$ is separable and
		takes its values in $\mathbb{D}_{0}$, then $r_{n}(  \mathcal{F}(
		X_{n})  -\mathcal{F}(  \hat{\phi}_{n})  )
		\leadsto\mathcal{F}_{\phi}^{\prime}\left(  X\right)  $. Moreover, if
		$\mathcal{F}_{\phi}^{\prime}$ is continuous on all of $\mathbb{D}$, then
		$r_{n}(  \mathcal{F}(  X_{n})  -\mathcal{F}(  \hat{\phi
		}_{n})  )  -\mathcal{F}_{\phi}^{\prime}(  r_{n}(
		X_{n}-\hat{\phi}_{n})  )  $ converges to zero in outer probability.
		
		\item Let $\mathcal{F}:\mathbb{D}_{\mathcal{F}}\rightarrow\mathbb{E}$ satisfy the condition that for every $\varepsilon>0$, there is a measurable set $A\subset\Omega$ with $\mathbb{P}(A)\ge 1-\varepsilon$ such that for some \textbf{continuous} map $\mathcal{F}_{\phi}^{\prime}$ on $\mathbb{ D}_0$, 
		\begin{align*}
			r_{n_m}\{  \mathcal{F}(  \hat{\phi}_{n_m}(\omega_{n_m})+r_{n_m}^{-1}h_{n_m})
			-\mathcal{F}(  \hat{\phi}_{n_m}(\omega_{n_m})) \}  \rightarrow\mathcal{F}
			_{\phi}^{\prime}\left(  h\right)
		\end{align*}
		for every convergent subsequence $\{h_{n_m}\}\subset \mathbb{D}$ with $\hat{\phi}_{n_m}(\omega_{n_m})+r_{n_m}^{-1}h_{n_m} 
		\in\mathbb{D}_{\mathcal{F}}$, $\omega_{n_m}\in A$, and $h_{n_m}\rightarrow
		h\in\mathbb{D}_{0}$. If $X_{n}:\Omega\rightarrow
		\mathbb{D}_{\mathcal{F}}$ are maps with  $r_{n}(  X_{n}-\hat{\phi}_{n})  \leadsto X$, where $X$
		takes its values in $\mathbb{D}_{0}$, then $r_{n}(  \mathcal{F}(
		X_{n})  -\mathcal{F}(  \hat{\phi}_{n})  )
		\leadsto\mathcal{F}_{\phi}^{\prime}\left(  X\right)  $. Moreover, if
		$\mathcal{F}_{\phi}^{\prime}$ is continuous on all of $\mathbb{D}$, then
		$r_{n}(  \mathcal{F}(  X_{n})  -\mathcal{F}(  \hat{\phi
		}_{n})  )  -\mathcal{F}_{\phi}^{\prime}(  r_{n}(
		X_{n}-\hat{\phi}_{n})  )  $ converges to zero in outer probability.
	\end{enumerate}
\end{theorem}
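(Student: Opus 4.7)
The plan is to reduce both parts to the extended continuous mapping theorem (Theorem~\ref{lemma.random continuous mapping}) by realizing $r_n\{\mathcal{F}(X_n)-\mathcal{F}(\hat\phi_n)\}$ as the value of a suitable random map at a weakly converging argument. Setting $Y_n := r_n(X_n-\hat\phi_n)$, the hypothesis yields $Y_n\leadsto X$ with $X$ supported on $\mathbb{D}_0$. Define the random map $g_n(\omega)(h) := r_n\{\mathcal{F}(\hat\phi_n(\omega)+r_n^{-1}h)-\mathcal{F}(\hat\phi_n(\omega))\}$, so that $g_n(Y_n)=r_n\{\mathcal{F}(X_n)-\mathcal{F}(\hat\phi_n)\}$. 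The target conclusion is therefore $g_n(Y_n)\leadsto \mathcal{F}_\phi'(X)$, and the job reduces to verifying the hypotheses of Theorem~\ref{lemma.random continuous mapping} for the map $g_n$ evaluated at $Y_n$.

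For part (i), I would choose the \emph{non-random} domain $\mathbb{D}_n := \{h\in\mathbb{D} : \hat\phi_n(\omega')+r_n^{-1}h\in\mathbb{D}_{\mathcal{F}} \text{ for all } \omega'\in\Omega\}$. The technical side-condition $X_n(\omega)-\hat\phi_n(\omega)+\hat\phi_n(\omega')\in\mathbb{D}_{\mathcal{F}}$ for all $\omega,\omega'$ is exactly the statement that $Y_n(\omega)\in\mathbb{D}_n$ for every $\omega$. The uniform-convergence hypothesis on $\mathcal{F}$ then delivers Condition~(a) of Theorem~\ref{lemma.random continuous mapping}: for every $\varepsilon>0$ there is $A$ with $\mathbb{P}(A)\ge 1-\varepsilon$ on which $g_n(h_n)\to \mathcal{F}_\phi'(h)$ uniformly whenever $h_n\to h\in\mathbb{D}_0$ with $h_n\in\mathbb{D}_n$. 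Invoking Theorem~\ref{lemma.random continuous mapping}(i) with $g=\mathcal{F}_\phi'$ then gives $g_n(Y_n)\leadsto \mathcal{F}_\phi'(X)$, as required.

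For part (ii), the appropriate domain is the random set $\mathbb{D}_n(\omega) := \{h : \hat\phi_n(\omega)+r_n^{-1}h\in\mathbb{D}_{\mathcal{F}}\}$, and $Y_n(\omega)\in\mathbb{D}_n(\omega)$ is automatic because $\hat\phi_n(\omega)+r_n^{-1}Y_n(\omega)=X_n(\omega)\in\mathbb{D}_{\mathcal{F}}$. The ``along every subsequence on $A$'' form of the hypothesis, combined with the assumed continuity of $\mathcal{F}_\phi'$ on $\mathbb{D}_0$, is precisely Condition~(b) of Theorem~\ref{lemma.random continuous mapping}, so Theorem~\ref{lemma.random continuous mapping}(i) again yields $g_n(Y_n)\leadsto \mathcal{F}_\phi'(X)$.

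For the ``moreover'' claim in either part, I would exploit the assumed continuity of $\mathcal{F}_\phi'$ on all of $\mathbb{D}$ and apply Theorem~\ref{lemma.random continuous mapping} a second time to the centered map $\tilde g_n := g_n-\mathcal{F}_\phi'$: whenever $h_n\to h\in\mathbb{D}_0$, continuity of $\mathcal{F}_\phi'$ forces $\tilde g_n(h_n)\to 0$ in exactly the same mode as $g_n(h_n)\to\mathcal{F}_\phi'(h)$ (uniformly on $A$ for part (i), along subsequences on $A$ for part (ii)). Theorem~\ref{lemma.random continuous mapping} applied with constant limit $0$ then gives $\tilde g_n(Y_n)\leadsto 0$, which, the limit being constant, is convergence in outer probability; this is exactly $r_n\{\mathcal{F}(X_n)-\mathcal{F}(\hat\phi_n)\}-\mathcal{F}_\phi'(Y_n)\to 0$ in outer probability. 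The main obstacle is conceptual rather than computational: aligning the ``for all $\omega'\in\Omega$'' quantifier built into the non-random domain of Condition~(a) with the sample paths of $Y_n$. Without the joint-membership side-condition, $Y_n(\omega)$ need not lie in any fixed $\mathbb{D}_n$ on which the uniform convergence is available, which is precisely why part (i) must impose the stronger hypothesis and why part (ii) trades it for pointwise convergence plus continuity of $\mathcal{F}_\phi'$ handled by Condition~(b).
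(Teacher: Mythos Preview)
Your proposal is correct and follows essentially the same route as the paper: define $g_n(\omega)(h)=r_n\{\mathcal{F}(\hat\phi_n(\omega)+r_n^{-1}h)-\mathcal{F}(\hat\phi_n(\omega))\}$, take $\mathbb{D}_n=\bigcap_{\omega'}\{h:\hat\phi_n(\omega')+r_n^{-1}h\in\mathbb{D}_{\mathcal{F}}\}$ in part~(i) and $\mathbb{D}_n(\omega)=\{h:\hat\phi_n(\omega)+r_n^{-1}h\in\mathbb{D}_{\mathcal{F}}\}$ in part~(ii), and invoke Theorem~\ref{lemma.random continuous mapping}(i) under Condition~(a) and~(b) respectively. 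The only cosmetic difference is in the ``moreover'' step: the paper applies Theorem~\ref{lemma.random continuous mapping} to the pair $f_n(h)=(g_n(h),\mathcal{F}_\phi'(h))$ and then subtracts via the ordinary continuous mapping theorem, whereas you apply it directly to the difference $\tilde g_n=g_n-\mathcal{F}_\phi'$; both arguments are valid and equivalent.
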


\begin{remark}
	Theorem \ref{lemma.random delta method} is an extension of Theorem 3.9.5 (delta method) of \citet{van1996weak}. Here, $\hat{\phi}_n$ is allowed to be random, which is the key difference between the two theorems. Theorem \ref{lemma.random delta method} is used to establish the asymptotic distribution of the test statistic under null.
\end{remark}

\begin{proof}[Proof of Theorem \ref{lemma.random delta method}]
	(i). The proof mainly relies on the results of Theorem \ref{lemma.random continuous mapping}. 
	Define
	$\mathbb{D}_{n} (\omega)=\{  h\in\mathbb{D}:\hat{\phi}_{n}\left(
	\omega\right)  +r_{n}^{-1}h\in\mathbb{D}_{\mathcal{F}} \}$ for every $n$ and every $\omega\in\Omega$. Let $\mathbb{D}_{n}=\cap_{\omega\in\Omega}\mathbb{D}_n(\omega)$.
	Define $g_{n}(\omega)\left(  h\right)  =r_{n}(  \mathcal{F}(\hat{\phi}_{n}(\omega)+r_{n}^{-1}h)  -\mathcal{F}(  \hat{\phi}_{n}(\omega)))  $ for every $n$, every $\omega\in\Omega$, and every $h\in\mathbb{ D}_n$. Here, $g_n$ is a random map because of $\hat{\phi}_n$. For every $n$ and every $\omega\in\Omega$, $g_{n}(\omega):\mathbb{D}_n\to\mathbb{E}$.  By the
	assumptions, for every $\varepsilon>0$ there is a measurable set $A\subset\Omega$ with $\mathbb{P}(A)\ge 1-\varepsilon$ such that if $h_n\in\mathbb{D}_n$ with $h_n\to h\in\mathbb{D}_0$, then $g_{n}\left(  h_n\right)  \rightarrow\mathcal{F}_{\phi}^{\prime
	}\left(  h\right)  $ uniformly on $A$. Also, $r_{n}(  X_{n}(\omega)-\hat{\phi}_{n}(\omega))\in\mathbb{D}_n$ for all $\omega$ by assumption. Now
	by Theorem \ref{lemma.random continuous mapping}(i) (under Condition (a)),
	\[r_{n}(  \mathcal{F}(
	X_{n})  -\mathcal{F}(  \hat{\phi}_{n})  )=
	g_{n}(  r_{n}(  X_{n}-\hat{\phi}_{n})  )  \leadsto
	\mathcal{F}_{\phi}^{\prime}\left(  X\right)  .
	\]
	Moreover, suppose $\mathcal{F}_{\phi}^{\prime}$ is continuous on all of
	$\mathbb{D}$, and let $f_{n}\left(  h\right)  =(
	g_{n}\left(  h\right)  ,\mathcal{F}_{\phi}^{\prime}\left(  h\right))  $ for every $h
	\in\mathbb{D}_{n}$. By Theorem
	\ref{lemma.random continuous mapping}(i) again,
	\[
	\left({r_{n}(  \mathcal{F}(  X_{n})  -\mathcal{F}(
		\hat{\phi}_{n})  )  },{\mathcal{F}_{\phi}^{\prime}(
		r_{n}(  X_{n}-\hat{\phi}_{n})  )  }\right)=f_n(  r_{n}(  X_{n}-\hat{\phi}_{n})
	)  \leadsto\left({\mathcal{F}_{\phi}^{\prime}},{\mathcal{F}_{\phi
		}^{\prime}}\right)\left(  X\right)  .
	\]
	Thus by Theorem 1.3.6 (continuous mapping) of \citet{van1996weak},\linebreak
	$r_{n}(  \mathcal{F}(  X_{n})  -\mathcal{F}(  \hat{\phi
	}_{n})  )  -\mathcal{F}_{\phi}^{\prime}(  r_{n}(
	X_{n}-\hat{\phi}_{n})  )  \leadsto 0$. The claim follows from Lemma 1.10.2(iii) of \citet{van1996weak}.
	
	(ii). Together with the continuity of $\mathcal{F}^{\prime}_{\phi}$, by arguments similar to the proof of (i), we can show that the claim holds by Theorem \ref{lemma.random continuous mapping}(i) (under Condition (b)). 
\end{proof}

\setcounter{equation}{0}
\renewcommand{\theequation}{\thesection.\arabic{equation}}
\section{{Conditioning Covariates}}\label{sec.conditioning covariates setup}
In this section, we consider the case where conditioning covariates may be present, that is, the random assignment assumption holds conditional on some covariates. Suppose $X$ is a conditioning covariate vector with dimension $d_X$, let $\mathcal{X}$ be the set of possible values
of $X$, and let $\mathcal{X}=\left\{  x_{1},\ldots,x_{L}\right\}  $.  

First, consider the case introduced in Section \ref{subsec.multi} where the treatment and the instrument are both multivalued (and ordered). 
A testable implication with conditioning covariates is as follows.
\begin{lemma}\label{lemma.testable implication conditioning ordered}
	A testable implication of
	the {conditional version} of Assumption \ref{ass.IV validity for multivalued Z}
	is that
	\begin{align}\label{eq.testable implication covariates}
		&\mathbb{P}\left(  Y\in B,D=d_{\max}|Z=z_k,X=x_l\right)     \leq \mathbb{P}\left(  Y\in B,D=d_{\max}|Z=z_{k+1},X=x_l\right)\nonumber\\
		&\text{and } \mathbb{P}\left(  Y\in B,D=d_{\min}|Z=z_k,X=x_l\right)     \geq \mathbb{P}\left(  Y\in B,D=d_{\min}|Z=z_{k+1},X=x_l\right);  \notag\\
		&\mathbb{P}\left(  D\in C|Z=z_k,X=x_l\right)     \geq \mathbb{P}\left(  D\in C|Z=z_{k+1},X=x_l\right)
	\end{align}
	for all $k$ with $1\le k\le K-1$, all $l$ with $1\le l \le L$, all $B\in\mathcal{B}_{\mathbb{R}}$, and all $C=(-\infty,c]$ with $c\in\mathbb{R}$.	
\end{lemma}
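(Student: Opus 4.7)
The plan is to mirror the proof of Lemma 2.1 (for multivalued $D$ and $Z$) but carry out every step conditionally on $X=x_l$. Fix $l\in\{1,\ldots,L\}$ and $k\in\{1,\ldots,K-1\}$. Under the conditional version of Assumption 2.2, the conditional instrument exclusion lets me define, on the event $\{X=x_l\}$, a potential outcome $Y_d$ with $Y_d=Y_{dz}$ almost surely for every $z\in\mathcal{Z}$; the conditional random assignment says $Z$ is independent of $(\tilde Y,\tilde D)$ given $X$; and the conditional monotonicity gives $D_{z_{k+1}}\geq D_{z_k}$ almost surely given $X=x_l$.

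For the first inequality in \eqref{eq.testable implication covariates}, I would write
\begin{align*}
\mathbb{P}(Y\in B,D=d_{\max}\mid Z=z_{k+1},X=x_l)
= E\bigl[1\{Y_{d_{\max}}\in B\}\cdot 1\{D_{z_{k+1}}=d_{\max}\}\bigm| X=x_l\bigr]
\end{align*}
using exclusion on $Y$ and conditional random assignment to drop $Z$ from the conditioning event, and similarly for $z_k$. Because $d_{\max}$ is the largest element of $\mathcal{D}$, conditional monotonicity gives $\{D_{z_k}=d_{\max}\}\subseteq\{D_{z_{k+1}}=d_{\max}\}$ almost surely given $X=x_l$, so the difference of the two conditional probabilities equals $E[1\{Y_{d_{\max}}\in B\}\cdot 1\{D_{z_k}\ne d_{\max},D_{z_{k+1}}=d_{\max}\}\mid X=x_l]\geq 0$. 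The $d_{\min}$ inequality is obtained by the symmetric argument using $\{D_{z_{k+1}}=d_{\min}\}\subseteq\{D_{z_k}=d_{\min}\}$.

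For the stochastic dominance statement, the same exclusion/random-assignment rewriting yields $\mathbb{P}(D\in C\mid Z=z_k,X=x_l)=\mathbb{P}(D_{z_k}\leq c\mid X=x_l)$ and analogously with $z_{k+1}$; since $D_{z_{k+1}}\geq D_{z_k}$ almost surely conditional on $X=x_l$, the event $\{D_{z_{k+1}}\leq c\}\subseteq\{D_{z_k}\leq c\}$ almost surely on $\{X=x_l\}$, which gives the required inequality.

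I do not anticipate a real obstacle: the only thing to be careful about is verifying that the conditional random assignment lets me simultaneously replace conditioning on $(Z=z,X=x_l)$ by a pure $X=x_l$ conditioning after inserting the potential outcome/treatment variables, and that all the equalities above hold on a set of full conditional probability given $X=x_l$. This is a routine consequence of the conditional independence between $Z$ and $(\tilde Y,\tilde D)$ and the definition of regular conditional probabilities, so the lemma follows by running the unconditional argument of Lemma 2.1 inside each conditioning cell $\{X=x_l\}$ for $l=1,\ldots,L$.
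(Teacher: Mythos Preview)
Your proposal is correct and follows exactly the approach the paper intends: the paper's own treatment of this lemma is simply the remark that it ``can be proved analogously'' to Lemma~\ref{ass.testable implication multivalue}, and you have spelled out precisely that analogous argument, replacing unconditional probabilities by probabilities conditional on $X=x_l$ and invoking conditional exclusion, conditional random assignment, and conditional monotonicity in each step. Your set-inclusion phrasing $\{D_{z_k}=d_{\max}\}\subseteq\{D_{z_{k+1}}=d_{\max}\}$ is a slightly cleaner way of saying what the paper's proof of Lemma~\ref{ass.testable implication multivalue} does via the decomposition $\sum_j\mathbb{P}(Y_{d_{\max}}\in B,D_{z_k}=d_{\max},D_{z_{k+1}}=d_j)$, but the content is identical.
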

Suppose $d_{\min}=0$ and $d_{\max}=1$ without loss of generality. 
Define function spaces
\begin{align}{\label{def.function spaces covariates}}
	& \mathcal{G}=\left\{  \left(  1_{\mathbb{R}
		\times\mathbb{R}  \times\left\{  z_{k}\right\}  \times\left\{
		x_{l}\right\}  },1_{\mathbb{R}\times\mathbb{R}  \times\left\{  z_{k+1}\right\}  \times\left\{
		x_{l}\right\}  }\right)  :k=1,\ldots, K-1,l=1,\ldots, L\right\},  \notag\\
	&\mathcal{H}_{1}=\left\{  \left(  -1\right)  ^{d}\cdot1_{B\times\left\{
		d\right\}  \times\mathbb{R}\times\mathbb{R}^{d_X}}:B\text{ is a closed interval},
	d\in\{0,1\}\right\},\notag\\
	&\mathcal{H}_{2}=\left\{  1_{\mathbb{R}\times C \times\mathbb{R}\times\mathbb{R}^{d_X}}:C=(-\infty,c],c\in\mathbb{R}\right\}, \text{ and }\mathcal{H}=\mathcal{H}_{1}\cup\mathcal{H}_{2}  .
\end{align}
For every probability measure $Q$ with \eqref{eq.Q map}, we define $\phi_Q$ by
$
\phi_{Q}\left(  h,g\right)  ={Q\left(  h\cdot g_2 \right) }/{Q\left(  g_2 \right)  }-{Q\left(  h\cdot g_1\right)  }/{Q\left(  g_1 \right)  }
$
for every $(h,g)\in\mathcal{H}\times\mathcal{G}$ with $g=(g_1,g_2)$. Testable implication \eqref{eq.testable implication covariates} is equivalent to the $H_0$ in
\[
H_0: \sup_{(h,g)\in\mathcal{H}\times\mathcal{G}}\phi_{Q}\left( h,g\right) \le 0 \text{ and } H_1: \sup_{(h,g)\in\mathcal{H}\times\mathcal{G}}\phi_{Q}\left( h,g\right) > 0
\]
if $Q$ is the underlying probability distribution of the data. Then we can follow the test procedure in Section \ref{subsbusec.test procedure} to conduct the test with the function space $\mathcal{H}\times\mathcal{G}$ defined by the $\mathcal{H}$ and the $\mathcal{G}$ in \eqref{def.function spaces covariates}.

Second, consider the case  introduced in Section \ref{subsec.unordered} where the treatment and the instrument can both be unordered. 
A testable implication with conditioning covariates is as follows.
\begin{lemma}\label{lemma.testable implication conditioning unordered}
	A testable implication of the conditional version of Assumption \ref{ass.IV validity for unordered D} is given by
	\begin{align}\label{eq.testable implication unordered treatment with conditioning covariates}
		\mathbb{P}\left(  Y\in B,D=d|Z=z^{\prime},X=x_l\right)  \le \mathbb{P}\left(Y\in B,D=d|Z=z,X=x_l\right)
	\end{align}
	for all Borel sets $B$, all $\left(  d,z,z^{\prime}\right)  \in\mathcal{C}$, and all $l$ with $1\le l\le L$, where $\mathcal{C}$ is a prespecified subset of $\mathcal{D}\times\mathcal{Z}\times\mathcal{Z}$. 
\end{lemma}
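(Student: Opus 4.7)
The plan is to mirror the proof of Lemma \ref{lemma.testable implication unorder} but with $X=x_l$ inserted into every conditioning event. Let me first pin down what the ``conditional version'' of Assumption \ref{ass.IV validity for unordered D} ought to mean. I will take it to be: (i) $Y_{dz}=Y_{dz'}$ almost surely for all $d\in\mathcal D$ and all $z,z'\in\mathcal Z$ (exclusion is unaffected by conditioning); (ii) $Z$ is jointly independent of $(\tilde Y,\tilde D)$ conditional on $X$; and (iii) $1\{D_{z'}=d\}\le 1\{D_{z}=d\}$ almost surely for every $(d,z,z')\in\mathcal C$. Throughout, all conditioning events $\{Z=z,X=x_l\}$ and $\{Z=z',X=x_l\}$ are assumed to have positive probability, so the conditional probabilities in the statement are well defined.

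Using (i), define $Y_d$ such that $Y_d=Y_{dz}$ almost surely for every $z\in\mathcal Z$. The standard consistency relations between observables and potential variables give, on the event $\{Z=z\}$, that $D=D_z$ and $Y=Y_{D_z,z}=Y_{D_z}=Y_d$ whenever additionally $D_z=d$. Hence
\[
\{Y\in B,\,D=d,\,Z=z,\,X=x_l\}=\{Y_d\in B,\,D_z=d,\,Z=z,\,X=x_l\}
\]
up to a null set, and the analogous identity holds with $z$ replaced by $z'$.

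Next I would run the short calculation
\begin{align*}
\mathbb P(Y\in B,D=d\mid Z=z',X=x_l)
&=E\bigl[1\{Y_d\in B\}\,1\{D_{z'}=d\}\,\bigm|\,Z=z',X=x_l\bigr]\\
&=E\bigl[1\{Y_d\in B\}\,1\{D_{z'}=d\}\,\bigm|\,X=x_l\bigr]\\
&\le E\bigl[1\{Y_d\in B\}\,1\{D_{z}=d\}\,\bigm|\,X=x_l\bigr]\\
&=E\bigl[1\{Y_d\in B\}\,1\{D_{z}=d\}\,\bigm|\,Z=z,X=x_l\bigr]\\
&=\mathbb P(Y\in B,D=d\mid Z=z,X=x_l),
\end{align*}
where the first and last equalities use the consistency relation above, the second and fourth equalities use conditional random assignment (ii) applied to the function $1\{Y_d\in B\}\,1\{D_{z^\ast}=d\}$ of $(\tilde Y,\tilde D)$ for $z^\ast\in\{z,z'\}$, and the inequality uses the monotonicity condition (iii) together with monotonicity of conditional expectation.

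The proof is essentially routine once the conditional assumption is correctly formulated, so I do not expect a substantive obstacle. The one place that requires care is the second equality: conditional independence in (ii) is a statement about $Z$ and $(\tilde Y,\tilde D)$ given $X$, so one must verify that $1\{Y_d\in B\}\,1\{D_{z'}=d\}$ is a measurable function of $(\tilde Y,\tilde D)$. This is immediate because $Y_d$ coincides almost surely with each $Y_{dz}$ (a coordinate of $\tilde Y$) and $D_{z'}$ is a coordinate of $\tilde D$. A parallel remark applies to the monotonicity step: (iii) gives an almost sure inequality between indicator random variables, which transfers to the conditional expectation given $X=x_l$.
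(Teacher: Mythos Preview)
Your proof is correct and follows exactly the approach the paper has in mind: the paper explicitly states that the proof of this lemma is trivial (see the remark after the proof of Lemma~\ref{ass.testable implication multivalue}), and the argument it gives in the main text for the unconditional version (Lemma~\ref{lemma.testable implication unorder}) is precisely the chain of equalities and the monotonicity inequality you wrote, just without the extra conditioning on $X=x_l$.
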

The inequality in \eqref{eq.testable implication unordered treatment with conditioning covariates} is similar to the generalized regression monotonicity (GRM) hypothesis in \citet{hsu2019testing}. The major difference is that $Z$ is allowed to be unordered in \eqref{eq.testable implication unordered treatment with conditioning covariates}.  
Define the function space
\begin{align}{\label{def.function spaces covariates unordered}}
	\mathcal{H}\times\mathcal{G}=\left\{
	\begin{array}
		[c]{c}%
		\left(  1_{B\times\{d\}\times\mathbb{R}\times\mathbb{R}^{d_{X}}},\left(
		1_{\mathbb{R}\times\mathbb{R}\times\left\{  z\right\}  \times\left\{
			x_{l}\right\}  },1_{\mathbb{R}\times\mathbb{R}\times\left\{  z^{\prime
			}\right\}  \times\left\{  x_{l}\right\}  }\right)  \right)  :B\text{ is a
			closed interval},\\
		(d,z,z^{\prime})\in\mathcal{C},l=1,\ldots,L
	\end{array}
	\right\}.
\end{align}
For every probability measure $Q$ with \eqref{eq.Q map}, we define $\phi_Q$ by
$
\phi_{Q}\left(  h,g\right)  ={Q\left(  h\cdot g_2 \right) }/{Q\left(  g_2 \right)  }-{Q\left(  h\cdot g_1\right)  }/{Q\left(  g_1 \right)  }
$
for every $(h,g)\in\mathcal{H}\times\mathcal{G}$ with $g=(g_1,g_2)$. Testable implication \eqref{eq.testable implication unordered treatment with conditioning covariates} is equivalent to the $H_0$ in 
\[
H_0: \sup_{(h,g)\in\mathcal{H}\times\mathcal{G}}\phi_{Q}\left( h,g\right) \le 0 \text{ and } H_1: \sup_{(h,g)\in\mathcal{H}\times\mathcal{G}}\phi_{Q}\left( h,g\right) > 0
\]
if $Q$ is the underlying probability distribution of the data. Then we can follow the test procedure in Section \ref{subsbusec.test procedure} to conduct the test with the function space $\mathcal{H}\times\mathcal{G}$ defined in \eqref{def.function spaces covariates unordered}.

\section*{Acknowledgments}
This article is a revised version of the first
chapter of the author's doctoral thesis at UC San Diego.
I am deeply grateful to Brendan K. Beare, Zheng Fang, Andres Santos, Yixiao Sun, and Kaspar W\"{u}thrich for their constant support on this paper. I thank the editors and three anonymous referees for their constructive suggestions that help improve the paper significantly. I thank Shengtao Dai, Tongyu Li, and Xingyu Li for their excellent work as research assistants. I also thank Roy Allen, Qihui Chen, Asad Dossani, Graham Elliott, Ivan Fernandez-Val, Wenzheng Gao, James D. Hamilton, Jungbin Hwang, Toru Kitagawa, Sungwon Lee, Juwon Seo, Xiaojun Song, and all seminar participants for their insightful suggestions and comments.  
This work was supported by the National Natural Science Foundation of China [grant number 72103004].

\bibliographystyle{apalike}
\bibliography{reference1}

\clearpage

\setcounter{page}{1}
\setcounter{equation}{0}
\setcounter{footnote}{0}
\setcounter{table}{0}
\renewcommand{\theequation}{\thesection.\arabic{equation}}
\renewcommand{\thetable}{\thesection.\arabic{table}}
\begin{center}

\LARGE{Instrument Validity for Heterogeneous Causal Effects\\
	Online Supplementary Appendix }\\
[0.75cm]
\large{Zhenting Sun\\
	China Center for Economic Research\\
	National School of Development\\ Peking University\\
	zhentingsun@nsd.pku.edu.cn
}\\
\bigskip
\today
	
\end{center}

\bigskip



For the multivalued ordered treatment case, we assume $\mathcal{D}=\left\{
d_{1},d_{2},\ldots\right\}  $ in the proofs to obtain more general results. Assumption \ref{ass.IV validity for multivalued Z} with $\mathcal{D}=\{d_1,d_2,\dots\}$ is 
\begin{enumerate}[label=(\roman*)]
	\item Instrument Exclusion: For all $d\in\mathcal{D}$, $Y_{dz_{1}}=Y_{dz_{2}}
	=\cdots=Y_{dz_{K}}$ almost surely.
	
	\item Random Assignment: The variable $Z$ is jointly independent of $(
	\tilde{Y},\tilde{D})  $, where
	\begin{align*}
		\tilde{Y}   =\left(  Y_{d_{1}z_{1}},\ldots,Y_{d_{1}z_{K}},Y_{d_{2}z_{1}
		},\ldots,Y_{d_{2}z_{K}},\dots\right) \text{ and }\tilde{D} =\left(  D_{z_{1}},\ldots,D_{z_{K}}\right).
	\end{align*}

	\item Instrument Monotonicity: The potential treatment response variables satisfy $D_{z_{k+1}}\geq D_{z_{k}}$ almost surely for all
	$k\in \{1,2,\ldots, K-1\}$.
	
\end{enumerate}
Without loss of generality, we may assume that both $d_{\min}$ and $d_{\max}$ exist with $d_{\min}=0$ and $d_{\max}=1$ for simplicity. If $d_{\min}$ and $d_{\max}$ exist, we can always normalize $d_{\min}$ and $d_{\max}$ to $0$ and $1$, respectively. Then the function spaces defined in \eqref{def.function spaces} can be used for $\mathcal{D}=\left\{
d_{1},d_{2},\ldots\right\}  $.
All the results hold for $\mathcal{D}=\left\{d_{1},\ldots,d_J\right\}  $. 

\section{Proofs of Main Results}\label{appen.auxiliary}

We first introduce the following notation. For every $A\subset\bar{\mathcal{H}}\times\mathcal{G}$, define a map $\mathcal{S}_{A }:\ell^{\infty}\left( \Xi\times \bar{\mathcal{H}}\times\mathcal{G} \right)
\rightarrow\ell^{\infty}(\Xi)$ by 
\[
\mathcal{S}_{A}\left(  \psi\right)(\xi)
=\sup_{\left(  h,g\right)  \in A }\psi\left( \xi, h,g\right)
\]
for all $\psi\in\ell^{\infty}\left( \Xi\times \bar{\mathcal{H}}\times\mathcal{G} \right)  $. For simplicity of notation, we will write ${\mathcal{S}}$ for $\mathcal{S}_{\bar{\mathcal{H}}\times\mathcal{G}}$.
Define $\mathcal{M}:\ell^{\infty}(  {\bar{\mathcal{H}}\times\mathcal{G}} )\to \ell^{\infty}(  \Xi\times\bar{\mathcal{H}}\times\mathcal{G} )$ by 
\begin{align}\label{eq.M}
	\mathcal{M}(\varphi)(\xi,h,g)=\max\{\xi,\varphi(h,g)\}
\end{align}
for all $\varphi\in\ell^{\infty}(  {\bar{\mathcal{H}}\times\mathcal{G}} )$ and all $(\xi,h,g)\in  \Xi\times\bar{\mathcal{H}}\times\mathcal{G} $. 
Note that for every finite sample set, 
\begin{align}\label{eq.S equi}
	\mathcal{S}_{\mathcal{H}\times\mathcal{G}}(  {\hat{\phi}}_{P_n}/{\mathcal{M}(\hat{\sigma}_{P_n})})=\mathcal{S}(  {\hat{\phi}}_{P_n}/{\mathcal{M}(\hat{\sigma}_{P_n})}).
\end{align}
Define a function $\mathcal{I}:L^1(\nu)\to\mathbb{R}$ by $\mathcal{I}(f)=\int_{\Xi}f\,\mathrm{d}\nu$
for all $f\in L^1(\nu)$. 
Now we can write the test statistic in \eqref{eq.test stat expansion} as 
\begin{equation}\label{eq.test stat}
	\sqrt{T_n}\mathcal{I}\circ\mathcal{S}_{\mathcal{H}\times\mathcal{G}}\left(  \frac{\hat{\phi}_{P_n}}{\mathcal{M}(\hat{\sigma}_{P_n})}\right) .
\end{equation}

\begin{lemma}
\label{lemma.H complete}
Let $\mathcal{P}$ be the set of probability measures defined in Section \ref{sec.multi D and Z}. Let $\mathcal{H}_1$, $\bar{\mathcal{H}}_1$, $\mathcal{H}_2$, $\bar{\mathcal{H}}_2$, $\mathcal{H}$, and $\bar{\mathcal{H}}$ be as in \eqref{def.function spaces}.
Then for every $Q\in\mathcal{P}$, the closures of $\mathcal{H}_1$ and $\mathcal{H}_2$ in $L^2(Q)$ are equal to $\bar{\mathcal{H}}_1$ and $\bar{\mathcal{H}}_2$, respectively. Also, the closure of $\mathcal{H}$ in $L^2(Q)$ is equal to $\bar{\mathcal{H}}$ for every $Q\in\mathcal{P}$.

\end{lemma}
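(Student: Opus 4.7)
The plan is to prove the closure identities for $\mathcal{H}_1$ and $\mathcal{H}_2$ separately, and then combine them via the general topological fact $\overline{\mathcal{H}_1 \cup \mathcal{H}_2} = \overline{\mathcal{H}_1} \cup \overline{\mathcal{H}_2}$, which gives $\overline{\mathcal{H}} = \bar{\mathcal{H}}$ immediately. Fix $Q \in \mathcal{P}$ and, for $d \in \{0,1\}$, write $Q_d(A) = Q(A \times \{d\} \times \mathbb{R})$, so that for two elements of $\mathcal{H}_1 \cup \bar{\mathcal{H}}_1$ sharing the same sign the squared $L^2(Q)$ distance reduces to $Q_d$ of the symmetric difference of the underlying intervals.

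For the inclusion $\bar{\mathcal{H}}_1 \subset \overline{\mathcal{H}_1}$, I would write each $f = (-1)^d \cdot 1_{B \times \{d\} \times \mathbb{R}} \in \bar{\mathcal{H}}_1$ and build an approximating sequence $f_n = (-1)^d \cdot 1_{[\alpha_n, \beta_n] \times \{d\} \times \mathbb{R}}$ with $\alpha_n \downarrow \inf B$ (or $\alpha_n \uparrow \inf B$ if $\inf B \in B$, with the truncation $\max(\alpha_n, -n)$ when $\inf B = -\infty$) and an analogous choice of $\beta_n$ based on whether $\sup B$ is excluded or included. Then $B \triangle [\alpha_n, \beta_n]$ decreases to a $Q_d$-null set, so $\|f - f_n\|_{L^2(Q)}^2 = Q_d(B \triangle [\alpha_n, \beta_n]) \to 0$ by continuity of measure. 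The same construction, restricted to the single endpoint $c$ of a half-line $(-\infty, c)$ or $(-\infty, c]$, gives $\bar{\mathcal{H}}_2 \subset \overline{\mathcal{H}_2}$.

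For the reverse inclusion $\overline{\mathcal{H}_1} \subset \bar{\mathcal{H}}_1$, suppose $f_n = (-1)^{d_n} \cdot 1_{[a_n, b_n] \times \{d_n\} \times \mathbb{R}} \to f$ in $L^2(Q)$. Since $\{0,1\}$ is finite, I would pass to a subsequence on which $d_n \equiv d$ is constant; if both signs occurred infinitely often, then the two subsubsequential limits (one nonnegative, one nonpositive) must agree and therefore equal $0$, which is itself representable in $\bar{\mathcal{H}}_1$ through any degenerate interval of $Q_d$-mass zero. Next, extract so that $(a_n, b_n) \to (a^{\star}, b^{\star})$ in $[-\infty, \infty]^2$ by compactness; pointwise outside $\{a^{\star}, b^{\star}\}$ we have $1_{[a_n, b_n]}(x) \to 1_{(a^{\star}, b^{\star})}(x)$, and a further subsequence extraction upgrades this to $Q_d$-a.e.\ convergence. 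At each endpoint that is a $Q_d$-atom, a Cauchy $L^2(Q)$ sequence cannot oscillate between $0$ and $1$, so by passing to yet another subsequence $1_{[a_n, b_n]}$ is eventually constant there, determining whether that endpoint is included in or excluded from the limiting set. Consequently $f$ agrees $Q_d$-a.e.\ with $(-1)^d \cdot 1_{J \times \{d\} \times \mathbb{R}}$ for some interval $J$ with endpoints $a^{\star}, b^{\star}$ of closed, open, or half-closed type, placing $f \in \bar{\mathcal{H}}_1$. The parallel argument with a single free upper endpoint gives $\overline{\mathcal{H}_2} \subset \bar{\mathcal{H}}_2$.

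The main obstacle is ensuring that the $L^2(Q)$ limit really is an indicator of a single interval of one of the admissible types rather than of some more irregular Borel set; this is resolved precisely by the two-step endpoint analysis above, namely (i) pointwise convergence on the open interior determined by $(a^{\star}, b^{\star})$, and (ii) atom-stabilization at the endpoints forced by $L^2$-Cauchyness against the positive $Q_d$-mass there. Once both closure identities are established, the conclusion for $\mathcal{H}$ follows from distributivity of closure over finite unions.
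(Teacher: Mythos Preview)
Your proof is correct and takes a genuinely different route from the paper's.

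For the nontrivial inclusion $\overline{\mathcal{H}_1}\subset\bar{\mathcal{H}}_1$, the paper first splits $\mathcal{H}_1=\mathcal{H}_{10}\cup\mathcal{H}_{11}$ to fix the sign, then argues purely measure-theoretically: from a Cauchy sequence $\{(-1)^d 1_{B_{n_k}\times\{d\}\times\mathbb{R}}\}$ it builds the set-theoretic liminf $B^{\infty}=\bigcup_{j}\bigcap_{k\geq j}B_{n_k}$ and controls $\|h_{n_K}-(-1)^d 1_{B^{\infty}\times\{d\}\times\mathbb{R}}\|_{L^2(Q)}$ via symmetric-difference estimates that exploit the fact that a finite union of sets of the form $B_{n_K}\setminus B_{n_k}$ (closed intervals) is covered by at most two such sets. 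The interval type of $B^{\infty}$ (closed, open, half-closed) then falls out of the monotone union structure. By contrast, you identify the limit interval directly through its \emph{endpoints}: compactness of $[-\infty,\infty]^2$ gives $(a_n,b_n)\to(a^\star,b^\star)$, pointwise convergence on $\mathbb{R}\setminus\{a^\star,b^\star\}$ pins down the interior, and $L^2$-Cauchyness against any $Q_d$-atom at $a^\star$ or $b^\star$ forces eventual stabilization there, deciding inclusion or exclusion of each endpoint. Your handling of the mixed-sign case (both $d=0$ and $d=1$ infinitely often forces the limit to be $0$, which lies in $\bar{\mathcal{H}}_1$ via a $Q_d$-null singleton) is a clean alternative to the paper's a priori split.

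What each buys: your argument is shorter and more elementary, needing only sequential compactness of endpoints and dominated convergence, whereas the paper's liminf construction avoids tracking endpoints but requires a more delicate bookkeeping of nested Cauchy tolerances $\delta_K\downarrow 0$ and the interval-geometry observation about unions of set differences. Both reach the same conclusion, and your final step $\overline{\mathcal{H}_1\cup\mathcal{H}_2}=\overline{\mathcal{H}_1}\cup\overline{\mathcal{H}_2}$ is of course standard. One cosmetic remark: the phrase ``a further subsequence extraction upgrades this to $Q_d$-a.e.\ convergence'' is unnecessary, since pointwise convergence outside the two-point set $\{a^\star,b^\star\}$ plus the atom argument already gives what you need; and the atom-stabilization itself requires no subsequence, as $L^2$-Cauchyness forces the value at an atom to be \emph{eventually} constant.
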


\begin{proof}[Proof of Lemma \ref{lemma.H complete}]
Let 
$\mathcal{H}_{1d}=\{  \left(  -1\right)  ^{d}\cdot1_{B\times\left\{
	d\right\}\times\mathbb{R}  }:B\text{ is a closed interval in }
\mathbb{R} \}$
for $d\in\{0,1\}$. We first show that the closure of $\mathcal{H}_{1d}$ in $L^2(Q)$ is equal to
\begin{align*}
\bar{\mathcal{H}}_{1d}=\left\{  \left(  -1\right)  ^{d}\cdot1_{B\times\left\{
	d\right\}\times\mathbb{R}  }:B\text{ is a closed, open, or half-closed interval in }
\mathbb{R} \right\}.
\end{align*}
If this is true, the first claim of the Lemma follows from $\bar{\mathcal{H}}_1=\bar{\mathcal{H}}_{10}\cup\bar{\mathcal{H}}_{11}$.

Suppose there is a sequence $\left\{  h_{n}\right\}  \subset
\mathcal{H}_{1d}$ such that $\left\Vert h_{n}-h\right\Vert _{L^2\left(Q\right)  }\rightarrow0$ for some $h\in L^2(Q)$. Then $h_{n}$ is a Cauchy sequence, that is, $\left\Vert h_{n}-h_{m}\right\Vert _{L^2\left(Q\right)  }\rightarrow0$ as $n,m\rightarrow\infty$. By the definition of $\mathcal{H}_{1d}$, $h_{n}=\left(  -1\right)  ^{d}\cdot1_{B_{n}\times\left\{  d\right\}\times\mathbb{R}  }$,
where $B_{n}$ is a closed interval in $\mathbb{R}$. 	
It is possible that $\int1_{B_{n}\times\left\{  d \right\} \times\mathbb{R} }
\,\mathrm{d}Q\rightarrow0$, and in this case there is a $B=\left\{  a\right\}  $ for some $a\in
\mathbb{R}	$ such that $Q\left(  B\times\mathbb{R}\times\mathbb{R}\right)  =0$ and  $h_{n}%
\rightarrow(-1)^d\cdot1_{B\times\left\{  d\right\} \times\mathbb{R} }\in\mathcal{H}_{1d}$.
If $\int1_{B_{n}\times\left\{  d \right\} \times\mathbb{R} }
\,\mathrm{d}Q\not\to0$, then there is an $\varepsilon>0$ such that for all $n_{\varepsilon}>0$, there
is an $n> n_{\varepsilon}  $ such that $\left\Vert h_{n} \right\Vert
_{L^2\left(Q\right)  }^2>\varepsilon$. For a $\delta_{1}\ll\varepsilon$, there is an $N_{1}$ such that $\left\Vert h_{n}-h_{m}\right\Vert _{L^2\left(
Q\right)  }^2<\delta_{1}$ for all $m,n>N_{1}$. Thus there is an $n_{1}>N_{1}$
such that $\left\Vert h_{n_{1}}\right\Vert _{L^2\left(Q\right)  }^2>\varepsilon$ and $\left\Vert h_{n}-h_{n_{1}}\right\Vert _{L^2\left(
Q\right)  }^2<\delta_{1}$ for all $n>N_{1}$. Now let $\delta_2$ be such that
$0<\delta_{2}\ll \delta_{1}$. Then there is an $N_{2}>n_{1}$ such that $\left\Vert h_{n}%
-h_{m}\right\Vert _{L^2\left(Q\right)  }^2<\delta_{2}$ for all $m,n>N_{2}$.
Thus there is an $n_{2}>N_{2}$ such that $\left\Vert h_{n_{2}}\right\Vert _{L^2\left(
Q\right)  }^2>\varepsilon$ and $\left\Vert h_{n}-h_{n_{2}}\right\Vert
_{L^2\left(Q\right)  }^2<\delta_{2}$ for all $n>N_{2}$. In this way, we can find a sequence
$\left\{  h_{n_{k}}\right\}  _{k}$ with 
$h_{n_{k}}=\left(  -1\right)
^{d}\cdot1_{B_{n_{k}}\times\left\{  d\right\} \times\mathbb{R} }$, 
$\left\Vert h_{n_k}\right\Vert _{L^2\left(Q\right)  }^2>\varepsilon$, $\left\Vert h_{n}-h_{n_{k}}\right\Vert _{L^2\left(Q\right)  }^2<\delta_{k}$ for all $n>n_{k}$, and $\delta_{k}\downarrow0$. Let $B^{\infty}={\cup_{j=1}^{\infty}\cap
_{k=j}^{\infty}B_{n_{k}}}$. For every $K$, $\left\Vert h_{n_{k}}-h_{n_{K}%
}\right\Vert _{L^2\left(Q\right)  }^2<\delta_{K}$ for all $k>K$. Notice that for every $K^{\prime}>K$,
\begin{align*}
&\Vert h_{n_{K}}-\left(  -1\right)  ^{d}\cdot 1_{(\cap_{k=K^{\prime}}^{\infty}B_{n_{k}}) \times\left\{
d\right\} \times\mathbb{R} }\Vert _{L^2\left( Q\right)  }^2 
=\int \vert 1_{B_{n_{K}} \times\left\{
	d\right\} \times\mathbb{R} }-1_{(\cap_{k=K^{\prime}}^{\infty}B_{n_{k}}) \times\left\{
	d\right\} \times\mathbb{R} }\vert^2 \,\mathrm{d}Q\\
=&\int 1_{B_{n_{K}}\setminus(\cap_{k=K^{\prime}}^{\infty}B_{n_{k}}) \times\left\{
	d\right\} \times\mathbb{R} } \,\mathrm{d}Q+\int 1_{(\cap_{k=K^{\prime}}^{\infty}B_{n_{k}})\setminus B_{n_{K}} \times\left\{
	d\right\} \times\mathbb{R} } \,\mathrm{d}Q.
\end{align*}
Because $B_{n_{k}}$ is a closed interval for all $k$, we have that for every $K^{\prime\prime}\ge K^{\prime}$, there exist $L_1$ and $L_2$ with $K^{\prime}\le L_1 \le L_2 \le K^{\prime\prime}$ such that $\cup_{k=K^{\prime}}^{K^{\prime\prime}}(B_{n_{K}}\setminus B_{n_{k}})=(B_{n_K}\setminus B_{n_{L_1}})\cup (B_{n_K}\setminus B_{n_{L_2}})$. 
Then since 
\begin{align*}
\left\Vert h_{n_{k}}-h_{n_{K}
}\right\Vert _{L^2\left(Q\right)  }^2= Q(B_{n_{K}}\setminus B_{n_{k}} \times\left\{
d\right\} \times\mathbb{R})+Q(B_{n_{k}}\setminus B_{n_{K}} \times\left\{
d\right\} \times\mathbb{R})<\delta_{K}
\end{align*}
for all $k>K$, we have
\begin{align*}
\int 1_{B_{n_{K}}\setminus(\cap_{k=K^{\prime}}^{\infty}B_{n_{k}}) \times\left\{
	d\right\} \times\mathbb{R} } \,\mathrm{d}Q=&\, Q(B_{n_{K}}\setminus(\cap_{k=K^{\prime}}^{\infty}B_{n_{k}}) \times\left\{
d\right\} \times\mathbb{R})\\
=&\, Q(\cup_{k=K^{\prime}}^{\infty}(B_{n_{K}}\setminus B_{n_{k}} )\times\left\{
d\right\} \times\mathbb{R})
\le 2\delta_{K}.
\end{align*}
Similarly, it is easy to show that 
$\int 1_{(\cap_{k=K^{\prime}}^{\infty}B_{n_{k}})\setminus B_{n_{K}} \times\left\{
	d\right\} \times\mathbb{R} } \,\mathrm{d}Q\le2\delta_{K}$.
Thus it follows that
\begin{align*}
\Vert h_{n_{K}}-\left(  -1\right)  ^{d}\cdot 1_{(\cap_{k=K^{\prime}}^{\infty}B_{n_{k}}) \times\left\{
	d\right\} \times\mathbb{R} }\Vert _{L^2\left( Q\right)  }^2 
\le 4\delta_{K},
\end{align*}
which is true for all $K^{\prime}>K$. Letting $K^{\prime}\to\infty$, by the dominated convergence theorem ($B^{\infty}={\cup_{j=1}^{\infty}\cap
_{k=j}^{\infty}B_{n_{k}}}$) we have 
\[
\Vert h_{n_{K}}-\left(  -1\right)  ^{d}\cdot1_{B^{\infty}\times\left\{
d\right\} \times\mathbb{R} }\Vert _{L^2\left(Q\right)  }^2\le 4\delta_K.
\]
This implies that 
$\Vert h_{n_{K}}-\left(  -1\right)  ^{d}\cdot1_{B^{\infty}\times\left\{
d\right\} \times\mathbb{R} }\Vert _{L^2\left(Q\right)  }\rightarrow0\text{ as
}K\rightarrow\infty$,
because $\delta_{K}\downarrow 0$. Finally, we have
\begin{align*}
\Vert h_{n}-\left(  -1\right)  ^{d}\cdot1_{B^{\infty}\times\left\{
d\right\} \times\mathbb{R} }\Vert _{L^2\left(Q\right)  } \le \Vert h_{n}-h_{n_{K}}  \Vert_{L^2(Q)} + \Vert h_{n_{K}}-\left(  -1\right)  ^{d}\cdot1_{B^{\infty}\times\left\{
d\right\} \times\mathbb{R} }\Vert _{L^2\left(Q\right)  }\rightarrow 0.
\end{align*}
Clearly, $B^\infty$ can be a closed, open, or half-closed interval in $\mathbb{R}$. Also, every element of $\bar{\mathcal{H}}_{1d}$ is equal to the limit of a sequence of elements of $\mathcal{H}_{1d}$ under the $L^2(Q)$ norm. Thus the closure of $\mathcal{H}_{1d}$ in $L^2(Q)$ is equal to $\bar{\mathcal{H}}_{1d}$ for every $Q\in\mathcal{P}$.
Similarly, we can show that the closure of $\mathcal{H}_2$ in $L^2(Q)$ is equal to $\bar{\mathcal{H}}_2$ for every $Q\in\mathcal{P}$. As a result, the closure of $\mathcal{H}=\mathcal{H}_1\cup \mathcal{H}_2$ in $L^2(Q)$ is equal to $\bar{\mathcal{H}}=\bar{\mathcal{H}}_1\cup\bar{\mathcal{H}}_2$ for every $Q\in\mathcal{P}$.
\end{proof}

\begin{lemma}
\label{lemma.VC class}Let $\mathcal{H}_1$ and $\mathcal{H}_2$ be defined as in \eqref{def.function spaces}. Then $\mathcal{H}_1$ is a VC class\footnote{See the definition of VC class of functions in \citet[p.~141]{van1996weak}.} with VC index $V\left(
\mathcal{H}_1\right)  =3$, and $\mathcal{H}_2$ is a VC class with VC index $V\left(
\mathcal{H}_2\right)  =2$.
\end{lemma}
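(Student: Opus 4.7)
The plan is to treat the two classes separately via direct subgraph analysis. For $\mathcal{H}_{2}$, every element is an indicator function $1_{A_{c}}$ with $A_{c}=\mathbb{R}\times(-\infty,c]\times\mathbb{R}$, so the VC index of the subgraph class equals that of the set class $\{A_{c}:c\in\mathbb{R}\}$; this class is isomorphic to $\{(-\infty,c]:c\in\mathbb{R}\}$ on $\mathbb{R}$, which shatters a single point but no pair $d_{1}'<d_{2}'$ (because $c\ge d_{2}'$ forces $c\ge d_{1}'$). Hence $V(\mathcal{H}_{2})=2$.

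For $\mathcal{H}_{1}$, I would first decompose $\mathcal{H}_{1}=\mathcal{H}_{10}\cup\mathcal{H}_{11}$, with $\mathcal{H}_{1d}=\{(-1)^{d}\cdot 1_{B\times\{d\}\times\mathbb{R}}:B\text{ a closed interval in }\mathbb{R}\}$, and compute the subgraphs explicitly. A short case check shows that the membership of a point $p=((y,d',z),t)$ in the subgraph of $f\in\mathcal{H}_{1d}$ depends on $B$ only when $d'=d$ and $t$ lies in an ``active window'' ($[0,1)$ for $d=0$, $[-1,0)$ for $d=1$); otherwise it is determined by $\mathrm{sign}(t)$ alone. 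The lower bound $V(\mathcal{H}_{1})\ge 3$ then follows by shattering the two points $p_{i}=((y_{i},0,z_{i}),1/2)$, $y_{1}\ne y_{2}$, with closed intervals on $\{y_{1},y_{2}\}$ inside $\mathcal{H}_{10}$.

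For the matching upper bound I would argue that no triple $\{p_{1},p_{2},p_{3}\}$ is shattered, splitting on the $d_{i}'$-values. If all three $d_{i}'$ coincide, only one of $\mathcal{H}_{10},\mathcal{H}_{11}$ yields $B$-dependent patterns---cut out by closed intervals on the $y_{i}$'s and therefore bounded by $7$ via Sauer's lemma (closed intervals have VC index $3$)---while the other class contributes the single fixed pattern $\{i:t_{i}<0\}$, which either duplicates one of the interval patterns or, when it does not, forces some $t_{i}<0$ or $t_{i}\ge 1$ and thereby collapses the interval count below $8$. If the $d_{i}'$'s are mixed, the key observation is that for any $f\in\mathcal{H}_{10}$ the restriction of the pattern to $I_{1}=\{i:d_{i}'=1\}$ is the fixed set $T_{1}^{\ast}=\{i\in I_{1}:t_{i}<0\}$, and symmetrically for $\mathcal{H}_{11}$ with $I_{0}$ and $T_{0}^{\ast}$; every producible pattern $S$ therefore satisfies $S\cap I_{1}=T_{1}^{\ast}$ or $S\cap I_{0}=T_{0}^{\ast}$, yielding at most $2^{n_{0}}+2^{n_{1}}-1\le 5<8$ patterns when $n_{0},n_{1}\ge 1$ with $n_{0}+n_{1}=3$, where $n_{d}=|I_{d}|$.

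The main obstacle is the all-same-$d_{i}'$ case, where Sauer's bound of $7$ plus the extra fixed pattern from the ``inactive'' class leaves $8$ as an a priori possibility; ruling this out requires a short but careful enumeration of where each $t_{i}$ falls relative to the endpoints of the active window, so as to show that the missing eighth pattern either coincides with a pattern already realised by the closed intervals on the $y_{i}$'s or forces enough points to be pinned ``always in'' or ``never in'' that the overall count drops strictly below $2^{3}$. The mixed case, by contrast, is handled cleanly by the combinatorial bound $2^{n_{0}}+2^{n_{1}}-1$.
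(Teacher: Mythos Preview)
Your approach is correct and takes a genuinely different route from the paper. For $\mathcal{H}_2$ the two arguments coincide. For $\mathcal{H}_1$, the paper performs a direct case analysis on the signs of the $t_i$'s (ordered $t_1\le t_2\le t_3<1$), and in each of four cases exhibits a specific subset of $\{a_1,a_2,a_3\}$ that cannot be picked out; the $d$-coordinates (your $d_i'$) enter only as secondary constraints inside those cases. You instead split first on the $d_i'$-values and argue combinatorially: in the mixed case you use the clean bound $2^{n_0}+2^{n_1}-1\le 5$, and in the all-same case you invoke Sauer's lemma for closed intervals (VC index $3$) to get at most $7$ patterns from the active class.

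Your all-same case can be sharpened and is actually simpler than you suggest. Observe that when all $d_i'=0$, every pattern produced by $\mathcal{H}_{10}$ has the form $T^\ast\cup S_B$ with $T^\ast=\{i:t_i<0\}$ and $S_B$ an interval subset of the active indices $\{i:0\le t_i<1\}$, while $\mathcal{H}_{11}$ contributes exactly $T^\ast$. Taking $B$ to be a singleton disjoint from $\{y_i\}$ shows $T^\ast$ is already among the $\mathcal{H}_{10}$ patterns, so the ``extra'' pattern never adds anything. Hence: if $T^\ast\neq\varnothing$ the pattern $\varnothing$ is unreachable; if $T^\ast=\varnothing$ (all three points active) there are at most $7$ interval patterns. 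The ``$7+1=8$'' worry you flag does not arise. The paper's explicit case-by-case argument is slightly more elementary; your structural argument via the $d'$-split and Sauer's lemma is cleaner in the mixed case and would generalize more readily to larger index sets for $d$.
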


\begin{proof} [Proof of Lemma \ref{lemma.VC class}] 
All the functions $h\in\mathcal{H}_1$ take the form $h=-1_{B\times\left\{
1\right\}\times\mathbb{R}  }$ or $h=1_{B\times\left\{  0\right\} \times\mathbb{R} }$, where $B$ is a closed
interval. If $h=-1_{B\times\left\{  1\right\} \times\mathbb{R} }$, the subgraph of $h$ is
\[
C_{1B}=\left\{  \left(  y,w,z,t\right)  \subset \mathbb{R}^4
:t<-1_{B\times\left\{  1\right\}\times\mathbb{R}  }\left(  y,w,z\right)  \right\}  .
\]
If $h=1_{B\times\left\{  0\right\} \times\mathbb{R} }$, the subgraph of $h$ is
\[
C_{0B}=\left\{  \left(  y,w,z,t\right)  \subset\mathbb{R}^4:t<1_{B\times\left\{  0\right\} \times\mathbb{R} }\left(  y,w,z\right)  \right\}  .
\]
Let $\mathcal{C=}\left\{  C_{dB}:B\text{ is a closed interval in
}\mathbb{R}, d\in\left\{0,1\right\}\right\}.$

Suppose there are two different points $a_{1}=\left(  y_{1},w_{1},z_1,t_{1}\right)
,a_{2}=\left(  y_{2},w_{2},z_2,t_{2}\right)  \in\mathbb{R}^4$ with $y_{1}<y_{2}$, $w_{1}=w_{2}=0$, and $0\leq t_{1},t_{2}<1$. Then there
is a point $\bar{y}\in\left(  y_{1},y_{2}\right)  $. Let $B_{0}=\{\bar{y}\}  $, $B_{1}=\left[
y_{1},\bar{y}\right]  $, $B_{2}=\left[  \bar{y},y_{2}\right]  $, and
$B_{3}=\left[  y_{1},y_{2}\right]  $. Now we have
$\varnothing= C_{0B_{0}}\cap\left\{  a_{1},a_{2}\right\}$, $\left\{  a_{1}\right\}  =C_{0B_{1}}\cap\left\{  a_{1},a_{2}\right\}$, $\left\{
a_{2}\right\}  =C_{0B_{2}}\cap\left\{  a_{1},a_{2}\right\}  $, and $\left\{
a_{1},a_{2}\right\}  =C_{0B_{3}}\cap\left\{  a_{1},a_{2}\right\}$.
Thus $\mathcal{C}$ shatters $\{a_1,a_2\}$.

Suppose now there are three different points  $a_{1}=\left(  y_{1},w_{1},z_1,t_{1}\right)$, $a_{2}=\left(  y_{2},w_{2},z_2,t_{2}\right)$, $a_{3}=\left(  y_{3},w_{3},z_3,t_{3}\right)$ in $\mathbb{R}^4$. Without loss of generality, suppose
$t_{1}\leq t_{2}\leq t_{3}<1$, so that it is possible for $\mathcal{C}$ to pick out $\{a_j\}$ for each $j\in\{1,2,3\}$. 
\begin{enumerate}[label=(\arabic*)]
	\item Suppose $t_{1}\geq0$. In this case, we need $w_{1}=w_{2}=w_{3}=0$ in order to pick out $\left\{  a_{j}\right\}  $ for each $j$. Without loss of
	generality, suppose $y_{1}\leq y_{2}\leq y_{3}$. If we want $\mathcal{C}$ to pick out
	$\left\{  a_{1},a_{3}\right\}  $, we need to find a closed interval $B$ such
	that $y_{1},y_{3}\in B$, in which case $a_{1},a_{3}\in C_{0B}$. However, $a_{2}\in
	C_{0B}$ for all such $B$.
	
	\item Suppose $t_{1}<0$, $t_{2}\geq0$. Then we need $w_{2}=w_{3}=0$ in order to pick out
	$\left\{  a_{j}\right\}  $ for each $j\in\{2,3\}$ by using $C_{0B}$ for some closed
	interval $B$. But in this case, $\mathcal{C}$ can never pick out $\left\{
	a_{2}\right\}  $, $\left\{  a_{3}\right\}  $, or $\left\{  a_{2},a_{3}\right\}  $,
	since for every closed interval $B$, $a_{1}\in C_{0B}$.
	
	\item Suppose $t_{1},t_{2}<0$, $t_{3}\geq0$. Then we need $w_{3}=0$ in order to pick out
	$\left\{  a_{3}\right\}  $ by using $C_{0B}$ for some closed interval $B$. In
	this case, $\mathcal{C}$ can never pick out $\{a_{3}\}$, since for every closed
	interval $B$, $a_{1},a_{2}\in C_{0B}$.
	
	\item Suppose $t_{1},t_{2},t_{3}<0$. Then for every closed interval $B$, $a_{1},a_{2},a_{3}\in
	C_{0B}$. If we want $\mathcal{C}$ to pick out 
	$\left\{  a_{j},a_{j^{\prime}}\right\}  $ for all $j\neq j^{\prime
	}$, we need to use $C_{1B}$. If $w_{j}\neq1$, then for every $B$, $a_{j}\in
	C_{1B}$. Thus we consider $w_{1}=w_{2}=w_{3}=1$.
	\begin{enumerate}
		\item Suppose $-1\leq t_{1},t_{2},t_{3}<0$. Without loss of generality, we assume that
		$y_{1}\leq y_{2}\leq y_{3}$. But now if we want $\mathcal{C}$ to pick out $\left\{
		a_{2}\right\}  $, we need to find a closed interval $B$ such that $y_{1}%
		,y_{3}\in B$ but $y_{2}\not \in B$, which is not possible.
		
		\item Suppose $t_{j}<-1$ for some $j\in\{1,2,3\}$. In this case, $a_{j}\in C_{1B}$ for every
		closed interval $B$. 
	\end{enumerate}

\end{enumerate}

Therefore, we conclude that $\mathcal{H}_1$ is a VC class
with VC index $V\left(  \mathcal{H}_1\right)  =3$. Similarly, we can show that $\mathcal{H}_2$ is a VC class
with VC index $V\left(  \mathcal{H}_2\right)  =2$.
\end{proof}

\begin{lemma}\label{lemma.totally bounded H}
	Let ${\mathcal{H}}$ be defined as in \eqref{def.function spaces}. Then ${\mathcal{H}}$ is totally bounded under $\left\Vert \cdot\right\Vert _{L^{r}\left( Q\right)  }$ for every probability measure $Q\in\mathcal{P}$ and every $r\ge 1$.
\end{lemma}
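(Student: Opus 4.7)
The plan is to leverage the VC-class structure established in Lemma \ref{lemma.VC class} and invoke a uniform entropy bound for VC classes of functions. First, observe that every element of $\mathcal{H}=\mathcal{H}_1\cup\mathcal{H}_2$ takes values in $\{-1,0,1\}$, so the constant function $F\equiv 1$ serves as a measurable envelope for $\mathcal{H}$. By Lemma \ref{lemma.VC class}, $\mathcal{H}_1$ and $\mathcal{H}_2$ are VC-subgraph classes with VC indices $V(\mathcal{H}_1)=3$ and $V(\mathcal{H}_2)=2$, so each admits a polynomial uniform covering-number bound via Theorem 2.6.7 of \citet{van1996weak}: for each $i\in\{1,2\}$, every $r\geq 1$, and every $\varepsilon\in(0,1)$,
\[
\sup_{Q\in\mathcal{P}} N\bigl(\varepsilon\|F\|_{Q,r},\mathcal{H}_i,L^r(Q)\bigr)\leq K\, V(\mathcal{H}_i)\,(16e)^{V(\mathcal{H}_i)}\varepsilon^{-r(V(\mathcal{H}_i)-1)},
\]
where $K$ is a universal constant.

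Next, I would fix an arbitrary $Q\in\mathcal{P}$ and an arbitrary $r\geq 1$. Since $\|F\|_{Q,r}=1$, the display above produces, for every $\varepsilon>0$, a finite $\varepsilon$-net of $\mathcal{H}_i$ in $L^r(Q)$; hence each $\mathcal{H}_i$ is totally bounded under $\|\cdot\|_{L^r(Q)}$. Because the union of finitely many totally bounded sets is totally bounded, taking the union of an $\varepsilon$-net for $\mathcal{H}_1$ and an $\varepsilon$-net for $\mathcal{H}_2$ yields a finite $\varepsilon$-net for $\mathcal{H}=\mathcal{H}_1\cup\mathcal{H}_2$, which gives the claim.

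There is no serious obstacle: the only technical point is that Lemma \ref{lemma.VC class} was phrased in terms of VC indices of the subgraph collection, which is exactly the hypothesis needed to apply Theorem 2.6.7 of \citet{van1996weak}. If for some reason a direct citation of Theorem 2.6.7 is deemed too heavy, one could equivalently invoke Theorem 2.6.4 (VC classes satisfy the uniform entropy condition) together with the observation that functions bounded by $1$ taking values in a discrete set have $L^r$-pseudodistance controlled by $L^1$-pseudodistance via $\|h-h'\|_{L^r(Q)}=\|h-h'\|_{L^1(Q)}^{1/r}$, reducing the problem to total boundedness in $L^1(Q)$.
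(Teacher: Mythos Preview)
Your proposal is correct and essentially identical to the paper's proof: both invoke Lemma~\ref{lemma.VC class} for the VC indices of $\mathcal{H}_1$ and $\mathcal{H}_2$, apply Theorem~2.6.7 of \citet{van1996weak} with envelope $F\equiv 1$ to obtain polynomial covering-number bounds, and then take the union of the two finite nets to cover $\mathcal{H}=\mathcal{H}_1\cup\mathcal{H}_2$. The paper records the explicit bound $N(\varepsilon,\mathcal{H},L^r(Q))\le K_1\,3(16e)^3\varepsilon^{-2r}+K_2\,2(16e)^2\varepsilon^{-r}$, which it reuses in later lemmas, but the logical content is the same as yours.
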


\begin{proof}[Proof of Lemma \ref{lemma.totally bounded H}]
Let $N\left(  \varepsilon, \mathcal{H}_j, L^{r}\left(Q\right)  \right)  $ denote
	the covering number under the $L^r(Q)$ norm for $\mathcal{H}_j$ for $j\in\{1,2\}$ and all $\varepsilon>0$, where $\mathcal{H}_j$ is defined as in \eqref{def.function spaces}. Since $\mathcal{H}_1$ and $\mathcal{H}_2$ are VC classes by Lemma \ref{lemma.VC class} with $V(\mathcal{H}_1)=3$ and $V(\mathcal{H}_2)=2$, by Theorem 2.6.7 of \citet{van1996weak} with envelope
	function $F=1$ and $r\ge 1$ we have that for every
	probability measure $Q$,
	\begin{align*}
	N\left(  \varepsilon,\mathcal{H}_1,L^{r}\left(Q\right)  \right)  \leq  K_1 3  \left(  16e\right)  ^{3  }\left(1/\varepsilon\right)  ^{2r  } \text{ and }
	N\left(  \varepsilon,\mathcal{H}_2,L^{r}\left(Q\right)  \right)  \leq  K_2 2  \left(  16e\right)  ^{2  }\left(1/\varepsilon\right)  ^{r  }
	\end{align*}
	for universal constants $K_1,K_2\ge1$ and every $\varepsilon\in(0,1)$. Since $\mathcal{H}=\mathcal{H}_1\cup\mathcal{H}_2$, we have
	\begin{align}\label{eq.H totally bounded}
	N\left(  \varepsilon,\mathcal{H},L^{r}\left(Q\right)  \right)  \leq & K_1 3 \left(  16e\right)  ^{3  }\left(
	1/\varepsilon\right)  ^{2r  }+ K_2 2 \left(  16e\right)  ^{2  }\left(
	1/\varepsilon\right)  ^{r  },
	\end{align}
	which implies that $\mathcal{H}$ is totally bounded. 
\end{proof}

\begin{lemma}\label{lemma.totally bounded H_hat}
 Let $\bar{\mathcal{H}}$ be as in \eqref{def.function spaces}. Then $\bar{\mathcal{H}}$ is compact under $\left\Vert \cdot\right\Vert _{L^2\left(Q\right)  }$ for every $Q\in\mathcal{P}$.
\end{lemma}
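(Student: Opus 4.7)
The plan is to combine the two preceding lemmas and the completeness of $L^2(Q)$ to conclude compactness directly, since in a complete metric space a set is compact if and only if it is closed and totally bounded.

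First, I would invoke Lemma \ref{lemma.H complete}, which identifies $\bar{\mathcal{H}}$ as the closure of $\mathcal{H}$ in $L^2(Q)$. In particular, $\bar{\mathcal{H}}$ is a closed subset of $L^2(Q)$. Next, I would invoke Lemma \ref{lemma.totally bounded H}, which asserts that $\mathcal{H}$ is totally bounded under $\|\cdot\|_{L^2(Q)}$ (this is the $r=2$ case of that lemma). A standard fact is that the closure of a totally bounded set in a metric space is itself totally bounded, so $\bar{\mathcal{H}}$ is totally bounded under $\|\cdot\|_{L^2(Q)}$.

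Finally, I would use completeness of $L^2(Q)$: any closed, totally bounded subset of a complete metric space is compact (equivalently, totally bounded plus complete equals compact, and $\bar{\mathcal{H}}$ is complete because it is closed in the complete space $L^2(Q)$). This yields compactness of $\bar{\mathcal{H}}$ for every $Q\in\mathcal{P}$.

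There is essentially no obstacle here beyond assembling the earlier lemmas; the only technical remark worth including in the actual proof is that the total-boundedness bound \eqref{eq.H totally bounded} obtained for $\mathcal{H}$ passes to its closure $\bar{\mathcal{H}}$ via a routine $2\varepsilon$-net argument (any finite $\varepsilon$-net for $\mathcal{H}$ is a $2\varepsilon$-net for $\bar{\mathcal{H}}$). So the proof will be just a few lines chaining Lemmas \ref{lemma.H complete} and \ref{lemma.totally bounded H} with the completeness of $L^2(Q)$.
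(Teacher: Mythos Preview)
Your proposal is correct and follows essentially the same route as the paper: invoke Lemma~\ref{lemma.H complete} to identify $\bar{\mathcal{H}}$ as the closure of $\mathcal{H}$, pass total boundedness from $\mathcal{H}$ (Lemma~\ref{lemma.totally bounded H}) to its closure via the $\varepsilon/2$-net argument, and conclude compactness from completeness of $L^2(Q)$. The only minor addition in the paper is that it records the resulting covering-number bound $N(\varepsilon,\bar{\mathcal{H}},L^2(Q))\le N(\varepsilon/2,\mathcal{H},L^2(Q))$ explicitly, since this inequality is reused in the proof of Lemma~\ref{lemma.V Donsker}.
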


\begin{proof}[Proof of Lemma \ref{lemma.totally bounded H_hat}]
By Lemma \ref{lemma.totally bounded H}, ${\mathcal{H}}$ is totally bounded under $\left\Vert \cdot\right\Vert _{L^2\left(Q\right)  }$ for all $Q\in\mathcal{P}$. Suppose that $\mathcal{H}\subset \bigcup_{j\in J}B_{\varepsilon/2}(h_j)$, where $J$ is a finite index set and $B_{\varepsilon/2}(h_j)$ is an open ball with center $h_j$ and radius $\varepsilon/2$ under $\Vert\cdot\Vert_{L^2(Q)}$.  By Lemma \ref{lemma.H complete}, $\bar{\mathcal{H}}$ is equal to the closure of $\mathcal{H}$ in $L^2(Q)$. Clearly, 
$\bar{\mathcal{H}}\subset \bigcup_{j\in J}\overline{B_{\varepsilon/2}(h_j)}\subset \bigcup_{j\in J}B_{\varepsilon}(h_j)$, and therefore
\begin{align}\label{eq.H_bar totally bounded}
N(\varepsilon,\bar{\mathcal{H}},L^2(Q))\le N(\varepsilon/2,\mathcal{H},L^2(Q)),
\end{align}
which, together with \eqref{eq.H totally bounded}, implies that $\bar{\mathcal{H}}$ is totally bounded. Since $L^2(Q)$ is complete, $\bar{\mathcal{H}}$ is compact in $L^2(Q)$.
\end{proof}

Let $\bar{\mathcal{H}}$ and $\mathcal{G}_{K}$ be
defined as in \eqref{def.function spaces}. Let $\mathcal{V}=\left\{h\cdot f:h\in\mathcal{\bar{H}},f\in\mathcal{G}_{K}\right\}  $. Then define
\begin{align}\label{eq.tilde V}
\tilde{\mathcal{V}}=\mathcal{V}\cup\mathcal{ G}_K.
\end{align}

\begin{lemma}
	\label{lemma.V Donsker}  The function space $\tilde{\mathcal{V}}$ is Donsker and pre-Gaussian uniformly in $Q\in\mathcal{P}$.
\end{lemma}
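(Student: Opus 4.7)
The plan is to show that $\tilde{\mathcal{V}}$ is a VC-type (i.e.\ polynomial uniform covering numbers) class with a uniformly bounded envelope, and then to invoke a standard uniform Donsker theorem, namely Theorem 2.8.2 of \citet{van1996weak}, together with the fact that a constant envelope is trivially square-integrable uniformly in $Q$.

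The key steps would proceed as follows. First, using Lemma \ref{lemma.VC class}, $\mathcal{H}_1$ and $\mathcal{H}_2$ are VC classes with VC indices $3$ and $2$, respectively; by Theorem 2.6.7 of \citet{van1996weak} they therefore admit polynomial uniform covering bounds under the envelope $F\equiv 1$, as already recorded in \eqref{eq.H totally bounded}. Passing to the closures in $L^{2}(Q)$ only inflates the covering numbers by a factor involving $2^{V}$ via the halving trick used in \eqref{eq.H_bar totally bounded} (that is, $N(\varepsilon,\bar{\mathcal{H}},L^{2}(Q))\le N(\varepsilon/2,\mathcal{H},L^{2}(Q))$), so $\bar{\mathcal{H}}$ remains VC-type uniformly in $Q\in\mathcal{P}$.

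Second, because every $f\in\mathcal{G}_{K}$ is an indicator bounded by $1$, for any $h_{1},h_{2}\in\bar{\mathcal{H}}$ and any $f\in\mathcal{G}_{K}$ we have
\begin{equation*}
\|h_{1}f-h_{2}f\|_{L^{2}(Q)}\le \|h_{1}-h_{2}\|_{L^{2}(Q)},
\end{equation*}
so that an $\varepsilon$-cover of $\bar{\mathcal{H}}$ induces, one $f$ at a time, an $\varepsilon$-cover of $\mathcal{V}$. Consequently
\begin{equation*}
N(\varepsilon,\mathcal{V},L^{2}(Q))\le K\cdot N(\varepsilon,\bar{\mathcal{H}},L^{2}(Q)),
\end{equation*}
and since $\mathcal{G}_{K}$ is a finite class of at most $K$ elements,
\begin{equation*}
N(\varepsilon,\tilde{\mathcal{V}},L^{2}(Q))\le N(\varepsilon,\mathcal{V},L^{2}(Q))+K.
\end{equation*}
Combined with the VC-type bound on $\bar{\mathcal{H}}$, this gives a polynomial bound on $N(\varepsilon,\tilde{\mathcal{V}},L^{2}(Q))$ that is uniform in $Q$, with envelope $F\equiv 1$.

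Third, the uniform entropy integral
\begin{equation*}
\int_{0}^{1}\sup_{Q}\sqrt{\log N(\varepsilon\|F\|_{L^{2}(Q)},\tilde{\mathcal{V}},L^{2}(Q))}\,\mathrm{d}\varepsilon
\end{equation*}
is finite because the bound behaves like a constant multiple of $\sqrt{\log(1/\varepsilon)}$. The envelope $F\equiv 1$ is square-integrable uniformly in $Q\in\mathcal{P}$, and measurability (the Suslin/pointwise-measurable issue) is not an obstacle because each of $\mathcal{H}_{1}$, $\mathcal{H}_{2}$, $\bar{\mathcal{H}}_{1}$, $\bar{\mathcal{H}}_{2}$, and $\mathcal{G}_{K}$ is indexed by real parameters and admits a countable dense subclass in $L^{2}(Q)$ uniformly in $Q$, so $\tilde{\mathcal{V}}$ is pointwise measurable. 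Invoking Theorem 2.8.2 of \citet{van1996weak} then delivers the uniform Donsker and uniform pre-Gaussian conclusion. I anticipate the main technical obstacle to be verifying the measurability/pointwise-measurability requirement cleanly in the presence of taking closures (since $\bar{\mathcal{H}}$ is uncountable), but this can be handled by exhibiting the obvious countable dense subclass of intervals with rational endpoints inside $\bar{\mathcal{H}}$.
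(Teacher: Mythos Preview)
Your proposal is correct and follows essentially the same route as the paper: establish polynomial (VC-type) uniform covering-number bounds for $\tilde{\mathcal{V}}$ with envelope $F\equiv 1$, check that the uniform entropy integral is finite, handle measurability via the countable subclass indexed by rational endpoints, and then invoke a uniform-in-$Q$ Donsker theorem from \citet{van1996weak}. The paper carries this out by verifying the specific $Q$-measurability of the derived classes $\tilde{\mathcal{V}}_{\delta,Q}$ and $\tilde{\mathcal{V}}_{\infty}^{2}$ (via the same rational-endpoint construction you describe) and then applies Theorem~2.8.3, not Theorem~2.8.2, of \citet{van1996weak}; you should correct that citation, since 2.8.3 is the uniform Donsker result whose hypotheses you are verifying. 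Your covering-number bound $N(\varepsilon,\mathcal{V},L^{2}(Q))\le K\cdot N(\varepsilon,\bar{\mathcal{H}},L^{2}(Q))$ is in fact slightly sharper than the paper's (which splits $\varepsilon$ between $\bar{\mathcal{H}}$ and $\mathcal{G}_{K}$), but both lead to the same conclusion.
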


\begin{proof}[Proof of Lemma \ref{lemma.V Donsker}] 
	For every $\delta>0$ and
	every $Q\in\mathcal{P}$, define
	\begin{align*}
	\tilde{\mathcal{V}}_{\delta,Q}=   \left\{  v-v^{\prime}:v,v^{\prime}\in
	\tilde{\mathcal{V}},\left\Vert v-v^{\prime}\right\Vert _{L^{2}\left(Q\right)
	}<\delta\right\}  \text{ and }
	\tilde{\mathcal{V}}_{\infty}^{2}=   \left\{  \left(  v-v^{\prime}\right)
	^{2}:v,v^{\prime}\in\tilde{\mathcal{V}}\right\}  .
	\end{align*}
	First, we show that $\tilde{\mathcal{V}}_{\delta,Q}$ is $Q$-measurable\footnote{See Definition 2.3.3 of $Q$-measurable class in \citet{van1996weak}.} for all
	$Q\in\mathcal{P}$. Similar to the construction of $\mathcal{H}$, we
	construct function spaces by
	\begin{align*}
	\mathcal{H}_{q1}= &  \left\{
	\left(  -1\right)  ^{d}\cdot1_{B\times\left\{  d\right\}  \times\mathbb{R}	}:
	B=\left[  a,b\right],a,b\in\mathbb{Q},a\le b, d\in\left\{  0,1\right\} \right\} ,\\
	\mathcal{H}_{q2}= &  \left\{  1_{\mathbb{R}\times C\times\mathbb{R}%
	}:C=\left(  -\infty,c\right] 
	,c\in\mathbb{Q}\right\},\text{ and }\mathcal{H}_{q}=  \mathcal{H}_{q1}\cup\mathcal{H}_{q2},
	\end{align*}
	where $\mathbb{Q}$ denotes the set of all rational numbers. Now define
	\begin{align*}
	\tilde{\mathcal{V}}_{q}=\left\{  h\cdot f:h\in\mathcal{H}_{q},f\in\mathcal{G}
	_{K}\right\}\cup\mathcal{ G}_K \text{ and }
	\tilde{\mathcal{V}}_{q\delta,Q}=\left\{  v-v^{\prime}:v,v^{\prime}\in\tilde{\mathcal{V}}_{q},\left\Vert v-v^{\prime}\right\Vert _{L^{2}\left(Q\right)  }
	<\delta\right\}  .
	\end{align*}
	By construction, $\mathcal{ G}_K$ is a finite set. Since $\mathbb{Q}$ is countable (and therefore the set of ordered pairs of elements of $\mathbb{Q}$ is countable), $\mathcal{H}_{q1}$ and $\mathcal{H}%
	_{q2}$ are countable (and therefore $\mathcal{H}_{q}$ and $\tilde{\mathcal{V}}_{q}$ are
	countable). 
	
	Clearly, $\tilde{\mathcal{V}}_{q\delta,Q}$ is a countable subset of $\tilde{\mathcal{V}}
	_{\delta,Q}$. For every $v\in\tilde{\mathcal{V}}$, there is a sequence $\{v_{m}\}
	\subset\tilde{\mathcal{V}}_{q}$ such that $v_{m} \rightarrow v  $ pointwise, because $\mathbb{Q}$ is dense in $\mathbb{R}$. For
	example, if $v=(-1)^{d}\cdot1_{\left(  \sqrt{2},\sqrt{3}\right]
		\times\{d\}\times\mathbb{R}}\cdot1_{\mathbb{R\times R\times}\left\{
		z_{k}\right\}  }$, we can find $v_{m}=(-1)^{d}\cdot1_{[a_{m},b_{m}
		]\times\{d\}\times\mathbb{R}}\cdot1_{\mathbb{R\times R\times}\left\{
		z_{k}\right\}  }$ with $a_{m}\downarrow\sqrt{2}$,
	$b_{m}\downarrow\sqrt{3}$, and $a_{m},b_{m}\in\mathbb{Q}$. Suppose
	$v-v^{\prime}\in\tilde{\mathcal{V}}_{\delta,Q}$ and $v_{m},v_{m}^{\prime}%
	\in\tilde{\mathcal{V}}_{q}$ such that $v_{m}  \rightarrow v  $ and $v_{m}^{\prime}  \rightarrow v^{\prime}$ pointwise. It is easy to show that $\left\Vert v_{m}-v_{m}
	^{\prime}\right\Vert _{L^{2}\left(Q\right)  }<\delta$ for large $m$, that is,
	$v_{m}-v_{m}^{\prime}\in\tilde{\mathcal{V}}_{q\delta,Q}$ for large $m$. By Example
	2.3.4 of \citet{van1996weak}, $\tilde{\mathcal{V}}_{\delta,Q}$ is $Q$-measurable, and
	this is true for all $\delta>0$. Similarly, $\tilde{\mathcal{V}}_{\infty}^{2}$ is $Q$-measurable.
	
	By the construction of $\tilde{\mathcal{V}}$, $F=1$ is a measurable envelope function
	with $\int F^{2}\,\mathrm{d}Q<\infty.$ Also, $\lim_{M\rightarrow\infty}\sup_{Q\in\mathcal{P}}\int F^2\cdot1\left\{F>M\right\}  \,\mathrm{d}Q=0$.
	For all $H\in\mathcal{P}$ and all 
	$\varepsilon\geq2$,
	\begin{align}\label{eq.V covering number eps>1}
	N\left(  \varepsilon\left\Vert F\right\Vert _{L^{2}\left(  H\right)
	},\tilde{\mathcal{V}},L^{2}\left(  H\right)  \right)  =N\left(  \varepsilon
	,\tilde{\mathcal{V}},L^{2}\left(  H\right)  \right)  =1.
	\end{align}
	For all $H\in\mathcal{P}$ and all $\varepsilon>0$, 
	\begin{align}\label{eq.V covering number eps<1}
	N\left(  \varepsilon,\mathcal{V},L^{2}\left(  H\right)  \right)  \leq
	N\left(  \frac{\varepsilon}{2},\mathcal{\bar{H}},L^{2}\left(  H\right)
	\right)  \cdot N\left(  \frac{\varepsilon}{2},\mathcal{G}_{K},L^{2}\left(
	H\right)  \right)  
	 \le K\cdot N\left(  \frac{\varepsilon}{2},\mathcal{\bar{H}},L^{2}\left(
	H\right)  \right),
	\end{align}
	where $K$ is the number of elements in $\mathcal{G}_K$.
	Thus by the definition of $\tilde{\mathcal{V}}$ in \eqref{eq.tilde V}, 
	\begin{align}\label{eq.covering number tilde V}
	N\left(  \varepsilon,\tilde{\mathcal{V}},L^{2}\left(  H\right)  \right)  \leq K\cdot N\left(  \frac{\varepsilon}{2},\mathcal{\bar{H}},L^{2}\left(
	H\right)  \right)+K
	\end{align}
	for all $H\in\mathcal{P}$ and all $\varepsilon>0$.
	Let $\mathcal{Q}$ denote the set of finitely
	discrete probability measures.
	The results in \eqref{eq.H totally bounded}, \eqref{eq.H_bar totally bounded}, \eqref{eq.V covering number eps>1}, and \eqref{eq.covering number tilde V}  imply that
	\begin{align*}
	&  \int_{0}^{\infty}\sup_{H\in\mathcal{Q}}\sqrt{\log N\left(  \varepsilon
		\left\Vert F\right\Vert _{L^{2}\left(  H\right)  },\tilde{\mathcal{V}},L^{2}\left(
		H\right)  \right)  }\,\mathrm{d}\varepsilon = \int_{0}^{2}\sup_{H\in\mathcal{Q}}
	\sqrt{\log N\left(  \varepsilon,\tilde{\mathcal{V}},L^{2}\left(  H\right)  \right)  }\,\mathrm{d}\varepsilon\\
	\leq &  \int_{0}^{2}\sqrt{\log\left\{  K\cdot(K_{1}+K_{2})\cdot3\cdot\left(
		16e\right)  ^{3}\left(  {4}/{\varepsilon}\right)  ^{4}+K\right\}
	}\,\mathrm{d}\varepsilon<\infty.
	\end{align*}
	The claim of the Lemma
	follows from Theorem 2.8.3 of \citet{van1996weak}.
\end{proof}

\begin{lemma}
\label{lemma.uniform Glivenko-Cantelli}The function space $\tilde{\mathcal{V}}$ defined in \eqref{eq.tilde V} is Glivenko--Cantelli
uniformly in $Q\in\mathcal{P}$.
\end{lemma}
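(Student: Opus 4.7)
The plan is to apply Theorem 2.8.1 of \citet{van1996weak}, which gives sufficient conditions for a class of functions to be Glivenko--Cantelli uniformly in $Q\in\mathcal{P}$: measurability of the class, integrability of the envelope uniformly in $Q$, and a uniform bound on covering numbers in $L^1(Q)$ over the class of finitely discrete probability measures. I will verify these in turn, relying heavily on the ingredients already assembled in the proof of Lemma \ref{lemma.V Donsker}.

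First, for measurability, I will reuse the countable subclass $\tilde{\mathcal{V}}_q$ constructed in the proof of Lemma \ref{lemma.V Donsker} (rational endpoints in the interval parameters of $\mathcal{H}_1$ and $\mathcal{H}_2$). Since every element of $\tilde{\mathcal{V}}$ is the pointwise limit of a sequence in $\tilde{\mathcal{V}}_q$ and $\tilde{\mathcal{V}}_q$ is countable, Example 2.3.4 of \citet{van1996weak} yields the required $Q$-measurability of $\tilde{\mathcal{V}}$ for every $Q\in\mathcal{P}$, as well as the measurability of the relevant classes of differences and suprema.

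Second, the envelope condition is immediate: every $v\in\tilde{\mathcal{V}}$ takes values in $[-1,1]$, so $F\equiv 1$ is a measurable envelope with $\sup_{Q\in\mathcal{P}}\int F\,\mathrm{d}Q=1<\infty$ and $\lim_{M\to\infty}\sup_{Q\in\mathcal{P}}\int F\cdot 1\{F>M\}\,\mathrm{d}Q=0$.

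Third, the uniform $L^1$ entropy bound follows from the uniform $L^2$ entropy bound already derived. Because $|v-v'|\leq 2$ for all $v,v'\in\tilde{\mathcal{V}}$ and every $H\in\mathcal{P}$, one has $\|v-v'\|_{L^1(H)}\leq \|v-v'\|_{L^2(H)}$, and consequently $N(\varepsilon,\tilde{\mathcal{V}},L^1(H))\leq N(\varepsilon,\tilde{\mathcal{V}},L^2(H))$. Combining this with the polynomial-in-$1/\varepsilon$ bound obtained via \eqref{eq.H_bar totally bounded} and \eqref{eq.covering number tilde V} shows that $\sup_{H\in\mathcal{Q}}N(\varepsilon\|F\|_{L^1(H)},\tilde{\mathcal{V}},L^1(H))<\infty$ for every $\varepsilon>0$, where $\mathcal{Q}$ denotes the set of finitely discrete probability measures. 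With all hypotheses of Theorem 2.8.1 of \citet{van1996weak} verified, the uniform Glivenko--Cantelli conclusion follows. There is no serious obstacle here; the work is essentially bookkeeping that the quantities already controlled in Lemma \ref{lemma.V Donsker} deliver the weaker $L^1$ entropy condition required for the uniform law of large numbers.
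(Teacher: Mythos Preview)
Your proposal is correct and follows essentially the same approach as the paper: verify $Q$-measurability via the countable rational-endpoint subclass, use the bounded envelope $F\equiv 1$, bound the $L^1$ covering numbers, and invoke Theorem 2.8.1 of \citet{van1996weak}. The only minor difference is that you obtain the $L^1$ entropy bound by the inequality $\|\cdot\|_{L^1(H)}\leq\|\cdot\|_{L^2(H)}$ and then recycle the $L^2$ bound from Lemma \ref{lemma.V Donsker}, whereas the paper reruns the closure and covering-number argument directly in $L^1$ using the $r=1$ case of \eqref{eq.H totally bounded}; both routes yield the same polynomial bound and are equally valid.
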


\begin{proof}[Proof of Lemma \ref{lemma.uniform Glivenko-Cantelli}] 
Similar to the proof of Lemma \ref{lemma.V Donsker}, we can show that $\tilde{\mathcal{V}}$
is $Q$-measurable for every $Q\in\mathcal{P}$. With $F=1$ being an
envelope function of $\tilde{\mathcal{V}}$, we have $\lim_{M\rightarrow\infty}\sup_{Q\in\mathcal{P}}\int F\cdot1\left\{
F>M\right\}  \,\mathrm{d}Q=0$. Similar to the proofs of Lemmas \ref{lemma.H complete}, \ref{lemma.totally bounded H_hat}, and \ref{lemma.V Donsker}, we can show that for every $Q\in\mathcal{P}$ and every $\varepsilon>0$, the closure of $\mathcal{H}$ in ${L^1(Q)}$ is equal to $\bar{\mathcal{H}}$, $N(\varepsilon,\bar{\mathcal{H}},L^1(Q))\le N(\varepsilon/2,\mathcal{H},L^1(Q))$, and $N(  \varepsilon,\tilde{\mathcal{V}},L^{1}\left(  Q\right)  )  \leq K\cdot N\left(  {\varepsilon}/{2},\mathcal{\bar{H}},L^{1}\left(
Q\right)  \right)+K$.
Then by \eqref{eq.H totally bounded}, we can show that
$\sup_{H\in\mathcal{Q}_{n}}\log N(  \varepsilon\left\Vert F\right\Vert
_{L^{1}\left(  H\right)  },\tilde{\mathcal{V}},L^{1}\left(  H\right) )
=o\left(  n\right)$ with the envelope function $F=1$,
where $\mathcal{Q}_{n}$ is the collection of all possible realizations of
empirical measures of $n$ observations. Then by Theorem 2.8.1 in
\citet{van1996weak}, $\tilde{\mathcal{V}}$ is Glivenko--Cantelli uniformly in
$Q\in\mathcal{P}$.
\end{proof}

\begin{lemma}
\label{lemma.totally bounded HG}Let $\mathcal{H}$ and $\mathcal{G}$ be defined as in \eqref{def.function spaces}, let $\rho_{P}$ be as in \eqref{eq.rho}, and define $\overline{\mathcal{H}\times\mathcal{G}}$ as the closure of $\mathcal{H}\times\mathcal{G}$ in $L^2(P)\times(L^2(P)\times L^2(P))$ under $\rho_{P}$. Then $N\left(  \varepsilon,\overline{\mathcal{H}	\times\mathcal{G}},\rho_{P}\right)  =O\left(  {1}/{\varepsilon^{4}}\right)
$ as $\varepsilon\rightarrow0$.
\end{lemma}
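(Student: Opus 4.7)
The plan is to reduce the covering number estimate for $\overline{\mathcal{H}\times\mathcal{G}}$ under $\rho_P$ to the covering number estimate for $\bar{\mathcal{H}}$ under $\|\cdot\|_{L^2(P)}$, for which the dominant $O(\varepsilon^{-4})$ bound has already been established essentially in the proof of Lemma \ref{lemma.totally bounded H}. The single-factor bound flows from the VC-class property of $\mathcal{H}_1$ (VC index 3) and $\mathcal{H}_2$ (VC index 2) established in Lemma \ref{lemma.VC class}, combined with Theorem 2.6.7 of \citet{van1996weak}.

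First I would observe that $\mathcal{G}$ is a finite set with exactly $K-1$ elements by its definition in \eqref{def.function spaces}, so $\mathcal{G}$ is already closed in $L^2(P)\times L^2(P)$ under $(\|\cdot\|_{L^2(P)},\|\cdot\|_{L^2(P)})$. Combined with Lemma \ref{lemma.H complete}, which identifies the $L^2(P)$-closure of $\mathcal{H}$ with $\bar{\mathcal{H}}$, it follows that
\[
\overline{\mathcal{H}\times\mathcal{G}}=\bar{\mathcal{H}}\times\mathcal{G}.
\]
Second, I would write down the explicit covering strategy: let $\{h_1,\ldots,h_N\}$ be an $\varepsilon$-net for $\bar{\mathcal{H}}$ under $\|\cdot\|_{L^2(P)}$ and let $\{g^{(1)},\ldots,g^{(K-1)}\}$ enumerate $\mathcal{G}$. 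Then for any $(h,g)\in\bar{\mathcal{H}}\times\mathcal{G}$ with $g=g^{(j)}$, choose $h_i$ with $\|h-h_i\|_{L^2(P)}\le\varepsilon$; by the definition of $\rho_P$ in \eqref{eq.rho},
\[
\rho_P\bigl((h,g),(h_i,g^{(j)})\bigr)=\|h-h_i\|_{L^2(P)}\le\varepsilon,
\]
so $\{(h_i,g^{(j)})\}$ is a $\rho_P$-net of size $(K-1)N$.

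Third, I would bound $N(\varepsilon,\bar{\mathcal{H}},L^2(P))$ by the chain
\[
N\bigl(\varepsilon,\bar{\mathcal{H}},L^2(P)\bigr)\le N\bigl(\varepsilon/2,\mathcal{H},L^2(P)\bigr)\le N\bigl(\varepsilon/2,\mathcal{H}_1,L^2(P)\bigr)+N\bigl(\varepsilon/2,\mathcal{H}_2,L^2(P)\bigr),
\]
using \eqref{eq.H_bar totally bounded} from the proof of Lemma \ref{lemma.totally bounded H_hat}. Apply Theorem 2.6.7 of \citet{van1996weak} with the constant envelope $F=1$ and the VC indices computed in Lemma \ref{lemma.VC class}:
\[
N\bigl(\varepsilon/2,\mathcal{H}_1,L^2(P)\bigr)\le K_1\cdot 3(16e)^3(2/\varepsilon)^4,\qquad N\bigl(\varepsilon/2,\mathcal{H}_2,L^2(P)\bigr)\le K_2\cdot 2(16e)^2(2/\varepsilon)^2.
\]
The first term dominates as $\varepsilon\to 0$, giving $N(\varepsilon,\bar{\mathcal{H}},L^2(P))=O(\varepsilon^{-4})$.

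Combining these steps yields $N(\varepsilon,\overline{\mathcal{H}\times\mathcal{G}},\rho_P)\le(K-1)\cdot O(\varepsilon^{-4})=O(\varepsilon^{-4})$ as $\varepsilon\to 0$, as claimed. There is no real obstacle here, since all heavy lifting (the VC bookkeeping and identification of closures) is already accomplished in Lemmas \ref{lemma.H complete}, \ref{lemma.VC class}, \ref{lemma.totally bounded H}, and \ref{lemma.totally bounded H_hat}; the remaining work is just the product-space bookkeeping, with the crucial simplifying fact being that $\mathcal{G}$ has only finitely many (namely $K-1$) elements, so it contributes only a multiplicative constant to the covering number.
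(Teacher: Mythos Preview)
Your proof is correct and follows essentially the same route as the paper's: reduce to covering $\mathcal{H}$ (or $\bar{\mathcal{H}}$) in $L^2(P)$, invoke the VC-class bounds from Lemmas \ref{lemma.VC class} and \ref{lemma.totally bounded H}, and use that $\mathcal{G}$ is finite so it contributes only a constant factor. The only cosmetic difference is the order of operations: the paper first bounds $N(\varepsilon,\mathcal{H}\times\mathcal{G},\rho_P)$ via the product inequality $N(\varepsilon/3,\mathcal{H},L^2(P))\cdot[N(\varepsilon/3,\mathcal{G}_K,L^2(P))]^2$ and then passes to the closure, whereas you first identify the closure as $\bar{\mathcal{H}}\times\mathcal{G}$ and then cover, matching the $g$-component exactly so the full $\varepsilon$ is spent on $h$.
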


\begin{proof}[Proof of Lemma \ref{lemma.totally bounded HG}] 
By the constructions of $\mathcal{H}\times\mathcal{G}$ and the metric $\rho
_{P}$,
\[
N\left(  \varepsilon,\mathcal{H}\times\mathcal{G},\rho_{P}\right)  \leq
  N\left(  \frac{\varepsilon}{3},\mathcal{H},L^2\left(
P\right)  \right) \cdot \left[N\left(  \frac{\varepsilon}{3},\mathcal{G}_{K},L^2\left(
P\right)  \right)  \right]^2,
\]
where $\mathcal{G}_K$ is defined as in \eqref{def.function spaces}.
By the construction of $\mathcal{ G}_K$, $N\left(  \varepsilon/3,\mathcal{G}_{K},L^{2}\left(  P\right)  \right) \leq K$,
where $K$ is the number of elements in $\mathcal{ G}_K$. This, together with \eqref{eq.H totally bounded}, implies that
$
N\left(  \varepsilon,\mathcal{H}\times\mathcal{G},\rho_{P}\right)  =O\left(
1/\varepsilon^{4}\right)  \text{ as }\varepsilon\rightarrow0.
$
Similar to \eqref{eq.H_bar totally bounded}, 
\[
N\left(  \varepsilon,\overline{\mathcal{H}	\times\mathcal{G}},\rho_{P}\right) \le
N\left(  \frac{\varepsilon}{2},\mathcal{H}\times\mathcal{G},\rho_{P}\right)  =O\left(
\frac{1}{\varepsilon^{4}}\right)  \text{ as }\varepsilon\rightarrow0.
\]
\end{proof}

\begin{lemma}
\label{lemma.complete HG}Let $\mathcal{H}$ and $\mathcal{G}$ be defined as in \eqref{def.function spaces}, and let $\rho_{P}$ be as in \eqref{eq.rho}. Then $\overline{\mathcal{H}\times\mathcal{G}}$, the closure of $\mathcal{H}\times\mathcal{G}$ under $\rho_{P}$ in Lemma \ref{lemma.totally bounded HG}, is compact and
$\overline{\mathcal{H}\times\mathcal{G}}=\bar{\mathcal{H}}\times{\mathcal{G}}$, 
where $\bar{\mathcal{H}}$ is defined as in \eqref{def.function spaces}.
\end{lemma}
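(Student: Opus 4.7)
The plan is to decouple the two factors and exploit the finiteness of $\mathcal{G}$. Since $\mathcal{Z}=\{z_1,\ldots,z_K\}$ is finite, the collection $\mathcal{G}_K=\{1_{\mathbb{R}\times\mathbb{R}\times\{z_k\}}:k=1,\ldots,K\}$ has exactly $K$ elements, hence $\mathcal{G}$ as defined in \eqref{def.function spaces} has exactly $K-1$ elements in $L^2(P)\times L^2(P)$. A finite subset of a metric space is both closed and compact, so the ``$\mathcal{G}$-coordinate'' requires no closure and contributes only a finite index.

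First I would verify the set-theoretic identity $\overline{\mathcal{H}\times\mathcal{G}}=\bar{\mathcal{H}}\times\mathcal{G}$. The metric $\rho_P$ in \eqref{eq.rho} is, up to a relabeling, the $\ell^1$-product metric on the three factors $L^2(P)\times L^2(P)\times L^2(P)$. Since the topology induced by an $\ell^1$-product metric coincides with the product topology, convergence $(h_n,g_n)\to (h,g)$ under $\rho_P$ is equivalent to $h_n\to h$ in $L^2(P)$ and $g_n\to g$ in $L^2(P)\times L^2(P)$. For any sequence $\{(h_n,g_n)\}\subset\mathcal{H}\times\mathcal{G}$ converging to $(h,g)$, since $\mathcal{G}$ is finite, $g_n$ must be eventually constant equal to some $g\in\mathcal{G}$, and then $h_n\to h$ in $L^2(P)$, which by Lemma \ref{lemma.H complete} gives $h\in\bar{\mathcal{H}}$. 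Conversely, every $(h,g)\in\bar{\mathcal{H}}\times\mathcal{G}$ is the $\rho_P$-limit of a sequence $(h_n,g)$ with $h_n\in\mathcal{H}$ (again by Lemma \ref{lemma.H complete}), so it lies in the closure. This gives $\overline{\mathcal{H}\times\mathcal{G}}=\bar{\mathcal{H}}\times\mathcal{G}$.

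Next I would establish compactness. By Lemma \ref{lemma.totally bounded H_hat}, $\bar{\mathcal{H}}$ is compact in $(L^2(P),\|\cdot\|_{L^2(P)})$. Since $\mathcal{G}$ is a finite subset of $L^2(P)\times L^2(P)$, it is compact under the metric $\|g_1-g_1'\|_{L^2(P)}+\|g_2-g_2'\|_{L^2(P)}$. The product of two compact metric spaces is compact in the product topology, and as noted above $\rho_P$ metrizes this product topology on $\bar{\mathcal{H}}\times\mathcal{G}$. Therefore $\bar{\mathcal{H}}\times\mathcal{G}$ is compact under $\rho_P$, which by the previous paragraph is the same set as $\overline{\mathcal{H}\times\mathcal{G}}$.

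There is no real obstacle here: the argument is essentially bookkeeping once one observes that $\mathcal{G}$ is finite and that Lemma \ref{lemma.H complete} already identifies the $L^2(P)$-closure of $\mathcal{H}$ as $\bar{\mathcal{H}}$. The only point that warrants a careful word is the equivalence between $\rho_P$-convergence and coordinatewise $L^2(P)$-convergence, which is immediate from the definition of $\rho_P$ in \eqref{eq.rho} as a sum of three norms.
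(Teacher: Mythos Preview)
Your proposal is correct and is essentially the same approach as the paper's, just spelled out in more detail. The paper's own proof is a two-sentence sketch: it derives compactness from Lemma \ref{lemma.totally bounded HG} (total boundedness of $\overline{\mathcal{H}\times\mathcal{G}}$) together with completeness of $L^2(P)\times(L^2(P)\times L^2(P))$ under $\rho_P$, and dismisses the identity $\overline{\mathcal{H}\times\mathcal{G}}=\bar{\mathcal{H}}\times\mathcal{G}$ by appealing to ``the constructions of $\rho_P$ and $\mathcal{G}$.'' Your argument for the set identity is exactly the unpacking the paper leaves implicit, and your compactness argument (compactness of $\bar{\mathcal{H}}$ from Lemma \ref{lemma.totally bounded H_hat} times a finite set) is a trivially equivalent variant of the paper's ``totally bounded plus complete'' route---the only difference is that you factor first and then take the product of compacts, whereas the paper stays in the product space throughout. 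Neither route gains anything over the other here.
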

\begin{proof}[Proof of Lemma \ref{lemma.complete HG}] 
The first claim follows from Lemma \ref{lemma.totally bounded HG} and the fact that $L^2(P)\times(L^2(P)\times L^2(P))$ is complete under $\rho_P$. The second claim holds by the constructions of $\rho_{P}$ and $\mathcal{G}$. 
\end{proof}

\begin{proof}[Proof of Lemma \ref{ass.testable implication multivalue}]
	Suppose Assumption
	\ref{ass.IV validity for multivalued Z} holds with $\mathcal{D}=\{d_1,d_2,\ldots\}$. 
	Then we can define $Y_d$ by
	$Y_{d}=Y_{dz_{1}}=Y_{dz_{2}}=\cdots=Y_{dz_{K}}$ almost surely
	for all $d\in\mathcal{D}$.
	First, suppose $d_{\max}$ exists.  
	Under Assumption \ref{ass.IV validity for multivalued Z}, for all $k$ with $1\le k \le K-1$ and all Borel sets $B$,
	\begin{align*}
		&\mathbb{P}\left(  Y\in B,D=d_{\max}|Z=z_k\right) = \mathbb{P}\left(  Y_{d_{\max}}\in B,D_{z_{k}}=d_{\max}\right)  \\
		=&\sum
		_{j}\mathbb{P}\left(  Y_{d_{\max}}\in B,D_{z_{k}}=d_{\max},D_{z_{k+1}}
		=d_{j}\right)  =\mathbb{P}\left(  Y_{d_{\max}}\in B,D_{z_{k}}=d_{\max},D_{z_{k+1}}
		=d_{\max}\right)
	\end{align*}
	and
	\begin{align*}
		\mathbb{P}\left(  Y\in B,D=d_{\max}|Z=z_{k+1}\right) &=\mathbb{P}\left(  Y_{d_{\max}}\in B,D_{z_{k+1}}=d_{\max}\right) \\
		& =\sum
		_{j}\mathbb{P}\left(  Y_{d_{\max}}\in B,D_{z_{k}}=d_{j},D_{z_{k+1}}=d_{\max
		}\right). 
	\end{align*}
	Thus
	$  \mathbb{P}\left(  Y\in B,D=d_{\max}|Z=z_{k+1}\right)
	\ge \mathbb{P}\left(  Y\in B,D=d_{\max}|Z=z_{k}\right) $.
	Second, suppose $d_{\min}$ exists. Then similarly,
	$  \mathbb{P}\left(  Y\in B,D=d_{\min}|Z=z_k\right)
	\ge \mathbb{P}\left(  Y\in B,D=d_{\min}|Z=z_{k+1}\right)$.
\end{proof}

\begin{remark}
	Lemma \ref{lemma.testable implication conditioning ordered} can be proved analogously. The proofs of Lemmas \ref{lemma.testable implication unorder} and \ref{lemma.testable implication conditioning unordered} are trivial. 
\end{remark}




\begin{lemma}\label{lemma.L HD}
	Let $\mathbb{D}_{\mathcal{L}}=\{R\in\ell^{\infty}(\tilde{\mathcal{V}} ): R(h\cdot g_l)/R(g_l) \text{ exists for all }h\in\bar{\mathcal{H}}\text{ and all }g_l\in\mathcal{G}_K\}$. Define $\mathcal{L}:\mathbb{D}_{\mathcal{L}}\subset\ell^{\infty}(\tilde{\mathcal{V}})\rightarrow\ell^{\infty}\left(
	\bar{\mathcal{H}}\times\mathcal{G}\right)  $ by
	\[
	\mathcal{L}\left(  R\right)  \left(  h,g\right)  
	=\frac{R\left(  h\cdot g_{2}\right)  }{R\left(  g_{2}\right)  }-\frac{R\left(
		h\cdot g_{1}\right)  }{R\left(  g_{1}\right)  }
	\]
	for all $R\in \mathbb{D}_{\mathcal{L}}$ and all $(h,g)\in \bar{\mathcal{H}}\times\mathcal{G}$ with $g=(g_1,g_2)$.
	Then $\mathcal{L}$ is uniformly Hadamard differentiable\footnote{See the definitions of Hadamard differentiability and uniform Hadamard differentiability in \citet[pp.~372--375]{van1996weak}.} along every sequence $P_n \to P$ in $\mathbb{ D}_{\mathcal{ L}}$, tangentially to $\ell^{\infty}(\tilde{\mathcal{V}})$,  with the derivative $\mathcal{L}_P^{\prime}$ defined by 
	\begin{align}\label{eq.L HDD}
	\mathcal{L}_{P}^{\prime}\left(  H\right)  \left(  h,g\right)  =\frac{H\left(
		h\cdot g_{2}\right)  P\left(  g_{2}\right)  -P\left(  h\cdot g_{2}\right)
		H\left(  g_{2}\right)  }{P^{2}\left(  g_{2}\right)  }-\frac{H\left(  h\cdot
		g_{1}\right)  P\left(  g_{1}\right)  -P\left(  h\cdot g_{1}\right)  H\left(
		g_{1}\right)  }{P^{2}\left(  g_{1}\right)  }
	\end{align}
	for all $H\in\ell^{\infty}(\tilde{\mathcal{V}})$.\footnote{By \eqref{eq.0timesinfinity}, $\mathcal{L}_{P}^{\prime}$ is well defined.}
\end{lemma}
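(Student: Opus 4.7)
The plan is to verify uniform Hadamard differentiability directly from the definition. Consider arbitrary sequences $P_n \to P$ in $\mathbb{D}_\mathcal{L}$, $t_n \downarrow 0$, and $H_n \to H$ in $\ell^\infty(\tilde{\mathcal{V}})$ with $P_n + t_n H_n \in \mathbb{D}_\mathcal{L}$. Writing $A_n = P_n + t_n H_n$, the first step is a direct algebraic simplification: for each $l \in \{1,2\}$,
\[
\frac{1}{t_n}\left[\frac{A_n(h \cdot g_l)}{A_n(g_l)} - \frac{P_n(h \cdot g_l)}{P_n(g_l)}\right] = \frac{H_n(h \cdot g_l)\,P_n(g_l) - P_n(h \cdot g_l)\,H_n(g_l)}{A_n(g_l)\,P_n(g_l)}.
\]
Since $\mathcal{L}$ is a difference of two structurally identical terms, it suffices to show that each such expression converges, uniformly in $(h,g) \in \bar{\mathcal{H}} \times \mathcal{G}$, to the corresponding term appearing in the formula for $\mathcal{L}_P'(H)(h,g)$ in \eqref{eq.L HDD}.

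Next I would handle the denominator and numerator separately. For the denominator, because $\mathcal{G}_K$ is finite and $\mathcal{G}_K \subset \tilde{\mathcal{V}}$, the convergence $P_n \to P$ in $\ell^\infty(\tilde{\mathcal{V}})$ yields $P_n(g_l) \to P(g_l)$; since $\|H_n\|_{\ell^\infty(\tilde{\mathcal{V}})}$ is bounded and $t_n \to 0$, also $A_n(g_l) \to P(g_l)$. In the nontrivial case $P(g_l) > 0$ for every $g_l \in \mathcal{G}_K$, so for large $n$ the denominators are bounded below by a positive constant uniform in $g_l$. For the numerator, the construction of $\tilde{\mathcal{V}}$ in \eqref{eq.tilde V} guarantees $h \cdot g_l \in \tilde{\mathcal{V}}$ for every $(h,g) \in \bar{\mathcal{H}} \times \mathcal{G}$, so
\[
\sup_{(h,g)}\bigl|H_n(h \cdot g_l) - H(h \cdot g_l)\bigr| \leq \|H_n - H\|_{\ell^\infty(\tilde{\mathcal{V}})} \to 0,
\]
and likewise $\sup_{(h,g)}|P_n(h \cdot g_l) - P(h \cdot g_l)| \to 0$. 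Combined with the pointwise convergence of the finitely many values $P_n(g_l)$ and $H_n(g_l)$, this yields uniform convergence of the numerator in $(h,g)$ to $H(h\cdot g_l)P(g_l) - P(h\cdot g_l)H(g_l)$.

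Putting the pieces together, the quotient on the right-hand side of the display above converges uniformly in $(h,g) \in \bar{\mathcal{H}} \times \mathcal{G}$ to the expression obtained by replacing $A_n, P_n, H_n$ by $P, P, H$ respectively, which gives exactly $\mathcal{L}_P'(H)(h,g)$. Linearity of $\mathcal{L}_P'$ in $H$ is immediate from the formula, and continuity $\mathcal{L}_P' : \ell^\infty(\tilde{\mathcal{V}}) \to \ell^\infty(\bar{\mathcal{H}}\times\mathcal{G})$ follows from the uniform lower bound $P(g_l) > 0$ and the boundedness of $|P(h\cdot g_l)|, |P(g_l)| \leq 1$. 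The main obstacle is ensuring that the denominators do not degenerate—addressed by restricting attention to the nontrivial case $\Lambda(P) > 0$—and that the convergence is truly uniform over the infinite-dimensional index set $\bar{\mathcal{H}} \times \mathcal{G}$; this last point is handled cleanly because $\tilde{\mathcal{V}}$ was constructed in \eqref{eq.tilde V} precisely to contain every product $h \cdot g_l$, so that $\ell^\infty(\tilde{\mathcal{V}})$-convergence automatically delivers uniform control over $(h,g)$.
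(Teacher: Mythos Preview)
Your proposal is correct and follows essentially the same route as the paper's own proof: both compute the algebraic identity
\[
\frac{(P_n+t_nH_n)(h\cdot g_l)}{(P_n+t_nH_n)(g_l)}-\frac{P_n(h\cdot g_l)}{P_n(g_l)}
=\frac{t_n\bigl[H_n(h\cdot g_l)P_n(g_l)-P_n(h\cdot g_l)H_n(g_l)\bigr]}{(P_n+t_nH_n)(g_l)\,P_n(g_l)}
\]
and then argue directly that the right-hand side, divided by $t_n$, converges uniformly over $\bar{\mathcal{H}}\times\mathcal{G}$ to the corresponding term of $\mathcal{L}_P'(H)$. Your write-up is in fact more explicit than the paper's (which simply asserts the uniform convergence is ``easy to show''), since you spell out why finiteness of $\mathcal{G}_K$ and the inclusion $h\cdot g_l\in\tilde{\mathcal{V}}$ deliver the needed uniform control; one small point is that the definition of Hadamard differentiability requires $t_n\to 0$ rather than only $t_n\downarrow 0$, but your argument works verbatim for arbitrary $t_n\to 0$.
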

\begin{remark}
	By the definition of $\mathcal{ L}$, $\mathcal{L}(Q)=\phi_Q$ for all $Q\in\mathcal{P}$. We will apply Lemma \ref{lemma.L HD} along with the suitable delta method to deduce the asymptotic distributions of $\sqrt{n}(\hat{\phi}_{P_n}-\phi_{P})$ and the bootstrap version of this random element. 
\end{remark}
\begin{proof}[Proof of Lemma \ref{lemma.L HD}]
	For all $t_n\to0$, $P_n\to P$, and $H_n\to H$ in $\ell^{\infty}(\tilde{\mathcal{V}})$ such that $P_n\in \mathbb{D}_{\mathcal{L}}$ and $P_n+t_n H_n\in \mathbb{D}_{\mathcal{L}}$, we have that for each $(h,g)\in \bar{\mathcal{H}}\times\mathcal{G}$ with $g=(g_1,g_2)$,
	\begin{align*}
	&  \mathcal{L}\left(  P_{n}+t_{n}H_{n}\right)  \left(  h,g\right)
	-\mathcal{L}\left(  P_{n}\right)  \left(  h,g\right)  \\
	=&\frac{t_{n}H_{n}\left(  h\cdot g_{2}\right)  P_{n}\left(  g_{2}\right)
		-t_{n}P_{n}\left(  h\cdot g_{2}\right)  H_{n}\left(  g_{2}\right)  }{\left(
		P_{n}+t_{n}H_{n}\right)  \left(  g_{2}\right)  P_{n}\left(  g_{2}\right)
	}-\frac{t_{n}H_{n}\left(  h\cdot g_{1}\right)  P_{n}\left(  g_{1}\right)
		-t_{n}P_{n}\left(  h\cdot g_{1}\right)  H_{n}\left(  g_{1}\right)  }{\left(
		P_{n}+t_{n}H_{n}\right)  \left(  g_{1}\right)  P_{n}\left(  g_{1}\right)  }.
	\end{align*}
	Thus it is easy to show that
	\begin{align*}
	  \lim_{n\rightarrow\infty}\sup_{\left(  h,g\right)  \in  \bar{\mathcal{H}}\times\mathcal{G}  }\left\vert
	\frac{\mathcal{L}\left(  P_{n}+t_{n}H_{n}\right)  \left(  h,g\right)
		-\mathcal{L}\left(  P_{n}\right)  \left(  h,g\right)  }{t_{n}}-\mathcal{L}_{P}^{\prime}\left(  H\right)  \left(  h,g\right)
	\right\vert =0,
	\end{align*}
	where $\mathcal{L}_{P}^{\prime}$ is defined as in \eqref{eq.L HDD}.
	This implies that $\mathcal{L}$ is uniformly differentiable and verifies the derivative in \eqref{eq.L HDD}.
\end{proof}

\begin{lemma}\label{lemma.weak convergence Pn_hat and convergence Pn}
	Under Assumptions \ref{ass.independent data} and \ref{ass.probability path} with $P_n,P\in\ell^{\infty}(\tilde{\mathcal{V}})$, we have  $\sup_{v\in\tilde{\mathcal{V}}}\vert\sqrt{n}( P_n -P)(v)-Q_0(v)\vert\to 0$, where  $Q_0(v)=P(vv_0)$ for all $v\in\tilde{\mathcal{V}}$ and $v_0$ is  as in Assumption \ref{ass.probability path}, and that $\sqrt{n}(  \hat{P}_{n}-P)$ converges under $P_n$ in distribution to the process $\mathbb{G}_{P}+Q_0$
	for a tight $P$-Brownian bridge $\mathbb{G}_{P}$ with $E[\mathbb{ G}_P(v_1)\mathbb{ G}_P(v_2)]=P(v_1v_2)-P(v_1)P(v_2)$ for all $v_1,v_2\in\tilde{\mathcal{V}}$.
\end{lemma}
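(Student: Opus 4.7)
The plan is to establish the two claims separately and then combine them via Slutsky. For the first claim, I would start from the algebraic identity
\begin{equation*}
\sqrt{n}(P_{n}-P)(v)=\int v\cdot\sqrt{n}\bigl(\mathrm{d}P_n^{1/2}-\mathrm{d}P^{1/2}\bigr)\bigl(\mathrm{d}P_n^{1/2}+\mathrm{d}P^{1/2}\bigr),
\end{equation*}
and write $r_n=\sqrt{n}(\mathrm{d}P_n^{1/2}-\mathrm{d}P^{1/2})-\tfrac{1}{2}v_0\mathrm{d}P^{1/2}$, so that $\int r_n^2\to 0$ by Assumption \ref{ass.probability path}. Splitting the integral into a leading term involving $v\cdot v_0\cdot \mathrm{d}P^{1/2}(\mathrm{d}P_n^{1/2}+\mathrm{d}P^{1/2})$ and a remainder controlled by Cauchy--Schwarz using $\|r_n\|_2\to 0$, and noting that $\tilde{\mathcal{V}}$ has envelope $F=1$ (so $|v|\le 1$ uniformly over $\tilde{\mathcal{V}}$), the remainder is $o(1)$ uniformly in $v\in\tilde{\mathcal{V}}$. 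For the leading term, $\mathrm{d}P_n^{1/2}+\mathrm{d}P^{1/2}\to 2\,\mathrm{d}P^{1/2}$ in $L^2$ (this is a direct consequence of Assumption \ref{ass.probability path}), so the dominant contribution is $\int v\,v_0\,\mathrm{d}P=Q_0(v)$ uniformly in $v\in\tilde{\mathcal{V}}$, again by Cauchy--Schwarz and the $L^2(P)$-integrability of $v_0$.

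For the second claim, I would invoke Lemma \ref{lemma.V Donsker}, which shows that $\tilde{\mathcal{V}}$ is Donsker and pre-Gaussian uniformly in $Q\in\mathcal{P}$. Combined with Assumption \ref{ass.probability path}, which ensures that $P_n$ is contiguous with respect to $P$ and that $P_n\to P$ in total variation (via Scheff\'e after Hellinger convergence), a standard uniform-in-$Q$ empirical process CLT (e.g., Theorem 2.8.9 of \citet{van1996weak}) yields
\begin{equation*}
\sqrt{n}\bigl(\hat{P}_n-P_n\bigr)\leadsto \mathbb{G}_P \quad\text{under } P_n,
\end{equation*}
where $\mathbb{G}_P$ is the tight $P$-Brownian bridge indexed by $\tilde{\mathcal{V}}$; the identification of the limiting covariance as the $P$-covariance (rather than the $P_n$-covariance) uses that the covariance function $(v_1,v_2)\mapsto Q(v_1v_2)-Q(v_1)Q(v_2)$ varies continuously with $Q$ in total variation on a uniformly bounded class.

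Finally I would decompose
\begin{equation*}
\sqrt{n}(\hat{P}_n-P)=\sqrt{n}(\hat{P}_n-P_n)+\sqrt{n}(P_n-P),
\end{equation*}
where the first summand converges in distribution under $P_n$ to $\mathbb{G}_P$ by the previous paragraph, and the second summand converges (deterministically in $\ell^{\infty}(\tilde{\mathcal{V}})$) to $Q_0$ by the first claim. An application of Slutsky's theorem in $\ell^{\infty}(\tilde{\mathcal{V}})$ then delivers the desired weak convergence to $\mathbb{G}_P+Q_0$. The main technical obstacle is the uniform (rather than pointwise) nature of the first claim: the standard DQM calculation yields pointwise convergence for each fixed $v$, and upgrading to the supremum norm requires carefully exploiting the uniform boundedness of $\tilde{\mathcal{V}}$ together with the $L^2$-convergence in Assumption \ref{ass.probability path} to control the error terms simultaneously across all $v\in\tilde{\mathcal{V}}$.
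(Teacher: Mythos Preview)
Your proposal is correct and essentially unpacks what the paper does in one stroke. The paper's proof is a single-line citation: after noting that $\tilde{\mathcal{V}}$ has envelope $1$ (so $\sup_{v}|P(v)|\le 1$ and $\sup_{v}|P_n(v^2)|\le 1$) and invoking Lemma~\ref{lemma.V Donsker} for the uniform Donsker and pre-Gaussian property, it applies Theorem~3.10.12 of \citet{van1996weak}, which delivers both claims simultaneously---the uniform convergence $\sup_{v}|\sqrt{n}(P_n-P)(v)-Q_0(v)|\to 0$ \emph{and} the weak convergence $\sqrt{n}(\hat{P}_n-P)\leadsto\mathbb{G}_P+Q_0$ under $P_n$.

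Your route reconstructs the proof of that theorem from more primitive ingredients: the Hellinger-square-root expansion plus Cauchy--Schwarz for the first claim, and a uniform-Donsker CLT for $\sqrt{n}(\hat P_n-P_n)$ followed by Slutsky for the second. This is sound and more transparent about where each hypothesis is used. One minor caution: the exact citation ``Theorem 2.8.9'' may not be the cleanest pointer for $\sqrt{n}(\hat P_n-P_n)\leadsto\mathbb{G}_P$ under a \emph{drifting} $P_n$; that conclusion is really the content of Section~3.10 (specifically 3.10.12) rather than the uniform-in-$Q$ results of Section~2.8 alone, since you need both asymptotic tightness and the identification of the limiting covariance as the $P$-covariance via $P_n\to P$ in total variation. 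You already flag this correctly, but if you keep the decomposition, it would be cleaner to cite 3.10.12 (or Lemma~3.10.11) directly for the second step rather than 2.8.9.
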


\begin{proof}[Proof of Lemma \ref{lemma.weak convergence Pn_hat and convergence Pn}]
	The Lemma holds by Assumptions \ref{ass.independent data} and \ref{ass.probability path}, the facts that $\sup_{v\in \tilde{\mathcal{V}}}\left\vert P(v)\right\vert\le1$ and $\sup_{v\in\tilde{\mathcal{V}}}\vert P_n(v^2)\vert\le1$ for all $n$, Lemma \ref{lemma.V Donsker} in this paper, and Theorem 3.10.12 of \citet{van1996weak}.
\end{proof}

\begin{lemma}\label{lemma.almost uniform convergence phi sigma}
	Under Assumptions \ref{ass.independent data} and \ref{ass.probability path} with $P_n,P\in\ell^{\infty}(\tilde{\mathcal{V}})$, we have that $P_n\to P$ and that  $\hat{P}_n\to P$, $\hat{\phi}_{P_n}\to\phi_P$, $T_n/n \to \Lambda(P)$,  and $\hat{\sigma}_{P_n}\to \sigma_P$ almost uniformly.
\end{lemma}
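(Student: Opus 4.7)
The plan is to establish the four convergences in sequence, beginning with the deterministic convergence $P_n \to P$ in $\ell^{\infty}(\tilde{\mathcal{V}})$, then using the uniform Glivenko--Cantelli property (Lemma~\ref{lemma.uniform Glivenko-Cantelli}) to bridge from $P_n$ to $\hat{P}_n$, and finally propagating to the nonlinear functionals by continuous-mapping arguments. Throughout, the envelope $F \equiv 1$ on $\tilde{\mathcal{V}}$ and the fact that $\Lambda(P) > 0$ (so each $P(g_l)$ is strictly positive) will be the two structural ingredients that make everything go through.

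For the first claim, I would use the factorization
\begin{equation*}
P_n(v) - P(v) \;=\; \int v \bigl(\mathrm{d}P_n^{1/2} - \mathrm{d}P^{1/2}\bigr)\bigl(\mathrm{d}P_n^{1/2} + \mathrm{d}P^{1/2}\bigr)
\end{equation*}
for each $v \in \tilde{\mathcal{V}}$. Since $|v| \le 1$, Cauchy--Schwarz gives $|P_n(v) - P(v)| \le \bigl(\int (\mathrm{d}P_n^{1/2}-\mathrm{d}P^{1/2})^2\bigr)^{1/2}\bigl(\int (\mathrm{d}P_n^{1/2}+\mathrm{d}P^{1/2})^2\bigr)^{1/2}$. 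The first factor is $O(n^{-1/2})$ by Assumption~\ref{ass.probability path}, and the second is bounded by $2$ by the triangle inequality. Taking the supremum over $v\in\tilde{\mathcal{V}}$ yields $\|P_n - P\|_{\tilde{\mathcal{V}}} = O(n^{-1/2}) \to 0$.

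Next, for $\hat{P}_n \to P$ almost uniformly, Lemma~\ref{lemma.uniform Glivenko-Cantelli} says $\tilde{\mathcal{V}}$ is Glivenko--Cantelli uniformly over $\mathcal{P}$, so $\|\hat{P}_n - P_n\|_{\tilde{\mathcal{V}}}^{*} \to 0$ outer almost surely, equivalently almost uniformly by Lemma~1.9.2(iii) of \citet{van1996weak}. Combining this with the first step and the triangle inequality, $\|\hat{P}_n - P\|_{\tilde{\mathcal{V}}} \to 0$ almost uniformly. Since $T_n/n = \prod_{k=1}^{K}\hat{P}_n(1_{\mathbb{R}\times\mathbb{R}\times\{z_k\}})$ is a continuous function of finitely many coordinates of $\hat{P}_n$ on $\tilde{\mathcal{V}}$, the continuous mapping for almost uniform convergence yields $T_n/n \to \Lambda(P)$ almost uniformly.

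Finally, to obtain $\hat{\phi}_{P_n} \to \phi_P$ and $\hat{\sigma}_{P_n}^2 \to \sigma_P^2$ almost uniformly in $\ell^\infty(\bar{\mathcal{H}}\times\mathcal{G})$, I fix $\varepsilon > 0$ and the measurable set $A \subset \Omega$ with $\mathbb{P}(A) \ge 1-\varepsilon$ on which $\|\hat{P}_n - P\|_{\tilde{\mathcal{V}}} \to 0$ uniformly. Since $\Lambda(P) > 0$, we have $\min_{g_l \in \mathcal{G}_K} P(g_l) > 0$, so on $A$ and for large enough $n$, $\hat{P}_n(g_l) \ge \tfrac{1}{2}P(g_l)$ uniformly in $l$ and in $\omega \in A$. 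This uniform lower bound on the denominators, together with the fact that the numerators $\hat{P}_n(h\cdot g_l)$ satisfy $\sup_{(h,g)} |\hat{P}_n(h\cdot g_l) - P(h\cdot g_l)| \le \|\hat{P}_n - P\|_{\tilde{\mathcal{V}}}$, lets me apply the elementary identity $a/b - a'/b' = (a-a')/b' + a(1/b - 1/b')$ to bound $\|\hat{\phi}_{P_n} - \phi_P\|_{\bar{\mathcal{H}}\times\mathcal{G}}$ by a constant multiple of $\|\hat{P}_n - P\|_{\tilde{\mathcal{V}}}$ uniformly on $A$. The argument for $\hat{\sigma}_{P_n}$ is analogous, using that $h^2 \in \tilde{\mathcal{V}}$-like envelope bounds hold and that $\sigma_P^2$ is built from the same sums, products, and bounded-below ratios. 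The main technical nuisance is verifying this uniform control on the denominators on the high-probability set $A$; once this is in hand, the rest is routine algebra.
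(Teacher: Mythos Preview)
Your proposal is correct and follows essentially the same architecture as the paper: first establish $P_n\to P$ in $\ell^\infty(\tilde{\mathcal{V}})$, then use the uniform Glivenko--Cantelli property (Lemma~\ref{lemma.uniform Glivenko-Cantelli}) together with Lemma~1.9.2(iii) of \citet{van1996weak} to get $\hat P_n\to P$ almost uniformly, and finally push this through the explicit rational expressions defining $\hat\phi_{P_n}$, $T_n/n$, and $\hat\sigma_{P_n}$, using the positive lower bound on the denominators $P(g_l)$ coming from $\Lambda(P)>0$.

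The one place you take a genuinely different route is the first step. The paper obtains $\|P_n-P\|_\infty\to 0$ by invoking Lemma~\ref{lemma.weak convergence Pn_hat and convergence Pn} (the LAN expansion $\sqrt{n}(P_n-P)\to Q_0$ uniformly on $\tilde{\mathcal{V}}$, with $Q_0(v)=P(vv_0)$) and then bounding $|Q_0(v)|$ via H\"older and $P(v_0^2)<\infty$. Your Hellinger factorization $P_n(v)-P(v)=\int v\,(\mathrm{d}P_n^{1/2}-\mathrm{d}P^{1/2})(\mathrm{d}P_n^{1/2}+\mathrm{d}P^{1/2})$ with Cauchy--Schwarz is more elementary and self-contained: it uses only Assumption~\ref{ass.probability path} directly and avoids the $Q_0$ machinery. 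Both give the $O(n^{-1/2})$ rate; the paper's route has the advantage of reusing a lemma already needed elsewhere, while yours is cleaner as a standalone argument.

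One point you should tighten: in the $\hat\sigma_{P_n}$ step you write ``$h^2\in\tilde{\mathcal{V}}$-like envelope bounds hold,'' but $h^2\cdot g_l$ is not literally in $\tilde{\mathcal{V}}$ when $d=1$. The paper handles this by observing that every $h\in\bar{\mathcal{H}}$ has constant sign on its support, so $Q(h^2\cdot g_l)=|Q(h\cdot g_l)|$ for any measure $Q$; it then rewrites both $\sigma_P^2$ and $\hat\sigma_{P_n}^2$ with $|Q(h\cdot g_l)|$ in place of $Q(h^2\cdot g_l)$ (equations \eqref{eq.stat variance multi 2} and \eqref{eq.estimated stat variance multi 2}), after which everything is a continuous function of coordinates in $\tilde{\mathcal{V}}$. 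You should make this identity explicit rather than leave it as ``analogous.''
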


\begin{proof}[Proof of Lemma \ref{lemma.almost uniform convergence phi sigma}]
By Lemma \ref{lemma.weak convergence Pn_hat and convergence Pn} in this paper, H\"older's inequality, and Lemma 3.10.11 of \citet{van1996weak}, we have that 
\begin{align*}
\Vert {P}_n - P\Vert_{\infty}\le& \Vert P_n -P -n^{-1/2}Q_0 \Vert_{\infty}+\Vert n^{-1/2}Q_0 \Vert_{\infty}\\
 \le &\Vert P_n -P -n^{-1/2}Q_0 \Vert_{\infty}+n^{-1/2}\sup_{v\in\tilde{\mathcal{V}}}\vert P(v^2)P(v_0^2) \vert^{1/2} \to 0,
\end{align*} 
where $Q_0$ is the function defined in Lemma \ref{lemma.weak convergence Pn_hat and convergence Pn}.
By Lemma \ref{lemma.uniform Glivenko-Cantelli} in this paper and Lemma 1.9.3 of \citet{van1996weak}, $\Vert \hat{P}_n - P_n\Vert_{\infty}\to 0$ almost uniformly. Then we have that $\Vert \hat{P}_n - P\Vert_{\infty}\to 0$ almost uniformly. The rest of the results follow from the constructions of $\hat{\phi}_{P_n}$, $T_n/n$, and $\hat{\sigma}_{P_n}$. By the construction of $\bar{\mathcal{H}}$, the $\sigma_{Q}^2(h,g)$ in \eqref{eq.stat variance multi} can also be written as
\begin{align}\label{eq.stat variance multi 2}
\sigma_{Q}^2  (h,g)=\Lambda(Q) \cdot \left\{\frac{ |Q\left(  h\cdot g_{2}\right)|   }{Q^{2}\left(
	g_{2}\right)  }  -\frac{ Q^2\left(  h\cdot g_{2}\right)   }{Q^{3}\left(
	g_{2}\right)  }  
+\frac{ |Q\left(	h\cdot g_{1}\right)|   }{Q^{2}\left(  g_{1}\right)  }
-\frac{ Q^2\left(	h\cdot g_{1}\right)   }{Q^{3}\left(  g_{1}\right)  }\right\}.
\end{align}
Similar to \eqref{eq.stat variance multi 2}, we can write the $\hat{\sigma}_{P_n}^{2}\left(  h,g\right)$ in \eqref{eq.estimated stat variance multi} as
\begin{align}\label{eq.estimated stat variance multi 2}
\hat{\sigma}_{P_n}^{2}\left(  h,g\right)  =\frac{T_n}{n}\cdot\left\{\frac{ |\hat{P}_n\left(  h\cdot g_{2}\right)|   }{\hat{P}_n^{2}\left(
	g_{2}\right)  }  -\frac{ \hat{P}_n^2\left(  h\cdot g_{2}\right)   }{\hat{P}_n^{3}\left(
	g_{2}\right)  }  
+\frac{ |\hat{P}_n\left(	h\cdot g_{1}\right) |  }{\hat{P}_n^{2}\left(  g_{1}\right)  }
-\frac{ \hat{P}_n^2\left(	h\cdot g_{1}\right)   }{\hat{P}_n^{3}\left(  g_{1}\right)  }\right\}.
\end{align}
Then the almost uniform convergence of $\hat{P}_n$ to $P$ in $\ell^{\infty}(\tilde{\mathcal{V}})$ implies the almost uniform convergence of the   $\hat{\sigma}_{P_n}^{2}$ in \eqref{eq.estimated stat variance multi 2} to the $\sigma_{P}^2$ as in \eqref{eq.stat variance multi 2}.
\end{proof}


\begin{proof}[Proof of Lemma \ref{lemma.weak convergence phi_K}]
	By the Hadamard derivative of $\mathcal{L}$ in \eqref{eq.L HDD}, together with Lemma \ref{lemma.weak convergence Pn_hat and convergence Pn} in this paper and Theorem 3.9.4 (delta method) of \citet{van1996weak}, we have that under $P_n$,
	\begin{align}\label{eq.convergence phi}
		\sqrt{n}(  \hat{\phi}_{P_n}-\phi_{P})  =\sqrt{n}\{  \mathcal{L}(
		\hat{P}_{n})  -\mathcal{L}\left(  P\right)  \}  \leadsto\mathcal{L}_{P}^{\prime}\left(  \mathbb{G}_{P}+Q_0\right)  .
	\end{align}
	By Lemma \ref{lemma.almost uniform convergence phi sigma}, $T_n/n\to \Lambda(P)$ almost uniformly. Thus by Lemmas 1.9.3(ii)  and  1.10.2(iii), Example 1.4.7 (Slutsky's lemma), and Theorem 1.3.6 (continuous mapping) of \citet{van1996weak}, 
	\begin{align}\label{eq.weak convergence phi}
		\sqrt{T_n}(  \hat{\phi}_{P_n}-\phi_{P})=\sqrt{{T_n}/{n}}\cdot\sqrt{n}(  \hat{\phi}_{P_n}-\phi_{P})   \leadsto\Lambda(P)^{1/2}\mathcal{L}_{P}^{\prime}\left(  \mathbb{G}_{P}+Q_0\right)  .
	\end{align}
	Let $\mathbb{ G}=\Lambda(P)^{1/2}\mathcal{L}_{P}^{\prime}\left(  \mathbb{G}_{P}+Q_0\right)$. Then $\mathbb{ G}$ is tight, because $\mathbb{ G}_P$ is tight and $\mathcal{L}_{P}^{\prime}$ is a continuous map. Thus \eqref{eq.weak convergence phi} verifies the first claim of Lemma \ref{lemma.weak convergence phi_K}. Now we show the continuity of $\mathbb{ G}$ under $\rho_P$. Define a semimetric on $\tilde{\mathcal{V}}$ by 
	\begin{align*}
		\rho_2(v,v^{\prime})=E\left[ \vert \mathbb{G}_P(v)-\mathbb{G}_P(v^{\prime})\vert^2 \right]^{1/2}
	\end{align*}
	for all $v,v^{\prime}\in\tilde{\mathcal{V}}$. This semimetric is the one defined in \citet[p.~39]{van1996weak} with $p=2$. Since $\mathbb{G}_{P}$ is tight, it follows from the discussion in Example 1.5.10 of \citet{van1996weak} that $\mathbb{G}_{P}$ almost surely has a uniformly $\rho_2$-continuous path. Since $\mathbb{G}_P$ is a $P$-Brownian bridge,
	\begin{align}\label{eq.rho_V}
		\rho_2^2(v,v^{\prime})
		=P((v-v^{\prime})^2)-P^2(v-v^{\prime})\le \Vert v-v^{\prime}\Vert_{L^2(P)}^2
	\end{align}
	for all $v,v^{\prime}\in\tilde{\mathcal{V}}$. Therefore, $\mathbb{G}_{P}$ almost surely has a uniformly continuous path under $\Vert\cdot\Vert_{L^2(P)}$. By Lemma 3.10.11 of \citet{van1996weak}, $P(v_0)=0$ and $P(v_0^2)<\infty$, where $v_0$ is as in Assumption \ref{ass.probability path}. H\"older's inequality implies that for every $v\in L^2(P)$, $\left\Vert v\cdot 1\right\Vert_{L^1(P)}\le 1\cdot \left\Vert v\right\Vert_{L^2(P)} $. By H\"older's inequality, $P$ and $Q_0$ are both continuous on $\tilde{\mathcal{V}}$ under $\Vert\cdot\Vert_{L^2(P)}$, where $Q_0$ is as in Lemma \ref{lemma.weak convergence Pn_hat and convergence Pn}. Suppose that there are $(h,g),(h^{\prime},g^{\prime})\in{\bar{\mathcal{H}}	\times\mathcal{G}}$ with $g=(g_1,g_2)$ and $g^{\prime}=(g^{\prime}_1,g^{\prime}_2)$. Then for $j\in\{1,2\}$ we have
	\begin{align}\label{eq.semimetric}
		&\left\Vert g_{j}-g_{j}^{\prime}\right\Vert _{L^2\left(  P\right)  }  \leq\rho_{P}\left(  \left(  h,g\right)  ,\left(  h^{\prime},g^{\prime}\right)  \right) \text{ and } \notag\\
		&\left\Vert h\cdot g_{j}-h^{\prime}\cdot g_{j}^{\prime}\right\Vert
		_{L^2\left(  P\right)  }  \leq\left\Vert h - h^{\prime}\right\Vert _{L^2\left(  P\right)  }+\left\Vert g_{j}-g_{j}^{\prime}\right\Vert _{L^2\left(  P\right)  }  \leq\rho_{P}\left(  \left(  h,g\right)  ,\left(  h^{\prime},g^{\prime}\right)  \right).
	\end{align}
	By \eqref{eq.L HDD} and \eqref{eq.semimetric}, together with the continuity of $\mathbb{ G}_P$, $P$, and $Q_0$ under $\Vert\cdot\Vert_{L^2(P)}$, we conclude that $  \mathbb{G}$ almost surely has a continuous path under $\rho_P$. 
	
	Next, we show the variance of $\mathbb{G}(h,g)$ for each $\left(  h,g\right)  \in
	\bar{\mathcal{H}}\times\mathcal{G}$ with $g=(g_1,g_2)$. Since $\mathcal{L}_{P}^{\prime}(H)$ is linear in $H$, $Var(\mathcal{\mathbb{ G}}(h,g))={\Lambda(P)} \cdot Var(\mathcal{L}_{P}^{\prime}(\mathbb{ G}_P)(h,g))$. First, we have that
	\begin{align}\label{eq.L variance}
		&Var( \mathcal{L}_{P}^{\prime}\left(  \mathbb{G}_{P}\right)  \left(
		h,g\right)  ) \notag\\
		=&\, E\left[  \left(  \frac{\mathbb{G}_{P}\left(  h\cdot
			g_{2}\right)  P\left(  g_{2}\right)  -P\left(  h\cdot g_{2}\right)
			\mathbb{G}_{P}\left(  g_{2}\right)  }{P^{2}\left(  g_{2}\right)  }%
		-\frac{\mathbb{G}_{P}\left(  h\cdot g_{1}\right)  P\left(  g_{1}\right)
			-P\left(  h\cdot g_{1}\right)  \mathbb{G}_{P}\left(  g_{1}\right)  }%
		{P^{2}\left(  g_{1}\right)  }\right)  ^{2}\right] .
	\end{align}
	Since $\mathbb{G}_{P}$ is a Brownian bridge with $E[\mathbb{ G}_P(v_1)\mathbb{ G}_P(v_2)]=P(v_1v_2)-P(v_1)P(v_2)$ for all $v_1,v_2\in\tilde{\mathcal{V}}$, we have
	\begin{align}\label{eq.L variance 2}
		&  E\left[  \left(  \frac{\mathbb{G}_{P}\left(  h\cdot g_{2}\right)  P\left(
			g_{2}\right)  -P\left(  h\cdot g_{2}\right)  \mathbb{G}_{P}\left(
			g_{2}\right)  }{P^{2}\left(  g_{2}\right)  }\right)  ^{2}\right] \notag \\
		=&\,\frac{P\left(  h^{2}\cdot g_{2}\right)  -P^{2}\left(  h\cdot g_{2}\right)
		}{P^{2}\left(  g_{2}\right)  }+\frac{P^{2}\left(  h\cdot g_{2}\right)  }%
		{P^{3}\left(  g_{2}\right)  }-\frac{P^{2}\left(  h\cdot g_{2}\right)  }%
		{P^{2}\left(  g_{2}\right)  }-\frac{2P^{2}\left(  h\cdot g_{2}\right)  }%
		{P^{3}\left(  g_{2}\right)  }+\frac{2P^{2}\left(  h\cdot g_{2}\right)  }%
		{P^{2}\left(  g_{2}\right)  }\notag\\
		=&\,\frac{P\left(  h^{2}\cdot g_{2}\right)  }{P^{2}\left(  g_{2}\right)
		}-\frac{P^{2}\left(  h\cdot g_{2}\right)  }{P^{3}\left(  g_{2}\right)  }.
	\end{align}
	Similarly,
	\begin{align}\label{eq.L variance 1}
		E\left[  \left(  \frac{\mathbb{G}_{P}\left(  h\cdot g_{1}\right)  P\left(
			g_{1}\right)  -P\left(  h\cdot g_{1}\right)  \mathbb{G}_{P}\left(
			g_{1}\right)  }{P^{2}\left(  g_{1}\right)  }\right)  ^{2}\right]
		=\frac{P\left(  h^{2}\cdot g_{1}\right)  }{P^{2}\left(  g_{1}\right)  }%
		-\frac{P^{2}\left(  h\cdot g_{1}\right)  }{P^{3}\left(  g_{1}\right)  }.
	\end{align}
	Also, we have that 
	\begin{align}\label{eq.L covariance}
		&  E\left[  \left(  \mathbb{G}_{P}\left(  h\cdot g_{2}\right)  P\left(
		g_{2}\right)  -P\left(  h\cdot g_{2}\right)  \mathbb{G}_{P}\left(
		g_{2}\right)  \right)  \left(  \mathbb{G}_{P}\left(  h\cdot g_{1}\right)
		P\left(  g_{1}\right)  -P\left(  h\cdot g_{1}\right)  \mathbb{G}_{P}\left(
		g_{1}\right)  \right)  \right] \notag \\
		=&\,P\left(  g_{2}\right)  P\left(  g_{1}\right)  P\left(  h^{2}g_{2}%
		g_{1}\right)  -P\left(  g_{2}\right)  P\left(  hg_{1}\right)  P\left(
		hg_{2}g_{1}\right)  -P\left(  hg_{2}\right)  P\left(  g_{1}\right)  P\left(
		hg_{2}g_{1}\right) \notag \\
		&+P\left(  hg_{2}\right)  P\left(  hg_{1}\right)  P\left(
		g_{2}g_{1}\right) =0,
	\end{align}
	where we use the fact that $g_1g_2=0$ by the construction of $\mathcal{ G}$.
	By \eqref{eq.L covariance}, the expectation on the right-hand side of \eqref{eq.L variance} is equal to the sum of the expectations in \eqref{eq.L variance 2} and  \eqref{eq.L variance 1}. Thus we now have that
	\begin{align*}
		Var( \mathcal{L}_{P}^{\prime}\left(  \mathbb{G}_{P}\right)  \left(
		h,g\right)  )=\frac{P\left(  h^{2}\cdot g_{2}\right)  }{P^{2}\left(  g_{2}\right)
		}-\frac{P^{2}\left(  h\cdot g_{2}\right)  }{P^{3}\left(  g_{2}\right)  }+\frac{P\left(  h^{2}\cdot g_{1}\right)  }{P^{2}\left(  g_{1}\right)  }%
		-\frac{P^{2}\left(  h\cdot g_{1}\right)  }{P^{3}\left(  g_{1}\right)  },
	\end{align*}
	which, together with $Var(\mathcal{\mathbb{ G}}(h,g))={\Lambda(P)} \cdot Var(\mathcal{L}_{P}^{\prime}(\mathbb{ G}_P)(h,g))$,  verifies the equality that $Var\left(  \mathbb{G}\left(  h,g\right)  \right)=\sigma_{P}^2  (h,g)$ for the $\sigma_{P}^2 $ in \eqref{eq.stat variance multi}.
	For every $\left(  h,g\right)  \in\mathcal{\bar{H}}\times\mathcal{G}$ with $g=(g_1,g_2)$,
	\begin{align*}
		\sigma_P^{2}\left(  h,g\right)    & =\Lambda\left(  P\right)  \left\{
		\frac{P\left(  h^{2}\cdot g_{2}\right)  }{P^{2}\left(  g_{2}\right)  }-\frac
		{P^{2}\left(  h\cdot g_{2}\right)  }{P^{3}\left(  g_{2}\right)  }+\frac{P\left(
			h^{2}\cdot g_{1}\right)  }{P^{2}\left(  g_{1}\right)  }-\frac{P^{2}\left(
			h\cdot g_{1}\right)  }{P^{3}\left(  g_{1}\right)  }\right\}  \\
		& =\frac{\Lambda\left(  P\right)  }{P\left(  g_{2}\right)  }\frac{\left\vert
			P\left(  h\cdot g_{2}\right)  \right\vert }{P\left(  g_{2}\right)  }\left[
		1-\frac{\left\vert P\left(  h\cdot g_{2}\right)  \right\vert }{P\left(
			g_{2}\right)  }\right]  +\frac{\Lambda\left(  P\right)  }{P\left(
			g_{1}\right)  }\frac{\left\vert P\left(  h\cdot g_{1}\right)  \right\vert }{P\left(
			g_{1}\right)  }\left[  1-\frac{\left\vert P\left(  h\cdot g_{1}\right)  \right\vert
		}{P\left(  g_{1}\right)  }\right]  .
	\end{align*}
	Then $\sigma_P^{2}\left(  h,g\right)  \leq1/4\cdot\left\{
	\Lambda\left(  P\right)  /P\left(  g_{2}\right)  +\Lambda\left(  P\right)
	/P\left(  g_{1}\right)  \right\}  $, since $0\leq\left\vert P\left(  hg_{j}\right)  \right\vert /P\left(
	g_{j}\right)  \leq1$ for $j\in\{1,2\}$. Recall that $K$ is the number of elements
	in $\mathcal{Z}$. We have that for each $j\in\left\{  1,2\right\}  $,
	\[
	\frac{\Lambda\left(  P\right)  }{P\left(  g_{j}\right)  }\leq\max_{1\leq
		k^{\prime}\leq K}\frac{\prod_{k=1}^{K}P\left(  1_{\mathbb{R\times R\times
			}\left\{  z_{k}\right\}  }\right)  }{P\left(  1_{\mathbb{R\times R\times}\left\{
			z_{k^{\prime}}\right\}  }\right)  }\leq\left(  \frac{1}{K-1}\right)  ^{K-1},
	\]
	which implies that
	\begin{align*}
		\sigma_P^{2}\left(  h,g\right)  \leq 1/4\cdot\max_{(g_1^{\prime},g_2^{\prime})\in\mathcal{G}}\left\{
		\Lambda\left(  P\right)  /P\left(  g_{2}^{\prime}\right)  +\Lambda\left(  P\right)
		/P\left(  g_{1}^{\prime}\right)  \right\}\le 1/2\cdot (K-1)^{-(K-1)}.
	\end{align*}
	When $K=2$, $\sigma_P^{2}\left(  h,g\right)  \leq 1/4$ by the construction of $\Lambda(P)$.
\end{proof}

\begin{lemma}\label{lemma.phi is continuous}
Under $\rho_{P}$, $\phi_P$ and $\sigma_P$ are continuous on $\bar{\mathcal{H}}\times\mathcal{G}$.
\end{lemma}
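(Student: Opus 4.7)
The plan is to reduce continuity on $\bar{\mathcal{H}}\times\mathcal{G}$ to continuity in the $h$-argument for each fixed $g$, and then to exploit Cauchy--Schwarz together with the uniform boundedness of functions in $\bar{\mathcal{H}}$ (every $h\in\bar{\mathcal{H}}$ satisfies $|h|\le 1$) and the indicator functions in $\mathcal{G}$. The key observation is that $\mathcal{G}$ is a finite set and, in the nontrivial regime $\Lambda(P)>0$ considered throughout the paper, all $P(g_l)>0$ for $g_l\in\mathcal{G}_K$. Moreover the different elements of $\mathcal{G}_K$ have disjoint supports and positive mass, so distinct pairs in $\mathcal{G}$ are at strictly positive $\rho_P$-distance from each other. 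Consequently, for any sequence $(h_n,g_n)\to(h,g)$ in $\bar{\mathcal{H}}\times\mathcal{G}$ under $\rho_P$, one has $g_n=g$ eventually, and only continuity in the $h$-slot needs to be verified.

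Fix such a $g=(g_1,g_2)$. For $\phi_P$, note that $P(g_1)$ and $P(g_2)$ are fixed positive constants, so it suffices to show that $h\mapsto P(h\cdot g_j)$ is continuous on $\bar{\mathcal{H}}$ under $\|\cdot\|_{L^2(P)}$. By Cauchy--Schwarz,
\begin{equation*}
\bigl|P((h-h')\cdot g_j)\bigr|\le \|h-h'\|_{L^2(P)}\,\|g_j\|_{L^2(P)},
\end{equation*}
which gives uniform Lipschitz continuity and hence continuity of $\phi_P(\cdot,g)$. Combined with the preceding reduction, this yields continuity of $\phi_P$ on $\bar{\mathcal{H}}\times\mathcal{G}$.

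For $\sigma_P$, I would first show continuity of $\sigma_P^2$ and then invoke continuity of the square root on $[0,\infty)$. Looking at \eqref{eq.stat variance multi}, with $g$ fixed it is enough to show that each of $h\mapsto P(h^2\cdot g_j)$ and $h\mapsto P(h\cdot g_j)^2$ is continuous on $\bar{\mathcal{H}}$ under $\|\cdot\|_{L^2(P)}$. Continuity of the second is immediate from the argument given for $\phi_P$ combined with continuity of squaring. For the first, factor
\begin{equation*}
P((h^2-h'^2)\cdot g_j)=P\bigl((h-h')\cdot(h+h')\cdot g_j\bigr),
\end{equation*}
and apply Cauchy--Schwarz together with $|h+h'|\le 2$ on $\bar{\mathcal{H}}$ to bound the right-hand side by $2\|g_j\|_{L^2(P)}\,\|h-h'\|_{L^2(P)}$. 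Since $\Lambda(P)$ and the denominators $P(g_j)$ are fixed positive constants, $\sigma_P^2(\cdot,g)$ is continuous on $\bar{\mathcal{H}}$, and thus $\sigma_P$ is as well.

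There is no substantive obstacle here; the only point to take care of is the justification that distinct pairs $g,g'\in\mathcal{G}$ are separated under $\rho_P$ (which uses $\Lambda(P)>0$ together with the disjoint-support structure of $\mathcal{G}_K$), and the use of $|h|\le 1$ for $h\in\bar{\mathcal{H}}$ in the $\sigma_P^2$ bound. Together with Cauchy--Schwarz, these elementary ingredients deliver the two continuity claims.
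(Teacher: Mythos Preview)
Your proposal is correct and follows essentially the same route as the paper: exploit the finiteness of $\mathcal{G}$ to reduce to continuity in $h$, then apply Cauchy--Schwarz (the paper says H\"older) to control $|P(h\cdot g_j)-P(h'\cdot g_j')|$. The paper additionally covers the degenerate case $P(g_j)=0$ via the convention \eqref{eq.0timesinfinity}, whereas you restrict to $\Lambda(P)>0$; since the paper itself flags $\Lambda(P)=0$ as trivial, this is a minor omission. Your treatment of $\sigma_P$ via the factorization $h^2-h'^2=(h-h')(h+h')$ with $|h+h'|\le 2$ is more explicit than the paper's ``Similarly'', but amounts to the same idea.
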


\begin{proof}[Proof of Lemma \ref{lemma.phi is continuous}]
Suppose there are $\left(  h,g\right)  ,(  h^{k},g^{k})
\in\bar{\mathcal{H}}\times\mathcal{G}$ with $g=\left(  g_{1},g_{2}\right)  $, $g^{k
}=(  g_{1}^{k},g_{2}^{k})  $, and $(h^k,g^k)\to (h,g)$ under $\rho_P$. Since $\mathcal{ G}_K$ is finite, $(h^k,g^k)\to (h,g)$ under $\rho_P$ implies that $P(g_j^k)=P(g_j)$ for each $j\in\{1,2\}$ when $k$ is sufficiently large. 
If $P(g_j)=0$,\footnote{If $P(g_j)=0$ for some $g_j\in\mathcal{G}_K$, then $\Lambda(P)=0$, which is a trivial case. We consider this case only for the sake of completeness.} then by \eqref{eq.0timesinfinity} $P(h\cdot g_j)/P(g_j)=0$, $P(h^k\cdot g_j^k)/P(g_j^k)=0$ when $k$ is large, and  
\begin{align*}
\left\vert \frac{P\left(  h\cdot g_{j}\right)  }{P\left(  g_{j}\right)
}-\frac{P(  h^{k}\cdot g_{j}^{k})  }{P(
	g_{j}^k)  }\right\vert=0.
\end{align*}
If $P(g_j)\neq0$, then for each $j\in\{1,2\}$ and large $k$, $P(g_j^k)=P(g_j)\neq0$ and 
\begin{align*}
\left\vert \frac{P\left(  h\cdot g_{j}\right)  }{P\left(  g_{j}\right)
}-\frac{P(  h^{k}\cdot g_{j}^{k})  }{P(
	g_{j}^k)  }\right\vert\le \frac{\left\Vert h\cdot g_{j}-h^{k
	}\cdot g_{j}^{k}\right\Vert _{L^2\left(  P\right)  }}{P\left(
	g_{j}\right)  } \leq\frac{  \rho_{P}(  \left(  h,g\right)  ,(  h^{k
	},g^{k})  )   }{P\left(  g_{j}\right)  }
\end{align*}
by H\"older's inequality and \eqref{eq.semimetric}.
Thus we can conclude that
\begin{align*}
\left\vert \phi_P\left(  h,g\right)  -\phi_P(  h^{k},g^{k})
\right\vert    =&\left\vert \left(  \frac{P\left(  h\cdot g_{2}\right)
}{P\left(  g_{2}\right)  }-\frac{P\left(  h\cdot g_{1}\right)  }{P\left(
	g_{1}\right)  }\right)  -\left(  \frac{P(  h^{k}\cdot g_{2}^{k
	})  }{P(  g_{2}^{k})  }-\frac{P(  h^{k}\cdot
	g_{1}^{k})  }{P(  g_{1}^{k})  }\right)  \right\vert \to 0
\end{align*}
if  $(h^k,g^k)\to (h,g)$ under $\rho_P$. Similarly, we can show that $\sigma_P$ is continuous on $\bar{\mathcal{H}}\times\mathcal{G}$ under $\rho_P$.
\end{proof}

We define some new notation which will be used in the following results. 
Define a random element $\hat{\varphi}_{P}
:\Omega\rightarrow\ell^{\infty}\left(  \Xi\times\mathcal{\bar{H}}
\times\mathcal{G}\right)  $ such that for each $\omega\in\Omega$ and each $(\xi,h,g)\in\Xi\times
\mathcal{\bar{H}}\times\mathcal{G}$,
\begin{align}\label{eq.varphi_P_hat}
\hat{\varphi}_{P}(\omega)(\xi,h,g)=\frac{{\phi_{P}}(h,g)}{\mathcal{M}\left(  {\hat{\sigma}_{P_n}(\omega)}\right)
\left(  \xi,h,g\right) } ,
\end{align}
and let $\varphi_P \in \ell^{\infty}\left(  \Xi\times\mathcal{\bar{H}}%
\times\mathcal{G}\right)$ be such that for each $(\xi,h,g)\in\Xi\times
\mathcal{\bar{H}}\times\mathcal{G}$,
\[
{\varphi_P}(\xi,h,g)=\frac{{\phi_P}(h,g)}{\mathcal{M}\left(  \sigma_P\right)
	\left(  \xi,h,g\right) }.
\]
Here, $\hat{\sigma}_{P_n}$ is estimated from data, hence it depends on $\omega$, and so does $\hat{\varphi}_{P}$. When there is no danger of confusion, we omit the $\omega$ from $\hat{\sigma}_{P_n}$ and $\hat{\varphi}_{P}$ for brevity. Given each
sequence $r_{n}\rightarrow\infty$ and each $\nu$ which satisfies Assumption \ref{ass.nu}, define
\begin{align}\label{eq.Dn}
\mathbb{D}_{n} (\omega) =\left\{  \psi\in\ell^{\infty}\left(
\Xi\times\mathcal{\bar{H}}\times\mathcal{G}\right)  :{\mathcal{S}}\left(  \hat{\varphi}_{P}(\omega)+r_{n}^{-1}
\psi\right)  \in L^{1}\left(  \nu\right) \right\}
\end{align}
for all $\omega\in\Omega$, and 
\begin{align}\label{eq.gn}
g_{n} (\omega) \left(  \psi\right)  =r_{n} {\mathcal{I}\circ\mathcal{S}}\left(  \hat{\varphi}_{P}(\omega)+r_{n}^{-1}\psi\right)  
\end{align}
for all $\omega\in\Omega$ and  all $\psi\in\mathbb{D}_{n}(\omega)$. Here, $g_n$ also depends on $\omega$; for brevity, we omit $\omega$ from $g_{n}$ as well. If the $H_0$ in \eqref{eq.null 1} is true with $Q=P_n$ for all $n$, then $\mathcal{S}(\hat{\varphi}_P)=0$ by Lemma \ref{lemma.Sphi under null}, and so $g_n(\psi)=r_{n}\left\{ {\mathcal{I}\circ\mathcal{S}}\left(  \hat{\varphi}_{P}+r_{n}^{-1}\psi\right)  -{\mathcal{I}\circ\mathcal{S}}\left(  \hat{\varphi}_{P}\right)  \right\}$. Define a
correspondence $\Psi:\Xi\times\ell^{\infty}\left(  \Xi\times\mathcal{\bar{H}
}\times\mathcal{G}\right)  \twoheadrightarrow\mathcal{\bar{H}}\times
\mathcal{G}$ by
\begin{align}\label{eq.Psi}
\Psi\left(  \xi,\psi\right)  =\left\{  \left(  h,g\right)  \in\mathcal{\bar
	{H}}\times\mathcal{G}:\psi\left(  \xi,h,g\right)  ={\mathcal{S}}\left(
\psi\right)  \left(  \xi\right)  \right\}
\end{align}
for all $\xi\in\Xi$ and all $\psi\in\ell^{\infty}\left( \Xi\times \mathcal{\bar{H}}%
\times\mathcal{G}\right)  $, and define a metric $\rho_{\xi \psi}$ on $\Xi\times\ell^{\infty}\left(  \Xi\times\mathcal{\bar{H}}\times\mathcal{G}\right)$ by 
\begin{align}\label{eq.rho_xipsi}
\rho_{\xi \psi}((\xi_1,\psi_1),(\xi_2,\psi_2))=|\xi_1-\xi_2|+\Vert \psi_1-\psi_2\Vert_{\infty}
\end{align}
for all $(\xi_1,\psi_1),(\xi_2,\psi_2)\in \Xi\times\ell^{\infty}\left(  \Xi\times\bar{\mathcal{H}}\times\mathcal{G}\right)$. Also, define a metric on $\Xi\times\bar{\mathcal{H}}\times\mathcal{G}$ by 
\begin{align}\label{eq.rho_xihg}
\rho_{\xi h g}((\xi_1,h_1,g_1),(\xi_2,h_2,g_2))=|\xi_1-\xi_2|+\rho_{P}((h_1,g_1),(h_2,g_2))
\end{align}
for all $(\xi_1,h_1,g_1),(\xi_2,h_2,g_2)\in\Xi\times\bar{\mathcal{H}}\times\mathcal{G}$.
For every set $A\subset\mathcal{\bar{H}
}\times\mathcal{G}$ and every $\delta>0$, define
\begin{align}\label{eq.A^delta}
A^{\delta}=\left\{  \left(  h,g\right)  \in\mathcal{\bar{H}}\times
\mathcal{G}:\inf_{\left(  h^{\prime},g^{\prime}\right)  \in A}\rho_{P}\left(
\left(  h,g\right)  ,\left(  h^{\prime},g^{\prime}\right)  \right)  \leq
\delta\right\}  .
\end{align}

\begin{lemma}\label{lemma.Sphi under null}
	Suppose Assumption \ref{ass.probability path} holds and the $H_0$ in \eqref{eq.null 1} is true with $Q=P_n$ for all $n$. Then the $H_0$ in \eqref{eq.null 1} is true with $Q=P$. This implies that $ \sup_{\left(  h,g\right)  \in		{\bar{\mathcal{H}}\times\mathcal{ G}}}\phi_P\left( h,g\right)= 0$, and hence that $\mathcal{S}\left(  \varphi_{P}\right)=0$ and $\mathcal{S}\left(  \hat{\varphi}_{P}\right)=0$ for all $\omega\in\Omega$.
\end{lemma}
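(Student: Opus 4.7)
The plan is to leverage the already-established convergence $P_n\to P$ in $\ell^{\infty}(\tilde{\mathcal{V}})$, together with the continuity of $\phi_P$ under $\rho_P$, to push the null inequality from the sequence $P_n$ down to the limit $P$ and then extend it from $\mathcal{H}\times\mathcal{G}$ to its closure $\bar{\mathcal{H}}\times\mathcal{G}$. From the first step of the proof of Lemma \ref{lemma.almost uniform convergence phi sigma}, Assumption \ref{ass.probability path} gives $\|P_n-P\|_{\infty}\to 0$ on $\tilde{\mathcal{V}}$; in particular $P_n(h\cdot g_l)\to P(h\cdot g_l)$ and $P_n(g_l)\to P(g_l)$ for every $(h,g)\in\mathcal{H}\times\mathcal{G}$ and $l\in\{1,2\}$. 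Using the convention $0\cdot\infty=0$ in \eqref{eq.0timesinfinity}, this yields $\phi_{P_n}(h,g)\to\phi_P(h,g)$ pointwise on $\mathcal{H}\times\mathcal{G}$. Since $\phi_{P_n}(h,g)\le 0$ on $\mathcal{H}\times\mathcal{G}$ for all $n$ by assumption, the limit satisfies $\phi_P(h,g)\le 0$ on $\mathcal{H}\times\mathcal{G}$, so $H_0$ holds with $Q=P$.

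Next I would extend the inequality to $\bar{\mathcal{H}}\times\mathcal{G}$. By Lemma \ref{lemma.complete HG}, $\bar{\mathcal{H}}\times\mathcal{G}$ is the $\rho_P$-closure of $\mathcal{H}\times\mathcal{G}$, and by Lemma \ref{lemma.phi is continuous}, $\phi_P$ is $\rho_P$-continuous on $\bar{\mathcal{H}}\times\mathcal{G}$. Hence $\sup_{\bar{\mathcal{H}}\times\mathcal{G}}\phi_P=\sup_{\mathcal{H}\times\mathcal{G}}\phi_P\le 0$. To upgrade the inequality to equality, I observe that the marginal of $Y$ has at most countably many atoms, so there exists $a\in\mathbb{R}$ with $\mathbb{P}(Y=a)=0$. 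Taking $h=1_{\{a\}\times\{0\}\times\mathbb{R}}\in\mathcal{H}$ and any $g\in\mathcal{G}$ gives $P(h\cdot g_1)=P(h\cdot g_2)=0$, so $\phi_P(h,g)=0$ (with the convention \eqref{eq.0timesinfinity} covering the degenerate case $\Lambda(P)=0$). Combined with the previous step this forces $\sup_{\bar{\mathcal{H}}\times\mathcal{G}}\phi_P=0$.

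Finally, for the conclusions about $\mathcal{S}(\varphi_P)$ and $\mathcal{S}(\hat{\varphi}_P)$, the key observation is that the denominators are bounded below by $\xi>0$ for every $\xi\in\Xi$, so the ratios inherit the sign of $\phi_P$. Concretely, for each $\xi\in\Xi$,
\[
\mathcal{S}(\varphi_P)(\xi)=\sup_{(h,g)\in\bar{\mathcal{H}}\times\mathcal{G}}\frac{\phi_P(h,g)}{\max\{\xi,\sigma_P(h,g)\}}\le 0,
\]
and the same $(h,g)$ constructed above achieves $0$, giving $\mathcal{S}(\varphi_P)(\xi)=0$. The argument for $\mathcal{S}(\hat{\varphi}_P)$ is identical pathwise in $\omega$, since $\hat{\varphi}_P(\omega)(\xi,h,g)=\phi_P(h,g)/\max\{\xi,\hat{\sigma}_{P_n}(\omega)(h,g)\}$ has the same sign structure and attains $0$ at the same $(h,g)$. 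The only mild subtlety, and the one that requires care, is handling the boundary cases where $P(g_l)=0$ or $\hat{\sigma}_{P_n}(\omega)=0$: both are absorbed by the convention \eqref{eq.0timesinfinity} and by $\xi>0$, respectively, so the expressions are well-defined throughout and the equalities $\mathcal{S}(\varphi_P)=0$ and $\mathcal{S}(\hat{\varphi}_P)=0$ hold for all $\omega\in\Omega$.
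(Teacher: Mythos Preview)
Your proof is correct and follows essentially the same approach as the paper's: pass the null inequality from $P_n$ to $P$ via $\|P_n-P\|_\infty\to 0$, extend from $\mathcal{H}\times\mathcal{G}$ to its closure, exhibit a point $h=1_{\{a\}\times\{0\}\times\mathbb{R}}$ where $\phi_P$ vanishes, and use $\xi>0$ to conclude $\mathcal{S}(\varphi_P)=\mathcal{S}(\hat\varphi_P)=0$. Your argument is in fact slightly more careful than the paper's at two places: you invoke Lemma~\ref{lemma.phi is continuous} explicitly for the extension to $\bar{\mathcal{H}}\times\mathcal{G}$ (the paper just says ``by the constructions''), and you justify the existence of $a$ with $\phi_P(h,g)=0$ by noting $Y$ has at most countably many atoms, whereas the paper simply asserts such an $a$ exists.
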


\begin{proof}[Proof of Lemma \ref{lemma.Sphi under null}]
	By Lemma \ref{lemma.almost uniform convergence phi sigma}, we have $\Vert P_n-P\Vert_{\infty}\to 0$. Thus $\phi_{P_n}\to\phi_{P}$ in $\ell^{\infty}(\bar{\mathcal{H}}\times\mathcal{G})$, and by the assumption that $ \sup_{\left(  h,g\right)  \in{{\mathcal{H}}\times\mathcal{ G}}}\phi_{P_n}\left( h,g\right)\le 0$ for all $n$, we have that $\sup_{\left(  h,g\right)  \in	{{\mathcal{H}}\times\mathcal{ G}}}\phi_P\left( h,g\right)\le0$. This implies that $\sup_{\left(  h,g\right)  \in	{\bar{\mathcal{H}}\times\mathcal{ G}}}\phi_P\left( h,g\right)\le0$ by the constructions of $\phi_P$ and $\bar{\mathcal{H}}$.
	By the construction of $\bar{\mathcal{H}}\times\mathcal{ G}$, there is some $(h,g)\in \bar{\mathcal{H}}\times\mathcal{ G}$, such as $h=1_{\{a\}\times\{0\}\times\mathbb{R}}$ for some $a\in\mathbb{R}$, for which $\phi_{P}(h,g)=0$. Therefore, $ \sup_{\left(  h,g\right)  \in		{\bar{\mathcal{H}}\times\mathcal{ G}}}\phi_P\left( h,g\right)= 0$ under the assumptions. Because $\xi\in\Xi$ is always positive by the construction of $\Xi$, we have that $\mathcal{S}\left(  \varphi_{P}\right)(\xi)=0$ for all $\xi\in\Xi$. For the same reason, $\mathcal{S}\left(  \hat{\varphi}_{P}\right)(\xi)=0$ for all $\xi\in\Xi$ and all $\omega\in\Omega$.
\end{proof}

\begin{lemma}\label{lemma.upper hemicontinuity of Psi}
	The correspondence $\Psi$ defined in \eqref{eq.Psi} is upper hemicontinuous\footnote{See Definition 17.2 of
		upper hemicontinuity in \citet{aliprantis2006infinite}.} at $(\xi,\varphi_{P})$ for all $\xi\in\Xi$. In addition, suppose the $H_0$ in \eqref{eq.null 1} is true with $Q=P$. Then for every $\delta>0$ there is an $\varepsilon>0$ such
	that $\Psi\left(  \xi^{\prime},\psi\right)  \subset\Psi\left(  \xi
	,\varphi_{P}\right)  ^{\delta}$ (where the latter is defined as in \eqref{eq.A^delta})  for all $\xi,\xi^{\prime}\in\Xi$ and all
	$\psi\in\ell^{\infty}\left(  \Xi\times\mathcal{\bar{H}}\times
	\mathcal{G}\right)  $ with $\left\Vert \psi-\varphi_{P}\right\Vert
	_{\infty}<\varepsilon$.
\end{lemma}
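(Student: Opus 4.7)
My plan is to exploit the compactness of $\bar{\mathcal{H}}\times\mathcal{G}$ under $\rho_{P}$ (Lemma \ref{lemma.complete HG}) together with the joint continuity of $\varphi_{P}$ on $\Xi\times\bar{\mathcal{H}}\times\mathcal{G}$, so that the argument reduces to a standard Berge-type maximum theorem. The one preparatory check is continuity of $\varphi_{P}$. By Lemma \ref{lemma.phi is continuous}, $\phi_{P}$ and $\sigma_{P}$ are $\rho_{P}$-continuous on $\bar{\mathcal{H}}\times\mathcal{G}$. Since $0\notin\Xi$, every $\xi\in\Xi$ is bounded below by some positive constant on a neighborhood, so $\mathcal{M}(\sigma_{P})(\xi,h,g)=\max\{\xi,\sigma_{P}(h,g)\}$ is continuous and uniformly bounded away from zero. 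Thus $\varphi_{P}(\xi,h,g)=\phi_{P}(h,g)/\mathcal{M}(\sigma_{P})(\xi,h,g)$ is jointly continuous on $\Xi\times\bar{\mathcal{H}}\times\mathcal{G}$ under $\rho_{\xi hg}$.

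For the first claim, I will use the sequential characterization of upper hemicontinuity, which is equivalent to the topological definition because $\bar{\mathcal{H}}\times\mathcal{G}$ is compact. Take any sequence $(\xi_{n},\psi_{n})\to(\xi,\varphi_{P})$ under $\rho_{\xi\psi}$ and any $(h_{n},g_{n})\in\Psi(\xi_{n},\psi_{n})$. Compactness of $\bar{\mathcal{H}}\times\mathcal{G}$ yields a subsequence, still indexed by $n$ for brevity, with $(h_{n},g_{n})\to(h^{\ast},g^{\ast})$. The maximizing property gives $\psi_{n}(\xi_{n},h_{n},g_{n})\ge\psi_{n}(\xi_{n},h,g)$ for every $(h,g)\in\bar{\mathcal{H}}\times\mathcal{G}$. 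Because $\|\psi_{n}-\varphi_{P}\|_{\infty}\to0$ and $\varphi_{P}$ is jointly continuous, the inequality passes to the limit to give $\varphi_{P}(\xi,h^{\ast},g^{\ast})\ge\varphi_{P}(\xi,h,g)$, so $(h^{\ast},g^{\ast})\in\Psi(\xi,\varphi_{P})$.

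For the second claim, I first observe that under the $H_{0}$ assumption Lemma \ref{lemma.Sphi under null} gives $\mathcal{S}(\varphi_{P})(\xi)=0$ for every $\xi\in\Xi$, so
\[
\Psi(\xi,\varphi_{P})=\{(h,g)\in\bar{\mathcal{H}}\times\mathcal{G}:\phi_{P}(h,g)=0\}=\Psi_{\bar{\mathcal{H}}\times\mathcal{G}},
\]
which in particular does not depend on $\xi$. Argue by contradiction: if the conclusion fails for some $\delta>0$, there exist sequences $\xi_{n},\xi_{n}^{\prime}\in\Xi$, functions $\psi_{n}$ with $\|\psi_{n}-\varphi_{P}\|_{\infty}\to 0$, and points $(h_{n},g_{n})\in\Psi(\xi_{n}^{\prime},\psi_{n})$ with $(h_{n},g_{n})\notin\Psi_{\bar{\mathcal{H}}\times\mathcal{G}}^{\delta}$. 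Since $\Xi$ is a closed subset of $[0,1]$ and $\bar{\mathcal{H}}\times\mathcal{G}$ is compact, pass to a subsequence with $\xi_{n}^{\prime}\to\xi_{0}^{\prime}\in\Xi$ and $(h_{n},g_{n})\to(h^{\ast},g^{\ast})$. Repeating the previous step along these sequences forces $(h^{\ast},g^{\ast})\in\Psi(\xi_{0}^{\prime},\varphi_{P})=\Psi_{\bar{\mathcal{H}}\times\mathcal{G}}$, so $\rho_{P}((h_{n},g_{n}),(h^{\ast},g^{\ast}))\to0$ contradicts $(h_{n},g_{n})\notin\Psi_{\bar{\mathcal{H}}\times\mathcal{G}}^{\delta}$ for large $n$.

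The hard part is really only the preparatory check that $\varphi_{P}$ is continuous jointly in $\xi$ and $(h,g)$; once $0\notin\Xi$ is used to bound the denominator below, everything else is a routine Berge/compactness argument, and the uniform-in-$\xi$ form of the second claim follows automatically because $\Psi(\xi,\varphi_{P})$ is constant in $\xi$ under $H_{0}$.
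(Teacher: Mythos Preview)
Your proof is correct and follows essentially the same approach as the paper: compactness of $\bar{\mathcal{H}}\times\mathcal{G}$ under $\rho_P$ combined with joint continuity of $\varphi_P$ reduces everything to the sequential characterization of upper hemicontinuity, and under $H_0$ the constancy of $\Psi(\xi,\varphi_P)$ in $\xi$ yields the uniform statement. The only cosmetic difference is that for the second claim the paper uses a finite subcover of $\Xi$ to glue the local $\varepsilon_\xi$'s into a global $\varepsilon$, whereas your contradiction argument extracts a convergent subsequence in $\Xi\times(\bar{\mathcal{H}}\times\mathcal{G})$ directly; both are standard and equivalent.
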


\begin{proof}[Proof of Lemma \ref{lemma.upper hemicontinuity of Psi}]
	We first show that $\Psi$ is upper hemicontinuous at $\left(  \xi,\varphi
	_{P}\right)  $ for all $\xi\in\Xi$. We do this \textbf{in three steps}. 
	\textbf{First}, we
	show that $\Psi\left(  \xi,\varphi_{P}\right)  $ is compact for each $\xi\in\Xi$
	under $\rho_{P}$. Clearly, given an arbitrary $\xi\in\Xi$, $\varphi_{P}\left(  \xi,\cdot
	,\cdot\right)  $ is continuous on $\mathcal{\bar{H}\times G}$ under $\rho_P$ by Lemma
	\ref{lemma.phi is continuous}. Because $\mathcal{\bar{H}\times G}$ is compact by Lemma \ref{lemma.complete HG}, $\Psi\left(  \xi,\varphi_{P}\right)  $ is not empty. Since $\Psi\left(  \xi,\varphi_{P}\right)  \subset\mathcal{\bar
		{H}\times G}$, it suffices to show that $\Psi\left(  \xi,\varphi_{P}\right)  $
	is closed in $\mathcal{\bar
		{H}\times G}$. Fix $\xi\in\Xi$. Suppose there is a sequence $\left\{  \left(
	h_{n},g_{n}\right)  \right\}  _{n}\subset\Psi\left(  \xi,\varphi_{P}\right)$ such that $\left(  h_{n},g_{n}\right)	\rightarrow\left(  h,g\right) \in\bar{\mathcal{H}}\times\mathcal{ G} $ under $\rho_{P}$. Then for all $n$,
	$	\varphi_{P}\left(  \xi,h_{n},g_{n}\right)  =\mathcal{S}\left(
	\varphi_{P}\right)  \left(  \xi\right)$.
	Since $\varphi_{P}\left(  \xi,\cdot,\cdot\right)  $ is continuous by Lemma
	\ref{lemma.phi is continuous},
	$\varphi_{P}\left(  \xi,h_{n},g_{n}\right)  \rightarrow\varphi_{P}\left(
	\xi,h,g\right)  $ as $\left(  h_{n},g_{n}\right)  \rightarrow\left(
	h,g\right)  $. Thus $\varphi_{P}\left(  \xi,h,g\right)  =\mathcal{S}\left(  \varphi_{P}\right)  \left(  \xi\right)  $, which implies that $\Psi\left(\xi,\varphi_{P}\right)  $ is closed in $\mathcal{\bar{H}}\times\mathcal{G}$
	and therefore compact. \textbf{Second}, we show that if there is a sequence
	$\left\{  \left(  \xi_{n},\psi_{n}\right)  ,(h_{n},g_{n})\right\}  $ such that
	$(h_{n},g_{n})\in\Psi\left(  \xi_{n},\psi_{n}\right)  $ and $\rho_{\xi \psi}( (\xi_{n},\psi_{n}),(\xi,\varphi_{P}))\to 0$, where $\rho_{\xi \psi}$ is defined in \eqref{eq.rho_xipsi}, then $\left(
	h_{n},g_{n}\right)  $ has a limit point\footnote{See the definition of limit point in \citet[p.~31]{aliprantis2006infinite}.} in $\Psi\left(  \xi,\varphi
	_{P}\right)  $. Notice that by the constructions of $\Xi$ and $\bar{\mathcal{H}}\times\mathcal{G}$, $\Xi\times\bar{\mathcal{H}}\times\mathcal{G}$ is compact under the metric $\rho_{\xi h g}$ defined in \eqref{eq.rho_xihg}.
	It is easy to show, by Lemma \ref{lemma.phi is continuous}, that $\varphi_{P}$ is continuous on $\Xi\times\bar{\mathcal{H}}\times\mathcal{G}$ under $\rho_{\xi h g}$, and hence that it is uniformly continuous.
	Thus $\rho_{\xi \psi}( (\xi_{n},\psi_{n}),(\xi,\varphi_{P}))\to 0$ implies that
	\begin{align*}
	\left\vert {\mathcal{S}}\left(  \psi_{n}\right)  \left(  \xi_{n}\right)
	-{\mathcal{S}}\left(  \varphi_{P}\right)  \left(  \xi\right)  \right\vert
	  \leq & \sup_{(h,g)\in\mathcal{\bar{H}}\times\mathcal{G}}\left\vert \psi
	_{n}(\xi_{n},h,g)-\varphi_{P}(\xi_{n},h,g)\right\vert \\
	&  +\sup_{(h,g)\in\mathcal{\bar{H}}\times\mathcal{G}}\left\vert \varphi
	_{P}(\xi_{n},h,g)-\varphi_{P}(\xi,h,g)\right\vert \rightarrow 0,
	\end{align*}
	where $\sup_{(h,g)\in\mathcal{\bar{H}}\times\mathcal{G}}\left\vert \varphi
	_{P}(\xi_{n},h,g)-\varphi_{P}(\xi,h,g)\right\vert $ converges to $0$ because
	$\varphi_{P}$ is uniformly continuous on $\Xi\times\mathcal{\bar{H}}%
	\times\mathcal{G}$ under $\rho_{\xi h g}$. This implies that
	$\psi_{n}\left(  \xi_{n},h_{n},g_{n}\right)  \rightarrow {\mathcal{S}}\left(  \varphi_{P}\right)  \left(  \xi\right)  $. Suppose, by way of
	contradiction, that $\left(  h_{n},g_{n}\right)  $ has no limit point in
	$\Psi\left(  \xi,\varphi_{P}\right)  $. This implies that for each $(h,g)\in
	\Psi\left(  \xi,\varphi_{P}\right)  $ there exists an open neighborhood $V_{h,g}$
	of $(h,g)$ and an $n_{h,g}$ such that $(h_{n},g_{n})\not \in V_{h,g}$ when $n\geq
	n_{h,g}$. Because we have shown that $\Psi\left(  \xi,\varphi_{P}\right)  $ is
	compact in $\mathcal{\bar{H}}\times\mathcal{G}$, there is a finite open cover $V$
	such that $\Psi\left(  \xi,\varphi_{P}\right)  \subset V=V_{h^{1},g^{1}}%
	\cup\cdots\cup V_{h^{M},g^{M}}$. Let $n_{0}=\max_{m\leq M}n_{h^{m},g^{m}}$.
	Thus if $n>n_{0}$, then $(h_{n},g_{n})\not \in V$, and hence $(h_{n}%
	,g_{n})\not \in \Psi\left(  \xi,\varphi_{P}\right)  $. Since $\mathcal{\bar
		{H}}\times\mathcal{G}$ is compact and $V^{c}$ is closed in $\mathcal{\bar{H}%
	}\times\mathcal{G}$, $V^{c}$ is compact. Notice that $V^c\cap \Psi(\xi,\varphi_P)=\varnothing$. Thus
	\[
	\sup_{(h,g)\in V^{c}}\varphi_{P}\left(  \xi,h,g\right)  <\sup_{(h,g)\in
		\mathcal{\bar{H}}\times\mathcal{G}}\varphi_{P}\left(  \xi,h,g\right)
	=\sup_{(h,g)\in\Psi\left(  \xi,\varphi_{P}\right)  }\varphi_{P}\left(
	\xi,h,g\right)  .
	\]
	Let $\delta=\sup_{(h,g)\in\mathcal{\bar{H}}\times\mathcal{G}}\varphi
	_{P}\left(\xi,  h,g\right)  -\sup_{(h,g)\in V^{c}}\varphi_{P}\left(\xi,  h,g\right)
	$. Recall that $(h_{n},g_{n})\in V^{c}$ for all $n>n_{0}$. Thus
	$
	\psi_{n}\left(  \xi_{n},h_{n},g_{n}\right)  =\sup_{(h,g)\in\mathcal{\bar{H}
		}\times\mathcal{G}}\psi_{n}\left(  \xi_{n},h,g\right)  =\sup_{(h,g)\in V^{c}
	}\psi_{n}\left(  \xi_{n},h,g\right)
	$, so
	\begin{align*}
	\left\vert \psi_{n}\left(  \xi_{n},h_{n},g_{n}\right)  -\sup_{(h,g)\in V^{c}%
	}\varphi_{P}\left(  \xi,h,g\right)  \right\vert 
    \leq&\sup_{(h,g)\in
		\mathcal{\bar{H}}\times\mathcal{G}}\left\vert \psi_{n}(\xi_{n},h,g)-\varphi
	_{P}(\xi_{n},h,g)\right\vert \\
	&  +\sup_{(h,g)\in\mathcal{\bar{H}}\times\mathcal{G}}\left\vert \varphi
	_{P}(\xi_{n},h,g)-\varphi_{P}(\xi,h,g)\right\vert \rightarrow0.
	\end{align*}
	This implies that for sufficiently large $n$,
	\[
	\psi_{n}\left(  \xi_{n},h_{n},g_{n}\right)  \leq\sup_{(h,g)\in V^{c}}
	\varphi_{P}\left(  \xi,h,g\right)  +\frac{\delta}{2}=\sup_{(h,g)\in
		\mathcal{\bar{H}}\times\mathcal{G}}\varphi_{P}\left(  \xi,h,g\right) -  \frac{\delta}{2}.
	\]
	This contradicts $\psi_{n}\left(  \xi_{n},h_{n},g_{n}\right)  \rightarrow
	\mathcal{S}\left(  \varphi_{P}\right)  \left(  \xi\right)  $. Thus
	$\left(  h_{n},g_{n}\right)  $ has a limit point in $\Psi\left(  \xi
	,\varphi_{P}\right)  $. \textbf{Third}, by Theorem 17.20(ii) of
	\citet{aliprantis2006infinite}, together with the fact that $\Xi\times\ell^{\infty}\left(  \Xi\times\mathcal{\bar{H}
	}\times\mathcal{G}\right)$ is first countable under the metric $\rho_{\xi \psi}$ defined in \eqref{eq.rho_xipsi} (every metric space is first countable), $\Psi$ is upper hemicontinuous at $\left(
	\xi,\varphi_{P}\right)  $. 
	
	Now we prove the second claim in the Lemma. Fix $\delta>0$. Since $\Psi$ is upper hemicontinuous at $\left(
	\xi,\varphi_{P}\right)  $ for all $\xi\in\Xi$, we have that for each $\xi$ there is an open ball
	$B_{\varepsilon_{\xi}}\left(  \xi,\varphi_{P}\right)  $  under $\rho_{\xi \psi}$ with center  $(\xi,\varphi_{P})$ and radius $\varepsilon_{\xi}$ such that $\Psi\left(
	\xi^{\prime},\varphi^{\prime}\right)  \subset\Psi\left(  \xi,\varphi
	_{P}\right)  ^{\delta}$ for all $\left(  \xi^{\prime},\varphi^{\prime}\right)
	\in B_{\varepsilon_{\xi}}\left(  \xi,\varphi_{P}\right)  $, where $\Psi\left(  \xi,\varphi
	_{P}\right)  ^{\delta}$ is defined as in \eqref{eq.A^delta}. Notice that $\{B_{\varepsilon_{\xi}/2}\left(  \xi \right) \}_{\xi\in\Xi} $ is an open cover of $\Xi$, where each $B_{\varepsilon_{\xi}/2}\left(  \xi\right)$ is an open ball in $\mathbb{R}$ with center $\xi$ and radius $\varepsilon_{\xi}/2$. Since $\Xi$ is
	compact by construction, there is a finite open cover $\{B_{\varepsilon_{i}	}\left(  \xi_{i}\right) \}_{i=1}^{M} $ of $\Xi$ with $\varepsilon_{i}=\varepsilon_{\xi_{i}}/2
	$. Let $\varepsilon=\min_{i\leq M}\varepsilon_{i}$. Then for every $\xi^{\prime}\in\Xi$ and every $\psi\in\ell^{\infty}\left(
	\Xi\times\mathcal{\bar{H}}\times\mathcal{G}\right)  $ with $\left\Vert
	\psi-\varphi_{P}\right\Vert _{\infty}<\varepsilon$, there is an open ball
	$B_{\varepsilon_{\xi_{i}}}\left(  \xi_{i},\varphi_{P}\right)  $ such that
	$\left(  \xi^{\prime},\psi\right)  \subset B_{\varepsilon_{\xi_{i}}}\left(
	\xi_{i},\varphi_{P}\right)  $. This implies that $\Psi\left(  \xi^{\prime}
	,\psi\right)  \subset\Psi\left(  \xi_{i},\varphi_{P}\right)  ^{\delta}$.
	Suppose the $H_0$ in \eqref{eq.null 1} is true with $Q=P$. By Lemma \ref{lemma.Sphi under null}, we have that $\mathcal{S}\left(  \varphi_{P}\right)=0$ and
	\begin{align*}
	\Psi\left(  \xi,\varphi_{P}\right)  =\Psi(	\tilde{\xi},\varphi_{P})=\left\{  \left(  h,g\right)
	\in{\bar{\mathcal{H}}\times\mathcal{G}}:\phi_P\left(  h,g\right)  =0\right\}
	\end{align*} 
	for all $\xi,\tilde{\xi}\in\Xi$. Thus
	$\Psi\left(  \xi^{\prime},\psi\right)  \subset\Psi\left(  \xi,\varphi
	_{P}\right)  ^{\delta}$ for all $\xi\in\Xi$, that is, the second claim holds. 
\end{proof}

\begin{lemma}\label{lemma.random differentiability}
	Suppose Assumptions \ref{ass.independent data}, \ref{ass.probability path}, and \ref{ass.nu} hold and the $H_0$ in \eqref{eq.null 1} is true with $Q=P_n$ for all $n$. For every $\varepsilon>0$, there is a measurable set $\Omega_0\subset \Omega$ with $\mathbb{P}(\Omega_0)\ge 1-\varepsilon$ such that 
	for every subsequence $\{\psi_{n_m}\}$ with $\psi_{n_m}
		\in\mathbb{D}_{n_m}(\omega_{n_m})$, $\omega_{n_m}\in \Omega_0$, where $\mathbb{D}_{n_m}(\omega_{n_m})$ is defined in \eqref{eq.Dn}, and $\psi_{n_m}\rightarrow\psi$ for
		some $\psi\in C\left(  \Xi\times\mathcal{\bar{H}}\times\mathcal{G}\right)  $ under the $\rho_{\xi h g}$ defined in \eqref{eq.rho_xihg}, we have that
		\[
		g_{n_m}(\omega_{n_m})  \left(  \psi_{n_m}\right) \rightarrow \mathcal{I}\circ \mathcal{S}_{\Psi\left( \xi, \varphi_{P}\right)  }(\psi),
		\]
		where $g_{n_m}$ is defined in \eqref{eq.gn}.\footnote{Lemma \ref{lemma.random differentiability} implies that under $H_0$, $\mathcal{I}\circ\mathcal{S}
			_{\Psi_{\mathcal{\bar{H}}\times\mathcal{G}}}$ is the Hadamard
			directional derivative of $\mathcal{I}\circ\mathcal{S}$ at $\varphi_{P}$. See the definition of Hadamard directional differentiability in \citet{shapiro1990concepts}.}

\end{lemma}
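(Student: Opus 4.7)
The plan is to show pointwise convergence of the integrand in $\xi$ and then pass the limit under the integral by dominated convergence. First, by Lemma \ref{lemma.almost uniform convergence phi sigma}, for every $\varepsilon>0$ choose a measurable set $\Omega_0\subset\Omega$ with $\mathbb{P}(\Omega_0)\ge 1-\varepsilon$ on which $\hat{\sigma}_{P_n}\to \sigma_P$ uniformly. Since $\Xi$ is bounded away from $0$, the map $\mathcal{M}$ (see \eqref{eq.M}) is bounded away from $0$ and above by a constant, and $\phi_P$ is bounded on $\bar{\mathcal{H}}\times\mathcal{G}$; hence $\hat{\varphi}_P(\omega_{n_m})\to \varphi_P$ uniformly on $\Omega_0$, and uniform convergence of $\psi_{n_m}$ combined with $r_{n_m}^{-1}\to 0$ yields $\hat{\varphi}_P(\omega_{n_m})+r_{n_m}^{-1}\psi_{n_m}\to\varphi_P$ uniformly on $\Omega_0$.

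Next, since $H_0$ holds with $Q=P_n$ for all $n$, Lemma \ref{lemma.Sphi under null} gives $\mathcal{S}(\hat{\varphi}_P(\omega))=0$ for every $\omega$, so that
\[
g_{n_m}(\omega_{n_m})(\psi_{n_m})=\int_{\Xi}A_{n_m}(\xi)\,\mathrm{d}\nu(\xi),\quad A_{n_m}(\xi)=\sup_{(h,g)\in\bar{\mathcal{H}}\times\mathcal{G}}\bigl\{r_{n_m}\hat{\varphi}_P(\omega_{n_m})(\xi,h,g)+\psi_{n_m}(\xi,h,g)\bigr\}.
\]
I will establish pointwise in $\xi$ that $A_{n_m}(\xi)\to\sup_{(h,g)\in\Psi(\xi,\varphi_P)}\psi(\xi,h,g)$. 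The lower bound is immediate: on $\Psi(\xi,\varphi_P)=\Psi_{\bar{\mathcal{H}}\times\mathcal{G}}$ one has $\phi_P=0$, hence $\hat{\varphi}_P=0$, so $A_{n_m}(\xi)\ge \sup_{\Psi(\xi,\varphi_P)}\psi_{n_m}\to\sup_{\Psi(\xi,\varphi_P)}\psi$ by uniform convergence of $\psi_{n_m}$. For the upper bound, fix $\delta>0$ and split the sup into $\Psi(\xi,\varphi_P)^{\delta}$ and its complement (see \eqref{eq.A^delta}). On the complement, continuity of $\phi_P$ (Lemma \ref{lemma.phi is continuous}) and compactness of $\bar{\mathcal{H}}\times\mathcal{G}$ (Lemma \ref{lemma.complete HG}) give some $\eta=\eta(\delta)>0$ with $\phi_P\le -\eta$, whence $r_{n_m}\hat{\varphi}_P\to-\infty$ uniformly there while $\psi_{n_m}$ stays bounded; this contribution is eventually dominated by $A_{n_m}(\xi)\ge -C$. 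Thus for large $m$, $A_{n_m}(\xi)\le \sup_{\Psi(\xi,\varphi_P)^{\delta}}\psi_{n_m}\le \sup_{\Psi(\xi,\varphi_P)^{\delta}}\psi+o(1)$, since $\hat{\varphi}_P\le 0$. Letting $\delta\downarrow 0$, a compactness/contradiction argument using continuity of $\psi$ and closedness of $\Psi(\xi,\varphi_P)$ gives $\sup_{\Psi(\xi,\varphi_P)^{\delta}}\psi\to\sup_{\Psi(\xi,\varphi_P)}\psi$.

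Finally, for dominated convergence, note $|A_{n_m}(\xi)|\le\|\psi_{n_m}\|_{\infty}\le C$ uniformly in $\xi$ and $m$ (upper bound since $\hat{\varphi}_P\le 0$; lower bound since $\Psi(\xi,\varphi_P)\neq\varnothing$ contributes $\psi_{n_m}\ge -C$). Because $\nu$ is finite by Assumption \ref{ass.nu}, the constant $C$ is $\nu$-integrable, so the dominated convergence theorem yields
\[
g_{n_m}(\omega_{n_m})(\psi_{n_m})\to \int_{\Xi}\sup_{(h,g)\in\Psi(\xi,\varphi_P)}\psi(\xi,h,g)\,\mathrm{d}\nu(\xi)=\mathcal{I}\circ\mathcal{S}_{\Psi(\xi,\varphi_P)}(\psi).
\]

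The main obstacle is the upper bound in the pointwise step: one must transfer information from the \emph{random} contact structure of $\hat{\varphi}_P(\omega_{n_m})+r_{n_m}^{-1}\psi_{n_m}$ to the \emph{deterministic} contact set $\Psi(\xi,\varphi_P)$, uniformly in $\xi$. This is exactly what the almost uniform convergence $\hat{\varphi}_P\to\varphi_P$ combined with the uniform upper hemicontinuity of $\Psi$ at $\varphi_P$ (the second claim of Lemma \ref{lemma.upper hemicontinuity of Psi}) delivers; without restriction to $\Omega_0$ the random map $\hat{\varphi}_P$ could prevent any single neighborhood from containing the near-maximizers, so the almost uniform reduction is essential.
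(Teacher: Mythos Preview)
Your argument is correct and arrives at the conclusion by a route that is genuinely more direct than the paper's. The paper proceeds by first building an upper semicontinuous envelope $\tilde{\varphi}_P$ of $\hat{\varphi}_P$ (so that the argmax set $\Psi(\xi,\tilde{\varphi}_P+r_n^{-1}\psi)$ is guaranteed nonempty), then invokes the uniform upper hemicontinuity statement of Lemma~\ref{lemma.upper hemicontinuity of Psi} to trap these argmax sets inside $\Psi(\xi,\varphi_P)^{\delta_n}$ with $\delta_n\downarrow 0$, and finally obtains \emph{uniform} convergence of $r_n\mathcal{S}(\hat{\varphi}_P+r_n^{-1}\psi_n)$ to $\mathcal{S}_{\Psi(\xi,\varphi_P)}(\psi)$ in $\ell^\infty(\Xi)$ before integrating. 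You instead split the supremum directly into $\Psi(\xi,\varphi_P)^{\delta}$ and its complement, exploit the strict negativity $\phi_P\le-\eta(\delta)$ on the complement (via compactness of $\bar{\mathcal{H}}\times\mathcal{G}$ and continuity of $\phi_P$) together with the uniform bound on $\mathcal{M}(\hat{\sigma}_{P_n})$ on $\Omega_0$ to kill that piece, and use $\hat{\varphi}_P\le 0$ on the remaining piece. This yields pointwise convergence of $A_{n_m}(\xi)$, and your uniform bound $|A_{n_m}|\le\sup_m\|\psi_{n_m}\|_\infty<\infty$ together with $\nu(\Xi)<\infty$ lets the dominated convergence theorem finish the job.

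What your approach buys is simplicity: you avoid the $\tilde{\varphi}_P$ construction entirely and never need to locate near-maximizers of the perturbed map, only to rule out contributions far from the contact set. What the paper's approach buys is a slightly stronger intermediate statement (uniform convergence in $\xi$), though only the integrated limit is actually used downstream in Theorem~\ref{thm.weak convergence SK}. One small expository point: your closing paragraph cites Lemma~\ref{lemma.upper hemicontinuity of Psi} as the mechanism behind the upper bound, but in fact your body argument is self-contained and does not use that lemma; the compactness/continuity step $\sup_{\Psi^\delta}\psi\to\sup_{\Psi}\psi$ as $\delta\downarrow 0$ is a direct consequence of uniform continuity of $\psi$ on the compact domain, not of upper hemicontinuity of the correspondence $\Psi$.
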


\begin{proof}[Proof of Lemma \ref{lemma.random differentiability}]
	For simplicity of notation, we replace $n_m$ with $n$. Note that all the following results hold for every subsequence indexed by $n_m$.
	By Lemma \ref{lemma.complete HG}, $\mathcal{\bar{H}}\times\mathcal{G}$ is
	compact under $\rho_{P}$. By Lemma \ref{lemma.almost uniform convergence phi sigma}, we have $\hat{\sigma}_{P_n}\rightarrow\sigma_P$ almost uniformly. Then by construction,  $\hat{\varphi}_{P}\rightarrow\varphi_{P}$ almost uniformly, where $\hat{\varphi}_{P}$ is defined in \eqref{eq.varphi_P_hat}.
By Lemma \ref{lemma.Sphi under null}, $\mathcal{S}(\varphi_{P})=0$ and  $\mathcal{S}(\hat{\varphi}_{P})=0$ for all $\omega\in\Omega$. For every $\psi\in C\left(  \Xi\times\mathcal{\bar{H}}\times\mathcal{G}\right)  $, since  $\hat{\varphi}_{P}\left(  \xi,\cdot,\cdot\right)
	+r_{n}^{-1}\psi\left(  \xi,\cdot,\cdot\right)  $ may not be continuous on
	$\mathcal{\bar{H}}\times\mathcal{G}$, $\Psi\left(  \xi,\hat{\varphi}_{P}
	+r_{n}^{-1}\psi\right)  $ may be empty. Here, we construct a
	modified version of $\hat{\varphi}_{P}$, denoted by $\tilde{\varphi}_{P}$, such that
	
	\begin{enumerate}
		[label=(\roman*)]
		
		\item $\tilde{\varphi}_{P}\left(  \xi,\cdot,\cdot\right)  $ is upper semicontinuous for every $\omega\in\Omega$, every $n$, and every $\xi\in\Xi$;
		
		\item $\sup_{\left(  h,g\right)  \in\mathcal{\bar{H}}\times\mathcal{G}}
		\hat{\varphi}_{P}\left(  \xi,h,g\right)  =\sup_{\left(  h,g\right)  \in
			\mathcal{\bar{H}}\times\mathcal{G}}\tilde{\varphi}_{P}\left(  \xi,h,g\right)  $ for every $\omega\in\Omega$, every $n$, and every $\xi\in\Xi$;
		
		\item $\sup_{\left(  h,g\right)  \in\mathcal{\bar{H}}\times\mathcal{G}}\left(
		\hat{\varphi}_{P}+r_{n}^{-1}\psi\right)  \left(  \xi,h,g\right)  =\sup_{\left(
			h,g\right)  \in\mathcal{\bar{H}}\times\mathcal{G}}\left(  \tilde{\varphi}_{P}+r_{n}^{-1}\psi\right)  \left(  \xi,h,g\right)  $ for every function $\psi\in C\left(  \Xi\times\mathcal{\bar{H}}\times\mathcal{G}\right)  $, every $\omega\in\Omega$, every $n$, and every $\xi\in\Xi$;
		
		\item for every $\varepsilon>0$ there is a measurable set $A\subset\Omega$ with $\mathbb{P}(A)\ge 1-\varepsilon$ such that for all $\varphi\in \ell^{\infty}\left(  \Xi\times\mathcal{\bar{H}}\times\mathcal{G}\right)  $, $\tilde{\varphi}_{P}+r_{n}^{-1}\varphi\rightarrow\varphi_{P}$ uniformly on $A$.
	\end{enumerate}
	Specifically, for all $\omega\in\Omega$, all $\left(  \xi, h,g\right)  \in\Xi\times\mathcal{\bar{H}\times
		G}$, and all $n$, we define $\tilde{\varphi}_{P}\left(  \xi,h,g\right) $ by
	\begin{align}\label{def.phi tilde}
	\tilde{\varphi}_{P}\left(  \xi,h,g\right)  =\lim_{\delta\downarrow0} \sup_{(h^{\prime},g^{\prime})\in B_{\delta}(h,g)} \hat{\varphi}_P(\xi,h^{\prime},g^{\prime})   ,
	\end{align}
	where $B_{\delta}(h,g)$ is an open ball in $\bar{\mathcal{H}}\times\mathcal{ G}$ under $\rho_P$ with center $(h,g)$ and radius $\delta$.
	
	Fix $\omega\in\Omega$, $n$, and $\xi\in\Xi$. \textbf{First},  we prove (i), that is, $\tilde{\varphi}_{P}(\xi,\cdot,\cdot)$ is upper semicontinuous at every $(h,g)\in\bar{\mathcal{H}}\times\mathcal{G}$. Fix $(h,g)\in\bar{\mathcal{H}}\times\mathcal{G}$. By
	\eqref{def.phi tilde}, for each $\varepsilon>0$, there is a $\delta
	_{\varepsilon}>0$ such that
	\begin{align}\label{ineq.phihat}
	\hat{\varphi}_{P}\left(  \xi,h^{\prime},g^{\prime}\right)  \le \tilde{\varphi}_{P}\left(
	\xi,h,g\right)  +\frac{\varepsilon}{2}%
	\end{align}
	for all $\left(  h^{\prime},g^{\prime}\right)  \in B_{\delta_{\varepsilon}%
	}\left(  h,g\right)  $, where $B_{\delta
		_{\varepsilon}}\left(  h,g\right)  $ denotes the open ball in $\bar{\mathcal{H}}\times\mathcal{G}$ under $\rho_{P}$ with center $\left(
	h,g\right)  $ and radius $\delta_{\varepsilon}$. 
	Fix $\left(  h_{1},g_{1}\right)  \in B_{\delta_{\varepsilon}%
		/2}\left(  h,g\right)  $. By definition,
	there is a $\delta_{2}>0$ such that for all $\delta^{\prime}$ with $0<\delta^{\prime}\le\delta_{2}$,
	\[
	\tilde{\varphi}_{P}\left(  \xi,h_{1},g_{1}\right)  \leq\sup_{\left(  h_{2},g_{2}\right)  \in
		B_{\delta^{\prime}}\left(  h_{1},g_{1}\right)  }  \hat{\varphi
	}_{P}\left(  \xi,h_{2},g_{2}\right) 
	  +\frac{\varepsilon}{2}.
	\]
	Let $\delta=\min\left\{  {\delta_{\varepsilon}}/{2},\delta_{2}\right\}
	$. Then for this $\left(  h_{1},g_{1}\right)$, we have that
	\[
	\tilde{\varphi}_{P}\left(  \xi,h_{1},g_{1}\right)  \leq\sup_{\left(  h_{2},g_{2}\right)  \in B_{\delta
		}\left(  h_{1},g_{1}\right) }  \hat{\varphi
	}_{P}\left(  \xi,h_{2},g_{2}\right)  
	+\frac{\varepsilon}{2}.
	\]
	Notice that if $\left(  h_{2},g_{2}\right)  \in B_{\delta}\left(  h_{1}%
	,g_{1}\right)  $, then $\left(  h_{2},g_{2}\right)  \in B_{\delta
		_{\varepsilon}}\left(  h,g\right)  $, and hence
	$	\hat{\varphi}_{P}\left(  \xi,h_{2},g_{2}\right)  \leq\tilde{\varphi}_{P}\left(
	\xi,h,g\right)  +{\varepsilon}/{2}$.
	This implies that
	$
	\sup_{\left(  h_{2}
		,g_{2}\right)  \in B_{\delta}\left(  h_{1},g_{1}\right) }  \hat{\varphi}_{P}\left(  \xi,h_{2},g_{2}\right)   \leq\tilde{\varphi}_P\left(\xi,  h,g\right)
	+{\varepsilon}/{2},
	$
	and hence
	$	\tilde{\varphi}_{P}\left(  \xi,h_{1},g_{1}\right)  \leq\tilde{\varphi}_{P}\left(
	\xi,h,g\right)  +\varepsilon$. This shows that for each
	$\varepsilon>0$, there is a $\delta_{\varepsilon}>0$ such that for all $\left(  h_{1}%
	,g_{1}\right)  \in B_{\delta_{\varepsilon}/2}\left(  h,g\right)  $,
	$	\tilde{\varphi}_{P}\left(  \xi,h_{1},g_{1}\right)  \leq\tilde{\varphi}_{P}\left(
	\xi,h,g\right)  +\varepsilon$.
	\textbf{Second}, we prove (ii), that is, 
	\begin{align}\label{ineq.suphatphi suptildephi}
	\sup_{\left(  h,g\right)  \in\mathcal{\bar{H}}\times\mathcal{G}}\hat{\varphi
	}_{P}\left(  \xi,h,g\right)  =\sup_{\left(  h,g\right)  \in\mathcal{\bar{H}}
		\times\mathcal{G}}\tilde{\varphi}_{P}\left(  \xi,h,g\right).
	\end{align}
    By the definition of $\tilde{\varphi}_P$, we have $\hat{\varphi
    }_{P}\left(  \xi,h,g\right)  \leq\tilde{\varphi}_{P}\left(  \xi,h,g\right)$ for all $(h,g)\in\bar{\mathcal{H}}\times\mathcal{G}$, and hence $\sup_{\left(  h,g\right)  \in\mathcal{\bar{H}}\times\mathcal{G}}\hat{\varphi
	}_{P}\left(  \xi,h,g\right)  \leq\sup_{\left(  h,g\right)  \in\mathcal{\bar{H}		}\times\mathcal{G}}\tilde{\varphi}_{P}\left(  \xi,h,g\right)$.
	Also, by the definition of $\tilde{\varphi}_{P}$,
	$	\tilde{\varphi}_{P}\left(  \xi,h,g\right)  \leq\sup_{\left(  h^{\prime},g^{\prime
		}\right)  \in\mathcal{\bar{H}}\times\mathcal{G}}\hat{\varphi}_{P}\left(
	\xi,h^{\prime},g^{\prime}\right)$
	for all $(h,g)$. Thus 
	$
	\sup_{\left(  h,g\right)  \in\mathcal{\bar{H}}
		\times\mathcal{G}}\tilde{\varphi}_{P}\left(  \xi,h,g\right)  \leq\sup_{\left(
		h,g\right)  \in\mathcal{\bar{H}}\times\mathcal{G}}
	\hat{\varphi}_{P}\left(  \xi,h,g\right)
	$,
	 and \eqref{ineq.suphatphi suptildephi} holds. \textbf{Similarly},
    by the definition of $\tilde{\varphi}_P$, we have that
	$
	\hat{\varphi}_{P}\left(  \xi,h,g\right)  +r_{n}^{-1}\psi\left(  \xi,h,g\right)
	\leq\tilde{\varphi}_{P}\left(  \xi,h,g\right)  +r_{n}^{-1}\psi\left(
	\xi,h,g\right) $
	for all $(h,g)\in\bar{\mathcal{H}}\times\mathcal{G}$, and hence
	\[
	\sup_{\left(  h,g\right)  \in\mathcal{\bar{H}}\times\mathcal{G}}\{\hat
	{\varphi}_{P}\left(  \xi,h,g\right)  +r_{n}^{-1}\psi\left(  \xi,h,g\right)
	\}\leq\sup_{\left(  h,g\right)  \in\mathcal{\bar{H}}\times\mathcal{G}}
	\{\tilde{\varphi}_{P}\left(  \xi,h,g\right)  +r_{n}^{-1}\psi\left(  \xi
	,h,g\right)  \}.
	\]
	Fix $(h,g)\in\bar{\mathcal{H}}\times\mathcal{ G}$. Since $\psi(\xi,\cdot,\cdot)$ is continuous under $\rho_{P}$, for every $\varepsilon>0$ there is a $\bar{\delta}>0$ such that
	\begin{align*}
	\sup_{(h^{\prime},g^{\prime})\in B_{\delta}(h,g)} \{ \hat{\varphi}_P(\xi,h^{\prime},g^{\prime})+r_{n}^{-1}\psi(\xi,h,g)-\varepsilon \}  \le &  \sup_{(h^{\prime},g^{\prime})\in B_{\delta}(h,g)} \{ \hat{\varphi}_P(\xi,h^{\prime},g^{\prime})+r_{n}^{-1}\psi(\xi,h^{\prime},g^{\prime}) \}
	\end{align*}
	for all $\delta\le\bar{\delta}$. By the definition of $\tilde{\varphi}_P$, this implies that
	\begin{align*}
	\tilde{\varphi}_{P}\left(  \xi,h,g\right)+r_{n}^{-1}\psi(\xi,h,g)-\varepsilon \le & \lim_{\delta\downarrow0} \sup_{(h^{\prime},g^{\prime})\in B_{\delta}(h,g)} \{ \hat{\varphi}_P(\xi,h^{\prime},g^{\prime})+r_{n}^{-1}\psi(\xi,h^{\prime},g^{\prime}) \} \\
	\le& \sup_{\left(  h,g\right)  \in\mathcal{\bar{H}}\times\mathcal{G}}\{\hat
	{\varphi}_{P}\left(  \xi,h,g\right)  +r_{n}^{-1}\psi\left(  \xi,h,g\right)
	\}.
	\end{align*}
	Since $\varepsilon$ is arbitrary, we have 
	\begin{align*}
	\tilde{\varphi}_{P}\left(  \xi,h,g\right)+r_{n}^{-1}\psi(\xi,h,g) \le \sup_{\left(  h,g\right)  \in\mathcal{\bar{H}}\times\mathcal{G}}\{\hat
	{\varphi}_{P}\left(  \xi,h,g\right)  +r_{n}^{-1}\psi\left(  \xi,h,g\right)
	\}.
	\end{align*}
	This holds for all $(h,g)\in\bar{\mathcal{H}}\times\mathcal{ G}$, which
	 implies that
	\[
	\sup_{\left(  h,g\right)  \in\mathcal{\bar{H}}\times\mathcal{G}}\{\hat
	{\varphi}_{P}\left(  \xi,h,g\right)  +r_{n}^{-1}\psi\left(  \xi,h,g\right)
	\}\geq\sup_{\left(  h,g\right)  \in\mathcal{\bar{H}}\times\mathcal{G}}%
	\{\tilde{\varphi}_{P}\left(  \xi,h,g\right)  +r_{n}^{-1}\psi\left(  \xi
	,h,g\right)  \}.
	\]
	Thus (iii) is proved.
	
	\textbf{Last}, we prove (iv). Since $\varphi_P(\xi,\cdot,\cdot)$ is continuous, we have that
	\begin{align*}
	&\sup_{\left(  \xi,h,g\right)  \in\Xi\times\mathcal{\bar{H}}\times\mathcal{G}}
	  \left\vert \tilde{\varphi}_{P}(\xi,h,g) + r_n^{-1}\varphi(\xi,h,g)-\varphi_{P}(\xi,h,g)\right\vert\\
	\leq & \sup_{\left(  \xi,h,g\right)  \in\Xi\times\mathcal{\bar{H}}\times
		\mathcal{G}}\left\vert \hat{\varphi}_{P}(\xi,h,g)-\varphi_{P}(\xi,h,g)\right\vert + r_n^{-1}\Vert\varphi\Vert_{\infty}.
	\end{align*}
	(iv) follows from the facts that $\hat{\varphi}_{P}\rightarrow
	\varphi_{P}$ almost uniformly, as mentioned at the beginning of the proof, and $\Vert\varphi\Vert_{\infty}<\infty$. 
	
	Fix $\varepsilon>0$. By property (iv), let $\Omega_0\subset \Omega$ be a measurable set such that $\mathbb{P}\left(  \Omega_{0}\right)  \ge 1-\varepsilon$ and $\tilde{\varphi}_{P}+r_n^{-1}\varphi\rightarrow\varphi_{P}$ uniformly on $\Omega_0$ for all $\varphi\in \ell^{\infty}\left(  \Xi\times\mathcal{\bar{H}}\times\mathcal{G}\right)  $. Let $\psi_n\in\mathbb{D}_n(\omega_{n})$, $\omega_{n}\in\Omega_0$, and $\psi\in C\left(  \Xi\times\mathcal{\bar{H}}\times\mathcal{G}\right)  $ be arbitrary maps such that $\psi_n\to\psi$. By property (i) that we proved above,
	we have that $\Psi\left(  \xi,\tilde{\varphi}_{P}+r_{n}^{-1}\psi\right)
	\neq\varnothing$ for all $\omega\in \Omega_0$, all $n$, and all $\xi\in\Xi$. It is easy to show that because $\psi_{n}\rightarrow\psi$ in
	$\ell^{\infty}\left(  \Xi\times\mathcal{\bar{H}}\times\mathcal{G}\right)  $,
\begin{align*}
	&  \sup_{\xi\in\Xi}\left\vert
	\begin{array}
		[c]{c}%
		\sup_{\left(  h,g\right)  \in\mathcal{\bar{H}}\times\mathcal{G}}\left\{
		\hat{\varphi}_{P}(\omega_{n})\left(  \xi,h,g\right)  +r_{n}^{-1}\psi
		_{n}\left(  \xi,h,g\right)  \right\}  \\
		-\sup_{\left(  h,g\right)  \in\mathcal{\bar{H}}\times\mathcal{G}}\left\{
		\hat{\varphi}_{P}(\omega_{n})\left(  \xi,h,g\right)  +r_{n}^{-1}\psi\left(
		\xi,h,g\right)  \right\}
	\end{array}
	\right\vert \\
	\leq &  \,r_{n}^{-1}\sup_{\left(  \xi,h,g\right)  \in\Xi\times\mathcal{\bar
			{H}}\times\mathcal{G}}\left\vert \psi_{n}(\xi,h,g)-\psi(\xi,h,g)\right\vert
	=o\left(  r_{n}^{-1}\right)  .
\end{align*}
    Since $\tilde{\varphi}_{P}+r_{n}^{-1}\psi$ converges to $\varphi_{P}$ uniformly on $\Omega_0$, by Lemma \ref{lemma.upper hemicontinuity of Psi} there is a sequence $\delta_{n}\downarrow0$ such that
	$
	\Psi\left(  \xi,\tilde{\varphi}_{P}(\omega)+r_{n}^{-1}\psi\right)  \subset\Psi\left(
	\xi,\varphi_{P}\right)  ^{\delta_{n}}
	$
	for all
	$\xi\in\Xi$ and all $\omega\in\Omega_0$. (By Lemma \ref{lemma.upper hemicontinuity of Psi}, $\delta_n$ does not depend on $\xi\in\Xi$ or on $\omega\in\Omega_0$.)	
	Since $\mathcal{S}(\varphi_{P})=0$ by Lemma \ref{lemma.Sphi under null}, we have that for all $\xi\in\Xi$,
	\begin{align}\label{eq.Psi equivalence}
	\Psi\left(  \xi,\varphi_{P}\right)  =\{  \left(  h,g\right)
	\in{\bar{\mathcal{H}}\times\mathcal{G}}:\phi_{P}\left(  h,g\right)  =0\}.
	\end{align}
	By Lemma \ref{lemma.Sphi under null} and the constructions of $\hat{\varphi}_P$ and $\tilde{\varphi}_P$, we also have that for all $\omega$, $\hat{\varphi}_{P} \le0$ and $\tilde{\varphi}_{P} \le0$ on
	$\Xi\times\bar{\mathcal{H}}\times\mathcal{G}$, and $\hat{\varphi}_{P}\left(  \xi,\cdot,\cdot\right) = 0$ on $\Psi\left(  \xi,\varphi_{P}\right) $. 
	Thus for every $\xi\in\Xi$,
	\begin{align*}
	 &\sup_{\left(  h,g\right)  \in\mathcal{\bar{H}%
		}\times\mathcal{G}}\left\{  \hat{\varphi}_{P}(\omega_{n})\left(  \xi,h,g\right)  +r_{n}
	^{-1}\psi\left(  \xi,h,g\right)  \right\}\\
	\ge & \sup_{\left(  h,g\right)  \in \Psi\left(  \xi,\varphi_{P}\right) }\left\{  \hat{\varphi}_{P}(\omega_{n})\left(  \xi,h,g\right)  +r_{n}%
	^{-1}\psi\left(  \xi,h,g\right)  \right\}=\sup_{\left(  h,g\right)  \in
		\Psi\left(  \xi,\varphi_{P}\right)  } r_{n}^{-1}\psi\left(  \xi,h,g\right).
	\end{align*}
	By property (iii) of $\tilde{\varphi}_P$, together with the results shown above, we have that
	\begin{align*}
		&  \sup_{\xi\in\Xi}\left\vert \sup_{\left(  h,g\right)  \in\mathcal{\bar{H}%
			}\times\mathcal{G}}\left\{  \hat{\varphi}_{P}(\omega_{n})\left(
		\xi,h,g\right)  +r_{n}^{-1}\psi\left(  \xi,h,g\right)  \right\}
		-\sup_{\left(  h,g\right)  \in\Psi\left(  \xi,\varphi_{P}\right)  }r_{n}%
		^{-1}\psi\left(  \xi,h,g\right)  \right\vert \\
		= &  \sup_{\xi\in\Xi}\left\{
		\begin{array}
			[c]{c}%
			\sup_{\left(  h,g\right)  \in\Psi\left(  \xi,\tilde{\varphi}_{P}(\omega
				_{n})+r_{n}^{-1}\psi\right)  }\left\{  \tilde{\varphi}_{P}(\omega_{n})\left(
			\xi,h,g\right)  +r_{n}^{-1}\psi\left(  \xi,h,g\right)  \right\}  \\
			-\sup_{\left(  h,g\right)  \in\Psi\left(  \xi,\varphi_{P}\right)  }r_{n}%
			^{-1}\psi\left(  \xi,h,g\right)
		\end{array}
		\right\}  \\
		\leq &  \sup_{\xi\in\Xi}\left\{  \sup_{\left(  h,g\right)  \in\Psi\left(
			\xi,\varphi_{P}\right)  ^{\delta_{n}}}\left\{  \tilde{\varphi}_{P}(\omega
		_{n})\left(  \xi,h,g\right)  +r_{n}^{-1}\psi\left(  \xi,h,g\right)  \right\}
		-\sup_{\left(  h,g\right)  \in\Psi\left(  \xi,\varphi_{P}\right)  }r_{n}%
		^{-1}\psi\left(  \xi,h,g\right)  \right\}  .
	\end{align*}
	Then by the definition of $\Psi(\xi,\varphi_P)^{\delta_n}$,
	\begin{align*}
	& \sup_{\xi\in\Xi}\left\{  \sup_{\left(  h,g\right)  \in\Psi\left(
		\xi,\varphi_{P}\right)  ^{\delta_{n}}}\left\{  \tilde{\varphi}_P(\omega_{n})\left(
	\xi,h,g\right)  +r_{n}^{-1}\psi\left(  \xi,h,g\right)  \right\}
	-\sup_{\left(  h,g\right)  \in\Psi\left(  \xi,\varphi_{P}\right)  }r_{n}%
	^{-1}\psi\left(  \xi,h,g\right)  \right\}  \\
	\leq &  \sup_{\xi\in\Xi}\left\{\sup_{\rho_{P}\left(  \left(  h_{1},g_{1}\right)
		,\left(  h_{2},g_{2}\right)  \right)  \leq\delta_{n}}r_{n}^{-1}\left\vert
	\psi\left(  \xi,h_{1},g_{1}\right)  -\psi\left(  \xi,h_{2},g_{2}\right)
	\right\vert\right\} =o(  r_{n}^{-1}).
	\end{align*}
	Finally, combining all the results above, we can conclude that
	\begin{align*}
	\sup_{\xi\in\Xi}\left\vert
	\mathcal{S}\left(  \hat
	{\varphi}_P(\omega_{n})+r_{n}^{-1}\psi_{n}\right)  \left(  \xi\right) 
	-r_{n}^{-1}\sup_{\left(  h,g\right)  \in {\Psi}\left(  \xi,\varphi_{P}\right)  }\psi\left(  \xi,h,g\right)
	\right\vert 
	= o\left(  r_{n}^{-1}\right)  .
	\end{align*}
	This implies that
	\begin{align*}
	&  \left\vert g_{n}(\omega_{n})  \left(  \psi
	_{n}\right)    -\int_{\Xi}\sup_{\left(  h,g\right)
		\in\Psi\left(  \xi,\varphi_{P}\right)  }\psi\left(  \xi,h,g\right)
	\,\mathrm{d}\nu\left(  \xi\right)  \right\vert \\
	\leq & \int_{\Xi}\left\vert r_{n}  \mathcal{S}\left(  \hat
	{\varphi}_P(\omega_{n})+r_{n}^{-1}\psi_{n}\right)  \left(  \xi\right)    -\sup_{\left(  h,g\right)
		\in\Psi\left(  \xi,\varphi_{P}\right)  }\psi\left(  \xi,h,g\right)
	\right\vert \,\mathrm{d}\nu\left(  \xi\right)  =o\left(  1\right)  .
	\end{align*} 
\end{proof} 

\begin{proof}[Proof of Theorem \ref{thm.weak convergence SK}] 
	By \eqref{eq.convergence phi}, $\sqrt{n}(\hat{\phi}_{P_n}-\phi_P)\leadsto \mathcal{L}^{\prime}_P(\mathbb{ G}_P+Q_0)$, where $\mathcal{L}^{\prime}_P(\mathbb{ G}_P+Q_0)$ is tight as shown in the proof of Lemma \ref{lemma.weak convergence phi_K}. By Lemma \ref{lemma.almost uniform convergence phi sigma}, $\mathcal{M}(\hat{\sigma}_{P_n})\to \mathcal{M}({\sigma}_P)$ almost uniformly, and hence this convergence is also in outer probability by Lemma 1.9.3(ii) of \citet{van1996weak}. By Lemma 1.10.2(iii) of \citet{van1996weak}, $\mathcal{M}(\hat{\sigma}_{P_n})\leadsto \mathcal{M}({\sigma}_P)$. By Example 1.4.7 (Slutsky's lemma) of \citet{van1996weak}, we have that  $(\sqrt{n}(\hat{\phi}_{P_n}-\phi_P),\mathcal{M}(\hat{\sigma}_{P_n}))\leadsto(\mathcal{L}^{\prime}_P(\mathbb{ G}_P+Q_0),\mathcal{M}({\sigma}_P))$. Let $\ell^{\infty}(\Xi\times\bar{\mathcal{H}}\times\mathcal{G})^+=\{\psi\in\ell^{\infty}(\Xi\times\bar{\mathcal{H}}\times\mathcal{G}):\Vert 1/\psi \Vert_{\infty}<\infty\}$. Define a map $f:\ell^{\infty}(\bar{\mathcal{H}}\times\mathcal{G})\times\ell^{\infty}(\Xi\times\bar{\mathcal{H}}\times\mathcal{G})^+\to \ell^{\infty}(\Xi\times\bar{\mathcal{H}}\times\mathcal{G})$ by $f(\varphi,\psi)=\varphi/\psi$ for all $(\varphi,\psi)\in\ell^{\infty}(\bar{\mathcal{H}}\times\mathcal{G})\times\ell^{\infty}(\Xi\times\bar{\mathcal{H}}\times\mathcal{G})^+$. Clearly, $(\mathcal{L}^{\prime}_P(\mathbb{ G}_P+Q_0),\mathcal{M}({\sigma}_P))$ takes its values in $\ell^{\infty}(\bar{\mathcal{H}}\times\mathcal{G})\times\ell^{\infty}(\Xi\times\bar{\mathcal{H}}\times\mathcal{G})^+$. It is easy to show that $f$ is continuous under the metric $\Vert (\varphi,\psi)-(\varphi^{\prime},\psi^{\prime})\Vert=\Vert \varphi-\varphi^{\prime}\Vert_{\infty}+\Vert \psi-\psi^{\prime}\Vert_{\infty}$. By Theorem 1.3.6 (continuous mapping) of \citet{van1996weak}, 
	\begin{align*}
		f(\sqrt{n}(\hat{\phi}_{P_n}-\phi_P),\mathcal{M}(\hat{\sigma}_{P_n}))=\frac{\sqrt{n}(\hat{\phi}_{P_n}-\phi_P)}{\mathcal{M}(\hat{\sigma}_{P_n})}\leadsto \frac{\mathcal{L}^{\prime}_P(\mathbb{ G}_P+Q_0)}{\mathcal{M}({\sigma}_{P})}.
	\end{align*}
	By Lemma \ref{lemma.Sphi under null}, we have that $\mathcal{I}\circ\mathcal{S}\left(  \phi_P/\mathcal{M}\left(  \hat{\sigma}_{P_n}  \right) \right)=0$. Then by Theorem \ref{lemma.random delta method}(ii) and Lemma \ref{lemma.random differentiability}, together with the continuity of $\mathcal{I}\circ\mathcal{S}_{\Psi\left(  \xi,\varphi_{P}\right)}$ under $\Vert\cdot\Vert_{\infty}$,  we have
	\begin{align}\label{eq.weak convergence IS1}
		\sqrt{n}\left\{  \mathcal{I}\circ{\mathcal{S}}\left(  \frac{\hat{\phi}_{P_n}
		}{\mathcal{M}\left(  {\hat{\sigma}_{P_n}}\right)  }\right)  -\mathcal{I}\circ
		{\mathcal{S}}\left(  \frac{{\phi}_{P}}{\mathcal{M}\left(  {\hat{\sigma}_{P_n}
			}\right)  }\right)  \right\}  \leadsto\mathcal{I}\circ\mathcal{S}_{\Psi\left(  \xi,\varphi_{P}\right)}\left(  \frac{\mathcal{L}^{\prime}_P(\mathbb{ G}_P+Q_0)}
		{\mathcal{M}\left(  {\sigma}_P\right)    }\right).
	\end{align}
	By Lemma \ref{lemma.almost uniform convergence phi sigma}, $T_n/n\to\Lambda(P)$ almost uniformly. Then by Lemmas 1.9.3(ii) and 1.10.2(iii), Example 1.4.7 (Slutsky's lemma), and Theorem 1.3.6 (continuous mapping) of \citet{van1996weak}, together with \eqref{eq.weak convergence IS1}, we have that 
	\begin{align*}
		\sqrt{\frac{T_n}{n}}\cdot\sqrt{n}\left\{  \mathcal{I}\circ{\mathcal{S}}\left(  \frac{\hat{\phi}_{P_n}
		}{\mathcal{M}\left(  {\hat{\sigma}_{P_n}}\right)  }\right)    \right\}  \leadsto\mathcal{I}\circ\mathcal{S}_{\Psi\left(  \xi,\varphi_{P}\right)}\left( \frac{\mathbb{G}}{\mathcal{M}(\sigma_P)}  \right),
	\end{align*}
	where $\mathbb{G}=\sqrt{\Lambda(P)}\mathcal{L}^{\prime}_P(\mathbb{ G}_P+Q_0)$ as in Lemma \ref{lemma.weak convergence phi_K}.
	By Lemma \ref{lemma.Sphi under null}, we have that $\Psi(\xi,\varphi_P)=\Psi_{\bar{\mathcal{H}}\times\mathcal{G}}$ defined by \eqref{eq.Psi_HG} for all $\xi\in\Xi$ under the assumptions.
	
	If $\mathcal{D}$ is a finite set with $\mathcal{D}=\{d_1,\ldots,d_J\}$, then under null  $\mathcal{I}\circ\mathcal{S}_{\Psi_{\bar{\mathcal{H}}\times\mathcal{G}}}\left(  {\mathbb{G}}/{\mathcal{M}(\sigma_{P})}\right)=\mathcal{I}\circ\mathcal{S}_{\Psi_{{\mathcal{H}}
	\times\mathcal{G}}}\left(  {\mathbb{G}}/{\mathcal{M}(\sigma_{P})}\right)$ almost surely, because it can be shown that in this special case $\Psi_{\bar{\mathcal{H}}\times\mathcal{G}}$ is equal to the closure of $\Psi_{{\mathcal{H}}	\times\mathcal{G}}$ in $\bar{\mathcal{H}}\times\mathcal{ G}$ under $\rho_P$ and $\mathbb{G}/\mathcal{M}(\sigma_P)$ is continuous under $\rho_P$ almost surely for every fixed $\xi$. We summarize this in the following. 
	
	By Lemma \ref{lemma.Sphi under null}, $\phi_{P}\left(  h,g\right)  \le 0$ for all $\left(  h,g\right)  \in\mathcal{\bar{H}}\times\mathcal{G}$, and there exists
	$(  h^0,g^0)  \in\mathcal{\bar{H}}\times\mathcal{G}$ with $g^0=(
	g^0_{1},g^0_{2})  $ such that $\phi_{P}(  h^0,g^0)  =0$. \textbf{First}, we show that if
	$h^0=\left(  -1\right)  ^{d}\cdot1_{A\times\left\{  d\right\}  \times\mathbb{R}}$, where $d\in\left\{  0,1\right\}  $ and $A$ is a half-closed interval or an
	open interval, then for every closed interval $B$ such that $B\subset A$, we
	have that $\phi_{P}(\tilde{h},g^0)=0$ with $\tilde{h}=\left(  -1\right)  ^{d}
	\cdot 1_{B\times\left\{  d\right\}  \times\mathbb{R}}$. 
	Suppose, by way of contradiction, that $A=(a_{1},a_{2})$ and $B=\left[  b_{1},b_{2}\right]  $ with
	$a_{1}<b_{1}$, $a_{2}>b_{2}$, and $\phi_{P}(\tilde{h},g^0)<0$ with
	$\tilde{h}=\left(  -1\right)  ^{d}\cdot1_{B\times\left\{  d\right\}  \times
		\mathbb{R}}$. Let $h_{L}=\left(  -1\right)  ^{d}\cdot 1_{(a_{1},b_{1})\times\left\{
		d\right\}  \times\mathbb{R}}$ and $h_{R}=\left(  -1\right)  ^{d}\cdot 1_{(b_{2},a_{2})\times\left\{
		d\right\}  \times\mathbb{R}}$. Then by the definition of $\phi_{P}$,
	\begin{align*}
		\phi_{P}(h^0,g^0)  & =\frac{P(  h^0\cdot g_{2}^0)  }{P(  g_{2}^0)
		}-\frac{P(  h^0\cdot g_{1}^0)  }{P(  g_{1}^0)  }%
		=\frac{P((h_{L}+\tilde{h}+h_{R})\cdot g_{2}^0)}{P(  g_{2}^0)  }%
		-\frac{P((h_{L}+\tilde{h}+h_{R})\cdot g_{1}^0)}{P(  g_{1}^0)  }\\
		& =\phi_{P}(\tilde{h},g^0)+\phi_{P}(h_{L},g^0)+\phi_{P}(h_{R},g^0).
	\end{align*}
	Since $\phi_{P}(h^0,g^0)=0$ but  $\phi_{P}(\tilde{h},g^0)<0$, we have $\phi_{P}
	(h_{L},g^0)+\phi_{P}(h_{R},g^0)>0$. This implies that either $\phi_{P}(h_{L},g^0)>0$ or
	$\phi_{P}(h_{R},g^0)>0$. However, since $(h_{L},g^0),(h_{R},g^0)\in\mathcal{\bar{H}}\times\mathcal{G}$, Lemma \ref{lemma.Sphi under null} shows that both $\phi_{P}(h_{L},g^0)$ and
	$\phi_{P}(h_{R},g^0)$ are nonpositive. This is a
	contradiction. When $A$ is a half-closed interval, we can show analogously that the
	claim is true. 
	\textbf{Second}, we show that if $h^0=1_{
		\mathbb{R}\times C\times\mathbb{R}}$ with $C=\left(  -\infty,c\right)  $ for some $c\in\mathbb{R}$, then there is a sequence of sets $C_k=(-\infty,c_k]$ with $c_k\uparrow c$ such that $\phi_{P}
	(h^k,g^0)=0$ with $h^k=1_{\mathbb{R}\times C_k\times\mathbb{R}}$. 
	By assumption, $\mathcal{D}$ is a finite set. Under Assumption \ref{ass.independent data}, $D$ is a discrete random variable with $D\in \mathcal{D}$ under $P_{n}$. Then $D\in\mathcal{D}$ under $P$ by Lemma \ref{lemma.almost uniform convergence phi sigma}, and the claim holds.
	
	The above results imply that $\Psi_{\bar{\mathcal{H}}\times\mathcal{G}}\subset \overline{\Psi_{{\mathcal{H}}	\times\mathcal{G}}}$, where  $\overline{\Psi_{{\mathcal{H}}	\times\mathcal{G}}}$ is the closure of ${\Psi_{{\mathcal{H}}	\times\mathcal{G}}}$ in $\bar{\mathcal{H}}	\times\mathcal{G}$ under $\rho_P$. By \eqref{eq.Psi_HG} and Lemma \ref{lemma.phi is continuous},  $\Psi_{\bar{\mathcal{H}}\times\mathcal{G}}= \overline{\Psi_{{\mathcal{H}}	\times\mathcal{G}}}$. By Lemma \ref{lemma.weak convergence phi_K}, $\mathbb{G}$ almost surely has a continuous path under $\rho_P$. By Lemma \ref{lemma.phi is continuous}, $\sigma_P$ is continuous under $\rho_P$. The result follows from the continuity of $\mathbb{G}/\mathcal{M}(\sigma_P)$ under $\rho_P$ for every fixed $\xi\in\Xi$.
\end{proof}

\begin{remark}\label{remark.application of extended delta method}
	If the $H_{0}$ in \eqref{eq.null 1} is true with $Q=P_n$ for all $n$, we have that $\mathcal{S}\left({\phi}_{P}/\mathcal{M}( {\sigma}_{P})\right)=0$ (see Lemma \ref{lemma.Sphi under null}). Thus it suffices to find the asymptotic distribution of 
	\begin{align}\label{eq.asymptotic distribution S1}
		\sqrt{n}\mathcal{I}\circ \mathcal{S}_{\mathcal{H}\times\mathcal{G}}\left(  \frac{\hat{\phi}_{P_n}}{\mathcal{M}(
			\hat{\sigma}_{P_n})}\right)=\sqrt{n} \left\{\mathcal{I}\circ \mathcal{S}\left(  \frac{\hat{\phi}_{P_n}}{\mathcal{M}(
			\hat{\sigma}_{P_n})}\right) -\mathcal{I}\circ\mathcal{S}\left(  \frac{{\phi}_P}{\mathcal{M}
			({\sigma}_P)}\right) \right\}.
	\end{align} 
	If we can find the asymptotic distribution of $\sqrt{n} (   \hat{\phi}_{P_n} / \mathcal{M}(\hat{\sigma}_{P_n}) -  \phi_P /\mathcal{M}( {\sigma}_P) )$ and the ``derivative'' of $\mathcal{I}\circ\mathcal{S}$ (see, for example, the definition of Hadamard directional derivative in \citet{shapiro1990concepts} and  \citet{fang2014inference}), then by the delta method of \citet{fang2014inference}, it is straightforward to obtain the asymptotic distribution of \eqref{eq.asymptotic distribution S1}. However, establishing the limiting distribution of   $\sqrt{n} (   \hat{\phi}_{P_n} / \mathcal{M}(\hat{\sigma}_{P_n}) -  \phi_P /\mathcal{M}( {\sigma}_P) )$ is technically tricky. By the constructions of $\phi_P$ and $\sigma_P$, we can view $\phi_P /\mathcal{M}( {\sigma}_P) $ as a map of $P$. Specifically, let $\mathcal{V}_0=\{v:v=h\cdot g_l \text{ or } v=h^2\cdot g_l \text{  for some } h\in\bar{\mathcal{H}} \text{ and } g_l\in\mathcal{ G}_K\}$ and $\mathbb{D}_{Q}=\{Q\in\ell^{\infty}({\mathcal{V}_0}\cup \mathcal{ G}_K ): Q(h\cdot g_l)/Q(g_l) \text{  and } Q(h^2\cdot g_l)/Q(g_l) \text{ exist for all }  h\in\bar{\mathcal{H}} \text{ and } g_l\in\mathcal{ G}_K\}$. Then we extend the definitions of $\phi_Q$ and $\sigma_Q$ for all $Q\in\mathcal{P}$, that is, the $\phi_Q$ defined in \eqref{eq.phi_Q} and the $\sigma_Q$ defined in \eqref{eq.stat variance multi}, to all $Q\in\mathbb{ D}_Q$. Clearly, $\mathcal{P}\subset\mathbb{ D}_Q$ by \eqref{eq.0timesinfinity}. Define a map $\mathcal{T}:\mathbb{D}_{Q}\to\ell^{\infty}(\Xi\times\bar{\mathcal{H}}\times\mathcal{G})$ by
	\begin{align*}
		\mathcal{T}(Q)(\xi,h,g)=\frac{\phi_Q(h,g)} {\mathcal{M}(\sigma_Q)(\xi,h,g)}
	\end{align*}
	for all $Q\in\mathbb{D}_Q$ and $(\xi,h,g)\in\Xi\times\bar{\mathcal{H}}\times\mathcal{G}$. Now we have that $\mathcal{T}(P)=\phi_P /\mathcal{M}( {\sigma}_P)$ and $\mathcal{T}(\hat{P}_n)=\hat{\phi}_{P_n} /\mathcal{M}( \hat{\sigma}_{P_n})$. Suppose we have weak convergence of $\sqrt{n}(\hat{P}_n - P)$ in some suitable space. Then if $\mathcal{T}$ is Hadamard (directionally) differentiable, by delta method we can establish weak convergence of \begin{align}\label{eq.asymptotic distribution T}
		\sqrt{n} \left(   \frac{\hat{\phi}_{P_n}}{\mathcal{M}( \hat{\sigma}_{P_n})} -  \frac{{\phi}_P}{\mathcal{M}({\sigma}_P)} \right)= \sqrt{n}\left(\mathcal{T}(\hat{P}_n)-\mathcal{T}(P)\right).
	\end{align}
	Unfortunately, however, $\mathcal{T}$ is nondifferentiable, because of the nondifferentiability of the $\mathcal{M}$ defined in \eqref{eq.M} (to the best of our knowledge, the directional derivative of $\mathcal{M}$ may not exist even when $\Xi$ is a singleton), and hence it is not straightforward to show the convergence of $\sqrt{n}(
	\mathcal{T}(  \hat{P}_{n})  -\mathcal{T}\left(  P\right)  )  $. 
	The random denominator problem also arises in other testing issues. See, for example, \citet{bugni2017inference}. If the random denominator does not need to be bounded away from $0$ in the test statistic as in \citet{bugni2017inference}, we will not have the undifferentiability issue caused by $\mathcal{M}$.
	Inspired by \citet{kitagawa2015test},
	with the asymptotic distribution of $\sqrt{n} (   \hat{\phi}_{P_n} / \mathcal{M}(\hat{\sigma}_{P_n}) -  \phi_P /\mathcal{M}( \hat{\sigma}_{P_n}) )$ (which can be obtained by using Slutsky's theorem), we can instead establish the asymptotic distribution of
	\begin{align}\label{eq.asymptotic distribution S}
		\sqrt{n} \left\{\mathcal{I}\circ \mathcal{S}\left(  \frac{\hat{\phi}_{P_n}}{\mathcal{M}(
			\hat{\sigma}_{P_n})}\right) -\mathcal{I}\circ\mathcal{S}\left(  \frac{{\phi}_P}{\mathcal{M}
			(\hat{\sigma}_{P_n})}\right) \right\},
	\end{align}
	where $\mathcal{S}\left({\phi}_{P}/\mathcal{M}( \hat{\sigma}_{P_n})\right)=0$ by Lemma \ref{lemma.Sphi under null} if the $H_{0}$ in \eqref{eq.null 1} is true with $Q=P_n$ for all $n$.
	However, existing delta methods cannot be used to establish the asymptotic distribution of \eqref{eq.asymptotic distribution S} either. Since $ {\phi}_{P}/\mathcal{M}( \hat{\sigma}_{P_n})$ is a random element, delta methods such as Theorem 3.9.4 or Theorem 3.9.5 of \citet{van1996weak}, or Theorem 2.1 of \citet{fang2014inference}, do not work in this case. To overcome the technical complications due to the random element $ {\phi}_{P}/\mathcal{M}( \hat{\sigma}_{P_n})$, we provide the extended continuous mapping theorem and the extended delta method elaborated by Theorems \ref{lemma.random continuous mapping} and \ref{lemma.random delta method}, respectively.
\end{remark}

We now introduce the notation for the bootstrap elements. 
Let $(W_{n1},\ldots,W_{nn})$ be a vector of random multinomial weights independent of $\left\{
\left(  Y_{i},D_{i},Z_{i}\right)  \right\}  _{i=1}^{n}$ for all $n$. As defined
in \eqref{eq.defPn}, $\hat{P}_{n}$ is the empirical measure of an i.i.d.\ sample $\left\{
\left(  Y_{i},D_{i},Z_{i}\right)  \right\}  _{i=1}^{n}$ from probability distribution $P_n$. Given the sample values, the $\{(  \hat{Y}_{i},\hat{D}_{i},\hat{Z}_{i})  \}  _{i=1}^{n}$ introduced in Section \ref{subsbusec.test procedure} is
an i.i.d.\ sample from $\hat{P}_{n}$. We can write the empirical measure of $\{( \hat{Y}_{i},\hat{D}_{i},\hat{Z}_{i})  \}   _{i=1}^{n}$, given sample $\left\{
\left(  Y_{i},D_{i},Z_{i}\right)  \right\}  _{i=1}^{n}$, as $\hat{P}_{n}^{B}=n^{-1}
\sum_{i=1}^{n}W_{ni}\delta_{\left(  Y_{i},D_{i},Z_{i}\right)}$, where $\delta_{\left(  Y_{i},D_{i},Z_{i}\right)}$ is a Dirac measure centered at $\left(  Y_{i},D_{i},Z_{i}\right)$. Given the $\hat{\phi}_{P_n}^{B}$, $T_n^B$, and $\hat{\sigma}_{P_n}^{B}$ defined in Section \ref{subsbusec.test procedure},
$\hat{\phi}_{P_n}^{B}/\mathcal{M}(\hat{\sigma}_{P_n}^{B})$ is a map of
$\left\{  \left(  Y_{i},D_{i},Z_{i},W_{ni}\right)  \right\}  _{i=1}^{n}$ to the space
$\ell^{\infty}\left(  \Xi\times\bar{\mathcal{H}}\times\mathcal{G}\right)  $.  

We follow Section 3.6 of \citet{van1996weak} and \eqref{eq.conditional expectation} to define the conditional outer expectations. When we compute the outer expectations as in \eqref{eq.conditional expectation}, independence is understood in terms of a product space. Under Assumptions \ref{ass.independent data} and \ref{ass.probability path}, each term $(Y_i,D_i,Z_i)$ of the sequence $\left\{\left(  Y_{i},D_{i},Z_{i}\right)  \right\}  _{i=1}^{\infty}$ has probability distribution $P$. Let $\left\{\left(  Y_{i},D_{i},Z_{i}\right)  \right\}  _{i=1}^{\infty}$ be the coordinate projections on the first $\infty$ coordinates of the product space $((\mathbb{R}^3)^{\infty},\mathcal{B}_{\mathbb{R}^3}^{\infty},P^{\infty})\times(\mathcal{W},\mathcal{C},P_W)$, and let the multinomial vectors $W$ depend on the last factor only. For each real-valued map $T$ on $((\mathbb{R}^3)^{\infty},\mathcal{B}_{\mathbb{R}^3}^{\infty},P^{\infty})\times(\mathcal{W},\mathcal{C},P_W)$, we can take $(\Omega_1,\mathcal{A}_1,\mathbb{P}_1)=((\mathbb{R}^3)^{\infty},\mathcal{B}_{\mathbb{R}^3}^{\infty},P^{\infty})$ and $(\Omega_2,\mathcal{A}_2,\mathbb{P}_2)=(\mathcal{W},\mathcal{C},P_W)$ and define a real-valued map $E_W^{\ast}[T]$ on $((\mathbb{R}^3)^{\infty},\mathcal{B}_{\mathbb{R}^3}^{\infty},P^{\infty})$ by
\begin{align}\label{eq.conditional expectation 2}
E^{\ast}_W[T](\left\{\left(  Y_{i},D_{i},Z_{i}\right)  \right\}  _{i=1}^{\infty})=E^{\ast}_2[T ](\left\{\left(  Y_{i},D_{i},Z_{i}\right)  \right\}  _{i=1}^{\infty})
\end{align}
for each sequence $\left\{\left(  Y_{i},D_{i},Z_{i}\right)  \right\}  _{i=1}^{\infty} \in (\mathbb{R}^3)^{\infty}$, where $E^{\ast}_2[T]$ is defined as in \eqref{eq.conditional expectation}. We call the left-hand side of \eqref{eq.conditional expectation 2} the conditional outer expectation of $T$ given the sequence $\left\{\left(  Y_{i},D_{i},Z_{i}\right)  \right\}  _{i=1}^{\infty}$. Since $E^{\ast}_W[T]$ is a real-valued map on $((\mathbb{R}^3)^{\infty},\mathcal{B}_{\mathbb{R}^3}^{\infty},P^{\infty})$, we can compute its outer and inner integrals (expectations) with respect to $((\mathbb{R}^3)^{\infty},\mathcal{B}_{\mathbb{R}^3}^{\infty},P^{\infty})$. For simplicity of notation, we write them as $E^{\ast}[E^{\ast}_W[T]]$ and $E_{\ast}[E^{\ast}_W[T]]$, respectively. 

If $T(\left\{\left(  Y_{i},D_{i},Z_{i}\right)  \right\}  _{i=1}^{\infty},\cdot)$ is a measurable integrable map on $(\mathcal{W},\mathcal{C},P_W)$ for every given sequence $\left\{\left(  Y_{i},D_{i},Z_{i}\right)  \right\}  _{i=1}^{\infty}$, we write $E_W[T]$ for $E^{\ast}_W[T]$ and call $E_W[T](\left\{\left(  Y_{i},D_{i},Z_{i}\right)  \right\}  _{i=1}^{\infty})$ the conditional expectation  of $T$ given the sequence $\left\{\left(  Y_{i},D_{i},Z_{i}\right)  \right\}  _{i=1}^{\infty}$. The conditional inner expectation is defined analogously.
If
$\mathbb{D}$ is a metric space with metric $d$, we define
\[
\mathrm{BL}_{1}\left(  \mathbb{D}\right)  =\left\{  f:\mathbb{D}\rightarrow
\mathbb{R}
:\left\Vert f\right\Vert _{\infty}\le 1,\left\vert f\left(  x_1\right)
-f\left( x_2\right)  \right\vert \leq d(x_1,x_2)\text{
	for all }x_1,x_2\in\mathbb{D}\right\}  .
\] 

\begin{lemma} \label{lemma.bootstrap properties}
	Suppose Assumptions \ref{ass.independent data} and \ref{ass.probability path} hold.
	\begin{enumerate}[label=(\roman{*})]
		\item $\sqrt{T_n^B}(  \hat{\phi}_{P_n}^{B}-\hat{\phi}_{P_n})  /  \mathcal{M}(\hat{\sigma
		}_{P_n}^{B})  $ satisfies
		\begin{align}\label{eq.bootstrap property 1}
		\sup_{f\in \mathrm{BL}_{1}(  \ell^{\infty}( \Xi\times \bar{\mathcal{H}}\times\mathcal{ G})
			)  }\left\vert E_W\left[  f\left( \frac{ \sqrt{T_n^B}(  \hat{\phi}_{P_n}^{B}
			-\hat{\phi}_{P_n})}{  \mathcal{M}(\hat{\sigma}_{P_n}^{B}) } \right) \right]
		-E\left[  f\left(  \frac{\mathbb{G}_0}{\mathcal{M}(\sigma_P)}\right)  \right]  \right\vert \to 0
		\end{align}
		in outer probability, where $\mathbb{ G}_0=\sqrt{\Lambda(P)}\cdot\mathcal{L}_{P}^{\prime}(  \mathbb{G}_{P})$ is tight and $\mathbb{G}_P$ is as in Lemma \ref{lemma.weak convergence Pn_hat and convergence Pn};
		\item $\sqrt{T_n^B}(  \hat{\phi}_{P_n}^{B}-\hat{\phi}_{P_n})  /\mathcal{M}(  \hat{\sigma}_{P_n}^{B})\leadsto \mathbb{G}_0/\mathcal{M}(\sigma_P)  $;\footnote{This implies that $\sqrt{T_n^B}(  \hat{\phi}_{P_n}^{B}-\hat{\phi}_{P_n})  /\mathcal{M}(  \hat{\sigma}_{P_n}^{B})$ is asymptotically measurable jointly in $\left\{
			\left(  Y_{i},D_{i},Z_{i}\right)  \right\}  _{i=1}^{\infty}$ and $W$ by Lemma 1.3.8 of \citet{van1996weak}.}
		
		\item For each continuous,
		bounded $f:\ell^{\infty}(  \Xi\times\bar{\mathcal{H}}\times \mathcal{ G})
		\to\mathbb{R}$, $f(  \sqrt{T_n^B}(  \hat{\phi}_{P_n}^{B}-\hat{\phi}_{P_n})  /\mathcal{M}(
		\hat{\sigma}_{P_n}^{B})  )  $ is a measurable function of
		$\{  W_{ni}\}  _{i=1}^{n}$ for every given sequence $\{  (
		Y_{i},D_{i},Z_{i})  \}  _{i=1}^{\infty}$.
	\end{enumerate}

\end{lemma}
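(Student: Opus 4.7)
The strategy is to combine a multinomial bootstrap CLT for the empirical process $\sqrt{n}(\hat{P}_n^B - \hat{P}_n)$ with the uniform Hadamard differentiability of $\mathcal{L}$ from Lemma \ref{lemma.L HD}, and then transfer the result to the normalized ratio via a bootstrap Slutsky argument. Lemma \ref{lemma.V Donsker} provides the uniform Donsker property of $\tilde{\mathcal{V}}$ over $Q \in \mathcal{P}$, while Assumption \ref{ass.probability path} and Lemma \ref{lemma.almost uniform convergence phi sigma} give $P_n \to P$ and almost uniform convergence $\hat{P}_n \to P$ in $\ell^\infty(\tilde{\mathcal{V}})$. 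Invoking the standard multinomial bootstrap CLT for classes that are Donsker uniformly in $Q$ (Theorem 3.6.13 of \citet{van1996weak}) then yields
\begin{align*}
\sup_{f \in \mathrm{BL}_1(\ell^\infty(\tilde{\mathcal{V}}))} \left|E_W\left[f\bigl(\sqrt{n}(\hat{P}_n^B - \hat{P}_n)\bigr)\right] - E\left[f(\mathbb{G}_P)\right]\right| \to 0
\end{align*}
in outer probability, where $\mathbb{G}_P$ is the tight $P$-Brownian bridge of Lemma \ref{lemma.weak convergence Pn_hat and convergence Pn}.

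Next, the bootstrap functional delta method (Theorem 3.9.11 of \citet{van1996weak}) propagates the previous display through $\mathcal{L}$ and produces the analogous conditional BL$_1$ statement for $\sqrt{n}(\hat{\phi}_{P_n}^B - \hat{\phi}_{P_n}) = \sqrt{n}(\mathcal{L}(\hat{P}_n^B) - \mathcal{L}(\hat{P}_n))$, with limit $\mathcal{L}_P'(\mathbb{G}_P)$. In parallel, Lemma \ref{lemma.uniform Glivenko-Cantelli} combined with the bootstrap version of the Glivenko--Cantelli theorem uniformly in $Q$ gives $\sup_{v \in \tilde{\mathcal{V}}}|(\hat{P}_n^B - P)(v)| \to 0$ in outer probability; consequently $\hat{\sigma}_{P_n}^B \to \sigma_P$ uniformly on $\bar{\mathcal{H}} \times \mathcal{G}$ and $T_n^B/n \to \Lambda(P)$ in outer probability, mirroring Lemma \ref{lemma.almost uniform convergence phi sigma}. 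Since $\inf \Xi > 0$, the map $(\varphi,\psi) \mapsto \sqrt{\Lambda(P)}\,\varphi/\mathcal{M}(\psi)$ is continuous on its domain, and a bootstrap Slutsky argument extending Lemma 1.10.2(iii) and Example 1.4.7 of \citet{van1996weak} to the conditional BL$_1$ setting delivers claim (i), with limit $\mathbb{G}_0/\mathcal{M}(\sigma_P)$ where $\mathbb{G}_0 = \sqrt{\Lambda(P)}\,\mathcal{L}_P'(\mathbb{G}_P)$; tightness of $\mathbb{G}_0$ is inherited from the tightness of $\mathbb{G}_P$ through the continuous linear map $\mathcal{L}_P'$.

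Claim (ii) follows from (i) because $\mathbb{G}_0/\mathcal{M}(\sigma_P)$ is tight in $\ell^\infty(\Xi \times \bar{\mathcal{H}} \times \mathcal{G})$ (by tightness of $\mathbb{G}_P$, continuity of $\mathcal{L}_P'$, and the uniform lower bound $\mathcal{M}(\sigma_P) \geq \inf \Xi > 0$), and conditional BL$_1$ convergence in outer probability to a tight limit implies unconditional Hoffmann--J\o rgensen weak convergence on the product space (Section 3.6 of \citet{van1996weak}). Claim (iii) is routine: for each fixed sample, $\sqrt{T_n^B}(\hat{\phi}_{P_n}^B - \hat{\phi}_{P_n})/\mathcal{M}(\hat{\sigma}_{P_n}^B)$ is a rational function of the weights $(W_{n1},\ldots,W_{nn})$ on the event where every $\hat{P}_n^B(g_l)$ is positive (which holds almost surely for large $n$ under Assumption \ref{ass.independent data}), so the composition with any continuous bounded $f$ on $\ell^\infty(\Xi \times \bar{\mathcal{H}} \times \mathcal{G})$ is measurable in the weights. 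The main obstacle is the bootstrap Slutsky step: the numerator convergence is in the conditional BL$_1$ sense in outer probability, whereas the denominator convergence is in outer probability on the product space, and one must ensure their composition through the division map preserves the BL$_1$ bound. I would handle this by restricting to the high-probability events $\{\|\hat{\sigma}_{P_n}^B - \sigma_P\|_\infty < \varepsilon\}$ and $\{|T_n^B/n - \Lambda(P)| < \varepsilon\}$, on which the division map is uniformly Lipschitz thanks to $\inf \Xi > 0$, so that the Lipschitz property of BL$_1$ functions passes through.
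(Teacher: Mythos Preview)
Your approach is essentially the paper's: bootstrap CLT for $\sqrt{n}(\hat P_n^B-\hat P_n)$ on the uniform Donsker class $\tilde{\mathcal V}$, propagate through $\mathcal L$ via the bootstrap delta method, then combine with $\hat\sigma_{P_n}^B\to\sigma_P$ and $T_n^B/n\to\Lambda(P)$ via Slutsky. The paper differs in execution at two points worth noting. First, it invokes Theorem~3.6.2 (not 3.6.13) and Theorem~3.9.13 (not 3.9.11) of \citet{van1996weak}, which deliver the conditional $\mathrm{BL}_1$ convergence \emph{outer almost surely} rather than in outer probability; this lets the paper fix an almost-sure data sequence $\{(Y_i,D_i,Z_i)\}_{i=1}^\infty$, obtain $\Vert\hat P_n^B-P\Vert_\infty\to 0$ outer almost surely along that sequence, and then apply the ordinary Slutsky and continuous-mapping lemmas conditionally---cleanly sidestepping the in-probability bootstrap-Slutsky obstacle you correctly flag. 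Second, the paper does not derive (ii) from (i) but gives a separate unconditional argument: it uses Fubini (Lemma~1.2.6 of \citet{van1996weak}) together with the asymptotic-measurability part of Theorem~3.9.13 to convert the conditional statement into $\sqrt n(\mathcal L(\hat P_n^B)-\mathcal L(\hat P_n))\leadsto\mathcal L_P'(\mathbb G_P)$ unconditionally, and then applies unconditional Slutsky on the product space. Your deduction of (ii) from (i) is shorter and is a standard fact about conditional $\mathrm{BL}_1$ convergence to tight limits.
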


\begin{proof}[Proof of Lemma \ref{lemma.bootstrap properties}]
	(i). To explore the conditional property of the bootstrap element $\sqrt{T_n^B}(  \hat{\phi}_{P_n}^{B}-\hat{\phi}_{P_n})  /  \mathcal{M}(\hat{\sigma
	}_{P_n}^{B})  $,  we consider the entire sequence $\left\{\left(  Y_{i},D_{i},Z_{i}\right)  \right\}  _{i=1}^{\infty}$.\footnote{We follow Section 3.6 of \citet{van1996weak} to obtain the conditional property of the bootstrap element $\sqrt{T_n^B}(  \hat{\phi}_{P_n}^{B}-\hat{\phi}_{P_n})  /  \mathcal{M}(\hat{\sigma
	}_{P_n}^{B})  $ given the entire sequence $\left\{\left(  Y_{i},D_{i},Z_{i}\right)  \right\}  _{i=1}^{\infty}$.} Each term $(Y_i,D_i,Z_i)$ in $\left\{\left(  Y_{i},D_{i},Z_{i}\right)  \right\}  _{i=1}^{\infty}$ has probability distribution $P$ under Assumptions \ref{ass.independent data} and \ref{ass.probability path}. Now the $\hat{P}_n$ defined in \eqref{eq.defPn} can be viewed as being computed with the first $n$ elements of $\left\{\left(  Y_{i},D_{i},Z_{i}\right)  \right\}  _{i=1}^{\infty}$ that are distributed according to $P$. By Lemma \ref{lemma.V Donsker}, $\sqrt{n}(\hat{P}_n-P)\leadsto\mathbb{G}_P$ under $P$, where $\mathbb{G}_P$ is the limit shown in Lemma \ref{lemma.weak convergence Pn_hat and convergence Pn}.
	By the construction of $\tilde{\mathcal{V}}$ in \eqref{eq.tilde V},
	$F=1$ is an envelope function of $\tilde{\mathcal{V}}$ and $P^{\ast}(
	\sup_{v\in\tilde{\mathcal{V}}}\vert v-P(  v)  \vert^{2})  <\infty$, where $P^{\ast}$ is the outer probability measure of $P$. By Lemma \ref{lemma.V Donsker}, $\tilde{\mathcal{V}}$ is Donsker. By
	Theorem 3.6.2 of \citet{van1996weak}, we have that
	\begin{align}\label{eq.conditional sup EP convergence}
	\sup_{f\in \mathrm{BL}_{1}(\ell^{\infty}(\tilde{\mathcal{V}}))}\vert E_W[  f\{  \sqrt{n}(  \hat{P}_{n}^{B}-\hat{P}_{n})  \}  ]  -E[  f(  \mathbb{G}_{P})  ]
	\vert \to 0
	\end{align}
	outer almost surely\footnote{As discussed in \citet[p.~183]{van1996weak}, $f\{  \sqrt{n}(  \hat{P}_{n}^{B}-\hat{P}_{n})  \}$ is measurable as a function of the random weights given the values of the sample. Thus we use the conditional expectation $E_W[  f\{  \sqrt{n}(  \hat{P}_{n}^{B}-\hat{P}_{n})  \}  ] $ in \eqref{eq.conditional sup EP convergence}. Similarly, we use the conditional expectation $E_W[  f\{  \sqrt{n}(  \mathcal{L}	(  \hat{P}_{n}^{B})  -\mathcal{L}(  \hat{P}_{n})  )
		\}  ] $ in \eqref{eq.conditional sup EL convergence}.} and
	\begin{align}\label{eq.conditional EP convergence}
	E_W[  f\{  \sqrt{n}(  \hat{P}_{n}^{B}-\hat{P}_{n})  \}
	^{\ast}]  -E_W[f\{  \sqrt{n}(  \hat{P}_{n}^{B}-\hat{P}_{n})  \}  _{\ast}]  \to0
	\end{align}
	almost surely for every $f\in \mathrm{BL}_{1}(\ell^{\infty}(\tilde{\mathcal{V}}))$. Here, the asterisks denote the measurable cover functions with respect to $\{(Y_i,D_i,Z_i)\}_{i=1}^{\infty}$ and $W$ jointly. Then by Lemmas \ref{lemma.L HD}, \ref{lemma.V Donsker}, and \ref{lemma.uniform Glivenko-Cantelli} in this paper, and Theorem 3.9.13 of \citet{van1996weak}, we have
	\begin{align}\label{eq.conditional sup EL convergence}
	\sup_{f\in \mathrm{BL}_{1}(\ell^{\infty}(\bar{\mathcal{H}}\times\mathcal{ G} ))}\vert E_W[  f\{  \sqrt{n}(  \mathcal{L}	(  \hat{P}_{n}^{B})  -\mathcal{L}(  \hat{P}_{n})  )
	\}  ]
	-E[  f(  \mathcal{L}_{P}^{\prime}(  \mathbb{G}_{P})
	)  ]  \vert \to 0
	\end{align}
	outer almost surely	and
	\begin{align}\label{eq.conditional EL convergence}
	&E_W[  f\{  \sqrt{n}  (  \mathcal{L}(  \hat{P}_{n}^{B
	})  -\mathcal{L}(  \hat{P}_{n})  )    \}
	^{\ast}] -E_W[f\{  \sqrt{n}  (  \mathcal{L}(  \hat{P}_{n}^{B})
	-\mathcal{L}(  \hat{P}_{n})  )    \}  _{\ast}]  \to0
	\end{align}
	almost surely for every $f\in \mathrm{BL}_{1}(\ell^{\infty}(\bar{\mathcal{H}}\times\mathcal{G}))$. The outer almost sure convergence in \eqref{eq.conditional sup EL convergence} implies that the weak convergence $\sqrt{n}(  \mathcal{L}	(  \hat{P}_{n}^{B})  -\mathcal{L}(  \hat{P}_{n})) \leadsto \mathcal{L}_{P}^{\prime}(  \mathbb{G}_{P})$ holds for almost every given sequence $\{
	(  Y_{i},D_{i},Z_{i})  \}  _{i=1}^{\infty}$.
	By Lemma \ref{lemma.uniform Glivenko-Cantelli} in this paper, and Lemmas 1.9.2 and 1.9.3 of \citet{van1996weak}, we have that $\Vert \hat{P}_{n}^{B}-\hat{P}_{n}\Vert_{\infty}\to0$ outer almost surely for almost every given sequence $\{  (
	Y_{i},D_{i},Z_{i})  \}  _{i=1}^{\infty}$. By Lemma \ref{lemma.uniform Glivenko-Cantelli} again, $\Vert
	\hat{P}_{n}-P\Vert _{\infty}\to 0$ for almost every sequence $\{  (
	Y_{i},D_{i},Z_{i})  \}  _{i=1}^{\infty}$. Thus now we have that
	$\Vert \hat{P}_{n}^{B}-P\Vert _{\infty}\leq\Vert
	\hat{P}_{n}^{B}-\hat{P}_{n}\Vert _{\infty}+\Vert
	\hat{P}_{n}-P\Vert _{\infty}\to0$ outer almost surely for
	almost every given sequence $\{  (  Y_{i},D_{i},Z_{i})  \}
	_{i=1}^{\infty}$. This implies that $\Vert
	\hat{\sigma}_{P_n}^{B}-\sigma_P\Vert _{\infty}\to0$ and $T_n^B/n\to \Lambda(P)$
	outer almost surely for almost every given sequence $\{  (  Y_{i},D_{i},Z_{i})
	\}  _{i=1}^{\infty}$. This, together with \eqref{eq.conditional sup EL convergence}, and Lemmas 1.9.2(i) and 1.10.2(iii), Example 1.4.7 (Slutsky's lemma), and Theorem 1.3.6 (continuous mapping) of \citet{van1996weak}, implies that $\sqrt{T_n^B}(  \mathcal{L}	(  \hat{P}_{n}^{B})  -\mathcal{L}(  \hat{P}_{n}) )/\mathcal{M}(\hat{\sigma}_{P_n}^{B}) \leadsto \mathbb{ G}_0/\mathcal{M}(\sigma_P)$ for
	almost every given sequence $\{  (  Y_{i},D_{i},Z_{i})  \}
	_{i=1}^{\infty}$. Since $\mathbb{ G}_P$ is tight, $\mathbb{ G}_0$ is tight by \eqref{eq.L HDD}.
	
	(ii). By \eqref{eq.conditional EL convergence} and Theorem 2.37 of \citet{folland2013real} (Fubini), together with the dominated convergence theorem and Lemma 1.2.1 of \citet{van1996weak},
	\begin{align}\label{eq.unconditional EL convergence}
	E^{\ast}[  f\{  \sqrt{n}  (  \mathcal{L}(  \hat{P}_{n}^{B
	})  -\mathcal{L}(  \hat{P}_{n})  )    \}
	]  -E_{\ast}[  f\{  \sqrt{n}  (  \mathcal{L}%
	(  \hat{P}_{n}^{B})  -\mathcal{L}(  \hat{P}_{n})  )
	\}  ]  \to0
	\end{align}
	for every $f\in \mathrm{BL}_{1}(\ell^{\infty}(\bar{\mathcal{H}}\times\mathcal{G}))$. By \eqref{eq.conditional sup EL convergence}, together with the definition of outer almost sure convergence (Definition 1.9.1(iii) of \citet{van1996weak}), we have that for every function $f\in \mathrm{BL}_{1}(\ell^{\infty}(\bar{\mathcal{H}}\times\mathcal{G}))$, 
	\begin{align}\label{eq.outer conditional EL convergence}
	\vert E_W[  f\{  \sqrt{n}(  \mathcal{L}	(  \hat{P}_{n}^{B})  -\mathcal{L}(  \hat{P}_{n})  )\} ]
	-E[  f(  \mathcal{L}_{P}^{\prime}(  \mathbb{G}_{P})
	)  ]  \vert^{\ast} \to 0
	\end{align}
	almost surely. 
	Thus by \eqref{eq.outer conditional EL convergence}, together with Lemma 1.2.2(iii) of \citet{van1996weak}, we have that 
	\begin{align}\label{eq.outer conditional EL convergence 2}
	\vert (E_W[  f\{  \sqrt{n}(  \mathcal{L}	(  \hat{P}_{n}^{B})  -\mathcal{L}(  \hat{P}_{n})  )\}  ])^{\ast}
	-E[  f(  \mathcal{L}_{P}^{\prime}(  \mathbb{G}_{P})
	)  ]  \vert \to 0
	\end{align}
	almost surely for every $f\in \mathrm{BL}_{1}(\ell^{\infty}(\bar{\mathcal{H}}\times\mathcal{G}))$. 
	By Lemma 1.2.6 (Fubini's theorem) of \citet{van1996weak},
	\begin{align}\label{eq.one hand}
	E^{\ast}[  f\{  \sqrt{n}(  \mathcal{L}(  \hat{P}_{n}^{B
	})  -\mathcal{L}(  \hat{P}_{n})  )  \}  ]    
	 &\geq E^{\ast}[  E_W[  f\{  \sqrt{n}(  \mathcal{L}
	(  \hat{P}_{n}^{B})  -\mathcal{L}(  \hat{P}_{n})  )\}  ]  ]\notag\\
	& \ge E_{\ast}[  f\{  \sqrt{n}(  \mathcal{L}(
	\hat{P}_{n}^{B})  -\mathcal{L}(  \hat{P}_{n})  )  \}].
	\end{align}
	Then by Lemma 1.2.1 of \citet{van1996weak}
	and \eqref{eq.unconditional EL convergence}, we have that
	\begin{align}\label{eq.iterated EL}
	E^{\ast}[  f\{  \sqrt{n}(  \mathcal{L}(  \hat{P}_{n}^{B
	})  -\mathcal{L}(  \hat{P}_{n})  )  \}  ] =E[  (  E_W[  f\{  \sqrt{n}(  \mathcal{L}(
	\hat{P}_{n}^{B})  -\mathcal{L}(  \hat{P}_{n})  )  \}]  )  ^{\ast
	}] +o(  1).
	\end{align}
	Now with \eqref{eq.outer conditional EL convergence 2} we can conclude that
	\begin{align*}
	& \vert E^{\ast}[  f\{  \sqrt{n}(  \mathcal{L}(
	\hat{P}_{n}^{B})  -\mathcal{L}(  \hat{P}_{n})  )  \}
	]  -E[  f(  \mathcal{L}_{P}^{\prime}(  \mathbb{G}_{P})  )  ]  \vert \\
	=&\,\vert E[  (  E_W[  f\{  \sqrt{n}(
	\mathcal{L}(  \hat{P}_{n}^{B})  -\mathcal{L}(  \hat{P}_{n}))  \} ])  ^{\ast}]  +o( 1)  -E[  f(
	\mathcal{L}_{P}^{\prime}(  \mathbb{G}_{P})  )  ]
	\vert \\
	\leq&\, E[  \vert (  E_W[  f\{  \sqrt{n}(
	\mathcal{L}(  \hat{P}_{n}^{B})  -\mathcal{L}(  \hat{P}_{n})
	)  \} ]
	)  ^{\ast}-E[  f(  \mathcal{L}_{P}^{\prime}(
	\mathbb{G}_{P})  )  ]  \vert ]  +o(1)  \to  0
	\end{align*}
	for every $f\in \mathrm{BL}_{1}(\ell^{\infty}(\bar{\mathcal{H}}\times\mathcal{G}))$, where the equality is from \eqref{eq.iterated EL} and the convergence is by the dominated convergence theorem together with the almost sure convergence in \eqref{eq.outer conditional EL convergence 2}. 
	This implies that  $\sqrt{n}(  \mathcal{L}(  \hat{P}_{n}^{B})
	-\mathcal{L}(  \hat{P}_{n})  )  \leadsto {\mathcal{L}_{P}^{\prime}(\mathbb{G}_P)}$ unconditionally. Similarly, by \eqref{eq.conditional sup EP convergence} and \eqref{eq.conditional EP convergence} we can easily show that $\sqrt{n}(\hat{P}_n^{B}-\hat{P}_n)\leadsto\mathbb{G}_P$ unconditionally. Thus we can conclude that $\hat{P}_{n}^{B} - \hat{P}_n \to 0$ in outer probability by Lemma 1.10.2(iii) of \citet{van1996weak}. By Lemma \ref{lemma.uniform Glivenko-Cantelli} in this paper and Lemmas 1.9.3 and 1.2.2(i) of \citet{van1996weak}, we have that $\hat{P}_{n}^{B} \to P$ in outer probability, and hence $T_n^B/n\to \Lambda(P)$ and $\mathcal{M}(\hat{\sigma}_{P_n}^{B})\to \mathcal{M}({\sigma}_{P})$ in outer probability by Theorem 1.9.5 (continuous mapping) of \citet{van1996weak}. 
	By Lemma 1.10.2(iii), Example 1.4.7 (Slutsky's lemma), and Theorem 1.3.6 (continuous mapping) of \citet{van1996weak},
	$\sqrt{T_n^B}(  \mathcal{L}	(  \hat{P}_{n}^{B})  -\mathcal{L}(  \hat{P}_{n}) )/\mathcal{M}(\hat{\sigma}_{P_n}^{B})  \leadsto\mathbb{G}_0/\mathcal{M}(\sigma_P) $
	unconditionally. This verifies (ii) of the Lemma. 
	
	(iii). This claim holds naturally under our constructions.
\end{proof}

To explore the property of the bootstrap test statistic, we introduce the following notation. For all sets $A_{1},A_{2}\subset\bar{\mathcal{H}}\times\mathcal{G}$, define
$\overrightarrow{d_{H}}\left(  A_{1},A_{2}\right)  =\sup_{a\in A_{1}}\inf_{b\in
A_{2}}\rho_{P}\left(  a,b\right)$
and
\[
d_{H}\left(  A_{1},A_{2}\right)  =\max\left\{  \overrightarrow{d_{H}}\left(
A_{1},A_{2}\right)  ,\overrightarrow{d_{H}}\left(  A_{2},A_{1}\right)
\right\}  .
\]
Also,  define 
\begin{align}\label{eq.new Psi_HG}
\widehat{\Psi_{\bar{\mathcal{H}}\times\mathcal{G}}}=\left\{  \left(  h,g\right)
\in\bar{\mathcal{H}}\times\mathcal{G}:\sqrt{T_n}\left\vert \frac{\hat{\phi}_{P_n}\left(
	h,g\right)  }{\mathcal{M}(\hat{\sigma}_{P_n})\left( \xi_0, h,g\right)  }\right\vert  \leq\tau
_{n}\right\},
\end{align}
where $\xi_0$ and $\tau_n$ are as in \eqref{eq.Psi_hat}.
Notice the difference between $\widehat{\Psi 		_{{\mathcal{H}}\times\mathcal{G}}}$ in \eqref{eq.Psi_hat} and $\widehat{\Psi 		_{\bar{\mathcal{H}}\times\mathcal{G}}}$ in \eqref{eq.new Psi_HG}. Clearly, $\widehat{\Psi 		_{{\mathcal{H}}\times\mathcal{G}}}\subset\widehat{\Psi 		_{\bar{\mathcal{H}}\times\mathcal{G}}}$.

\begin{lemma}
\label{lemma.consistent hat Psi multi}Under Assumptions
\ref{ass.independent data} and \ref{ass.probability path}, if the $H_{0}$ in \eqref{eq.null 1} is true with $Q=P_n$ for all $n$, then $d_{H}(  \widehat{\Psi
_{\bar{\mathcal{H}}\times\mathcal{G}}},\Psi_{\bar{\mathcal{H}}\times\mathcal{G}})  \rightarrow 0$ in outer probability, where  $\Psi_{\bar{\mathcal{H}}\times\mathcal{G}}$ is defined as in \eqref{eq.Psi_HG}. 
\end{lemma}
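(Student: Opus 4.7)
The strategy is to split the Hausdorff distance as
\[
d_H\bigl(\widehat{\Psi_{\bar{\mathcal{H}}\times\mathcal{G}}},\Psi_{\bar{\mathcal{H}}\times\mathcal{G}}\bigr)
=\max\bigl\{\overrightarrow{d_H}(\Psi_{\bar{\mathcal{H}}\times\mathcal{G}},\widehat{\Psi_{\bar{\mathcal{H}}\times\mathcal{G}}}),\;\overrightarrow{d_H}(\widehat{\Psi_{\bar{\mathcal{H}}\times\mathcal{G}}},\Psi_{\bar{\mathcal{H}}\times\mathcal{G}})\bigr\}
\]
and to control each direction separately in outer probability. For $\overrightarrow{d_H}(\Psi_{\bar{\mathcal{H}}\times\mathcal{G}},\widehat{\Psi_{\bar{\mathcal{H}}\times\mathcal{G}}})$, I plan to prove the stronger inclusion $\Psi_{\bar{\mathcal{H}}\times\mathcal{G}}\subset\widehat{\Psi_{\bar{\mathcal{H}}\times\mathcal{G}}}$ with outer probability tending to one. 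Since $\phi_P\equiv 0$ on $\Psi_{\bar{\mathcal{H}}\times\mathcal{G}}$, Lemma \ref{lemma.weak convergence phi_K} combined with the continuous mapping theorem gives $\sup_{(h,g)\in\Psi_{\bar{\mathcal{H}}\times\mathcal{G}}}\sqrt{T_n}|\hat{\phi}_{P_n}(h,g)|=O_p(1)$. The denominator $\mathcal{M}(\hat{\sigma}_{P_n})(\xi_0,h,g)\ge \xi_0>0$, so the ratio defining $\widehat{\Psi_{\bar{\mathcal{H}}\times\mathcal{G}}}$ is $O_p(1)$ uniformly on $\Psi_{\bar{\mathcal{H}}\times\mathcal{G}}$, and because $\tau_n\to\infty$ it is bounded by $\tau_n$ with outer probability tending to one.

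For the harder direction $\overrightarrow{d_H}(\widehat{\Psi_{\bar{\mathcal{H}}\times\mathcal{G}}},\Psi_{\bar{\mathcal{H}}\times\mathcal{G}})\to 0$, fix $\delta>0$ and set $A_\delta=(\bar{\mathcal{H}}\times\mathcal{G})\setminus(\Psi_{\bar{\mathcal{H}}\times\mathcal{G}})^{\delta}$. The goal is to show $\widehat{\Psi_{\bar{\mathcal{H}}\times\mathcal{G}}}\cap A_\delta=\varnothing$ with outer probability tending to one. The key step is a uniform separation argument: by Lemma \ref{lemma.complete HG} the ambient space $\bar{\mathcal{H}}\times\mathcal{G}$ is compact under $\rho_P$, so $A_\delta$ is compact; by Lemma \ref{lemma.phi is continuous} $\phi_P$ is $\rho_P$-continuous; under the null $\phi_P\le 0$ everywhere and $\phi_P<0$ strictly on $A_\delta$; hence there exists $\eta_\delta>0$ with $\phi_P(h,g)\le -\eta_\delta$ for all $(h,g)\in A_\delta$. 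Uniform convergence of $\hat{\phi}_{P_n}$ to $\phi_P$ in outer probability (which follows from Lemma \ref{lemma.weak convergence phi_K} after dividing by $\sqrt{T_n}$) then yields $\hat{\phi}_{P_n}(h,g)\le -\eta_\delta/2$ uniformly on $A_\delta$ with outer probability tending to one. Together with the uniform bound on $\mathcal{M}(\hat{\sigma}_{P_n})$ inherited from Lemma \ref{lemma.almost uniform convergence phi sigma} and the variance bound \eqref{eq.sigma_square bound}, this forces
\[
\sqrt{T_n}\,\frac{|\hat{\phi}_{P_n}(h,g)|}{\mathcal{M}(\hat{\sigma}_{P_n})(\xi_0,h,g)}\ge C\sqrt{T_n}\,\eta_\delta
\]
uniformly on $A_\delta$ for some constant $C>0$. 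Because $T_n/n\to\Lambda(P)>0$ almost uniformly while $\tau_n/\sqrt{n}\to 0$, the lower bound above exceeds $\tau_n$ for all large $n$, so no element of $A_\delta$ can belong to $\widehat{\Psi_{\bar{\mathcal{H}}\times\mathcal{G}}}$. Letting $\delta\downarrow 0$ along a countable sequence closes the argument.

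The main obstacle is the uniform separation of $|\phi_P|$ away from zero on $A_\delta$, since $A_\delta$ need not be finite or even countable; this is resolved by exploiting the compactness of $\bar{\mathcal{H}}\times\mathcal{G}$ under $\rho_P$ together with continuity of $\phi_P$. Secondary care is needed to phrase the convergences in outer probability, to control $\mathcal{M}(\hat{\sigma}_{P_n})$ from above by a fixed constant (so that dividing by it does not swamp the divergence $\sqrt{T_n}\eta_\delta$), and to handle the fact that the sets in question may fail to be measurable — all of which is routine once the weak convergence and continuity ingredients already established in the paper are invoked.
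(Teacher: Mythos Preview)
Your proposal is correct and follows essentially the same two-direction argument as the paper: the inclusion $\Psi_{\bar{\mathcal{H}}\times\mathcal{G}}\subset\widehat{\Psi_{\bar{\mathcal{H}}\times\mathcal{G}}}$ via weak convergence and $\tau_n\to\infty$, and the reverse containment via compactness of the complement of the $\delta$-neighborhood, continuity of $\phi_P$, and $\tau_n/\sqrt{n}\to 0$. One small slip: as you define it, $A_\delta=(\bar{\mathcal{H}}\times\mathcal{G})\setminus(\Psi_{\bar{\mathcal{H}}\times\mathcal{G}})^{\delta}$ is the complement of a \emph{closed} set (since the paper's $A^\delta$ uses $\le\delta$) and hence open, not compact; replace it by $\{(h,g):d((h,g),\Psi_{\bar{\mathcal{H}}\times\mathcal{G}})\ge\delta\}$, exactly as the paper's $\tilde D_\varepsilon$, and your argument goes through verbatim.
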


\begin{proof}[Proof of Lemma \ref{lemma.consistent hat Psi multi}]
First, under the assumptions, we have that for all $\varepsilon>0$,
\begin{align*}
\lim_{n\rightarrow\infty}\mathbb{P}^{\ast}  &  \left(  \overrightarrow{d_{H}}\left(
\Psi_{\bar{\mathcal{H}}\times\mathcal{G}},\widehat{\Psi_{\bar{\mathcal{H}}\times
		\mathcal{G}}}\right)  >\varepsilon\right)  \leq\lim_{n\rightarrow\infty
}\mathbb{P}^{\ast}\left(  \Psi_{\bar{\mathcal{H}}\times\mathcal{G}}\backslash\widehat
{\Psi_{\bar{\mathcal{H}}\times\mathcal{G}}}\neq\varnothing\right) \\
&\leq   \lim_{n\rightarrow\infty}\mathbb{P}^{\ast}\left(  \sup_{\left(  h,g\right)
	\in\bar{\mathcal{H}}\times\mathcal{G}}\sqrt{T_n}\left\vert \frac{\hat{\phi
	}_{P_n}\left(  h,g\right)  -\phi_P\left(  h,g\right)  }{\xi_0\vee\hat{\sigma}_{P_n}\left(
	h,g\right)  }\right\vert   >\tau_{n}\right)  .
\end{align*}
By Lemma \ref{lemma.weak convergence phi_K}, $\sqrt{T_n}(  \hat{\phi
}_{P_n}-\phi_P )  \leadsto\mathbb{G}$. By Lemma \ref{lemma.almost uniform convergence phi sigma}, $\hat{\sigma}_{P_n}\to \sigma_P$ almost uniformly, which implies that $\hat{\sigma}_{P_n}\leadsto \sigma_P$ by Lemmas 1.9.3(ii) and 1.10.2(iii) of \citet{van1996weak}. Thus by Example 1.4.7 (Slutsky's lemma) and Theorem 1.3.6 (continuous mapping) of
\citet{van1996weak},
\[
\sup_{\left(  h,g\right)  \in\bar{\mathcal{H}}\times\mathcal{G}}\sqrt{T_n}\left\vert \frac{\hat{\phi}_{P_n}\left(  h,g\right)  -\phi_P\left(  h,g\right)  }%
{\xi_0\vee\hat{\sigma}_{P_n}\left(  h,g\right)  }\right\vert 
\leadsto\sup_{\left(  h,g\right)  \in\bar{\mathcal{H}}\times\mathcal{G}}\left\vert \frac{\mathbb{G}\left(  h,g\right)  }{\xi_0\vee\sigma_P\left(
	h,g\right)  }\right\vert.
\]
Since $\tau_{n}\rightarrow\infty$, we have that $\lim_{n\rightarrow\infty}
\mathbb{P}^{\ast}(  \overrightarrow{d_{H}}(  \Psi_{\bar{\mathcal{H}}
	\times\mathcal{G}},\widehat{\Psi_{\bar{\mathcal{H}}\times\mathcal{G}}})
>\varepsilon)  =0$.

Next, consider $\overrightarrow{d_{H}}(  \widehat{\Psi_{\bar{\mathcal{H}}
		\times\mathcal{G}}},\Psi_{\bar{\mathcal{H}}\times\mathcal{G}})  $. Define 
\[
d\left(  \left(  h,g\right)  ,A\right)  =\inf_{\left(  h^{\prime},g^{\prime
	}\right)  \in A}\rho_{P}\left(  \left(  h,g\right)  ,\left(  h^{\prime
},g^{\prime}\right)  \right)
\]
for all $\left(  h,g\right)  \in\mathcal{\bar{H}}\times\mathcal{G}$ and all
subsets $A\subset\mathcal{\bar{H}}\times\mathcal{G}$. For each $\varepsilon
>0$, define
\begin{align*}
\tilde{D}_{\varepsilon}  =\left\{  \left(  h,g\right)  \in
\bar{\mathcal{H}}\times\mathcal{G}:d\left(  \left(  h,g\right)  ,{\Psi
_{\bar{\mathcal{H}}\times\mathcal{G}}}\right)  \geq\varepsilon\right\}.
\end{align*}
The product space $\mathcal{\bar{H}}\times\mathcal{G}$ is compact under $\rho_{P}$ by
Lemma \ref{lemma.complete HG}. Suppose $\left\{  \left(  h_{n}%
,g_{n}\right)  \right\}  _{n}\subset \tilde{D}_{\varepsilon}$ such that $\left(
h_{n},g_{n}\right)  \rightarrow\left(  h,g\right)  $ for some $\left(
h,g\right)  \in\mathcal{\bar{H}}\times\mathcal{G}$. Then
\begin{align*}
d &  \left(  \left(  h,g\right)  ,\Psi_{\bar{\mathcal{H}}\times\mathcal{G}}\right)
=\inf_{\left(  h^{\prime},g^{\prime}\right)  \in\Psi_{\bar{\mathcal{H}}
		\times\mathcal{G}}}\rho_{P}\left(  \left(  h,g\right)  ,\left(  h^{\prime
},g^{\prime}\right)  \right)  \\
&  \geq\inf_{\left(  h^{\prime},g^{\prime}\right)  \in\Psi_{\bar{\mathcal{H}}
		\times\mathcal{G}}}\rho_{P}\left(  \left(  h_{n},g_{n}\right)  ,\left(
h^{\prime},g^{\prime}\right)  \right)  -\rho_{P}\left(  \left(  h,g\right)
,\left(  h_{n},g_{n}\right)  \right)  \geq\varepsilon-\rho_{P}\left(  \left(
h,g\right)  ,\left(  h_{n},g_{n}\right)  \right)  ,
\end{align*}
which is true for all $n$. Letting $n\rightarrow\infty$ gives $d(
\left(  h,g\right)  ,\Psi_{\bar{\mathcal{H}}\times\mathcal{G}})
\geq\varepsilon$. This implies that $\tilde{D}_{\varepsilon}$ is closed in
$\mathcal{\bar{H}}\times\mathcal{G}$, which is compact, and thus $
\tilde{D}_{\varepsilon}$ is compact. If $\tilde{D}_{\varepsilon}=\varnothing$,
then clearly
\begin{align*}
 &\lim_{n\rightarrow\infty}\mathbb{P}^{\ast}\left(  \overrightarrow{d_{H}}\left(
\widehat{\Psi_{\bar{\mathcal{H}}\times\mathcal{G}}},\Psi_{\bar{\mathcal{H}}\times
	\mathcal{G}}\right)  >\varepsilon\right)\notag\\
 = &  \lim_{n\rightarrow\infty}\mathbb{P}^{\ast}\left(  \sup_{\left(  h,g\right)
	\in\widehat{\Psi_{\bar{\mathcal{H}}\times\mathcal{G}}}}\inf_{\left(  h^{\prime
	},g^{\prime}\right)  \in\Psi_{\bar{\mathcal{H}}\times\mathcal{G}}}\rho_{P}\left(
\left(  h,g\right)  ,\left(  h^{\prime},g^{\prime}\right)  \right)
>\varepsilon\right)  =0.
\end{align*}
If ${\tilde{D}_{\varepsilon}}\neq\varnothing$, then there is a $\delta
_{\varepsilon}>0$ such that $\inf_{\left(  h,g\right)  \in
	{\tilde{D}_{\varepsilon}}}\left\vert \phi_P\left(  h,g\right)  \right\vert
>\delta_{\varepsilon}$, since $\phi_P$ is continuous by Lemma
\ref{lemma.phi is continuous}. Also, $\hat{\sigma}_{P_n}$ is uniformly bounded in
$\left(  h,g\right)  $ and $\omega$, so there is a $\delta_{\varepsilon}^{\prime}>0$
such that for all $\omega\in\Omega$, 
$
\inf_{\left(  h,g\right)  \in \tilde{D}_{\varepsilon}}\left\vert
\phi_P\left(  h,g\right)  /\left(  \xi_0\vee \hat{\sigma}_{P_n}\left(  h,g\right)
\right)  \right\vert   >\delta_{\varepsilon}^{\prime}.
$
Thus if ${\tilde{D}_{\varepsilon}}\neq\varnothing$, we have
\begin{align*}
	&  \lim_{n\rightarrow\infty}\mathbb{P}^{\ast}\left(  \overrightarrow{d_{H}%
	}\left(  \widehat{\Psi_{\bar{\mathcal{H}}\times\mathcal{G}}},\Psi
	_{\bar{\mathcal{H}}\times\mathcal{G}}\right)  >\varepsilon\right)  \\
	=&\lim_{n\rightarrow\infty}\mathbb{P}^{\ast}\left(  \sup_{\left(
		h,g\right)  \in\widehat{\Psi_{\bar{\mathcal{H}}\times\mathcal{G}}}}%
	\inf_{\left(  h^{\prime},g^{\prime}\right)  \in\Psi_{\bar{\mathcal{H}}%
			\times\mathcal{G}}}\rho_{P}\left(  \left(  h,g\right)  ,\left(  h^{\prime
	},g^{\prime}\right)  \right)  >\varepsilon\right)  \\
	\leq &  \lim_{n\rightarrow\infty}\mathbb{P}^{\ast}\left(
	\begin{array}
		[c]{c}%
		\sup_{\left(  h,g\right)  \in\widehat{\Psi_{\bar{\mathcal{H}}\times
					\mathcal{G}}}\backslash\Psi_{\bar{\mathcal{H}}\times\mathcal{G}}}\left\vert
		\frac{\phi_{P}\left(  h,g\right)  }{\xi_{0}\vee\hat{\sigma}_{P_{n}}\left(
			h,g\right)  }\right\vert >\delta_{\varepsilon}^{\prime},\\
		\sup_{\left(  h,g\right)  \in\widehat{\Psi_{\bar{\mathcal{H}}\times
					\mathcal{G}}}\backslash\Psi_{\bar{\mathcal{H}}\times\mathcal{G}}}\sqrt{T_{n}%
		}\left\vert \frac{\hat{\phi}_{P_{n}}\left(  h,g\right)  }{\xi_{0}\vee
			\hat{\sigma}_{P_{n}}\left(  h,g\right)  }\right\vert \leq\tau_{n}%
	\end{array}
	\right)  .
\end{align*}
By Lemma \ref{lemma.almost uniform convergence phi sigma}, we have that $\hat{\phi}_{P_n}\to \phi_P$ almost uniformly. Thus there is a measurable set $A$ with $\mathbb{P}(A)\ge 1-\varepsilon$ such that for sufficiently large $n$, 
\begin{align*}
\sup_{\left(  h,g\right)  \in\widehat{\Psi
		_{\bar{\mathcal{H}}\times\mathcal{G}}}\backslash\Psi_{\bar{\mathcal{H}}\times\mathcal{G}}
}\left\vert \frac{\hat{\phi}_{P_n}\left(  h,g\right)  }{\xi_0\vee\hat
	{\sigma}_{P_n}\left(  h,g\right)  }\right\vert \ge \sup_{\left(  h,g\right)  \in\widehat{\Psi
		_{\bar{\mathcal{H}}\times\mathcal{G}}}\backslash\Psi_{\bar{\mathcal{H}}\times\mathcal{G}}%
}\left\vert \frac{{\phi}_{P}\left(  h,g\right)  }{\xi_0\vee\hat
{\sigma}_{P_n}\left(  h,g\right)  }\right\vert  - \frac{\delta_{\varepsilon}^{\prime}}{2}
\end{align*}
uniformly on $A$.
Thus we now have that
\begin{align*}
&  \lim_{n\rightarrow\infty}\mathbb{P}^{\ast}\left(  \overrightarrow{d_{H}}\left(
\widehat{\Psi_{\bar{\mathcal{H}}\times\mathcal{G}}},\Psi_{\bar{\mathcal{H}}\times
	\mathcal{G}}\right)  >\varepsilon\right)  \\
\leq &  \lim_{n\rightarrow\infty}\mathbb{P}^{\ast}\left(
\begin{array}
[c]{c}%
\left\{\sup_{\left(  h,g\right)  \in\widehat{\Psi_{\bar{\mathcal{H}}\times\mathcal{G}}
	}\backslash\Psi_{\bar{\mathcal{H}}\times\mathcal{G}}}\left\vert \frac
{\phi_P\left(  h,g\right)  }{\xi_0\vee\hat{\sigma}_{P_n}\left(  h,g\right)  }\right\vert
  >\delta_{\varepsilon}^{\prime}\right\}\\
\cap\left\{\sup_{\left(  h,g\right)  \in\widehat{\Psi_{\bar{\mathcal{H}}\times\mathcal{G}}
	}\backslash\Psi_{\bar{\mathcal{H}}\times\mathcal{G}}}\sqrt{T_n}\left\vert
\frac{\hat{\phi}_{P_n}\left(  h,g\right)  }{\xi_0\vee\hat{\sigma}_{P_n}\left(  h,g\right)
}\right\vert   \leq\tau_{n}\right\}\cap A
\end{array}
\right) +\mathbb{P}(A^c) \\
\leq &  \lim_{n\rightarrow\infty}\mathbb{P}^{\ast}\left(\sqrt{\frac{T_n}{n}}\frac{\delta_{\varepsilon}^{\prime}}{2}< 
\sup_{\left(  h,g\right)  \in\widehat{\Psi_{\bar{\mathcal{H}}\times\mathcal{G}}
	}\backslash\Psi_{\bar{\mathcal{H}}\times\mathcal{G}}}\sqrt{\frac{{T_n}}{{n}}}\left\vert \frac
{\hat{\phi}_{P_n}\left(  h,g\right)  }{\xi_0\vee\hat{\sigma}_{P_n}\left(  h,g\right)
}\right\vert   \leq\frac{\tau_{n}}{\sqrt{n}}
\right)  +\varepsilon = \varepsilon,
\end{align*}
because $\tau_{n}/\sqrt{n}\rightarrow0$ as $n\rightarrow\infty$. Here, $\varepsilon$ can be arbitrarily small. 
\end{proof}

\begin{proof}[Proof of Theorem \ref{thm.test multi}]
	(i). Fix
	$\psi\in C\left(  \Xi\times\mathcal{\bar{H}}\times\mathcal{G}\right)  $  under the $\rho_{\xi hg}$ defined in \eqref{eq.rho_xihg}. It is easy to show that 
	$\Xi\times\mathcal{\bar{H}}\times\mathcal{G}$ is compact under $\rho_{\xi hg}$, and thus $\psi$ is uniformly continuous on $\Xi\times\mathcal{\bar{H}}%
	\times\mathcal{G}$. This implies that for every $\varepsilon>0$, there is a 
	$\delta>0$ such that $\left\vert \psi(\xi^{\prime},h^{\prime},g^{\prime}
	)-\psi\left(  \xi,h,g\right)  \right\vert \leq\varepsilon/\nu\left(
	\Xi\right)  $ for all $\left(  \xi,h,g\right)  ,\left(  \xi^{\prime}%
	,h^{\prime},g^{\prime}\right)  \in\Xi\times\mathcal{\bar{H}}\times\mathcal{G}$
	with $\rho_{\xi hg}((\xi^{\prime},h^{\prime},g^{\prime}),\left(
	\xi,h,g\right)  )\leq\delta$. Also, by the constructions of $\Psi
	_{\mathcal{\bar{H}}\times\mathcal{G}}$ in \eqref{eq.Psi_HG} and $\widehat{\Psi_{\mathcal{\bar{H}%
			}\times\mathcal{G}}}$ in \eqref{eq.new Psi_HG}, we have that
	\begin{align*}
		& \left\vert \mathcal{I}\circ\mathcal{S}_{\widehat{\Psi_{\mathcal{\bar{H}%
					}\times\mathcal{G}}}}\left(  \psi\right)  -\mathcal{I}\circ\mathcal{S}%
		_{\Psi_{\mathcal{\bar{H}}\times\mathcal{G}}}\left(  \psi\right)  \right\vert
		\\
		\leq&\,\nu\left(  \Xi\right)  \sup_{\rho_{\xi hg}((\xi^{\prime},h^{\prime
			},g^{\prime}),\left(  \xi,h,g\right)  )\leq d_{H}\left(  \widehat
			{\Psi_{\mathcal{\bar{H}}\times\mathcal{G}}},\Psi_{\mathcal{\bar{H}}%
				\times\mathcal{G}}\right)  }\left\vert \psi\left(  \xi^{\prime},h^{\prime
		},g^{\prime}\right)  -\psi\left(  \xi,h,g\right)  \right\vert .
	\end{align*}
	By Lemma \ref{lemma.consistent hat Psi multi}, this implies that
	\[
	\mathbb{P}^{\ast}\left(  \left\vert \mathcal{I}\circ\mathcal{S}_{\widehat
		{\Psi_{\mathcal{\bar{H}}\times\mathcal{G}}}}\left(  \psi\right)
	-\mathcal{I}\circ\mathcal{S}_{\Psi_{\mathcal{\bar{H}}\times\mathcal{G}}%
	}\left(  \psi\right)  \right\vert >\varepsilon\right)  \leq\mathbb{P}^{\ast
	}\left(  d_{H}\left(  \widehat{\Psi_{\mathcal{\bar{H}}\times\mathcal{G}}}%
	,\Psi_{\mathcal{\bar{H}}\times\mathcal{G}}\right)  >\delta\right)
	\rightarrow0.
	\]
	Notice that 
	\begin{align*}
		\vert \mathcal{I}\circ\mathcal{S}_{\widehat{\Psi
				_{\mathcal{\bar{H}}\times\mathcal{G}}}}\left(  \psi_{1}\right)  -\mathcal{I}
		\circ\mathcal{S}_{\widehat{\Psi_{\mathcal{\bar{H}}\times\mathcal{G}}}}\left(
		\psi_{2}\right)  \vert \leq\nu\left(  \Xi\right)  \left\Vert \psi
		_{1}-\psi_{2}\right\Vert _{\infty}	
	\end{align*}
	for all $\psi_{1},\psi_{2}\in\ell^{\infty
	}\left(  \Xi\times\mathcal{\bar{H}}\times\mathcal{G}\right)  $.
	By Lemma S.3.6 of \citet{fang2014inference}, $\mathcal{I}\circ\mathcal{S}_{\widehat{\Psi
			_{\mathcal{\bar{H}}\times\mathcal{G}}}}$ satisfies Assumption 4 of \citet{fang2014inference}.
	Together with Lemma \ref{lemma.bootstrap properties}, by repeating the proof of Theorem 3.2 of \citet{fang2014inference} with $\mathbb{ G}_{n}^{B}=\sqrt{T_n^B}(  \hat{\phi}_{P_n}^{B}-\hat{\phi}_{P_n})  /\mathcal{M}(  \hat{\sigma}_{P_n}^{B})$, where $\mathbb{ G}_{n}^{B}$ replaces $\mathbb{ G}_{n}^{\ast}$ in their notation, we can show that 
	\begin{align}\label{eq.bootstrap consistency}
		\sup_{f\in\mathrm{BL}_{1}(\mathbb{R})}\left\vert
		\begin{array}
			[c]{c}%
			E_{W}\left[  f\left\{  \mathcal{I}\circ\mathcal{S}_{\widehat{\Psi
					_{\bar{\mathcal{H}}\times\mathcal{G}}}}\left(  \frac{\sqrt{T_{n}^{B}}\left(
				\hat{\phi}_{P_{n}}^{B}-\hat{\phi}_{P_{n}}\right)  }{\mathcal{M}\left(
				\hat{\sigma}_{P_{n}}^{B}\right)  }\right)  \right\}  \right]  \\
			-E\left[  f\left\{  \mathcal{I}\circ\mathcal{S}_{{\Psi_{\bar{\mathcal{H}%
						}\times\mathcal{G}}}}\left(  \frac{\mathbb{G}_{0}}{\mathcal{M}(\sigma_{P}%
				)}\right)  \right\}  \right]
		\end{array}
		\right\vert \rightarrow0
	\end{align}
	in outer probability, where $\mathbb{G}_0$ is the limit obtained in Lemma \ref{lemma.bootstrap properties} and $\mathbb{G}_0/\mathcal{M}(\sigma_P)$ is tight by Lemma \ref{lemma.bootstrap properties}(i). Since the sample is finite, that is, we have only finitely many observations $\{(Y_i,D_i,Z_i)\}_{i=1}^n$ in the data set, by the constructions of $\widehat{\Psi_{{\mathcal{H}}\times\mathcal{G}}}$ in \eqref{eq.Psi_hat} and $\widehat{\Psi_{\bar{\mathcal{H}}\times\mathcal{G}}}$ in \eqref{eq.new Psi_HG} we have that
	\begin{align}\label{eq.equi bootstrap test stat}
		\mathcal{I}\circ
		\mathcal{S}_{\widehat{\Psi_{{\mathcal{H}}\times\mathcal{G}}}}\left(  \frac{\sqrt{T_n^B}\left(  \hat{\phi}_{P_n}^{B}-\hat{\phi}_{P_n}\right)  }{\mathcal{M}\left(
			\hat{\sigma}_{P_n}^{B}\right)  }\right)=\mathcal{I}\circ
		\mathcal{S}_{\widehat{\Psi_{\bar{\mathcal{H}}\times\mathcal{G}}}}\left(  \frac{\sqrt{T_n^B}\left(  \hat{\phi}_{P_n}^{B}-\hat{\phi}_{P_n}\right)  }{\mathcal{M}\left(
			\hat{\sigma}_{P_n}^{B}\right)  }\right).
	\end{align}
	Then  \eqref{eq.bootstrap consistency} and \eqref{eq.equi bootstrap test stat} imply that
	\begin{align}\label{eq.bootstrap consistency 2}
		\sup_{f\in\mathrm{BL}_{1}(\mathbb{R})}\left\vert
		\begin{array}
			[c]{c}%
			E_{W}\left[  f\left\{  \mathcal{I}\circ\mathcal{S}_{\widehat{\Psi
					_{{\mathcal{H}}\times\mathcal{G}}}}\left(  \frac{\sqrt{T_{n}^{B}}\left(
				\hat{\phi}_{P_{n}}^{B}-\hat{\phi}_{P_{n}}\right)  }{\mathcal{M}\left(
				\hat{\sigma}_{P_{n}}^{B}\right)  }\right)  \right\}  \right]  \\
			-E\left[  f\left\{  \mathcal{I}\circ\mathcal{S}_{{\Psi_{\bar{\mathcal{H}%
						}\times\mathcal{G}}}}\left(  \frac{\mathbb{G}_{0}}{\mathcal{M}(\sigma_{P}%
				)}\right)  \right\}  \right]
		\end{array}
		\right\vert \rightarrow0
	\end{align}
	in outer probability.
	Let
	$F$ denote the CDF of $\mathcal{I}\circ\mathcal{S}_{\Psi_{\bar{\mathcal{H}}
			\times\mathcal{G}}}\left(  \mathbb{G}_0/\mathcal{M}\left(  \sigma_P\right)
	\right)  $, and define $\hat{F}_{n}$ by
	\[
	\hat{F}_{n}\left(  c\right)  =\mathbb{P}\left(  \mathcal{I}\circ
	\mathcal{S}_{\widehat{\Psi_{{\mathcal{H}}\times\mathcal{G}}}}\left(  \frac{\sqrt{T_n^B}\left(  \hat{\phi}_{P_n}^{B}-\hat{\phi}_{P_n}\right)  }{\mathcal{M}\left(
		\hat{\sigma}_{P_n}^{B}\right)  }\right)  \leq c \bigg|\left\{  \left(  Y_{i},D_{i}
	,Z_{i}\right)  \right\}  _{i=1}^{\infty}\right).\footnote{This conditional probability given $\left\{  \left(  Y_{i},D_{i},Z_{i}\right)  \right\}  _{i=1}^{\infty}$ is numerically equal to that given $\left\{  \left(  Y_{i},D_{i},Z_{i}\right)  \right\}  _{i=1}^{n}$ in \eqref{eq.c hat}.}
	\]
	Since by assumption $F$ is continuous and increasing at $c_{1-\alpha}$, by a proof similar to that of Theorem S.1.1 of \citet{fang2014inference} together with  \eqref{eq.bootstrap consistency 2} in this paper, we can conclude that for each $\varepsilon>0$,
	\begin{align}\label{eq.critical value consistency}
		\mathbb{P}^{\ast}(|\hat{c}_{1-\alpha}-c_{1-\alpha}|>\varepsilon)\to 0.
	\end{align}
	
	By the definitions of $\mathbb{G}$ (in the proof of Lemma \ref{lemma.weak convergence phi_K}) and $\mathbb{G}_0$ (in Lemma \ref{lemma.bootstrap properties}), together with the linearity of $\mathcal{ L}_P^{\prime}$, we have that $\mathbb{G}=\mathbb{G}_0+\Lambda(P)^{1/2}\mathcal{L}_{P}^{\prime}\left(  Q_0\right)$. Let $H_n=\sqrt{n}(P_n-P)$. By Lemma \ref{lemma.weak convergence Pn_hat and convergence Pn}, $\Vert H_n-Q_0\Vert_{\infty}\to 0$ as $n\to \infty$. Notice that $P_n=P+n^{-1/2}H_n$. By Lemma \ref{lemma.L HD}, we have that 
	\begin{align}\label{eq.L derivative P}
		&\lim_{n\rightarrow\infty}\sup_{\left(  h,g\right)  \in  \Psi_{\bar{\mathcal{H}}\times\mathcal{G}}  }\left\vert
		\frac{\mathcal{L}\left(  P_{n}\right)  \left(  h,g\right)
			-\mathcal{L}\left(  P\right)  \left(  h,g\right)  }{{n}^{-1/2}}-\mathcal{L}_{P}^{\prime}\left(  Q_0\right)  \left(  h,g\right)
		\right\vert \notag \\
		\le&\lim_{n\rightarrow\infty}\sup_{\left(  h,g\right)  \in  \bar{\mathcal{H}}\times\mathcal{G}  }\left\vert
		\frac{\mathcal{L}\left(  P+n^{-1/2}H_n\right)  \left(  h,g\right)
			-\mathcal{L}\left(  P\right)  \left(  h,g\right)  }{{n}^{-1/2}}-\mathcal{L}_{P}^{\prime}\left(  Q_0\right)  \left(  h,g\right)
		\right\vert =0.
	\end{align}
	By construction, $\mathcal{ L}(P)=0$ on $\Psi_{\bar{\mathcal{H}}\times\mathcal{G}}$ because $\mathcal{ L}(P)=\phi_P$. By assumption, we have that $\mathcal{ L}(P_n)=\phi_{P_n}\le0$ on $\Psi_{\bar{\mathcal{H}}\times\mathcal{G}}$ and \eqref{eq.L derivative P} implies that $\mathcal{ L}_P^{\prime}(Q_0)\le 0$ on $\Psi_{\bar{\mathcal{H}}\times\mathcal{G}}$.
	Thus we have that $\mathbb{G}\le \mathbb{G}_0$ and
	$
	\mathcal{I}\circ\mathcal{S}_{{\Psi_{\bar{\mathcal{H}}\times\mathcal{G}}}}\left( {\mathbb{G}}/{\mathcal{M}(\sigma_P)} \right) \le \mathcal{I}\circ\mathcal{S}_{{\Psi_{\bar{\mathcal{H}}\times\mathcal{G}}}}\left( {\mathbb{G}_0}/{\mathcal{M}(\sigma_P)} \right)
	$.
	Since $\mathbb{G}/\mathcal{M}\left(  \sigma_{P}\right)  \in\ell^{\infty
	}\left(  \Xi\times\mathcal{\bar{H}}\times\mathcal{G}\right)  $, where
	$\ell^{\infty}\left(  \Xi\times\mathcal{\bar{H}}\times\mathcal{G}\right)  $ is
	a Banach space under $\left\Vert \cdot\right\Vert _{\infty}$ and $\mathbb{G}$
	is tight by Lemma \ref{lemma.weak convergence phi_K}, we have that $\mathbb{G}/\mathcal{M}\left(  \sigma
	_{P}\right)  $ is tight (hence separable\footnote{See the definition of separability in \citet[p.~17]{van1996weak}. The closure of a separable subset of a metric space is separable.}) and is Radon by Theorem 7.1.7 of \citet{bogachev2007measure}. Since $\mathcal{I}%
	\circ\mathcal{S}_{\Psi_{\mathcal{\bar{H}}\times\mathcal{G}}}$ is continuous
	and convex, Theorem 11.1(i) of \citet{davydov1998local} implies that the CDF of $\mathcal{I}%
	\circ\mathcal{S}_{\Psi_{\mathcal{\bar{H}}\times\mathcal{G}}}\left(
	\mathbb{G}/\mathcal{M}\left(  \sigma_{P}\right)  \right)  $ is everywhere
	continuous except possibly at the point%
	\[
	r_{0}=\inf\left\{  r:\mathbb{P}\left(  \mathcal{I}\circ\mathcal{S}%
	_{\Psi_{\mathcal{\bar{H}}\times\mathcal{G}}}\left(  \mathbb{G}/\mathcal{M}%
	\left(  \sigma_{P}\right)  \right)  \leq r\right)  >0\right\}  \text{. }%
	\]
	Because $\mathcal{I}\circ\mathcal{S}_{\Psi_{\mathcal{\bar{H}}\times
			\mathcal{G}}}\left(  \mathbb{G}/\mathcal{M}\left(  \sigma_{P}\right)  \right)
	\leq\mathcal{I}\circ\mathcal{S}_{\Psi_{\mathcal{\bar{H}}\times\mathcal{G}}%
	}\left(  \mathbb{G}_{0}/\mathcal{M}\left(  \sigma_{P}\right)  \right)  $, we
	have that
	\[
	r_{0}\leq\inf\left\{  r:\mathbb{P}\left(  \mathcal{I}\circ\mathcal{S}%
	_{\Psi_{\mathcal{\bar{H}}\times\mathcal{G}}}\left(  \mathbb{G}_{0}%
	/\mathcal{M}\left(  \sigma_{P}\right)  \right)  \leq r\right)  >0\right\}
	<c_{1-\alpha},
	\]
	where the last inequality follows from that the CDF of $\mathcal{I}\circ
	\mathcal{S}_{\Psi_{\mathcal{\bar{H}}\times\mathcal{G}}}\left(  \mathbb{G}_0
	/\mathcal{M}\left(  \sigma_{P}\right)  \right)  $ is continuous and increasing
	at $c_{1-\alpha}$. This implies that the CDF of $\mathcal{I}\circ\mathcal{S}%
	_{\Psi_{\mathcal{\bar{H}}\times\mathcal{G}}}\left(  \mathbb{G}/\mathcal{M}%
	\left(  \sigma_{P}\right)  \right)  $ is continuous at $c_{1-\alpha}$. 
	Now by \eqref{eq.TS weak convergence} and \eqref{eq.critical value consistency} in this paper, together with Example 1.4.7 (Slutsky's lemma), Theorem 1.3.6 (continuous mapping), and Theorem 1.3.4(vi) of \citet{van1996weak}, we conclude that 
	\begin{align}\label{eq.limit size}
		\lim_{n\to\infty}\mathbb{P}^{\ast}\left( \sqrt{T_n}  \mathcal{I}\circ\mathcal{S}\left(  \frac{\hat{\phi}_{P_n}}{\mathcal{M}(\hat{\sigma}_{P_n})}\right)>\hat{c}_{1-\alpha} \right)= \mathbb{P}\left(\mathcal{I}\circ\mathcal{S}_{{\Psi_{\bar{\mathcal{H}}\times\mathcal{G}}}}\left( \frac{\mathbb{G}}{\mathcal{M}(\sigma_P)} \right)>c_{1-\alpha}\right)\le\alpha,
	\end{align}
	where the inequality follows from that $c_{1-\alpha}$ is the $1-\alpha$ quantile for $\mathcal{I}\circ\mathcal{S}_{{\Psi_{\bar{\mathcal{H}}\times\mathcal{G}}}}\left( {\mathbb{G}_0}/{\mathcal{M}(\sigma_P)} \right)$.
	If, in addition, $P_n=P$ for all $n$, then by Assumption \ref{ass.probability path} we have that $v_0=0$ and hence $Q_0=0$. This implies that $\mathbb{ G}=\mathbb{ G}_0$ and that the inequality in \eqref{eq.limit size} holds with equality. 
	
	(ii). Let $\hat{c}_{1-\alpha}^{\prime}$ be the bootstrap critical value obtained using the bootstrap test statistic 
	$\mathcal{I}\circ\mathcal{S(}\sqrt{T_n^{B}}(\hat{\phi}_{P_n}^{B}-\hat{\phi}_{P_n})/\mathcal{M}(  \hat{\sigma}_{P_n}^{B})  )$ in place of $ \mathcal{I}\circ\mathcal{S}_{\widehat{\Psi_{\mathcal{H}\times\mathcal{G}}}}
	(  {\sqrt{T_n^B}(  \hat{\phi}_{P_n}^{B}-\hat{\phi}_{P_n})  }/\mathcal{M}({\hat{\sigma}_{P_n}^{B}}))$ in the test procedure in Section \ref{subsbusec.test procedure}. By arguments similar to those in the proof of part (i), we can show that
	$\hat{c}_{1-\alpha}^{\prime}\rightarrow c_{1-\alpha}^{\prime}$ in outer probability, where
	$c_{1-\alpha}^{\prime}$ is the $1-\alpha$ quantile for $\mathcal{I}
	\circ\mathcal{S}\left(  \mathbb{G}_0/\mathcal{M}\left(  \sigma_P\right)  \right)
	$.\footnote{Here, we implicitly assume that the CDF of $\mathcal{I}\circ\mathcal{S}\left(  \mathbb{G}_0/\mathcal{M}\left(  \sigma_P\right)  \right)
		$ is continuous and strictly increasing at $c_{1-\alpha}^{\prime}$. Theorem 11.1 of \citet{davydov1998local} implies that the CDF of $\mathcal{I}\circ\mathcal{S}\left(  {\mathbb{G}_0}/{\mathcal{M}(\sigma_{P})}\right)$ is differentiable and has a positive derivative everywhere except at countably many points in its support, provided that $\mathcal{I}\circ\mathcal{S}\left(  {\mathbb{G}_0}/{\mathcal{M}(\sigma_{P})}\right)$ is not a constant. By construction, $\mathcal{I}\circ\mathcal{S}\left(  \mathbb{G}_0/\mathcal{M}\left(  \sigma_P\right)  \right)
		$ is not a constant in general cases.} Clearly, $\hat{c}_{1-\alpha}^{\prime}\geq\hat
	{c}_{1-\alpha}$ by construction. By Lemma \ref{lemma.almost uniform convergence phi sigma}, $\hat{\phi}_{P_n}/\mathcal{M}\left(  \hat{\sigma}_{P_n}\right)  \rightarrow\phi_P/\mathcal{M}\left(  \sigma_P\right)  $ in $\ell
	^{\infty}\left(  \Xi\times\mathcal{\bar{H}}\times\mathcal{G}\right)  $ almost uniformly, and hence almost uniformly
	\[
	\mathcal{I}\circ\mathcal{S}_{{{\mathcal{H}\times\mathcal{G}}}}\left(  \frac{\hat{\phi}_{P_n}}{\mathcal{M}\left(
		\hat{\sigma}_{P_n}\right)  }\right)  \rightarrow\mathcal{I}\circ\mathcal{S}_{{{\mathcal{H}\times\mathcal{G}}}}\left(
	\frac{\phi_P}{\mathcal{M}\left(  \sigma_P\right)  }\right)  >0,
	\]
	where the inequality follows from the assumption that the $H_0$ in \eqref{eq.null 1} is false with $Q=P$. Thus we have that $[\mathcal{I}
	\circ\mathcal{S}_{\mathcal{H}\times\mathcal{G}}(  \sqrt{T_n}\hat{\phi}_{P_n}/\mathcal{M}\left(  \hat{\sigma
	}_{P_n}\right)  )] ^{-1} \rightarrow 0$ almost uniformly ($T_n/n\to\Lambda(P)$ almost uniformly by Lemma \ref{lemma.almost uniform convergence phi sigma}). By Lemmas 1.9.3(ii) and 1.10.2(iii), Example 1.4.7 (Slutsky's lemma), and Theorems 1.3.6 (continuous mapping) and 1.3.4(vi) of \citet{van1996weak}, we now conclude that
	\[
	\mathbb{P}^{\ast}\left(  \mathcal{I}\circ\mathcal{S}_{\mathcal{H}\times\mathcal{G}}\left(  \frac{\sqrt{T_n}\hat{\phi}_{P_n}
	}{\mathcal{M}\left(  \hat{\sigma}_{P_n}\right)  }\right)  >\hat{c}_{1-\alpha
	}\right)  \geq\mathbb{P}^{\ast}\left(  \mathcal{I}\circ\mathcal{S}_{\mathcal{H}\times\mathcal{G}}\left(  \frac
	{\sqrt{T_n}\hat{\phi}_{P_n}}{\mathcal{M}\left(  \hat{\sigma}_{P_n}\right)  }\right)
	>\hat{c}_{1-\alpha}^{\prime}\right)  \rightarrow1.
	\]
\end{proof}
\setcounter{equation}{0}
\section{Monotonicity Condition with Unspecified Directions for Unordered Treatment}\label{sec.unordered treatment monotonicity with no deriction}
In Section \ref{subsec.unordered}, we mentioned that the test can be extended for the monotonicity condition with unspecified directions. We now show details for this extension.
Define $2^{J}$ different $J$-dimensional binary vectors by $v_{1}%
,\ldots,v_{2^{J}}$ with
\[
v_{1}=\left(
\begin{array}
	[c]{c}%
	0\\
	0\\
	\vdots\\
	0
\end{array}
\right)  ,v_{2}=\left(
\begin{array}
	[c]{c}%
	1\\
	0\\
	\vdots\\
	0
\end{array}
\right)  ,\ldots,v_{2^{J}}=\left(
\begin{array}
	[c]{c}%
	1\\
	1\\
	\vdots\\
	1
\end{array}
\right)  .
\]
Let $\mathrm{L}:\mathcal{D}\to \{1,\ldots,J\}$ map $d\in\mathcal{D}$ to $d$'s index in $\mathcal{D}$ such that if $d=d_j$, then $\mathrm{L}(d)=j$. For every $q\in\left\{  1,\ldots,2^{J}\right\}  $, define a function $f_{q}:\mathcal{D}  \rightarrow\left\{  1,-1\right\}  $ by
$f_{q}\left(  d\right)  =\left(  -1\right)  ^{v_{q}\left(  \mathrm{L}(d)\right)  }$, where $v_q(j)$ denotes the $j$th element of $v_q$.
If the instrument $Z$ is valid for the unordered treatment $D$ as defined in Assumption \ref{ass.IV validity for unordered D} with (iii) replaced by Assumption \ref{ass.unordered monotonicity}, then for all $z_j,z_k\in\mathcal{Z}$ with $j<k$, there is a $q\in\{1,\ldots,2^J\}$ such that
\begin{align*}
	f_q(d)\cdot\{\mathbb{P}\left(  Y\in B,D=d|Z=z_j\right)  -\mathbb{P}\left(  Y\in B,D=d|Z=z_k\right)\}\le0
\end{align*} 
for every $d\in\mathcal{D}$ and every closed interval $B$.
Then for every $q\in\left\{  1,\ldots,2^{J}\right\}  $, we define%
\begin{align*}
	\mathcal{H}_{q}  &  =\left\{  f_{q}\left(  d\right)  \cdot1_{B\times\left\{
		d\right\}  \times\mathbb{R}}:B\text{ is a closed interval in }\mathbb{R},d\in\mathcal{D}\right\} \text{ and}\\
	\mathcal{\bar{H}}_{q}  &  =\left\{  f_{q}\left(  d\right)  \cdot1_{B\times
		\left\{  d\right\}  \times\mathbb{R}}:B\text{ is a closed}, \text{open}, \text{or
		half-closed interval in }\mathbb{R},d\in\mathcal{D}\right\}  .
\end{align*}
Also, define function spaces
\begin{align}\label{def.function spaces unordered}
	\mathcal{H}=\cup_{q=1}^{2^{J}}\mathcal{H}_{q}, \mathcal{\bar{H}}	=\cup_{q=1}^{2^{J}}\mathcal{\bar{H}}_{q}, \text{ and }
	{\mathcal{G}}=\left\{ \left( 1_{\mathbb{R}\times \mathbb{R}\times \left\{
		z_j\right\} },1_{\mathbb{R}\times \mathbb{R}\times \left\{ z_k\right\} }\right) :j,k\in\{1,\ldots,K\},j<k\right\}.
\end{align}
Let $\phi_{Q}$ be defined as in \eqref{eq.phi_Q} with ${\mathcal{H}}$ and $\mathcal{G}$
defined in \eqref{def.function spaces unordered}. Now we obtain the testable implication for Assumption \ref{ass.IV validity for unordered D} with (iii) replaced by Assumption \ref{ass.unordered monotonicity}: 
\begin{align}\label{eq.testable implication unordered no direction}
	H_0: \max_{g\in\mathcal{G}}\min_{q\in\left\{  1,\ldots,2^{J}\right\}  }\sup_{h\in\mathcal{H}_{q}}
	\phi_Q\left(  h,g\right)  = 0,
\end{align}
if the underlying distribution of the data is $Q$. The test proposed in Section \ref{sec.unordered} can be generalized for the $H_0$ in \eqref{eq.testable implication unordered no direction}.

\section{Additional Monte Carlo Studies}\label{sec.appendix Monte Carlo Comparison}

\subsection{Degenerate Case under Null}
In Section \ref{subsec.bootstrap}, we discussed the case where $\mathbb{T}_0=0$. In this section, we design a DGP such that $\mathbb{T}_0=0$ to show the performance of the test in this case. 
We let this DGP be the same as that designed in Section \ref{sec.size}, except that we let $D_0=2\times 1\{V\le 0.328\}+ 1\{0.328  <V\le 0.658\}$, $D_1=2\times 1\{V\le 0.329\}+ 1\{0.329  <V\le 0.659\}$, and $D_2=2\times 1\{V\le 0.33\}+ 1\{0.33  <V\le 0.66\}$. Then it can be shown that $\mathbb{T}_0=0$ in this setting. We used $\eta=0$ as discussed in Section \ref{subsec.bootstrap}. As suggested in Section \ref{sec.size}, $\tau_n$ could be set to $2$. Table \ref{tab:Rej H0 multi (DGP2)} shows that the rejection rates are well controlled by the nominal significance level $0.05$ in this case.  

\begin{table}[h]
	
	\centering
	\caption{Rejection Rates under $H_{0}$ ($\mathbb{T}_0=0$) for Multivalued $D$ and Multivalued $Z$}
	\label{tab:Rej H0 multi (DGP2)}
	\scalebox{0.9}{
		\begin{tabular}{   c  c  c  c  c  c  c  c  c  c  c  c  }
			\hline
			\hline
			\multirow{2}{*}{$\tau_n$} & \multicolumn{10}{c}{$\xi$ for $\delta_{\xi}$} &\multirow{2}{*}{$\bar{\nu}_{\xi}$} \\
			\cline{2-11}
			& 0.07 & 0.1 & 0.13 & 0.16 & 0.19 & 0.22 & 0.25 & 0.28 & 0.3 & 1 & \\
			\hline

	$0.1$	&0.117&0.102&0.091&0.104&0.100&0.094&0.090&0.090&0.090&0.090&0.104\\
	$0.5$	&0.080&0.068&0.063&0.072&0.071&0.067&0.071&0.071&0.071&0.071&0.077\\
	$1$	&0.073&0.055&0.048&0.057&0.064&0.055&0.057&0.057&0.057&0.057&0.053\\
	$2$	&0.066&0.045&0.042&0.048&0.052&0.050&0.050&0.050&0.050&0.050&0.045\\
	$3$	&0.066&0.045&0.042&0.048&0.052&0.050&0.050&0.050&0.050&0.050&0.045\\
	$4$	&0.066&0.045&0.042&0.048&0.052&0.050&0.050&0.050&0.050&0.050&0.045\\
	$\infty$	&0.066&0.045&0.042&0.048&0.052&0.050&0.050&0.050&0.050&0.050&0.045\\

			\hline
		\end{tabular}
	}

\end{table}

\subsection{Multivalued Treatment with Covariates}

In Section \ref{sec.empirical}, we used the data set of \citet{NBERw4483} to illustrate the application of the proposed test in practice. We revisited this empirical example and reconducted the test with conditioning covariates added into the model. Due to the limitation on the computation power, we added two conditioning variables (the variables ``south66'' and ``black'') from the data set. When we chose the values of $\xi$, we employed the empirical variance formula in \eqref{eq.estimated stat variance multi} to calculate an empirical bound for $\hat{\sigma}_{P_n}$. Specifically, we let $T_n=n\cdot\prod_{k=1}^{K}\prod_{l=1}^{L}\hat{P}_n(1_{\mathbb{R}\times\mathbb{R}\times\{z_k\}\times\{x_l\}})$ and used the first inequality in \eqref{eq.sigma_square_hat bound} to find the empirical bound. Table \ref{tab:ApplicationCardCovariates} shows the $p$-values obtained from the test. The $p$-values are lower than those in Section \ref{sec.empirical}. One possible reason is that when conditioning covariates are included, the number of observations for each category $(z_k,x_l)$ is small. Thus the $p$-values are different from those in the case where no conditioning covariates are included. But the results are consistent with those in Section \ref{sec.empirical} and show that the validity of the instrument is not rejected.

\begin{table}[h]
	
	\centering
	\caption{$p$-values Obtained from the Proposed Test for Each Measure $\nu$ with Covariates}
	\scalebox{0.85}{
		\begin{tabular}{   c  c  c  c  c  c  c  c  c  c   }
			\hline
			\hline
			 \multicolumn{9}{c}{$\xi$ for $\delta_{\xi}$} &\multirow{2}{*}{$\bar{\nu}_{\xi}$} \\
			\cline{1-9}
		   0.0001 & 0.00013 & 0.00016 & 0.00019 & 0.00022 & 0.00025 & 0.00028 & 0.00031 & 0.00034 & \\
			\hline
			0.673&0.541&0.519&0.469&0.477&0.489&0.489&0.489&0.489&0.522\\

			\hline
		\end{tabular}
	}
	
	\label{tab:ApplicationCardCovariates}
\end{table}

\subsection{Unordered Treatment}
In this section, we designed Monte Carlo simulations for the case where $D$ is an unordered random variable with $D\in\{a,b,c\}$. For simplicity, we let $Z\in\{0,1\}$. We also consider the presence of a conditioning covariate $X\in\{0,1\}$. The measure $\nu$ was set to be a Dirac measure $\delta_{\xi}$ centered at one of the following values of $\xi$: $0.01$, $0.02$, $0.03$, $0.04$, $0.05$, $0.06$, $0.07$, $0.08$, $0.09$, and $0.1$, or to be a probability measure $\bar{\nu}_{\xi}$ that assigns equal probabilities (weights) to the values of $\xi$ listed above.  The nominal significance level $\alpha$ was set to $0.05$. To expedite the simulation, we employed the warp-speed
method of \citet{giacomini2013warp}.

\subsubsection{Size Control and Tuning Parameter Selection}
The first set of simulations was designed to investigate the size of the test and the selection of the tuning parameter. For this set of simulations, we set $n$ to $2000$ and $\tau_n$ to $0.1$, $0.5$, $1$, $2$, $3$, $4$, and $\infty$. We compared the rejection rates obtained using each of these values of $\tau_n$ and decided which value would be a good option for sample sizes close to $2000$. The
simulation consisted of $1000$ Monte Carlo iterations and $1000$ bootstrap iterations. 
We let $U\sim\mathrm{Unif}(0,1)$, $U_{X}\sim\mathrm{Unif}(0,1)$, $V\sim\mathrm{Unif}(0,1)$, $N_a\sim \mathrm{N}(0,1)$, $N_b\sim \mathrm{N}(1,1)$, $N_c\sim \mathrm{N}(2,1)$, $Z=1\{U \le 0.5\}$  ($\mathbb{P}(Z=1)=0.5$), $X=1\{U_{X}\le0.5\}$,
\[
D_{z}=\left\{
\begin{array}
	[c]{c}%
	a\\
	b\\
	c
\end{array}
\right\vert
\begin{array}
	[c]{c}%
	V>0.6\\
	0.5< V\leq0.6\\
	0<V\le 0.5
\end{array}
\] for $z\in\{0,1\}$, $D=D_z$ if $Z=z$ with $z\in\{0,1\}$, and $Y=\sum_{d\in\{a,b,c\}}1\{D=d\}\times N_d$. All the variables $U$, $U_{X}$, $V$, $N_a$, $N_b$, and $N_c$ were set to be mutually independent. Assumption \ref{ass.IV validity for unordered D} holds in this case with $\mathcal{C}=\{(a,0,1),(b,1,0),(c,1,0)\}$.

Table \ref{tab:Rej H0 unordered} shows the results of the simulations. The rejection rates were influenced by the values of $\tau_n$ and $\xi$. For each measure $\nu$, a smaller $\tau_n$ yields greater rejection rates by construction. 
For $\tau_n=2$, all the rejection rates were close to those for $\tau_n=\infty$ (the conservative case).  Similar to the pattern of the results shown in \citet{kitagawa2015test} and Section \ref{sec.size}, some rejection rates for $\tau_n=2$ with $\delta_{\xi}$ centered at particular values of $\xi$ were slightly upwardly biased compared to the nominal size. Overall, however, the results showed good performance of the test in terms of size control. When sample sizes are less than or close to $2000$, we suggest using $\tau_n=2$ in practice to achieve good size control without a significant power loss.  When the sample size increases, $\tau_n$ should be increased accordingly. It is also worth noting that when we used the measure $\bar{\nu}_{\xi}$, the rejection rates could be well controlled by the nominal significance level. Thus if we have no additional information about the choice of $\xi$,  $\bar{\nu}_{\xi}$ can be a default choice for us. 

\begin{table}[h]
	
	\centering
	\caption{Rejection Rates under $H_{0}$ for Unordered $D$}
	\scalebox{0.9}{
		\begin{tabular}{   c  c  c  c  c  c  c  c  c  c  c  c  }
			\hline
			\hline
			\multirow{2}{*}{$\tau_n$} & \multicolumn{10}{c}{$\xi$ for $\delta_{\xi}$} &\multirow{2}{*}{$\bar{\nu}_{\xi}$} \\
			\cline{2-11}
		  & 0.01 & 0.02 & 0.03 & 0.04 & 0.05 & 0.06 & 0.07 & 0.08 & 0.09 & 0.1 & \\
			\hline

		 $0.1$   &0.137&0.137&0.118&0.102&0.111&0.104&0.095&0.120&0.116&0.116&0.136\\
		 $0.5$	&0.092&0.093&0.076&0.082&0.061&0.069&0.072&0.084&0.075&0.075&0.082\\
		 $1$	&0.057&0.070&0.065&0.067&0.059&0.065&0.065&0.055&0.052&0.052&0.069\\
		 $2$	&0.009&0.055&0.056&0.061&0.058&0.064&0.058&0.045&0.049&0.049&0.053\\
		 $3$	&0.006&0.050&0.054&0.061&0.058&0.064&0.058&0.045&0.049&0.049&0.053\\
		 $4$	&0.006&0.050&0.054&0.061&0.058&0.064&0.058&0.045&0.049&0.049&0.053\\
		 $\infty$	&0.006&0.050&0.054&0.061&0.058&0.064&0.058&0.045&0.049&0.049&0.053\\

			\hline
		\end{tabular}
	}
	
	\label{tab:Rej H0 unordered}
\end{table}

\subsubsection{Rejection Rates against Fixed Alternatives}
The second set of simulations was designed to investigate the power of the test. A total of five DGPs were considered. Sample sizes were set to $n=200$, $600$, $1000$, $1100$, and $2000$. The probability $\mathbb{P}(Z=1)=r_n$, with $r_n=1/2$, $1/6$, $1/2$, $1/11$, and $1/2$ for the corresponding sample sizes. We set $\tau_n$ to $2$, as suggested in the preceding set of simulations. Each
simulation consisted of $500$ Monte Carlo iterations and $500$ bootstrap iterations. 
We let $U\sim\mathrm{Unif}(0,1)$, $U_{X}\sim\mathrm{Unif}(0,1)$, $V\sim\mathrm{Unif}(0,1)$, $W\sim\mathrm{Unif}(0,1)$, $Z=1\{U \le r_n\}$, and $X=1\{U_{X}\le0.5\}$.
For DGPs (1)--(4), we let 
\[
D_{z}=\left\{
\begin{array}
	[c]{c}%
	a\\
	b\\
	c
\end{array}
\right\vert
\begin{array}
	[c]{c}%
	V>0.6\\
	0.5< V\leq0.6\\
	0<V\le 0.5
\end{array}
\]
for $z\in\{0,1\}$,   $D=D_z$ if $Z=z$ with $z\in\{0,1\}$, $N_Z\sim \mathrm{N}(0,1)$, $N_{az}=N_Z$ for $z\in\{0,1\}$, $N_{bz}=N_Z$ for $z\in\{0,1\}$, and $N_{c1}=N_Z$.
\begin{enumerate}[label=(\arabic*):]
	
	\item  $N_{c0}\sim \mathrm{N}(-0.7,1)$ and $Y=\sum_{z=0}^{1}1\{Z=z\}\times(\sum_{d\in\{a,b,c\}} 1\{D=d\}\times N_{dz})$.
	
	\item $N_{c0}\sim \mathrm{N}(0,1.675^2)$ and $Y=\sum_{z=0}^{1}1\{Z=z\}\times(\sum_{d\in\{a,b,c\}} 1\{D=d\}\times N_{dz})$.
	
	\item $N_{c0}\sim \mathrm{N}(0,0.515^2)$ and $Y=\sum_{z=0}^{1}1\{Z=z\}\times(\sum_{d\in\{a,b,c\}} 1\{D=d\}\times N_{dz})$.
	
	\item $N_{c0a}\sim \mathrm{N}(-1,0.125^2)$, $N_{c0b}\sim \mathrm{N}(-0.5,0.125^2)$, $N_{c0c}\sim \mathrm{N}(0,0.125^2)$, \\$N_{c0d}\sim \mathrm{N}(0.5,0.125^2)$, $N_{c0e}\sim \mathrm{N}(1,0.125^2)$, $N_{c0}=1\{W\le 0.15\}\times N_{c0a}+1\{0.15<W\le 0.35\}\times N_{c0b}+1\{0.35<W\le 0.65\}\times N_{c0c}+1\{0.65<W\le 0.85\}\times N_{c0d}+1\{W>0.85\}\times N_{c0e}$, and $Y=\sum_{z=0}^{1}1\{Z=z\}\times(\sum_{d\in\{a,b,c\}} 1\{D=d\}\times N_{dz})$.
	
\end{enumerate} 
For DGP (5), we let 
\[
D_{0}=\left\{
\begin{array}
	[c]{c}%
	a\\
	b\\
	c
\end{array}
\right\vert
\begin{array}
	[c]{c}%
	V>0.6\\
	0.5< V\leq0.6\\
	0<V\le 0.5
\end{array},
D_{1}=\left\{
\begin{array}
	[c]{c}%
	a\\
	b\\
	c
\end{array}
\right\vert
\begin{array}
	[c]{c}%
	V>0.3\\
	0.2< V\leq0.3\\
	0<V\le 0.2
\end{array}.
\] 
\begin{enumerate}[resume,label=(\arabic*):]
	
	\item Let $N_a\sim \mathrm{N}(0,1)$, $N_b\sim \mathrm{N}(1,1)$, $N_c\sim \mathrm{N}(2,1)$, $D=D_z$ if $Z=z$ with $z\in\{0,1\}$, and $Y=\sum_{d\in\{a,b,c\}} 1\{D=d\}\times N_{d}$.
	
\end{enumerate}
All the variables $U$, $U_X$, $V$, $N_{Z}$, $N_{c0}$, $N_a$, $N_b$, and $N_c$ were set to be mutually independent.

Table \ref{tab:Rej H1 unordered} shows the rejection rates under DGPs (1)--(5), that is, the power of the test. For each DGP and each measure $\nu$, the rejection rate increased as the sample size $n$ was increased. The results for $\nu=\bar{\nu}_{\xi}$ showed that if we have no information about the choice of $\xi$, using the weighted average of the statistics over $\xi$ is a desirable option. When $n>200$, the rejection rates for using $\nu=\bar{\nu}_{\xi}$ were at a relatively high level compared to the results for using a Dirac measure.  

\begin{table}[h]
	
	\centering
	\caption{Rejection Rates under $H_{1}$ for Unordered $D$}
	\scalebox{0.85}{
		\begin{tabular}{  c  c  c  c  c  c  c  c  c  c  c  c  c  }
			\hline
			\hline
			\multirow{2}{*}{DGP} & \multirow{2}{*}{$n$}& \multicolumn{10}{c}{$\xi$ for $\delta_{\xi}$} &\multirow{2}{*}{$\bar{\nu}_{\xi}$} \\
			\cline{3-12}
			& & 0.01 & 0.02 & 0.03 & 0.04 & 0.05 & 0.06 & 0.07 & 0.08 & 0.09 & 0.1 & \\
			\hline

			\multirow{5}{*}{(1)}	
			&200&0.000&0.090&0.188&0.256&0.324&0.336&0.326&0.290&0.306&0.306&0.222\\
			&600&0.032&0.402&0.528&0.562&0.546&0.502&0.432&0.432&0.432&0.432&0.464\\
			&1000&0.604&0.932&0.954&0.976&0.984&0.984&0.972&0.966&0.962&0.962&0.986\\
			&1100&0.488&0.594&0.626&0.566&0.470&0.448&0.448&0.448&0.448&0.448&0.626\\
			&2000&1.000&1.000&1.000&1.000&1.000&1.000&1.000&1.000&1.000&1.000&1.000\\	
			\hline

			\multirow{5}{*}{(2)}	
			
			&200&0.000&0.006&0.044&0.112&0.134&0.096&0.060&0.050&0.044&0.044&0.034\\
			&600&0.002&0.174&0.092&0.048&0.022&0.028&0.030&0.030&0.030&0.030&0.042\\
			&1000&0.190&0.624&0.772&0.722&0.572&0.358&0.150&0.124&0.108&0.108&0.512\\
			&1100&0.236&0.074&0.048&0.036&0.044&0.042&0.042&0.042&0.042&0.042&0.078\\
			&2000&0.992&0.998&0.998&0.998&0.970&0.898&0.642&0.456&0.398&0.398&0.976\\
			
			\hline
			
			\multirow{5}{*}{(3)}	
			
			&200&0.000&0.160&0.334&0.398&0.452&0.460&0.494&0.484&0.490&0.490&0.364\\
			&600&0.042&0.560&0.666&0.786&0.812&0.798&0.750&0.750&0.750&0.750&0.728\\
			&1000&0.728&0.926&0.948&0.958&0.980&0.986&0.990&0.992&0.990&0.990&0.988\\
			&1100&0.596&0.720&0.824&0.860&0.792&0.764&0.764&0.764&0.764&0.764&0.826\\
			&2000&0.996&1.000&1.000&1.000&1.000&1.000&1.000&1.000&1.000&1.000&1.000\\
			
			\hline
			
			\multirow{5}{*}{(4)}	
			&200&0.000&0.042&0.110&0.150&0.172&0.200&0.214&0.214&0.208&0.208&0.146\\
			&600&0.026&0.326&0.382&0.396&0.428&0.442&0.414&0.404&0.404&0.404&0.436\\
			&1000&0.210&0.472&0.572&0.576&0.618&0.702&0.706&0.746&0.774&0.774&0.704\\
			&1100&0.326&0.444&0.530&0.568&0.504&0.444&0.444&0.444&0.444&0.444&0.580\\
			&2000&0.790&0.930&0.948&0.954&0.962&0.956&0.968&0.978&0.982&0.982&0.986\\
			
			\hline
		
			\multirow{5}{*}{(5)}	
			&200&0.162&0.900&0.958&0.968&0.974&0.974&0.984&0.988&0.988&0.988&0.970\\
			&600&0.688&0.988&1.000&1.000&1.000&1.000&1.000&1.000&1.000&1.000&1.000\\
			&1000&1.000&1.000&1.000&1.000&1.000&1.000&1.000&1.000&1.000&1.000&1.000\\
			&1100&0.974&1.000&1.000&1.000&1.000&0.996&0.996&0.996&0.996&0.996&1.000\\
			&2000&1.000&1.000&1.000&1.000&1.000&1.000&1.000&1.000&1.000&1.000&1.000\\

			\hline
		\end{tabular}
	}
	
	\label{tab:Rej H1 unordered}
\end{table} 

\subsection{Comparison in Binary Case}\label{sec.comparison binary}
The Monte Carlo experiments discussed in this section followed the design of
\citet{kitagawa2015test}, where the treatment and the instrument were both binary, with $D\in\{0,1\}$ and $Z\in\{0,1\}$, and we compared our results  with theirs.  We simulated the limiting rejection rates using the approach proposed in the present paper and that proposed by
\citet{kitagawa2015test} with the same randomly generated data. In this special case, if the measure $\nu$ is set to be a Dirac measure, the asymptotic distribution of the test statistic under null  can be written as $\sup_{f\in\mathcal{ F}_b^{\ast}}{\mathbb{G}_H(f)}/({\xi\vee\sigma_H(f)})$ in \eqref{eq.upper bound of limiting distribution}. 
We followed the discussion in Section \ref{sec.binary} to construct the bootstrap critical value.
That is, we approximated $\mathbb{G}_H$ and $\sigma_H$ by $\mathbb{G}_H^B$ and $\sigma_H^B$ following the bootstrap method of \citet{kitagawa2015test}. Then we estimated $\mathcal{F}_b^{\ast}$ by $\widehat{\mathcal{F}_b^{\ast}}$ in a way similar to \eqref{eq.Psi_hat}, which is the key difference between our approach and that of \citet{kitagawa2015test}. Last, we constructed the bootstrap test statistic by $\sup_{f\in\widehat{\mathcal{ F}_b^{\ast} }}{\mathbb{G}_H^B(f)}/({\xi\vee\sigma_H^B(f)})$ and used it to create the critical value. Because of $\widehat{\mathcal{F}_b^{\ast}}$, our bootstrap test statistic can approximate the null distribution consistently and the power of the test can be improved. This new bootstrap test statistic is asymptotically equivalent to that in \eqref{eq.bootstrap test stat} under null, and the new critical value is asymptotically equivalent to $\hat{c}_{1-\alpha}$ in Section \ref{subsbusec.test procedure} under null.

Each simulation consisted of $1000$ Monte Carlo iterations and $1000$ bootstrap iterations. For each DGP, the measure $\nu$ was set to be a Dirac measure centered at $\xi=0.07$, $0.22$, $0.3$, and $1$. The nominal significance level $\alpha$ was set to $0.05$. 

\subsubsection{Size Control and Tuning Parameter Selection}
We first ran simulations to investigate the size of the test and the selection of the tuning parameter. As suggested in Section \ref{sec.simulation}, for sample sizes less than $3000$, we can use $\tau_n=2$ for the tuning parameter. In this set of simulations, we set $n=2000$ and $\tau_n=1,2,3,4,\infty$. For the DGP, we used $U\sim\mathrm{Unif}(0,1)$, $V\sim\mathrm{Unif}(0,1)$, $N_0\sim \mathrm{N}(0,1)$, $N_1\sim \mathrm{N}(1,1)$, $Z=1\{U\le 0.5\}$, $D_0=1\{V\le 0.5\}$, $D_1=1\{V\le 0.5\}$, $D=\sum_{z=0}^1 1\{Z=z\}\times D_z$, and $Y=\sum_{d=0}^{1}1\{D=d\}\times N_d$,
where $U$, $V$, $N_0$, and $N_1$ were mutually independent. This DGP is equivalent to that used by \citet{kitagawa2015test} to show the size control of their test. The results in Table \ref{tab:Rej H0} confirmed the conclusion from Table \ref{tab:Rej H0 multi}: For  $\tau_n=2$, the rejection rates were close to those for $\tau_n=\infty$ and close to the nominal size. Recall that a smaller tuning parameter $\tau_n$ yields greater power for the test. Thus we kept using $\tau_n=2$ in this case. 
\begin{table}[h]
	
	\centering
	\caption{Rejection Rates under $H_{0}$ for Binary $D$ and Binary $Z$}
	\scalebox{1}{
		\begin{tabular}{    c  c  c  c  c     }
			\hline
			\hline
		 \multirow{2}{*}{$\tau_n$}	&  \multicolumn{4}{c}{$\xi$}\\
			        \cline{2-5}
			          & 0.07 &  0.22 &  0.3 & 1 \\
        \hline

			$1$&0.077&0.052&0.048&0.069\\
			$2$&0.058&0.048&0.040&0.067\\
			$3$&0.056&0.046&0.040&0.067\\
			$4$&0.056&0.046&0.040&0.067\\
			$\infty$&0.056&0.046&0.040&0.067\\

\hline

		\end{tabular}
	}
	
	\label{tab:Rej H0}
\end{table}
\subsubsection{Power Comparison}
Four DGPs were considered for the power comparisons. The sample sizes were set to $n=200$, $600$, $1000$, $1100$, and $2000$, and the tuning parameter was set to $\tau_n=2$. The probability $\mathbb{P}(Z=1)=r_n$ with $r_n=1/2$, $1/6$, $1/2$, $1/11$, and $1/2$ for the corresponding sample sizes. We let
$U\sim\mathrm{Unif}(0,1)$, $V\sim\mathrm{Unif}(0,1)$, $W\sim\mathrm{Unif}(0,1)$, $Z=1\{U\le r_n\}$, $D_0=1\{V\le 0.45\}$, $D_1=1\{V\le 0.55\}$,  $D=\sum_{z=0}^1 1\{Z=z\}\times D_z$, $N_{00}\sim \mathrm{N}(0,1)$, $N_{01}\sim \mathrm{N}(0,1)$, and $N_{11}\sim \mathrm{N}(0,1)$.
\begin{enumerate}[label=(\arabic*):]
	
	\item $N_{10}\sim \mathrm{N}(-0.7,1)$ and $Y=\sum_{z=0}^{1}1\{Z=z\}\times(\sum_{d=0}^{1} 1\{D=d\}\times N_{dz})$.
	
	\item $N_{10}\sim \mathrm{N}(0,1.675^2)$ and $Y=\sum_{z=0}^{1}1\{Z=z\}\times(\sum_{d=0}^{1} 1\{D=d\}\times N_{dz})$.
	
	\item $N_{10}\sim \mathrm{N}(0,0.515^2)$ and $Y=\sum_{z=0}^{1}1\{Z=z\}\times(\sum_{d=0}^{1} 1\{D=d\}\times N_{dz})$.
	
	\item  $N_{10a}\sim \mathrm{N}(-1,0.125^2)$, $N_{10b}\sim \mathrm{N}(-0.5,0.125^2)$, $N_{10c}\sim \mathrm{N}(0,0.125^2)$, \\$N_{10d}\sim \mathrm{N}(0.5,0.125^2)$, $N_{10e}\sim \mathrm{N}(1,0.125^2)$, $N_{10}=1\{W\le 0.15\}\times N_{10a}+1\{0.15<W\le 0.35\}\times N_{10b}+1\{0.35<W\le 0.65\}\times N_{10c}+1\{0.65<W\le 0.85\}\times N_{10d}+1\{W>0.85\}\times N_{10e}$, and $Y=\sum_{z=0}^{1}1\{Z=z\}\times(\sum_{d=0}^{1} 1\{D=d\}\times N_{dz})$.
	
\end{enumerate}
All the variables $U$, $V$, $N_{00}$, $N_{10}$, $N_{01}$, and $N_{11}$ were set to be mutually independent for each DGP. Table \ref{tab:Rej H1} shows a comparison of the powers of the two tests. The results suggest that the proposed test achieves a manifest power improvement over that of \citet{kitagawa2015test}.
\begin{table}[h]
	
	\centering
	\caption{Rejection Rates under $H_{1}$ for Binary $D$ and Binary $Z$}
		\scalebox{1}{
		\begin{tabular}{  c  c  c  c  c  c  c  c  c  c     }
			\hline
			\hline
			\multirow{4}{*}{DGP} & \multirow{4}{*}{$n$}&\multicolumn{4}{c}{The Proposed Test}&\multicolumn{4}{c}{Test of \citet{kitagawa2015test}}\\
			\cmidrule(lr){3-6 } \cmidrule(lr){7-10 } 
			& &\multicolumn{4}{c}{$\xi$}&\multicolumn{4}{c}{$\xi$}\\
			\cmidrule(lr){3-6 } \cmidrule(lr){7-10 } 
			        &  & 0.07 & 0.22 & 0.3 & 1 & 0.07 & 0.22 & 0.3 & 1 \\
			\hline

		\multirow{5}{*}{(1)}	
		&200&0.202&0.198&0.186&0.110&0.198&0.193&0.182&0.106\\
		&600&0.300&0.434&0.418&0.180&0.240&0.406&0.375&0.144\\
		&1000&0.874&0.915&0.919&0.804&0.855&0.883&0.894&0.714\\
		&1100&0.309&0.493&0.452&0.163&0.263&0.451&0.423&0.153\\
		&2000&0.997&0.999&1.000&0.997&0.996&0.999&0.999&0.993\\
		
		\hline
		
		\multirow{5}{*}{(2)}
		&200&0.105&0.095&0.059&0.004&0.090&0.084&0.046&0.003\\
		&600&0.261&0.141&0.045&0.000&0.242&0.100&0.026&0.000\\
		&1000&0.907&0.814&0.500&0.105&0.887&0.781&0.421&0.030\\
		&1100&0.255&0.129&0.037&0.001&0.224&0.082&0.022&0.001\\
		&2000&1.000&0.996&0.949&0.674&1.000&0.994&0.909&0.252\\
		
			\hline
		
		\multirow{5}{*}{(3)}	
		&200&0.211&0.209&0.202&0.211&0.185&0.188&0.195&0.205\\
		&600&0.203&0.427&0.473&0.351&0.191&0.377&0.458&0.331\\
		&1000&0.664&0.769&0.816&0.831&0.654&0.739&0.785&0.796\\
		&1100&0.229&0.442&0.487&0.341&0.203&0.399&0.443&0.321\\
		&2000&0.950&0.982&0.992&0.995&0.949&0.971&0.987&0.992\\
        
        \hline
        
        \multirow{5}{*}{(4)}
        &200&0.080&0.082&0.073&0.036&0.079&0.082&0.073&0.036\\
        &600&0.134&0.117&0.103&0.060&0.123&0.111&0.102&0.058\\
        &1000&0.307&0.306&0.224&0.127&0.307&0.281&0.212&0.116\\
        &1100&0.146&0.115&0.112&0.031&0.136&0.115&0.093&0.027\\
        &2000&0.660&0.703&0.556&0.325&0.649&0.673&0.505&0.271\\
        \hline

		\end{tabular}
	}
	
	\label{tab:Rej H1}
\end{table}

\subsubsection{Comparison in Empirical Application}
We now revisit the empirical application in Section \ref{sec.empirical} and show the size and power comparisons with the test of \citet{kitagawa2015test} using the data set of \citet{NBERw4483}. We follow \citet{kitagawa2015test} and define $T$ by $T=1\{D\ge 16\}$. As discussed in Section \ref{sec.empirical}, the instrument may not be valid for this coarsened treatment $T$. This has been verified by the empirical study of \citet{kitagawa2015test}. As shown in Table I of \citet{kitagawa2015test}, the sample size was sufficiently large (over $3000$), so the null hypothesis was rejected with $p$-values exactly equal to $0$ when no conditioning covariates were included in the model. To show the comparison of the proposed test and the test of \citet{kitagawa2015test} in this empirical example, we randomly drew relatively small subsamples of sizes $700$, $900$, $1100$, $1300$, $1500$, and $2000$ out of the full data set, and computed the empirical sizes and powers of the two tests using the same subsamples. 

To compare the sizes, given the subsample $\{(Y_i,D_i)\}_{i=1}^{m}$ drawn randomly from the full data set with $m\in\{700, 900,1100,1300,1500,2000\}$, we let $Z_i=0$ for $i=1,\ldots,m/2$ and $Z_i=1$ for $i=m/2+1,\ldots,m$. Then we used the sample $\{(Y_i,D_i,Z_i)\}_{i=1}^{m}$ to compute the sizes of the two tests.  
As shown in Table \ref{tab:ComparisonApplicationSize}, the rejection rates of the proposed test under null are slightly higher than those of the test of \citet{kitagawa2015test}. Both are close to the nominal significance level $0.05$. 
\begin{table}[h]
	
	\centering
	\caption{Empirical Sizes of the Two Tests for Each $\xi$ in Empirical Application}
	\label{tab:ComparisonApplicationSize}
	\scalebox{1}{
		\begin{tabular}{   c  c  c  c  c  c  c  c  c  c  c  }
			\hline
			\hline
	\multirow{2}{*}{$n$}	&\multicolumn{10}{c}{$\xi$ for $\delta_{\xi}$}  \\
			\cline{2-11}
			& 0.07 & 0.1 & 0.13 & 0.16 & 0.19 & 0.22 & 0.25 & 0.28 & 0.3 & 1\\
			\hline
700&0.071&0.071&0.071&0.076&0.068&0.064&0.061&0.055&0.061&0.056\\
900&0.066&0.066&0.065&0.064&0.065&0.068&0.056&0.058&0.051&0.064\\
1100&0.072&0.072&0.070&0.059&0.055&0.059&0.054&0.051&0.053&0.048\\
1300&0.064&0.064&0.061&0.062&0.069&0.074&0.075&0.071&0.065&0.060\\
1500&0.045&0.047&0.046&0.046&0.045&0.039&0.050&0.049&0.066&0.062\\
2000&0.049&0.050&0.049&0.050&0.048&0.051&0.053&0.045&0.046&0.038\\
			\hline
		\end{tabular}
	}
\subcaption{Empirical Sizes of the Test of \citet{kitagawa2015test}}
\bigskip

\scalebox{1}{
	\begin{tabular}{   c  c  c  c  c  c  c  c  c  c  c  }
		\hline
		\hline
		\multirow{2}{*}{$n$}	&\multicolumn{10}{c}{$\xi$ for $\delta_{\xi}$}  \\
		\cline{2-11}
		& 0.07 & 0.1 & 0.13 & 0.16 & 0.19 & 0.22 & 0.25 & 0.28 & 0.3 & 1\\
		\hline
700&0.075&0.075&0.075&0.078&0.068&0.067&0.061&0.059&0.062&0.056\\
900&0.066&0.066&0.066&0.064&0.065&0.068&0.056&0.059&0.051&0.064\\
1100&0.072&0.072&0.070&0.060&0.061&0.060&0.055&0.054&0.053&0.048\\
1300&0.066&0.066&0.062&0.065&0.069&0.080&0.079&0.071&0.065&0.060\\
1500&0.045&0.047&0.046&0.046&0.045&0.040&0.050&0.049&0.066&0.062\\
2000&0.050&0.050&0.051&0.050&0.048&0.051&0.053&0.045&0.046&0.038\\
		\hline
	\end{tabular}
}
\subcaption{Empirical Sizes of the Proposed Test}

\end{table}

We used the subsample $\{(Y_i,D_i,Z_i)\}_{i=1}^{m}$ drawn randomly from the full data set to compute the powers of the two tests.  
As shown in Table \ref{tab:ComparisonApplicationPower}, the proposed test achieves a manifest power improvement over that of \citet{kitagawa2015test} when the samples are relatively small. 
\begin{table}[h]
	
	\centering
	\caption{Empirical Powers of the Two Tests for Each $\xi$ in Empirical Application}
	\label{tab:ComparisonApplicationPower}
	\scalebox{1}{
		\begin{tabular}{   c  c  c  c  c  c  c  c  c  c  c  }
			\hline
			\hline
			\multirow{2}{*}{$n$}	&\multicolumn{10}{c}{$\xi$ for $\delta_{\xi}$}  \\
			\cline{2-11}
			& 0.07 & 0.1 & 0.13 & 0.16 & 0.19 & 0.22 & 0.25 & 0.28 & 0.3 & 1\\
			\hline
		700&0.321&0.321&0.340&0.379&0.390&0.409&0.416&0.408&0.423&0.476\\
		900&0.395&0.390&0.427&0.457&0.478&0.494&0.517&0.538&0.535&0.578\\
		1100&0.534&0.557&0.584&0.633&0.627&0.657&0.659&0.682&0.699&0.739\\
		1300&0.666&0.666&0.706&0.728&0.762&0.768&0.798&0.791&0.793&0.805\\
		1500&0.742&0.779&0.800&0.805&0.808&0.808&0.831&0.844&0.836&0.864\\
		2000&0.902&0.907&0.920&0.904&0.922&0.924&0.929&0.927&0.930&0.952\\
			\hline
		\end{tabular}
	}
	\subcaption{Empirical Powers of the Test of \citet{kitagawa2015test}}
	\bigskip
	
	\scalebox{1}{
		\begin{tabular}{   c  c  c  c  c  c  c  c  c  c  c  }
			\hline
			\hline
			\multirow{2}{*}{$n$}	&\multicolumn{10}{c}{$\xi$ for $\delta_{\xi}$}  \\
			\cline{2-11}
			& 0.07 & 0.1 & 0.13 & 0.16 & 0.19 & 0.22 & 0.25 & 0.28 & 0.3 & 1\\
			\hline
		700&0.336&0.336&0.344&0.408&0.417&0.424&0.446&0.476&0.482&0.511\\
		900&0.404&0.411&0.445&0.512&0.513&0.537&0.546&0.584&0.584&0.654\\
		1100&0.558&0.572&0.615&0.647&0.654&0.668&0.686&0.726&0.751&0.803\\
		1300&0.674&0.725&0.756&0.796&0.803&0.828&0.847&0.854&0.865&0.858\\
		1500&0.781&0.825&0.852&0.857&0.836&0.850&0.860&0.871&0.875&0.905\\
		2000&0.910&0.926&0.942&0.935&0.935&0.940&0.947&0.945&0.951&0.975\\
			\hline
		\end{tabular}
	}
	\subcaption{Empirical Powers of the Proposed Test}

\end{table}

\subsection{Choices of $\Xi$ and $\nu$ in Applications}\label{sec.choice of xi}

As shown in the discussion for \eqref{eq.test stat K} and also in the discussion in \citet{kitagawa2015test}, $\xi$ plays a role of bounding $\hat{\sigma}_{P_n}$ sufficiently away from zero. We set the support of $\xi$, $\Xi$, to be a closed subset of $[0,1]$ such that $0\notin\Xi$. Though our theoretical results show that all such $\Xi$ and measures $\nu$ that satisfy Assumption \ref{ass.nu} yield good asymptotic properties of the test, the choices of $\Xi$ and $\nu$ may affect the finite sample performance of the test. For finite samples, if $\xi$ is small (not far enough from $0$), larger sample sizes would be needed for the test to achieve better size and power properties. 
In this section, we provide more simulation results for the choices of $\Xi$ and $\nu$. We then provide an empirical approach for choosing $\Xi$ and $\nu$ in practice. 

We followed the same constructions of simulations under $H_0$ as those in Section \ref{sec.size}, and we set $\Xi=\{0.01,0.02,0.03,0.04,0.07,0.1,0.13,0.16,0.19,0.22,0.25,1\}$. We set the sample sizes to $1000$, $2000$, and $3000$. In this way, we investigate how small values of $\xi$ would affect the finite sample performance of the test for different sample sizes. The measure $\nu$ was set to be a Dirac measure $\delta_{\xi}$ centered at each value in $\Xi$. The bootstrap iteration was set to $1000$. We focus on the results for $\tau_n=2$ which is chosen in Section \ref{sec.simulation}.
As shown in Table \ref{tab:Rej H0 multi Small xi}, for $\xi\ge0.04$ and $n=3000$, all rejection rates are close to the nominal significance level $\alpha=0.05$. (The rejection rates for some $\xi$ are slightly upward biased. In application-based simulations, these rejection rates get close to $\alpha$.) For small $\xi\in\{0.01,0.02,0.03\}$, most of the rejection rates are lower than $\alpha$, but they increase as $n$ increases.
For example, the rejection rates for $\xi=0.03$ are $0.014$, $0.033$, and $0.070$ for $n=1000$, $2000$, and $3000$, respectively. The rejection rates for $\xi=0.02$ are $0.001$, $0.006$, and $0.011$ for $n=1000$, $2000$, and $3000$, respectively. For $\xi=0.01$, the rejection rates are all $0$. As discussed above, $\xi$ is used to bound $\hat{\sigma}_{P_n}$ away from $0$. When $\xi$ is close to $0$, the test may be conservative in finite samples. 
As the sample size increases, the rejection rates would converge to the nominal significance level, as shown for $\xi=0.02$ and $0.03$. We expect that when $n$ gets larger, the rejection rate for $\xi=0.01$ would converge to $\alpha$. 

\begin{table}[h]
	
	\centering
	\caption{Rejection Rates under $H_{0}$ for Small $\xi$}
	\label{tab:Rej H0 multi Small xi}
	\scalebox{0.85}{
		\begin{tabular}{   c  c  c  c  c  c  c  c  c  c  c  c  c  }
			\hline
			\hline
			\multirow{2}{*}{$\tau_n$} & \multicolumn{12}{c}{$\xi$ for $\delta_{\xi}$ ($n=1000$)} \\
			\cline{2-13}
		  & 0.01 & 0.02 & 0.03 & 0.04 & 0.07 & 0.1 & 0.13 & 0.16 & 0.19 & 0.22 & 0.25 & 1\\
			\hline
		$0.1$	&0.174&0.174&0.174&0.174&0.200&0.179&0.173&0.137&0.119&0.102&0.104&0.104\\
		$0.5$	&0.097&0.097&0.097&0.105&0.136&0.116&0.115&0.089&0.071&0.069&0.070&0.070\\
		$1$ 	&0.004&0.027&0.037&0.070&0.108&0.084&0.076&0.072&0.065&0.063&0.063&0.063\\
		$2$ 	&0.000&0.001&0.014&0.052&0.105&0.077&0.063&0.069&0.055&0.061&0.058&0.058\\
		$3$ 	&0.000&0.001&0.012&0.052&0.105&0.077&0.062&0.069&0.055&0.061&0.058&0.058\\
		$4$ 	&0.000&0.001&0.011&0.052&0.105&0.077&0.062&0.069&0.055&0.061&0.058&0.058\\
		$\infty$ 	&0.000&0.001&0.011&0.052&0.105&0.077&0.062&0.069&0.055&0.061&0.058&0.058\\
		\hline	
	        \multirow{2}{*}{$\tau_n$} & \multicolumn{12}{c}{$\xi$ for $\delta_{\xi}$ ($n=2000$)} \\
	        \cline{2-13}
	        & 0.01 & 0.02 & 0.03 & 0.04 & 0.07 & 0.1 & 0.13 & 0.16 & 0.19 & 0.22 & 0.25 & 1\\
	        \hline
	    $0.1$   &0.159&0.159&0.159&0.167&0.150&0.121&0.120&0.105&0.103&0.112&0.119&0.119\\
	    $0.5$   &0.086&0.086&0.087&0.084&0.079&0.076&0.089&0.061&0.069&0.064&0.058&0.058\\
	    $1$     &0.014&0.023&0.052&0.066&0.071&0.061&0.075&0.051&0.055&0.047&0.050&0.050\\
	    $2$     &0.000&0.006&0.033&0.053&0.058&0.054&0.065&0.047&0.049&0.036&0.036&0.036\\
	    $3$     &0.000&0.006&0.033&0.053&0.056&0.054&0.064&0.047&0.049&0.036&0.033&0.033\\
	    $4$     &0.000&0.006&0.033&0.053&0.056&0.054&0.064&0.047&0.048&0.035&0.032&0.032\\
	    $\infty$     &0.000&0.006&0.033&0.053&0.056&0.054&0.064&0.047&0.048&0.035&0.032&0.032\\
	       \hline 
	         \multirow{2}{*}{$\tau_n$} & \multicolumn{12}{c}{$\xi$ for $\delta_{\xi}$ ($n=3000$)} \\
	        \cline{2-13}
	        & 0.01 & 0.02 & 0.03 & 0.04 & 0.07 & 0.1 & 0.13 & 0.16 & 0.19 & 0.22 & 0.25 & 1\\
	        \hline
	    $0.1$   &0.198&0.198&0.194&0.185&0.122&0.108&0.096&0.096&0.108&0.092&0.092&0.092\\
	    $0.5$   &0.074&0.074&0.112&0.122&0.092&0.070&0.068&0.074&0.064&0.069&0.069&0.069\\
	    $1$     &0.017&0.023&0.077&0.089&0.079&0.060&0.047&0.068&0.056&0.058&0.061&0.061\\
	    $2$     &0.000&0.011&0.070&0.083&0.073&0.050&0.037&0.050&0.050&0.055&0.048&0.048\\
	    $3$     &0.000&0.011&0.055&0.083&0.073&0.048&0.037&0.050&0.050&0.049&0.048&0.048\\
	    $4$     &0.000&0.011&0.055&0.083&0.073&0.048&0.037&0.050&0.050&0.049&0.048&0.048\\
	    $\infty$     &0.000&0.011&0.055&0.083&0.073&0.048&0.037&0.050&0.050&0.049&0.048&0.048\\
			
			\hline
		\end{tabular}
	}

\end{table}

\subsubsection{Application-based Simulations for Choosing $\Xi$ and $\nu$}

Since small values of $\xi$ may affect the finite sample performance of the test, we introduce an empirical way of choosing $\Xi$ and $\nu$ in finite samples. In practice, we suggest setting $\Xi$ to be a (large) finite set of values and $\nu$ to be a Dirac measure centered at each value of $\Xi$ or a probability measure that assigns equal weights to each value in $\Xi$. The results in Section \ref{sec.simulation} show that these choices work well in simulations. 
Recall that \eqref{eq.sigma_square bound} and \eqref{eq.sigma_square_hat bound} provide bounds for $\sigma_P$ and $\hat{\sigma}_{P_n}$.
For every finite sample, we can use positive values not larger than $\{1/2\cdot(K-1)^{-(K-1)}\}^{1/2}$ to construct $\Xi$. 
Clearly, $1/2\cdot(K-1)^{-(K-1)}<1$ for all $K$. Thus, we can just include $\xi=1$ in $\Xi$ which leads to the unweighted KS test statistic, and we set the other values of $\xi$ to be smaller than $\{1/2\cdot(K-1)^{-(K-1)}\}^{1/2}$. 
To be more precise, we can calculate the bound $1/2\cdot\max_{(g_1^{\prime},g_2^{\prime})\in\mathcal{G}}\{(T_n/n)  /\hat{P}_n\left(  g_{2}^{\prime}\right)  +(T_n/n)/\hat{P}_n\left(  g_{1}^{\prime}\right)  \}^{1/2}$ and only include values smaller than this bound other than $1$. 

We revisit the application in Section \ref{sec.empirical} and show how to choose $\Xi$ in practice. In this empirical example, $Z\in\{0,1\}$, and it follows that $\hat{\sigma}_{P_n}\le 1/2$, which was also mentioned in \citet{kitagawa2015test}. We first set $\Xi$ to be a large finite set with $$\Xi=\{0.01,0.02,0.03,0.04,0.07,0.1,0.13,0.16,0.19,0.22,0.25,0.28,0.3,1\}.$$ Let $n_0=\sum_{i=1}^n1\{Z_i=0\}$ and $n_1=\sum_{i=1}^n1\{Z_i=1\}$. The following is the procedure for choosing $\Xi$ for the finite sample $\{(Y_i,D_i,Z_i)\}_{i=1}^n$:
\begin{enumerate}[label=(\arabic*)]
    \item Find the subsample of $\{(Y_i,D_i,Z_i)\}_{i=1}^n$ with $Z_i=0$, and denote this subsample by $\{(Y^0_i,D^0_i,Z^0_i)\}_{i=1}^{n_0}$.

    \item Randomly draw  two samples from $\{(Y^0_i,D^0_i)\}_{i=1}^{n_0}$, denoted by $\{(Y^{00}_i,D^{00}_i)\}_{i=1}^{n_0}$ and $\{(Y^{01}_i,D^{01}_i)\}_{i=1}^{n_1}$.

    \item Let $Z^{00}_i=0$ for all $i\in\{1,\ldots,n_0\}$ and $Z^{01}_i=1$ for all $i\in\{1,\ldots,n_1\}$. 
    
    \item Combine the two samples $\{(Y^{00}_i,D^{00}_i,Z^{00}_i)\}_{i=1}^{n_0}$ and $\{(Y^{01}_i,D^{01}_i,Z^{01}_i)\}_{i=1}^{n_1}$. Denote the combined sample by $\{(Y^c_i,D^c_i,Z^c_i)\}_{i=1}^n$.
    
    \item Compute the test statistic and the bootstrap critical value based on the sample \linebreak $\{(Y^c_i,D^c_i,Z^c_i)\}_{i=1}^n$ and record the test results for $\Xi$.

    \item Repeat steps (2)--(5) many times and find the values in $\Xi$ such that the corresponding rejection rates are close to $\alpha$.
    
    \item Repeat steps (2)--(6) using the subsample of $\{(Y_i,D_i,Z_i)\}_{i=1}^n$ with $Z_i=1$. Find the values in $\Xi$ such that the corresponding rejection rates are close to $\alpha$.
    
    \item The intersection of the two sets of values in $\Xi$ from the above steps can be used in the application. 
\end{enumerate}

Table \ref{tab:Rej H0 multi Small xi Application} shows the simulation results following the above procedure. Based on these simulation results, we suggest using $\Xi=\{0.03,0.04,0.07,0.1,0.13,0.16,0.19,0.22,0.3,1\}$ for this application. We reconducted the test in Section \ref{sec.empirical} using this new $\Xi$. Table \ref{tab:ApplicationCard_ChosenXi} shows that the test results are similar to those in Table \ref{tab:ApplicationCard}.
  
\begin{table}[h]
	
	\centering
	\caption{Application-based Rejection Rates under $H_{0}$ for Different $\xi$}
	\label{tab:Rej H0 multi Small xi Application}
	\scalebox{0.73}{
		\begin{tabular}{   c  c  c  c  c  c  c  c  c  c  c  c  c  c  c }
			\hline
			\hline
			\multirow{2}{*}{$\tau_n$} & \multicolumn{14}{c}{$\xi$ for $\delta_{\xi}$ ($Z=0$)} \\
			\cline{2-15}
		  & 0.01 & 0.02 & 0.03 & 0.04 & 0.07 & 0.1 & 0.13 & 0.16 & 0.19 & 0.22 & 0.25 & 0.28 & 0.3 & 1\\
			\hline
			
		$0.1$	&0.180&0.180&0.180&0.181&0.163&0.153&0.160&0.178&0.176&0.206&0.228&0.243&0.250&0.291\\
		$0.5$	&0.096&0.096&0.096&0.095&0.087&0.087&0.084&0.084&0.086&0.087&0.102&0.099&0.099&0.090\\
		$1$ 	&0.037&0.055&0.062&0.068&0.063&0.065&0.058&0.058&0.066&0.066&0.065&0.063&0.065&0.067\\
		$2$ 	&0.005&0.034&0.041&0.048&0.053&0.056&0.054&0.043&0.049&0.057&0.052&0.048&0.048&0.053\\
		$3$ 	&0.000&0.020&0.035&0.043&0.053&0.056&0.054&0.043&0.049&0.057&0.052&0.048&0.048&0.052\\
		$4$ 	&0.000&0.017&0.032&0.043&0.053&0.056&0.054&0.043&0.049&0.057&0.052&0.048&0.048&0.052\\
		$\infty$ 	&0.000&0.017&0.032&0.043&0.053&0.056&0.054&0.043&0.049&0.057&0.052&0.048&0.048&0.052\\

		\hline	
		\multirow{2}{*}{$\tau_n$} & \multicolumn{14}{c}{$\xi$ for $\delta_{\xi}$ ($Z=1$)} \\
		\cline{2-15}
		& 0.01 & 0.02 & 0.03 & 0.04 & 0.07 & 0.1 & 0.13 & 0.16 & 0.19 & 0.22 & 0.25 & 0.28 & 0.3 & 1\\
		\hline
			
		$0.1$	&0.153&0.153&0.153&0.145&0.131&0.128&0.112&0.111&0.132&0.142&0.168&0.161&0.174&0.232\\
		$0.5$	&0.084&0.084&0.084&0.083&0.080&0.072&0.067&0.061&0.055&0.052&0.074&0.073&0.066&0.088\\
		$1$ 	&0.020&0.060&0.077&0.072&0.067&0.061&0.054&0.048&0.051&0.042&0.042&0.042&0.052&0.054\\
		$2$ 	&0.003&0.031&0.046&0.049&0.063&0.056&0.047&0.037&0.036&0.037&0.026&0.030&0.037&0.037\\
		$3$ 	&0.000&0.005&0.044&0.048&0.063&0.056&0.047&0.037&0.035&0.037&0.026&0.030&0.037&0.037\\
		$4$ 	&0.000&0.005&0.035&0.048&0.051&0.056&0.047&0.037&0.035&0.037&0.026&0.030&0.037&0.037\\
		$\infty$ 	&0.000&0.005&0.035&0.048&0.051&0.056&0.047&0.037&0.035&0.037&0.026&0.030&0.037&0.037\\
			\hline
		\end{tabular}
	}

\end{table}

\begin{table}[h]
	
	\centering
	\caption{$p$-values Obtained from the Proposed Test for Each Measure $\nu$ using Application-based $\Xi$}
	\scalebox{0.9}{
		\begin{tabular}{   c  c  c  c  c  c  c  c  c  c  c  }
			\hline
			\hline
			 \multicolumn{10}{c}{$\xi$ for $\delta_{\xi}$} &\multirow{2}{*}{$\bar{\nu}_{\xi}$} \\
			\cline{1-10}
		 0.03 & 0.04 & 0.07 & 0.1 & 0.13 & 0.16 & 0.19 & 0.22 & 0.3 & 1 & \\
			\hline
			 0.957 &  0.939 &  0.958 &  0.975 &  0.975 &  0.975  & 0.975 &  0.975 &  0.975  & 0.975 &  0.981\\

			\hline
		\end{tabular}
	}
	
	\label{tab:ApplicationCard_ChosenXi}
\end{table}

\bibliographystyle{apalike}
\bibliography{reference1}



\end{document}